\definecolor{light-gray}{gray}{0.96}
\definecolor{shadecolor}{named}{light-gray}
\colorlet{sectitlecolor}{blue}
\colorlet{sectboxcolor}{white}
\colorlet{secnumcolor}{blue}
\renewcommand\@seccntformat[1]{%
  \colorbox{sectboxcolor}{\textcolor{secnumcolor}{\csname the#1\endcsname}}%
  \quad
}
\patchcmd{\thebibliography}{\section*{\refname}}{}{}{}
\newcommand{\rmD}{{\mathrm{D}}}
\newcommand{\rmE}{{\mathrm{E}}}
\DeclareSymbolFont{sfletters}{OML}{cmbrm}{m}{it}  
\DeclareMathSymbol{\sfGamma}{\mathord}{sfletters}{"00}
\DeclareMathSymbol{\sfDelta}{\mathord}{sfletters}{"01}
\DeclareMathSymbol{\sfTheta}{\mathord}{sfletters}{"02}
\DeclareMathSymbol{\sfLambda}{\mathord}{sfletters}{"03}
\DeclareMathSymbol{\sfXi}{\mathord}{sfletters}{"04}
\DeclareMathSymbol{\sfPi}{\mathord}{sfletters}{"05}
\DeclareMathSymbol{\sfSigma}{\mathord}{sfletters}{"06}
\DeclareMathSymbol{\sfUpsilon}{\mathord}{sfletters}{"07}
\DeclareMathSymbol{\sfPhi}{\mathord}{sfletters}{"08}
\DeclareMathSymbol{\sfPsi}{\mathord}{sfletters}{"09}
\DeclareMathSymbol{\sfOmega}{\mathord}{sfletters}{"0A}
\DeclareMathSymbol{\sfalpha}{\mathord}{sfletters}{"0B}
\DeclareMathSymbol{\sfbeta}{\mathord}{sfletters}{"0C}
\DeclareMathSymbol{\sfgamma}{\mathord}{sfletters}{"0D}
\DeclareMathSymbol{\sfdelta}{\mathord}{sfletters}{"0E}
\DeclareMathSymbol{\sfepsilon}{\mathord}{sfletters}{"0F}
\DeclareMathSymbol{\sfzeta}{\mathord}{sfletters}{"10}
\DeclareMathSymbol{\sfeta}{\mathord}{sfletters}{"11}
\DeclareMathSymbol{\sftheta}{\mathord}{sfletters}{"12}
\DeclareMathSymbol{\sfiota}{\mathord}{sfletters}{"13}
\DeclareMathSymbol{\sfkappa}{\mathord}{sfletters}{"14}
\DeclareMathSymbol{\sflambda}{\mathord}{sfletters}{"15}
\DeclareMathSymbol{\sfmu}{\mathord}{sfletters}{"16}
\DeclareMathSymbol{\sfnu}{\mathord}{sfletters}{"17}
\DeclareMathSymbol{\sfxi}{\mathord}{sfletters}{"18}
\DeclareMathSymbol{\sfpi}{\mathord}{sfletters}{"19}
\DeclareMathSymbol{\sfrho}{\mathord}{sfletters}{"1A}
\DeclareMathSymbol{\sfsigma}{\mathord}{sfletters}{"1B}
\DeclareMathSymbol{\sftau}{\mathord}{sfletters}{"1C}
\DeclareMathSymbol{\sfupsilon}{\mathord}{sfletters}{"1D}
\DeclareMathSymbol{\sfphi}{\mathord}{sfletters}{"1E}
\DeclareMathSymbol{\sfchi}{\mathord}{sfletters}{"1F}
\DeclareMathSymbol{\sfpsi}{\mathord}{sfletters}{"20}
\DeclareMathSymbol{\sfomega}{\mathord}{sfletters}{"21}
\DeclareMathSymbol{\sfvarepsilon}{\mathord}{sfletters}{"22}
\DeclareMathSymbol{\sfvartheta}{\mathord}{sfletters}{"23}
\DeclareMathSymbol{\sfvarpi}{\mathord}{sfletters}{"24}
\DeclareMathSymbol{\sfvarrho}{\mathord}{sfletters}{"25}
\DeclareMathSymbol{\sfvarsigma}{\mathord}{sfletters}{"26}
\DeclareMathSymbol{\sfvarphi}{\mathord}{sfletters}{"27}
\DeclareMathSymbol{\spartial}{\mathord}{sfletters}{"40}
\DeclareMathSymbol{\sfA}{\mathord}{sfletters}{"41}
\DeclareMathSymbol{\sfB}{\mathord}{sfletters}{"42}
\DeclareMathSymbol{\sfC}{\mathord}{sfletters}{"43}
\DeclareMathSymbol{\sfD}{\mathord}{sfletters}{"44}
\DeclareMathSymbol{\sfE}{\mathord}{sfletters}{"45}
\DeclareMathSymbol{\sfF}{\mathord}{sfletters}{"46}
\DeclareMathSymbol{\sfG}{\mathord}{sfletters}{"47}
\DeclareMathSymbol{\sfH}{\mathord}{sfletters}{"48}
\DeclareMathSymbol{\sfI}{\mathord}{sfletters}{"49}
\DeclareMathSymbol{\sfJ}{\mathord}{sfletters}{"4A}
\DeclareMathSymbol{\sfK}{\mathord}{sfletters}{"4B}
\DeclareMathSymbol{\sfL}{\mathord}{sfletters}{"4C}
\DeclareMathSymbol{\sfM}{\mathord}{sfletters}{"4D}
\DeclareMathSymbol{\sfN}{\mathord}{sfletters}{"4E}
\DeclareMathSymbol{\sfO}{\mathord}{sfletters}{"4F}
\DeclareMathSymbol{\sfP}{\mathord}{sfletters}{"50}
\DeclareMathSymbol{\sfQ}{\mathord}{sfletters}{"51}
\DeclareMathSymbol{\sfR}{\mathord}{sfletters}{"52}
\DeclareMathSymbol{\sfS}{\mathord}{sfletters}{"53}
\DeclareMathSymbol{\sfT}{\mathord}{sfletters}{"54}
\DeclareMathSymbol{\sfU}{\mathord}{sfletters}{"55}
\DeclareMathSymbol{\sfV}{\mathord}{sfletters}{"56}
\DeclareMathSymbol{\sfW}{\mathord}{sfletters}{"57}
\DeclareMathSymbol{\sfX}{\mathord}{sfletters}{"58}
\DeclareMathSymbol{\sfY}{\mathord}{sfletters}{"59}
\DeclareMathSymbol{\sfZ}{\mathord}{sfletters}{"5A}
\DeclareMathSymbol{\sfa}{\mathord}{sfletters}{"61}
\DeclareMathSymbol{\sfb}{\mathord}{sfletters}{"62}
\DeclareMathSymbol{\sfc}{\mathord}{sfletters}{"63}
\DeclareMathSymbol{\sfd}{\mathord}{sfletters}{"64}
\DeclareMathSymbol{\sfe}{\mathord}{sfletters}{"65}
\DeclareMathSymbol{\sff}{\mathord}{sfletters}{"66}
\DeclareMathSymbol{\sfg}{\mathord}{sfletters}{"67}
\DeclareMathSymbol{\sfh}{\mathord}{sfletters}{"68}
\DeclareMathSymbol{\sfi}{\mathord}{sfletters}{"69}
\DeclareMathSymbol{\sfj}{\mathord}{sfletters}{"6A}
\DeclareMathSymbol{\sfk}{\mathord}{sfletters}{"6B}
\DeclareMathSymbol{\sfl}{\mathord}{sfletters}{"6C}
\DeclareMathSymbol{\sfm}{\mathord}{sfletters}{"6D}
\DeclareMathSymbol{\sfn}{\mathord}{sfletters}{"6E}
\DeclareMathSymbol{\sfo}{\mathord}{sfletters}{"6F}
\DeclareMathSymbol{\sfp}{\mathord}{sfletters}{"70}
\DeclareMathSymbol{\sfq}{\mathord}{sfletters}{"71}
\DeclareMathSymbol{\sfr}{\mathord}{sfletters}{"72}
\DeclareMathSymbol{\sfs}{\mathord}{sfletters}{"73}
\DeclareMathSymbol{\sft}{\mathord}{sfletters}{"74}
\DeclareMathSymbol{\sfu}{\mathord}{sfletters}{"75}
\DeclareMathSymbol{\sfv}{\mathord}{sfletters}{"76}
\DeclareMathSymbol{\sfw}{\mathord}{sfletters}{"77}
\DeclareMathSymbol{\sfx}{\mathord}{sfletters}{"78}
\DeclareMathSymbol{\sfy}{\mathord}{sfletters}{"79}
\DeclareMathSymbol{\sfz}{\mathord}{sfletters}{"7A}
\newcommand{\declarebsfgreek}[2]{%
  \protected\csdef{bsf#1}{\mathord{\text{\bsfgreekfont#2}}}%
}
\newcommand{\bsfgreekfont}{\usefont{LGR}{cmss}{bx}{it}}
\newcommand{\declarebsfitalic}[2]{%
  \protected\csdef{bsf#1}{\mathord{\text{\bsfitalicfont#2}}}%
}
\newcommand{\bsfitalicfont}{\usefont{T1}{cmss}{bx}{it}}
\newcommand{\bfsfH}{\sfH\!\!\!\!\!\sfH}
\newcommand{\msA}{{\sf{A}}}
\newcommand{\msF}{{\sf{F}}}
\newcommand{\msG}{{\sf{G}}}
\newcommand{\msH}{{\sf{H}}}
\newcommand{\msI}{{\sf{I}}}
\newcommand{\msK}{{\sf{K}}}
\newcommand{\msM}{{\sf{M}}}
\newcommand{\msP}{{\sf{P}}}
\newcommand{\msR}{{\sf{R}}}
\newcommand{\msU}{{\sf{U}}}
\newcommand{\msZ}{{\sf{Z}}}
\DeclareFontFamily{U}{yfrak}{}
\DeclareFontShape{U}{yfrak}{m}{n}{<->yfrak}{}
\newcommand{\mathgoth}[1]{\text{\usefont{U}{yfrak}{m}{n}#1}}
\newcommand{\fkQ}{{\mathgoth{Q}}}
\newcommand{\bbC}{{\mathbb{C}}}
\newcommand{\bbF}{{\mathbb{F}}}
\newcommand{\bbH}{{\mathbb{H}}}
\newcommand{\bbN}{{\mathbb{N}}}
\newcommand{\bbZ}{{\mathbb{Z}}}
\newcommand{\scH}{{\matheul{H}}}
\newcommand{\scK}{{\matheul{K}}}
\newcommand{\scL}{{\matheul{L}}}
\newcommand{\clA}{{\mathcal{A}}}
\newcommand{\clB}{{\mathcal{B}}}
\newcommand{\clC}{{\mathcal{C}}}
\newcommand{\clD}{{\mathcal{D}}}
\newcommand{\clE}{{\mathcal{E}}}
\newcommand{\clF}{{\mathcal{F}}}
\newcommand{\clK}{{\mathcal{K}}}
\newcommand{\clL}{{\mathcal{L}}}
\newcommand{\clM}{{\mathcal{M}}}
\newcommand{\clP}{{\mathcal{P}}}
\newcommand{\clQ}{{\mathcal{Q}}}
\newcommand{\clR}{{\mathcal{R}}}
\newcommand{\clS}{{\mathcal{S}}}
\newcommand{\clT}{{\mathcal{T}}}
\newcommand{\clU}{{\mathcal{U}}}
\newcommand{\clX}{{\mathcal{X}}}
\newcommand{\clY}{{\mathcal{Y}}}
\newcommand{\clZ}{{\mathcal{Z}}}
\newcommand{\bra}[1]{{\langle{#1}|}}
\newcommand{\ket}[1]{{|{#1}\rangle}}
\DeclareMathOperator{\id}{id}
\DeclareMathOperator{\Hom}{Hom}
\DeclareMathOperator{\Ver}{Ver}
\DeclareMathOperator{\Edg}{Edg}
\DeclareMathOperator{\Obj}{Obj}
\DeclareMathOperator{\Funct}{Funct}
\DeclareMathOperator{\SMFunct}{SMFunct}
\DeclareMathOperator{\SMEnd}{SMEnd}
\DeclareMathOperator{\End}{End}
\DeclareMathOperator{\tr}{tr}
\numberwithin{equation}{subsection} 
\numberwithin{subsection}{section} 
\newcommand{\ceqref}[1]{{\textcolor{blue}{\eqref{#1}}}}
\newcommand{\cref}[1]{{\textcolor{blue}{\ref{#1}}}}
\newcommand{\ccite}[1]{{\textcolor{blue}{\hspace{-.40pt}\cite{#1}}}}
\newcommand{\sss}{{\hbox{\large $\sum$}}}
\newcommand{\ppp}{{\hbox{\large $\prod$}}}
\newcommand{\uuu}{{\hbox{\large $\bigcup$}}}
\newcommand{\ul}[1]{{\underline{#1}}}
\newcommand{\ol}[1]{{\overline{#1}}}
\newcommand{\hfpt}{\hspace{.75pt}}
\newcommand{\mhfpt}{\hspace{-.75pt}}
\newcommand{\smallsss}{{\hbox{$\sum$}}}
\newcommand\mycom[2]{\genfrac{}{}{-3pt}{}{#1}{#2}}
\font\euler=eusm10 at 12.8 truept
\font\scripteuler=eusm7
\font\scriptscripteuler=eusm5 
\def\eul{\fam=12}
\newcommand{\matheul}[1]{{{\eul #1}}}
\newtheorem{defi}{{\sf Definition}}[subsection]
\newtheorem{prop}{{\sf Proposition}}[subsection]
\newtheorem{theor}{{\sf Theorem}}[subsection]
\newtheorem{exa}{{\sf Example}}[subsection]
\newtheorem{rem}{{\sf Remark}}[subsection]
\newtheorem{cor}{{\sf Corollcorary}}[subsection]
\DeclareMathSymbol{*}{\mathbin}{symbols}{"03} 
\renewcommand{\eqref}[1]{\tagform@{\ref{#1}}}
\def\maketag@@@#1{\hbox{#1}}
\begin{document}

\thispagestyle{empty} 

\vskip1.5cm
\begin{large} 
{\flushleft\textcolor{blue}{\sffamily\bfseries Calibrated hypergraph states:}}
{\flushleft\textcolor{blue}{\sffamily\bfseries I calibrated hypergraph and multi qudit state monads}} 
\end{large}
\vskip1.2cm
\hrule height 1.5pt
\vskip1.2cm
{\flushleft{\sffamily \bfseries Roberto Zucchini}\\
\it Department of Physics and Astronomy,\\
University of Bologna,\\
I.N.F.N., Bologna division,\\
viale Berti Pichat, 6/2\\
Bologna, Italy\\
Email: \textcolor{blue}{\tt \href{mailto:roberto.zucchini@unibo.it}{roberto.zucchini@unibo.it}}, 
\textcolor{blue}{\tt \href{mailto:zucchinir@bo.infn.it}{zucchinir@bo.infn.it}}}


\vskip.9cm 
{\flushleft\sc
Abstract:} 
Hypergraph states are a special kind of multipartite states encoded by
hypergraphs. They play a significant role in 
quantum error correction, measurement--based quantum computation, quantum non locality and entanglement.
In a series of two papers, we introduce and study calibrated hypergraph states,
a broad generalization of weighted hypergraph states codified by 
hypergraphs equipped with calibrations, an ample extension of weightings. 
We propose as a guiding principle that a constructive theory of hypergraph
states must be based on a categorical framework for
hypergraphs on one hand and multi qudit states on the other constraining hypergraph
states enough to render the determination of their general structure possible. 
In this first paper, we introduce graded $\varOmega$ monads, concrete Pro
categories isomorphic to the Pro category $\varOmega$ of finite von Neumann ordinals and equipped with
an associative and unital
graded multiplication, and their morphisms, maps of $\varOmega$ monads compatible with their monadic structure. 
We then show that both calibrated hypergraphs and multi qudit states 
naturally organize in graded $\varOmega$ monads. In this way, we lay the foundation for the construction of calibrated
hypergraph state map as a special morphism of these $\varOmega$ monads in the companion paper. 
\vspace{2mm}
\par\noindent
MSC: 05C65 81P99 81Q99

\vfill\eject

{\color{blue}\tableofcontents}

\vfill\eject

\renewcommand{\sectionmark}[1]{\markright{\thesection\ ~~#1}}

\section{\textcolor{blue}{\sffamily Introduction}}\label{sec:intro}

Graphs and hypergraphs are usefully employed to model binary or higher-order relations of any kind of
objects \ccite{Berge:1973gth,Ouvrard:2020hir}. 
Graphs consist of a set of vertices and one of edges connecting pairs of distinct vertices.
Hypergraphs are a broad generalization of graphs featuring vertices and hyperedges joining
any number of vertices rather than just two. Graphs and hypergraphs can be enriched with edge and hyperedge
weight data, yielding multigraphs and weighted hypergraphs. 

Graphs and hypergraphs find many applications
in the analysis of the data sets occurring in many scientific disciplines.
In quantum information and computation theory, they can be utilised to address a number
of fundamental issues as reviewed in the next subsection.


\subsection{\textcolor{blue}{\sffamily Graph and hypergraph states}}\label{subsec:ghgstrev}

Graph states are a special kind of multipartite states  encoded by graphs. 
They were originally introduced by Schlingemann and Hein, Eisert and Briegel 
in refs. \!\!\ccite{Schlingemann:2003cag,Hein:2003meg} over two decades ago.
Since then, they have been shown to be relevant in quantum error correction \ccite{Schlingemann:2000ecg,Bausch:2019tpn},
measurement-based quantum computation \ccite{Raussendorf:2001oqc},
quantum secret sharing \ccite{Markham:2008gss}
and analysis of Bell non locality \ccite{Scarani:2005ncs,Guehne:2004big,Baccari:2020bst}
and entanglement \ccite{Kruszynska:2006epp,Toth:2005dsf,Jungnitsch:2011ewg}. 
Ref. \!\!\ccite{Hein:2006ega} provides a comprehensive introduction to the subject. 

Hypergraph states are a wider class of multipartite states encoded by hypergraphs. 
They were first considered by Qu {\it et al.} and Rossi {\it et al.} in refs. \!\!\ccite{Qu:2012eqs,Rossi:2012qhs}
as a broad extension of graph states.
As these latter, they enter foremostly in quantum error correction \ccite{Wagner:2017shs},
measurement-based quantum computation \ccite{Gachechiladze:2019mbh,Takeuchi:2018qcu},
state verification and self-testing \ccite{Morimae:2017vmq,Zhu:2018evh},
analysis of Bell non locality \ccite{Gachechiladze:2015evr}   
and study of entanglement \ccite{Guehne:2014nch,Ghio:2017med}.  
Background material on hypergraph states can be found in ref. \!\!\ccite{Gachechiladze:2019phd}.

Qudits are higher level computational units extending the scope of normal 2--level qubits. Compared to these latter,
qudits have a larger state space that can be used to process and store more information and allow in principle
for a diminution of circuit complexity together with an increase of algorithmic efficiency.
They were first investigated in some depth in refs. \!\!\ccite{Ashikhmin:2000nbc,Gheorghiu:2011qsg}. Ref.
\!\!\ccite{Wang:2020qhd} reviews these qudits and their application to higher dimensional quantum computing. 

Qudit graph states have been explored in a number of works, e.g. \!\!\ccite{Helwig:2013amq,Keet:2010qss}. 
The study of qudit hypergraph states was initiated in refs. \!\!\ccite{Steinhoff:2016:qhs,Xiong:2017qhp}. 
These states were considered for quantum error correction in ref. \!\!\ccite{Looi:2007ecq}.

In spite of the breadth of the variety of the graph and hypergraph states which have been extensively studied in the literature
reviewed above, such states share some common universal features. First they are stabilizer states, 
i.e. one-dimensional error correcting codes \ccite{Gottesman:1997scq,Garcia:2014gss}. 
Second, they are locally maximally entangleable states \ccite{Kruszynska:2008lem}. 

A fundamental issue in quantum information theory is the classification of multipartite states 
in entanglement local equivalence classes. The main classification schemes used, as is well--known,
are local unitary/local Clifford, separable and stochastic local operations with classical communications
equivalence. Graph and hypergraph states lend themselves particularly well to this kind of investigation
because of the richness of their mathematical properties. This matter has been investigated in 
refs. \ccite{Nest:2003lct,Nest:2004lce,Nest:2004uce,Bravyi:2005ghz} for graph states
and \ccite{Lyons:2014luh} for hypergraph states and has been considered also for qudit graph
states in \ccite{Hostens:2004qma,Bahramgiri:2006cnb}.


\subsection{\textcolor{blue}{\sffamily Plan of the endeavour}}\label{subsec:wsproject}

In the present endeavour, we introduce and study calibrated hypergraph states,
a broad generalization of weighted hypergraph states encoded by 
hypergraphs endowed with calibrations, a non trivial extension of weightings.
The guiding principle informing our approach is that a working theory
of hypergraph states should be formulated in a categorical framework for both 
hypergraphs and multi qudit states and, making reference to this, postulate conditions on 
hypergraph states strong enough to allow for the determination of their general structure.
It leads to a set--up general enough to cover most of the graph state constructions
surveyed in the previous subsection. Our method is comparable to that of
Ionicioiu and Spiller in ref. \!\!\ccite{Ionicioiu:2012egq} whom we are indebted to for inspiration.
However, unlike theirs, our theory cannot be properly characterized as axiomatic but rather as constructive. 

The work is naturally divided in two parts, henceforth referred to as I and II,
which we outline next. 

In I, we introduce graded $\varOmega$ monads, concrete Pro
categories isomorphic to the Pro category $\varOmega$ of finite von Neumann ordinals and endowed with an associative and unital
graded multiplication, and their morphisms, mappings of $\varOmega$ monads compatible with their monadic structure. 
We then prove that calibrated hypergraphs on one hand and multi qudit states on the other 
naturally form graded $\varOmega$ monads. In this way, we pave the way to the construction of the calibrated
hypergraph state map as a special morphism of these $\varOmega$ monads carried out in II. 

In II, relying on the graded $\varOmega$ monadic framework elaborated in I and focusing on 
qudits over a generic Galois ring, 
we explicitly construct a calibrated hypergraph state map as a special morphism of the
calibrated hypergraph and multi qudit state $\varOmega$ monads.
We show furthermore that calibrated hypergraph states are locally maximally entangleable
stabilizer states, elucidate their relationship to weighted hypergraph states, prove that
they reduce to the weighted ones in the customary qubit
case and demonstrate through examples that this is no longer the case for higher qudits. 
We finally illustrate a number of special technical results and applications.  

The present paper covers Part I. Its contents are outlined in more detail in subsect. \cref{subsec:plan} below.




\subsection{\textcolor{blue}{\sffamily The graded monadic framework of hypergraph states}}\label{subsec:plan}

In this paper, we elaborate a categorical framework for the analysis and computation 
of qudit hypergraph states. 
Our work has a somewhat abstract outlook and is rather technical. To make its reading more straightforward,
in this subsection we outline the main ideas underlying its development. 
The discussion will be informal. Precise definitions of the concepts presented and rigorous statements
of the properties of the set--up constructed are to be found in the main body of the paper. 

The basic principle on which the theory we propose is based can be formulated as follows.
Hypergraphs and their data on one hand and multipartite quantum states on the other are naturally 
described by Pro categories isomorphic to the Pro category of finite von Neumann ordinals and
equipped with a graded monadic multiplicative structure. In subsect. \cref{subsec:design} below, 
we shall justify this principle by showing how it is implemented in and analysing
its implications for hypergraph state theory. Refs. \!\!\ccite{MacLane:1978cwm,Fong:2019act}
and app. \cref{app:cat} provide background on category theory.

Pro and the related Prop categories
were introduced long ago having in mind applications to algebraic topology
\,\ccite{Maclane:1965cta,Boardman:1968hes,May:1972ils,Markl:2002otp}, but with time
through the development of operads they found usage also in universal algebra and graph theory. 
A Pro category is a strict monoidal category all of whose objects are monoidal powers
of a single generating object. 
The category of finite von Neumann ordinals $\varOmega$ is the prototypical example of Pro category.
It is a concrete category featuring the ordinal sets $[l]$ as objects 
and the ordinal functions $f:[l]\rightarrow[m]$ as morphisms. Its
monoidal product formalizes the set theoretic operation of disjoint union of ordinal sets and 
functions through the usual ordinal sum.



The basic building blocks of hypergraphs are vertices. The fundamental components of multipartite 
systems are dits. The ordinal Pro category $\varOmega$ provides a
combinatorial model which describes such units avoiding the large redundancy
associated with the different ways in which they are concretely realized. Each set of $l$ units is identified with
the ordinal set $[l]$ viewed as a standard unit set. The functions mapping a set of $l$ units into one
of $m$ units are accordingly identified with the ordinal functions $f:[l]\rightarrow[m]$ 
construed as standard functions of standard unit sets. The operation of disjoint union of sets and functions of units
are then described by the monoidal products of the ordinal sets and functions which represent them.



The categories considered in this paper, such as the various hypergraph categories
and the multi dit mode categories,
are instances of $\varOmega$ categories, that is categories `modelled' on the ordinal Pro category $\varOmega$.
An $\varOmega$ category is a Pro category $D\varOmega$ to\-gether with a strict monoidal isofunctor
$D:\varOmega\rightarrow D\varOmega$. 
From the perspective of category theory, therefore, an $\varOmega$ category $D\varOmega$ is indistinguishable from
$\varOmega$. However, when $D\varOmega$ is a concrete category, as is always the case 
in the applications of our theory, the explicit incarnations of the objects $D[l]$ 
and morphisms $Df:D[l]\rightarrow D[m]$ of $D\varOmega$
as sets and functions do matter.

A graded $\varOmega$ monad is a concrete $\varOmega$ category $D\varOmega$ equipped with an
associative and unital graded monadic multiplication. The monoidal and monadic multiplicative structures of $D\varOmega$
are compatible but distinct and should not be confused. The former acts on the objects and morphisms
of $D\varOmega$, the latter on the elements forming the objects of $D\varOmega$. Indeed, monoidal and monadic 
multiplication stand in a relationship analogous to that occurring between tensor multiplication of
Hilbert spaces and vectors belonging to those spaces. Graded $\varOmega$ monad morphisms can be defined:
they are suitable maps of graded $\varOmega$ monads compatible with their monadic structures.


Graded $\varOmega$ monads and monad morphisms form a category $\ul{\rm GM}_\varOmega$.
This category bears a close relationship to the category $\ul{\rm GM}$
of graded monads. Graded monads were introduced in category theory relatively recently
\ccite{Mellies:2012mea,Katsumata:2014pms,Mellies:2017pcm,Fujii:2019gim}. They have found applications
in computer science, bounded linear logic and probability theory. In our work, another application has been found.

\begin{theor} \label{theor:ocjs2gm}
For any small subcategory $\clS$ of the graded $\varOmega$ monad category $\ul{\rm GM}_\varOmega$, 
there is a functor $\msH_{\hfpt\clS}:\clS\rightarrow\ul{\rm GM}$ injective on objects and morphisms
mapping $\clS$ into the graded monad category $\ul{\rm GM}$.
\end{theor}

\noindent
In this way, graded $\varOmega$ monads and monad morphisms can be identified
with genuine graded monads and monad morphisms. This is what justifies their name. 

The finite ordinal category $\varOmega$, $\varOmega$ categories and graded $\varOmega$ monads
are studied in detail in sect. \cref{sec:omegactg}. The results presented there
provide the formal categorical underpinning for the analysis of later sections. However, we do not insist
on the abstract categorical aspects of our theory beyond what is strictly necessary for its quantum theoretic
applications.

We outline next the main graded $\varOmega$ monads entering our theory.
The design of these categories, however arbitrary it may seem at a first reading, is justified ultimately
by the function they fulfill in the qudit hypergraph state construction and the eventual insight 
in hypergraph state theory which accrues to this latter.


Hypergraphs are collections of hyperedges and hyperedges are non empty sets of vertices. 
Hypergraphs are organized in an appropriate graded $\varOmega$ monad,
the hypergraph $\varOmega$ monad $G\varOmega$.
For each ordinal $[l]$, $G[l]$ is the set of all hypergraphs over the vertex set $[l]$;  
in similar fashion, for each ordinal function $f:[l]\rightarrow[m]$, $Gf:G[l]\rightarrow G[m]$ is the function
of hypergraph sets induced by $f$ as a vertex set function. The monadic multiplicative structure of
$G\varOmega$ is shaped in a manner roughly patterned after the disjoint union of hypergraphs as collections of hyperedges.

Hyperedge calibration data are a generalization of hyperedge weights, whose nature we shall make more
precise later. The construction of calibrated hypergraph states requires endowing the hyperedges
of the underlying hypergraphs with calibration data. The augmented hypergraphs obtained in this manner are called
calibrated. Calibrated hypergraphs are described by another graded $\varOmega$ monad, the calibrated
hypergraph $\varOmega$ monad $G_C\varOmega$. The $\varOmega$ monadic structure of $G_C\varOmega$ extends 
compatibly that of $G\varOmega$ and therefore $G_C\varOmega$ can be regarded as an enhancement of
$G\varOmega$. $G_C\varOmega$ has a richer content than $G\varOmega$, but it has
otherwise a very similar layout.


Hypergraphs augmented with ordinary hyperedge weights,
or weighted hypergraphs, organize in yet another graded $\varOmega$ monad, the weighted hypergraph $\varOmega$ monad 
$G_W\varOmega$. Unlike the $\varOmega$ monads $G\varOmega$, $G_C\varOmega$ introduced above,
which are essential to our analysis, $G_W\varOmega$ serves mainly the purpose of comparing our approach to hypergraphs
states to others existing in the literature. The relationship between $G_C\varOmega$ and $G_W\varOmega$
is encoded formally by a special morphism relating them that can be explicitly described. 

The hypergraph $\varOmega$ monad $G\varOmega$ and its calibrated and weighted enhancements $G_C\varOmega$ and 
$G_W\varOmega$ are studied in depth in sect. \cref{sec:graphctg}.

In classical theory, the distinct configurations of a single cdit are indexed by some finite set $\msR$. As long as one
aims only to a mere cataloguing of the cdit's configurations, $\msR$ can be assumed to be just a collection of labels.
However, the theory of computation invariably requires $\msR$ to be endowed with some kind of
algebraic structure depending on the approach followed and the goals pursued. At the very least, $\msR$ should
be a commutative monoid.
In quantum theory, the independent states of a single qudit are indexed by the same monoid $\msR$
as its cdit counterpart by the principle of base encoding.
The independent states of a single qudit therefore span a state Hilbert space $\scH_1$ of dimension $|\msR|$. 


Multi cdit configurations are described by a specific graded $\varOmega$ monad,
the multi cdit configuration $\varOmega$ monad $E\varOmega$. For each ordinal
$[l]$, $E[l]$ is the set of all configurations of the cdit set $[l]$; 
accordingly, for each ordinal function $f:[l]\rightarrow[m]$, $Ef:E[l]\rightarrow E[m]$ \linebreak is the
function of configuration sets induced by $f$ as a function of cdit sets, a map intuitively
describing the configuration `expansion/compression' implemented by $f$. The monadic multiplicative
structure of $E\varOmega$ is defined as concatenation of configurations
regarded as strings of $\msR$ labels.



Multi qudit states are organized in a graded $\varOmega$ monad that may be regarded
as the basis encoding of $E\varOmega$, the multi qudit state $\varOmega$ monad $\scH_E\varOmega$.
The design of $\scH_E\varOmega$ mirrors therefore that of $E\varOmega$ with multi qudit state Hilbert spaces 
replacing multi cdit configuration sets and linear operators between those spaces
taking the place of the functions between those sets. 
Tensor multiplication of state vectors further substitutes concatenation of configuration strings. 

The multi cdit configuration and multi qudit state $\varOmega$ monads $E\varOmega$ are 
$\scH_E\varOmega$ are defined and studied in detail in sect. \cref{sec:grsttsctg}.


\subsection{\textcolor{blue}{\sffamily Preview of the calibrated hypergraph state construction}}\label{subsec:design}

The calibrated hypergraph state construction is at the same time the final goal and the
ultimate justification of the elaboration of the categorical set--up delineated in subsect. \cref{subsec:plan}. 
In this subsection, we preview the construction of the calibrated hypergraph state map carried out in 
part II of our endeavour and hopefully convince the reader that the $\varOmega$ monadic framework 
is indeed an appropriate one for the analysis of hypergraph states, as claimed. 

Before proceeding to showing how the calibrated hypergraph state construction can be formulated in an
$\varOmega$ monadic framework, it is necessary to specify the notation used.
A calibrated hypergraph formed by a hypergraph $H$ and a hyperedge calibration data set $\varrho$ over $H$
is indicated by a pair $(H,\varrho)$. The monadic multiplication of calibrated hypergraphs is expressed by the symbol
$\smallsmile$ to remind its analogy to disjoint union.
The calibrated hypergraph monadic unit is denoted by $(O,\varepsilon)$ to recall its being empty. A multi qudit state vector
is denoted by a Dirac ket $\ket{\varPsi}$. The monadic multiplication of state vectors is indicated by
the tensor multiplication symbol $\otimes$, in conformity with its mathematical meaning. The state vector monadic unit
is denoted by $\ket{0}$ signifying just the complex unit.



The calibrated hypergraph state construction associates with any calibrated hypergraph
$(H,\varrho)\in G_C[l]$ a hypergraph state $\ket{(H,\varrho)}\in\scH_E[l]$. 
It is covariant, since 
\begin{equation}
\label{i/whgsts16}
\scH_Ef\ket{(H,\varrho)}=\ket{G_Cf(H,\varrho)}
\end{equation}
for any ordinal function $f:[l]\rightarrow[m]$. It is further compatible with the monadic multiplication
in the sense that \hphantom{xxxxxxxxxxxx} 
\begin{equation}
\label{i/whgsts18}
\ket{(H,\varrho)\smallsmile(K,\varsigma)}=\ket{(H,\varrho)}\otimes\ket{(K,\varsigma)}
\end{equation}
for $(H,\varrho)\in G_C[l]$, $(K,\varsigma)\in G_C[m]$ and moreover satisfies 
\begin{equation}
\label{i/whgsts19}
\ket{(O,\varepsilon)}=\ket{0}.
\end{equation}
By these properties, the maps 
$\ket{-}:G_C[l]\rightarrow \scH_E[l]$ assigning the 
hypergraph state $\ket{(H,\varrho)}\in\scH_E[l]$ to each calibrated hypergraph $(H,\varrho)\in G_C[l]$
for the different values of $l$ specify a distinguished morphism of the graded $\varOmega$ monads 
$G_C\varOmega$, $\scH_E\varOmega$. 
An $\varOmega$ monad theoretic interpretation of the calibrated hypergraph state map $\ket{-}$ such as this,
although interesting in its own, hides however a far more relevant point:
relations \ceqref{i/whgsts16}--\ceqref{i/whgsts19} provide valuable information about
the nature of the hypergraph calibration data, which we have not specified yet, 
and essentially determine the map $\ket{-}$. 
This, ultimately, is the reason why our categorical set--up is not an end in itself 
but an indispensable reference grid for the analysis of hypergraph states.



As recalled earlier, the set $\msR$ labelling a cdit's configurations
must have some kind of algebraic structure in computation theory.
$\msR$ is often supposed to be a Galois field. However, this assumption
may sometimes be relaxed. Our construction of hypergraph states works also when $\msR$ is
more generally a Galois ring. Background on Galois ring theory can be found in refs.
\ccite{Wan:2011gfr,Bini:2002gfr,Kibler:2017gre} and app. A of II. 

A Galois ring $\msR$ is determined up to isomorphism by its characteristic $p^r$,
a prime power, and degree $d$. It is a finite ring with $q=p^{rd}$ elements altogether.
It further includes a prime subring $\msP$, a minimal Galois subring isomorphic to the ring $\bbZ_{p^r}$ of integers
modulo $p^r$. It finally features a trace function $\tr:\msR\rightarrow\msP$ with important surjectivity and
non singularity properties.

Since a Galois ring $\msR$ is finite, for any fixed $x\in\msR$ the exponents $u\in\bbN$ giving distinct powers
$x^u$ constitute a finite set $\msZ_x$ of exponents depending on $x$. $\msZ_x$ is in fact
a cyclic monoid. The monoids $\msZ_x$ for varying $x\in\msR$
can be assembled into the direct sum monoid $\msZ=\bigoplus_x\msZ_x$, called
the cyclicity monoid of $\msR$. The power of a ring element $x\in\msR$ to the exponent $u\in\msZ$ is defined to be 
$x^u=x^{u_x}$ if $u=(u_x)_{x\in\msR}$.

Sect. 2 of II reviews the properties of Galois rings most relevant in calibrated hypergraph state theory
and other topics such as the qudit Pauli group
and quantum Fourier transformation, which enter as essential elements in its elaboration. 

We illustrate next the calibrated hypergraph state construction in some detail. 
An exponent function of a given vertex set $X$ is any function $w:X\rightarrow\msZ$.
Since $\msZ$ is a monoid, the exponent functions of $X$ form a monoid, the exponent
monoid $\msZ^X$. A calibration of $X$ is any function $\varrho_X:\msZ^X\rightarrow\msP$.
A calibration $\varrho$  over a hypergraph $H\in G[l]$ is an assignment to each hyperedge $X\in H$ of a calibration
$\varrho_X$ of $X$.

The hypergraph state $\ket{(H,\varrho)}\in\scH_E[l]$ associated with a calibrated hypergraph
$(H,\varrho)\in G_C[l]$ is given by\hphantom{xxxxxxxxxx}
\begin{equation}
\label{i/whgsts10}
\ket{(H,\varrho)}=D_{(H,\varrho)}\ket{0_l},
\end{equation}
where $D_{(H,\varrho)}$ is a linear operator of $\scH_E[l]$ of the form 
\begin{equation}
\label{i/whgsts11}
D_{(H,\varrho)}=\mycom{{}_\sss}{{}_{x\in E[l]}}F_l{}^+\ket{x}\,\omega^{\sigma_{(H,\varrho)}(x)}\hfpt\bra{x}F_l.
\end{equation}
Here, $F_l$ is the quantum Fourier transform operator and $\omega=\exp(2\pi i/p^r)$. 
The vectors $\ket{x}$, $x\in E[l]$, constitute the qudit Hadamard basis of
$\scH_E[l]$ to which $\ket{0_l}$ belongs. The phase function
$\sigma_{(H,\varrho)}:E[l]\rightarrow\msP$  reads as
\begin{equation}
\label{i/whgsts9}
\sigma_{(H,\varrho)}(x)=\mycom{{}_\sss}{{}_{X\in H}}\mycom{{}_\sss}{{}_{w\in\msZ^X}}\varrho_X(w)
\tr\left(\mycom{{}_\ppp}{{}_{r\in X}}x_r{}^{w(r)}\right) 
\end{equation}
with $x\in E[l]$. 
The state $\ket{(H,\varrho)}$ is encoded by the hypergraph $(H,\varrho)$ through $\sigma_{(H,\varrho)}$. 

The definition of hypergraphs states we have submitted is justified ultimately 
by the following theorem which is the main result reached in II. 

\begin{theor} \label{theor:main}
The calibrated hypergraph state map defined by expressions \ceqref{i/whgsts10}--\ceqref{i/whgsts9}
satisfies properties \ceqref{i/whgsts16}--\ceqref{i/whgsts19}.
\end{theor}

\noindent
The hypergraph states obtained in this way have all the properties expected
from such states.

\begin{theor} \label{theor:basic}
The calibrated hypergraph states given by expressions \ceqref{i/whgsts10}--\ceqref{i/whgsts9}
are locally maximally entangleable stabilizer states \ccite{Kruszynska:2008lem}. 
\end{theor}

\noindent
The  stabilizer group operators $K_{(H,\varrho)}(a)$ of the hypergraph states $\ket{(H,\varrho)}$ 
and the associated hypergraph orthonormal bases $\ket{(H,\varrho),a}$ can also be obtained. 

Our procedure for generating calibrated hypergraph states is in its main lines
analogous to that yielding weighted hypergraph states appearing in earlier literature such as refs.
\!\!\ccite{Qu:2012eqs,Rossi:2012qhs}
or, in the qudit case, refs. \!\!\ccite{Steinhoff:2016:qhs,Xiong:2017qhp}. Couched in our $\varOmega$ monadic language
for the purpose of comparison, those authors associate with any weighted hypergraph $(H,\alpha)\in G_W\varOmega$
a weighted hypergraph state $\ket{(H,\alpha)}\in\scH_E[l]$ given by
an expression analogous to \ceqref{i/whgsts10}, \ceqref{i/whgsts11} but with $\sigma_{(H,\alpha)}$ of the form 
\begin{equation}
\label{i0/whgsts9}
\sigma_{(H,\alpha)}(x)=\mycom{{}_\sss}{{}_{X\in H}}\alpha_X\tr\left(\mycom{{}_\ppp}{{}_{r\in X}}x_r\right)\!,
\end{equation}
where $\alpha_X\in\msP$ is the weight factor of hyperedge $X$.
Albeit the weighted hypergraph state construction is sound, the hypergraph states it produces 
are not sufficiently general for all the properties \ceqref{i/whgsts16}--\ceqref{i/whgsts19} to be enjoyed.
The calibrated hypergraph states proposed in the present endeavour conversely
possess those properties by design. 

We quote next a further results which may support our proposal.

\begin{theor} \label{theor:calweicomp}
For each weighted hypergraph $(H,\alpha)\in G_W[l]$, there exists a calibrated hypergraph
$(H,\varrho)\in G_C[l]$ with the same underlying hypergraph $H\in G[l]$ with the property that
$\ket{(H,\alpha)}=\ket{(H,\varrho)}$.
\end{theor}

\noindent
So, the calibrated hypergraph approach  subsumes the weighted hypergraph one.
The relationship of the two approaches could be summarized in simple words by saying that while 
the phase functions of calibrated hyper graph states
exhibit all possible powers of the Galois qudit variables those of the weighted hypergraph states
do only the $0$-th and $1$-st ones. 
The critical question about whether genuinely new hypergraph state entanglement classes are generated
in this way will not be addressed in this work. 

For qubits, for which the relevant Galois ring $\msR$ is the simple
binary field $\bbF_2$, the calibrated hypergraph state construction does not yield
new states  beyond the weighted hypergraph ones already known.

\begin{theor} \label{theor:qudqubcomp}
Assume that $\msR=\bbF_2$ is the binary field. Then, for each calibrated hypergraph $(H,\varrho)\in G_C[l]$
there exists a weighted hypergraph $(K,\alpha)\in G_W[l]$ such that 
$\ket{(K,\alpha)}=\ket{(H,\varrho)}$ up to a sign.
\end{theor}

The calibrated hypergraph state construction and its properties are expounded in sect. 3 of II,
where the above results and more are obtained.

When the Galois ring $\msR$ is a field, the general expression of the calibrated hyper\-graph state map
can be recast into a fully polynomial form thanks to the existence of a distinguished subset of
polynomials taken from the polynomial ring $\msR[\sfx]$ computing the  powers of the field's elements to
the cyclicity monoid's exponents.
This and  other selected applications of the theory of calibrated hypergraph states we
have developed can be found in sect. 4 of II.


\subsection{\textcolor{blue}{\sffamily Scope and outlook of the work}}\label{subsec:scope}

With the present work, we have provided a more general definition of hypergraph states based on and justified by an underlying
graded monadic framework for hypergraphs and quantum multipartite states. 
The approach we have followed 
is motivated by the wish of 
{\it

\begin{enumerate}[label=\alph*)] 

\item {\rm finding original applications of the methods of category theory to quantum information theory,}

\item {\rm achieving a more complete and precise theoretical understanding of hypergraphs states,}

\item {\rm providing an alternative formulation of the problems of hypergraph state theory which might help to find their solution and }

\item {\rm devising new computational techniques for hypergraph states.}
  
\end{enumerate}
}
\noindent
Only time will say whether these goals will be fully achieved. More modestly, one might concentrate on the first two of them. 


The pending fundamental question is whether the hypergraph states generated in this way
represent genuinely new entanglement classes in the standard local classification schemes --
local unitary/local Clifford, separable and stochastic local operations with classical communications --
or else fall into already known entanglement classes. This matter is non trivial and cannot be solved
by us in the limited space of a single work. We therefore leave it for future work.

\vfill\eject

\noindent
\markright{\textcolor{blue}{\sffamily Conventions}} 


\noindent
\textcolor{blue}{\sffamily Conventions.}
In this paper, we adopt the following notational conventions.

\begin{enumerate}[leftmargin=*]

\item \label{it:conv1} We indicate by $|A|$ the cardinality of a finite set $A$.

\item \label{it:conv2} For any set $A$, we let $e_A:\emptyset\rightarrow A$ be 
the empty function with range $A$. 


\item \label{it:conv3} Complying with the most widely used convention computer science,
we denote by $\bbN$ the set of all non negative integers. Hence, $0\in\bbN$.
We denote  by $\bbN_+$ the set of all strictly positive integers. 

\item \label{it:conv4} For $l\in\bbN$, we let $[l]=\emptyset$ if $l=0$ and $[l]=\{0,\dots,l-1\}$ if $l>0$
be the standard finite von Neumann ordinals.

\item \label{it:conv5} If $l\in\bbN$ and $A\subset\bbN$, we let $A+l=\{i+l|i\in A\}$. Accordingly, if $M\subset P\bbN$
(the power set of $\bbN$) is a collection of subsets of $\bbN$, we let $M+l=\{A+l|A\in M\}$.

\item \label{it:conv6} An indexed finite set is a finite set $A$ together with a bijection $a:[|A|]\rightarrow A$.
An indexed finite set $(A,a)$ is so described by the $|A|$--tuple $(a(0),\ldots,a(|A|-1))$.
As a rule, 
we shall write the indexed set as $A$ and the associated $|A|$--tuple
as $(a_0,\ldots,a_{|A|-1})$ for simplicity.
When $A$ is a totally ordered set, e.g. a set of integers,
then it is tacitly assumed that the indexing employed
is the one such that $a_r<a_s$ for $r,s\in[|A|]$ with $r<s$, unless otherwise stated. 

\item \label{it:conv7} If $A$, $B$ are finite sets and $f:A\rightarrow B$ is a function and furthermore $A$ is indexed
as $(a_0,\ldots,a_{|A|-1})$, then $f$ is specified by the $|A|$--tuple $(f(a_0),\ldots,f(a_{|A|-1})_{A}$
of its values. 

\end{enumerate}

\vfill\eject

\vfill\eject

\renewcommand{\sectionmark}[1]{\markright{\thesection\ ~~#1}}

\section{\textcolor{blue}{\sffamily The ordinal category $\varOmega$,  $\varOmega$ categories
    and graded $\varOmega$ monads}}\label{sec:omegactg}

The $\varOmega$ monadic framework is the universal categorical paradigm underlying our treatment of
calibrated hypergraph and hypergraph states. In this section, we introduce it and study its main properties. 

The theory of the finite ordinal category $\varOmega$ is reviewed in subsect. \cref{subsec:finordcat}.
In subsect. \cref{subsec:omegacat}, we present $\varOmega$ categories, the kind of categories
relevant in our analysis, and investigate their properties. Graded $\varOmega$ monads 
are defined and studied in subsect. \cref{subsec:dcatjoint}. Some useful background material of category theory
can be found in app. \cref{app:cat}. \ccite{MacLane:1978cwm,Fong:2019act} are standard references on the subject. 
The exemplification of the theory worked out here will be provided in later sections.


\subsection{\textcolor{blue}{\sffamily The finite ordinal category $\varOmega$}}\label{subsec:finordcat}

The category $\ul{\rm FinOrd}$ of finite ordinals and functions plays a central role in our analysis. 
$\ul{\rm FinOrd}$ is a full subcategory and the skeleton of the category $\ul{\rm FinSet}$
of finite sets and functions thereof.
We shall conveniently denote $\ul{\rm FinOrd}$ and $\ul{\rm FinSet}$ as $\varOmega$ and $\varSigma$,
respectively. 

\begin{prop} \label{prop:omegacat}
There exists a strict monoidal category $\varOmega$ with the following layout.

\begin{enumerate}

\item The objects set $\Obj_\varOmega$ of $\varOmega$ is the collection of the finite von Neumann
ordinals $[l]$ with $l\in\bbN$.

\item For any $l,m\in\bbN$, the morphism set $\Hom_\varOmega([l],[m])$ of $\varOmega$ consists of all functions
$f:[l]\rightarrow[m]$.
  
\item For any $l,m,n\in\bbN$, the composite $g\circ f\in\Hom_\varOmega([l],[n])$ of
  $f\in\Hom_\varOmega([l],[m])$, $g\in\Hom_\varOmega([m],[n])$ is the usual composite of functions. 


\item For any $l\in\bbN$, the identity $\id_{[l]}\in\Hom_\varOmega([l],[l])$ of $[l]\in\Obj_\varOmega$
is the usual identity function of the set $[l]$.

\item For any $l,m\in\bbN$, the monoidal product $[l]\smallsmile[m]\in\Obj_\varOmega$ of 
$[l]$, $[m]\in\Obj_\varOmega$ is 
\begin{equation}
\label{gamma1}
[l]\smallsmile[m]=[l+m].
\end{equation}

\item For any $l,m,p,q\in\bbN$, the monoidal product $f\smallsmile g\in\Hom_\varOmega([l]\smallsmile[m],[p]\smallsmile[q])$ 
of $f\in\Hom_\varOmega([l],[p])$, $g\in\Hom_\varOmega([m],[q])$ is given by 
\begin{equation}
\label{gamma2}
f\smallsmile g(i)=\Bigg\{
\begin{array}{ll}
f(i)&\text{if $i\in[l]$},\\
g(i-l)+p&\text{if $i\in[m]+l$}.
\end{array}
\end{equation}

\item The monoidal unit is $[0]\in\Obj_\varOmega$.
  
\end{enumerate}
\end{prop}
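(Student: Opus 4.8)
The plan is to verify directly that the data listed in items 1--7 satisfy the axioms of a strict monoidal category. Every check reduces to an elementary statement about functions of finite ordinals, so no conceptual machinery is required beyond careful bookkeeping of the index shifts occurring in \ceqref{gamma2}. Conceptually, the monoidal structure being defined is simply the transport of the coproduct (disjoint union) of $\varSigma=\ul{\rm FinSet}$ along the skeletal inclusion $\varOmega\hookrightarrow\varSigma$, rigidified by choosing ordinal sum as a concrete model of disjoint union; the whole task is to confirm that this rigidification is internally consistent.

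First I would dispatch items 1--4. Since $\varOmega=\ul{\rm FinOrd}$ is by construction the full subcategory of $\ul{\rm FinSet}$ on the objects $[l]$, $l\in\bbN$, associativity of $\circ$ and the identity laws for $\id_{[l]}$ are inherited verbatim from composition of functions; there is nothing to prove. Thus $\varOmega$ is a category with the stated objects, morphisms, composition and identities.

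Next I would treat the monoidal product $\smallsmile$. On objects, $[l]\smallsmile[m]=[l+m]$ is manifestly well defined. On morphisms, one first checks that \ceqref{gamma2} does produce a function $[l+m]\to[p+q]$: the sets $[l]$ and $[m]+l$ partition $[l+m]$, for $i\in[l]$ one has $f(i)\in[p]\subset[p+q]$, and for $i\in[m]+l$ one has $i-l\in[m]$, hence $g(i-l)\in[q]$ and $g(i-l)+p\in[q]+p\subset[p+q]$. Bifunctoriality of $\smallsmile$ then amounts to two identities. The identity $\id_{[l]}\smallsmile\id_{[m]}=\id_{[l+m]}$ is read off \ceqref{gamma2} at once. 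The interchange law $(f'\circ f)\smallsmile(g'\circ g)=(f'\smallsmile g')\circ(f\smallsmile g)$, for $f\colon[l]\to[p]$, $f'\colon[p]\to[p']$, $g\colon[m]\to[q]$, $g'\colon[q]\to[q']$, is checked by case split: for $i\in[l]$ both sides return $f'(f(i))$; for $i\in[m]+l$ one has $(f\smallsmile g)(i)=g(i-l)+p\in[q]+p$, so $(f'\smallsmile g')$ acts through its second branch and returns $g'(g(i-l))+p'$, which matches $((f'\circ f)\smallsmile(g'\circ g))(i)=(g'\circ g)(i-l)+p'$. This interchange computation is the one genuinely combinatorial step, and the place where an offset error is easiest to make.

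It remains to verify strictness. On objects, $([l]\smallsmile[m])\smallsmile[n]=[(l+m)+n]=[l+(m+n)]=[l]\smallsmile([m]\smallsmile[n])$ by associativity of addition in $\bbN$, and $[0]\smallsmile[l]=[l]=[l]\smallsmile[0]$ since $[0]=\emptyset$. On morphisms, $(f\smallsmile g)\smallsmile h=f\smallsmile(g\smallsmile h)$ follows from a three-way case split of $[l+m+n]$ into $[l]$, $[m]+l$, $[n]+l+m$ with the matching range shifts, and $\id_{[0]}\smallsmile f=f=f\smallsmile\id_{[0]}$ is immediate from \ceqref{gamma2} because the branch indexed by $[0]$ is empty. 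Since the associators and unitors thus obtained are all identity morphisms, the pentagon and triangle coherence conditions hold trivially, and $\varOmega$ is a strict monoidal category. The main (and only) obstacle is notational rather than conceptual: keeping track of which branch of the piecewise definition \ceqref{gamma2} is active after composition, which is why I would isolate the interchange law and the morphism-level associativity as the two computations to write out with care.
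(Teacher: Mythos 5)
Your proposal is correct and follows essentially the same route as the paper: items 1--4 are inherited from $\varSigma$, and the strict monoidality requisites \ceqref{moncat0}--\ceqref{moncat4} are checked directly from \ceqref{gamma1}, \ceqref{gamma2}. You merely write out the case splits (interchange law, morphism-level associativity, unitality) that the paper dismisses as "readily verified," and your bookkeeping of the index shifts is accurate.
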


\noindent 
We notice that above one has
\vspace{-.5mm}
\begin{align}
\label{}
&\Hom_\varOmega([l],[0])=\emptyset \qquad\qquad \text{if $l>0$},
\\
\label{}
&\Hom_\varOmega([0],[m])=\{e_{[m]}\}  
\end{align}
\vspace{-.5mm}
for $l,m\in\bbN$. 


\begin{proof}
The composition operation $\circ$ and the identity assigning map $\id$ of $\varOmega$ are
the same as those of the finite set category $\varSigma$ and thus satisfy 
the associativity and unitality relations \eqref{catdef1}, \eqref{catdef2}, rendering $\varOmega$
a category.
Using relations \ceqref{gamma1}, \ceqref{gamma2}, it is also readily verified that the monoidal 
multiplication $\smallsmile$ and unit $[0]$ of $\varOmega$ fulfil the strict monoidality requisites
\ceqref{moncat0} --\ceqref{moncat4}, in this way making the category $\varOmega$ strict monoidal. 
\end{proof}

\noindent
$\varOmega$ is called the finite von Neumann ordinal or more simply ordinal category.


A distinguished property of the category $\varOmega$ is that for all $l\in\bbN$ the object $[l]$
is the $l$--th monoidal power of the object $[1]$: 
\begin{equation}
\label{omegapro}
[l]=[1]\smallsmile\cdots\smallsmile[1] \qquad \text{($l$ terms).}
\end{equation}
$\varOmega$ is therefore an instance of Pro category
\ccite{Maclane:1965cta,Boardman:1968hes,May:1972ils,Markl:2002otp},
a distinguished kind of strict monoidal category reviewed in app. \cref{subsec:procat}.

\begin{prop}\label{prop:omegapro}
$\varOmega$ is a Pro category with generating object $[1]$.
\end{prop}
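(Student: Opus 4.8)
The plan is to invoke directly the definition of Pro category recalled in app.\ \cref{subsec:procat}: a strict monoidal category all of whose objects are monoidal powers of a single fixed generating object. Since Prop.\ \cref{prop:omegacat} has already exhibited $\varOmega$ as a strict monoidal category, the only thing that remains is to check that every object $[l]$ of $\varOmega$ coincides with the $l$-th monoidal power $[1]^{\smallsmile l}$ of the object $[1]$, which is exactly relation \ceqref{omegapro}.

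First I would establish \ceqref{omegapro} by induction on $l\in\bbN$. In the base case $l=0$, the empty monoidal power of $[1]$ is by the usual convention the monoidal unit of $\varOmega$, which by part 7 of Prop.\ \cref{prop:omegacat} is $[0]$; the case $l=1$ is the tautology $[1]=[1]$. For the inductive step, assuming $[1]^{\smallsmile l}=[l]$, the associativity of $\smallsmile$ together with the defining relation \ceqref{gamma1} yields $[1]^{\smallsmile(l+1)}=[1]^{\smallsmile l}\smallsmile[1]=[l]\smallsmile[1]=[l+1]$, closing the induction. Since by part 1 of Prop.\ \cref{prop:omegacat} the object set of $\varOmega$ is precisely $\{[l]\mid l\in\bbN\}$, every object of $\varOmega$ is a monoidal power of $[1]$, so $\varOmega$ is a Pro category with generating object $[1]$.

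I do not anticipate any genuine obstacle: the statement is essentially a bookkeeping corollary of Prop.\ \cref{prop:omegacat} and of equation \ceqref{gamma1}. The only point deserving a little care is the $l=0$ case, where one relies on the standard convention identifying an empty monoidal product with the monoidal unit, and where it must be recorded that this unit is the object $[0]$ as fixed in part 7 of Prop.\ \cref{prop:omegacat}. Depending on how the appendix formulates the definition, one might additionally want to remark that $[1]$ is not itself a nontrivial monoidal power of a smaller object, but this is immediate from $\ceqref{gamma1}$ since $[l]\smallsmile[m]=[1]$ forces $l+m=1$.
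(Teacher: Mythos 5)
Your proposal is correct and follows the same route as the paper: Prop.~\cref{prop:omegacat} gives strict monoidality, and relation \ceqref{omegapro} gives the Pro requisite \ceqref{procat1} with generating object $[1]$. The only difference is that you spell out the inductive verification of \ceqref{omegapro} from \ceqref{gamma1} (including the $l=0$ convention), which the paper simply asserts as an evident consequence of \ceqref{gamma1}.
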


\begin{proof}
By virtue of prop. \cref{prop:omegacat}, $\varOmega$ is a strict monoidal category with monoidal product $\smallsmile$. 
By relation \ceqref{omegapro}, $\varOmega$ satisfies moreover the defining requisite
of a Pro category, eq. \ceqref{procat1}, having $[1]$ as its generating object. 
\end{proof}

\begin{rem} \label{rem:omegaswap}
{\rm The finite ordinal category $\varOmega$ turns out to be in fact a symmetric strict monoidal category
(cf. def. \cref{def:symmon}). The swap maps of the underlying symmetric structure are the
morphisms $s_{[l],[m]}\in\Hom_\varOmega([l]\smallsmile[m],[m]\smallsmile[l])$, $l,m\in\bbN$, given by 
\begin{equation}
\label{symomg}
s_{[l],[m]}(i)=\Bigg\{
\begin{array}{ll}
i+m&\text{if $i\in[l]$},\\
i-l&\text{if $i\in[m]+l$}.
\end{array}
\end{equation}
%
%
The swap map relations \ceqref{symmon1}--\ceqref{symmon4} are straightforwardly shown to be fulfilled. 
Being both a symmetric strict monoidal and a Pro category renders $\varOmega$ a Prop category
(cf. app. \cref{subsec:procat}). This property of $\varOmega$
however plays no role in the subsequent analysis and is mentioned
here only for the sake of completeness. 
}
\end{rem}


\subsection{\textcolor{blue}{\sffamily $\varOmega$ categories}}\label{subsec:omegacat}

The finite ordinal category $\varOmega$ and its variants generated by narrowing its morphism sets are 
relevant in various branches of mathematics.
The reason why $\varOmega$ is key in our work is the following: 
$\varOmega$  constitutes a universal combinatorial model
for a number of concrete Pro categories employed in the theory of calibrated hypergraph states
developed by us, the so called $\varOmega$ categories. 
In this subsection, we introduce $\varOmega$ categories and analyze in some detail their general properties. 

\begin{defi} \label{def:omgcat}
A Pro category $\clC$ is an $\varOmega$ category if $\clC$ is isomorphic to $\varOmega$, that is
such that there is a strict monoidal isofunctor $D:\varOmega\rightarrow\clC$ (cf. def. \cref{def:monfun}).
\end{defi}

\noindent
$\clC$ can therefore be conveniently identified with the image category $D\varOmega$,
whose object collection $\Obj_{D\varOmega}$ consists of the images $D[l]$ of the objects $[l]\in\Obj_\varOmega$
and whose morphism sets $\Hom_{D\varOmega}(D[l],D[m])$ are constituted by the images $Df$
of the morphisms $f\in\Hom_\varOmega([l],[m])$.
$D$ furnishes so a one--to-one parametrization of the objects and morphisms $\clC$ by
those of $\varOmega$ and for this reason is called the stalk isofunctor of $\clC$.


The next proposition furnishes a general way of constructing $\varOmega$ categories 
often employed in practice. In essence, it delineates a method of generating strict monoidal
isofunctors $D$ having the ordinal category $\varOmega$ as source category.

\begin{prop} \label{prop:domegacat} 
Let

\begin{enumerate}[label=\alph*)] 

\item \label{O1} a non empty class $O$, 

\item \label{O2} a collection of non empty sets $H(l,m)$, $l,m\in\bbN$, such that for $l,m,p,q\in\bbN$
with $(l,p)\neq(m,q)$ $H(l,p)\cap H(m,q)=\emptyset$,  

\item \label{O3} an injective map $D:\Obj_\varOmega\rightarrow O$ and 

\item \label{O4} an injective map $D:\Hom_\varOmega([l],[m])\rightarrow H(l,m)$ for every
$l,m\in\bbN$
  
\end{enumerate}

\noindent
be given. Then, there exists a strict monoidal category $D\varOmega$ with the following layout.

\begin{enumerate} 

\item\label{d1}
The set $\Obj_{D\varOmega}=\{D[l]\hfpt|\hfpt l\in\bbN\}$ is the object set of $D\varOmega$.

\item\label{d2} The sets $\Hom_{D\varOmega}(D[l],D[m])=\{Df\hfpt|\hfpt f\in\Hom_\varOmega([l],[m])\}$ for varying
$l,m\in\bbN$ are the morphism sets of $D\varOmega$.

\item\label{d3} For every $l,m,n\in\bbN$ and  $f\in\Hom_\varOmega([l],[m])$, $g\in\Hom_\varOmega([m],[n])$, the composite
 $Dg\circ Df\in\Hom_{D\varOmega}(D[l],D[n])$ of $Df$, $Dg$ is given by 
\begin{equation}
\label{domg1}
Dg\circ Df=D(g\circ f).
\end{equation}

\item\label{d4} For every $l\in\bbN$, the identity $\id_{D[l]}\in\Hom_{D\varOmega}(D[l],D[l])$ of $D[l]$ reads as
\begin{equation}
\label{domg2}
\id_{D[l]}=D\id_{[l]}.
\end{equation}
 
\item\label{d5} For any $l,m\in\bbN$, the monoidal product $D[l]\smallsmile D[m]\in\Obj_{D\varOmega}$ of $D[l]$, $D[m]$ is 
\begin{equation}
\label{domg3}
D[l]\smallsmile D[m]=D([l]\smallsmile[m]).
\end{equation}
  
\item\label{d6} For every $l,m,p,q\in\bbN$ and $f\in\Hom_\varOmega([l],[p])$, $g\in\Hom_\varOmega([m],[q])$,
the monoidal product $Df\smallsmile Dg\in\Hom_{D\varOmega}(D[l]\smallsmile D[m],D[p]\smallsmile D[q])$
of $Df$, $Dg$  reads as 
\begin{equation}
\label{domg4}
Df\smallsmile Dg=D(f\smallsmile g).
\end{equation}

\item\label{d7} The monoidal unit of $D\varOmega$ is $D[0]\in\Obj_{D\varOmega}$.  

\end{enumerate}

\noindent
$D\varOmega$ is further a Pro category with generating object $D[1]$. Moreover, the map $D$ gives rise to 
a strict monoidal isofunctor $D:\varOmega\rightarrow D\varOmega$. $D\varOmega$ is so an $\varOmega$ category
and $D$ is its stalk isofunctor. 
\end{prop}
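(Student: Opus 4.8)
The plan is to transport the entire strict monoidal structure of $\varOmega$ recorded in Prop.~\cref{prop:omegacat} across the maps $D$ of hypotheses \cref{O3} and \cref{O4}, using their injectivity together with the disjointness assumption \cref{O2} to guarantee that the transported operations are well defined on the images. Concretely, I would read the formulas \ceqref{domg1}--\ceqref{domg4} not as identities to be verified but as \emph{definitions} of the composition, identity and monoidal product of $D\varOmega$; the whole content of the proof is then (i) that these definitions are unambiguous and (ii) that the resulting structure satisfies the category and strict monoidal category axioms, the latter following by pushing forward the corresponding axioms already valid in $\varOmega$.

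First I would isolate the bookkeeping fact that underlies well-definedness: every morphism $\varphi$ of $D\varOmega$ equals $Df$ for a \emph{unique} triple $(l,m,f)$ with $f\in\Hom_\varOmega([l],[m])$. Indeed, if $Df=Df'$ with $f:[l]\to[m]$ and $f':[l']\to[m']$, then $Df\in H(l,m)$ and $Df'\in H(l',m')$ by \cref{O4}, whence $(l,m)=(l',m')$ by the disjointness \cref{O2}, and then $f=f'$ by injectivity of $D$ on $\Hom_\varOmega([l],[m])$; likewise $l\mapsto D[l]$ is a bijection of $\bbN$ onto $\Obj_{D\varOmega}$ by \cref{O3}. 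Hence, given composable $\varphi\in\Hom_{D\varOmega}(D[l],D[m])$ and $\psi\in\Hom_{D\varOmega}(D[m],D[n])$, the preimages $f,g$ with $\varphi=Df$, $\psi=Dg$ are uniquely determined, so $\psi\circ\varphi:=D(g\circ f)$, $\id_{D[l]}:=D\id_{[l]}$ and $Df\smallsmile Dg:=D(f\smallsmile g)$ are unambiguous.

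Next I would verify the axioms, all of which reduce to identities already known in $\varOmega$. Associativity and unitality of $\circ$ come from $Dh\circ(Dg\circ Df)=D\big(h\circ(g\circ f)\big)=D\big((h\circ g)\circ f\big)=(Dh\circ Dg)\circ Df$ and $D\id_{[m]}\circ Df=D(\id_{[m]}\circ f)=Df=Df\circ D\id_{[l]}$, i.e.\ from \eqref{catdef1}, \eqref{catdef2}. Bifunctoriality of $\smallsmile$ comes from the interchange law in $\varOmega$, $(Df'\circ Df)\smallsmile(Dg'\circ Dg)=D\big((f'\circ f)\smallsmile(g'\circ g)\big)=D\big((f'\smallsmile g')\circ(f\smallsmile g)\big)=(Df'\smallsmile Dg')\circ(Df\smallsmile Dg)$, together with $D\id_{[l]}\smallsmile D\id_{[m]}=D\id_{[l+m]}=\id_{D[l]\smallsmile D[m]}$; strict associativity and unitality of $\smallsmile$ on objects and morphisms with unit $D[0]$ follow by applying $D$ to the strict monoidality relations \ceqref{moncat0}--\ceqref{moncat4} of $\varOmega$. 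This yields items \cref{d1}--\cref{d7}. For the remaining claims, applying $D$ to \ceqref{omegapro} and using \ceqref{domg3} gives $D[l]=D[1]\smallsmile\cdots\smallsmile D[1]$ ($l$ factors), so $D\varOmega$ is a Pro category with generating object $D[1]$ by \ceqref{procat1}; and $D$, preserving by construction composites, identities, the monoidal products of objects and morphisms, and the unit, is a strict monoidal functor which is bijective on objects (injective by \cref{O3}, surjective by definition of $\Obj_{D\varOmega}$) and on each hom-set (injective by \cref{O4}, surjective by definition), hence a strict monoidal isofunctor, so $D\varOmega$ is an $\varOmega$ category with stalk isofunctor $D$ in the sense of Def.~\cref{def:omgcat}.

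The one step I expect to be the real crux, everything else being a mechanical push-forward of structure already present on $\varOmega$, is the well-definedness argument of the second paragraph: it is precisely there that the pairwise disjointness \cref{O2} (so a morphism of $D\varOmega$ remembers its source and target, hence the indices $l,m$) and the injectivity of $D$ in \cref{O4} (so it then remembers the underlying ordinal function $f$) are both indispensable; without either, the assignments \ceqref{domg1}--\ceqref{domg4} could be multi-valued and the statement would fail.
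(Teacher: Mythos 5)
Your proposal is correct and follows essentially the same route as the paper: read \ceqref{domg1}--\ceqref{domg4} as definitions, check they are unambiguous, and push forward the category, strict monoidality and Pro axioms from $\varOmega$, concluding that $D$ is a strict monoidal isofunctor. Your well-definedness paragraph is in fact slightly more careful than the paper's (which appeals only to injectivity of $D$), since you make explicit that the pairwise disjointness of the $H(l,m)$ is what lets a morphism of $D\varOmega$ determine its indices $(l,m)$ before injectivity on each hom-set recovers $f$.
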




\begin{proof}
To begin with, we notice that the injectivity of the mapping $D$ ensures that both $\Obj_{D\varOmega}$
and the $\Hom_{D\varOmega}(D[l],D[m])$ are genuine sets and not multisets. It further guarantees that
the composition law and identity assigning map as well as the monoidal multiplication on objects
and morphisms of $D\varOmega$ are unambiguously defined. 
By prop. \cref{prop:omegacat}, $\varOmega$ is a strict monoidal category.
By \ceqref{domg1}, \ceqref{domg2}, the composition operation $\circ$ and the identity assigning map $\id$
of $D\varOmega$ obey the associativity and unitality relations \eqref{catdef1}, \eqref{catdef2} as
a consequence of those of $\varOmega$ doing so, making $D\varOmega$ a category. 
Using \ceqref{domg3}--\ceqref{domg4}, it is also straightforward to verify that the 
monoidal product $\smallsmile$ and unit $D[0]$ of $D\varOmega$ obey the strict monoidality 
relations \ceqref{moncat0}--\ceqref{moncat4} as a result of those of $\varOmega$ doing so, 
rendering the category $D\varOmega$ strict monoidal. From \ceqref{omegapro} and \ceqref{domg3}, 
$D\varOmega$ satisfies moreover the defining condition of a Pro category, viz  eq. \ceqref{procat1},
having $D[1]$ as its generating object.
Finally, the way the object and morphism sets of $D\varOmega$ are defined
and relations \ceqref{domg1}--\ceqref{domg4} manifestly imply that 
the mapping $D$ gives rise to a strict monoidal functor $D:\varOmega\rightarrow D\varOmega$.
As $D$ is bijective on objects and morphism, $D$ is an isofunctor. It follows that $D\varOmega$ is an $\varOmega$ category
with stalk isofunctor $D$. 
\end{proof}

\begin{cor} \label{rem:dtrick}
Let $D:\varOmega\rightarrow\clT$ be a functor injective on objects and morphisms, where $\varOmega$ is viewed as a 
plain category with no monoidal structure and $\clT$ is some target category.
Then, there exists an $\varOmega$ category $D\varOmega$ with the design shown in items
\cref{d1}--\cref{d7} of prop. \cref{prop:domegacat} having the functor $D$  
with the target category restricted from $\clT$ to $D\varOmega$ as its stalk isofunctor.
Furthermore, $D\varOmega$ is a non monoidal subcategory of $\clT$.
\end{cor}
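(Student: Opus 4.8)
The plan is to reduce the statement to Proposition \ref{prop:domegacat}: out of the given functor $D\colon\varOmega\to\clT$ I would manufacture the four pieces of input data \ref{O1}--\ref{O4} that the proposition consumes, invoke it, and then read off the subcategory assertion from the functoriality of $D$ together with the way composition and identities are prescribed in that proposition.

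For the input data I would take $O=\Obj_\clT$, which is non-empty because it contains $D[0]$, so that the object assignment of $D$ serves as the required injective map $\Obj_\varOmega\to O$. For the hom-collections I would set $H(l,m)=\Hom_\clT(D[l],D[m])$ whenever this set is non-empty and replace it by a fresh singleton otherwise; the only hom-sets of $\varOmega$ that are empty are the $\Hom_\varOmega([l],[0])$ with $l>0$, so this padding changes nothing essential. Since $D$ is injective on objects, distinct index pairs $(l,m)$ give distinct object pairs $(D[l],D[m])$, whose $\clT$-hom-sets are disjoint because hom-sets of a category are pairwise disjoint; choosing the padding elements disjoint from everything else, the disjointness demanded in \ref{O2} holds. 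Finally, as $D$ is a functor it sends each $f\in\Hom_\varOmega([l],[m])$ to a morphism $Df\in\Hom_\clT(D[l],D[m])\subseteq H(l,m)$, so the morphism assignment of $D$ restricts to a map $\Hom_\varOmega([l],[m])\to H(l,m)$, and this restriction is injective because the morphism assignment of $D$ is globally injective (the requirement being vacuous where $\Hom_\varOmega([l],[m])=\emptyset$). Thus \ref{O1}--\ref{O4} are all met.

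Proposition \ref{prop:domegacat} then delivers a strict monoidal, indeed Pro, category $D\varOmega$ with generating object $D[1]$ and the layout of items \ref{d1}--\ref{d7}, together with a strict monoidal isofunctor $D\colon\varOmega\to D\varOmega$ — so $D\varOmega$ is an $\varOmega$ category with stalk isofunctor $D$, the latter being exactly the original functor with codomain corestricted to $D\varOmega$. To see that $D\varOmega$ sits inside $\clT$ as a subcategory: its objects $D[l]$ are objects of $\clT$ and its hom-sets $\{Df:f\in\Hom_\varOmega([l],[m])\}$ are subsets of the corresponding hom-sets of $\clT$; and its composition and identities, prescribed in \eqref{domg1}--\eqref{domg2} abstractly by $Dg\circ Df:=D(g\circ f)$ and $\id_{D[l]}:=D\id_{[l]}$, coincide with those inherited from $\clT$ precisely because $D\colon\varOmega\to\clT$ is a functor, so that $D(g\circ f)$ and $D\id_{[l]}$ are the composite and the identity computed in $\clT$. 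Hence $D\varOmega$ is a subcategory of $\clT$, and it is a non monoidal one in that $\clT$ carries no monoidal structure whatsoever: the monoidal structure on $D\varOmega$ is transported from $\varOmega$ along $D$, not inherited from $\clT$. The one mildly delicate point is the bookkeeping for the $H(l,m)$ — they must be non-empty and pairwise disjoint while the morphisms of $D\varOmega$ stay literally equal to morphisms of $\clT$ — and this is exactly where injectivity of $D$ on objects and the disjointness of hom-sets are used; the rest is a direct appeal to Proposition \ref{prop:domegacat} and to functoriality of $D$.
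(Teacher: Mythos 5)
Your proof is correct and follows essentially the same route as the paper: specialize Prop.~\cref{prop:domegacat} to $O=\Obj_\clT$ and $H(l,m)=\Hom_\clT(D[l],D[m])$ with $D$ as the layout map, then deduce the subcategory claim from functoriality of $D$. Your extra bookkeeping (padding the possibly empty $\Hom_\clT(D[l],D[0])$ and verifying disjointness of the $H(l,m)$ via injectivity of $D$ on objects) is a careful touch the paper leaves implicit, but it does not change the argument.
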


\begin{proof}
Prop. \cref{prop:domegacat}, taken in the special case  where $O=\Obj_\clT$ is the object class, the
$H(l,m)=\Hom_\clT(D[l],D[m])$, $l,m\in\bbN$, are the morphism sets and the functor $D$ is the layout map,
guarantees the existence of an $\varOmega$ category $D\varOmega$ described by items
\cref{d1}--\cref{d7} and having $D$ as its stalk isofunctor. The functoriality
of $D$ ensures further that $D\varOmega$, as an ordinary  non monoidal category, is a subcategory of $\clT$. 
\end{proof}


\begin{rem} \label{rem:dcatiso} {\rm
$\varOmega$ categories are all isomorphic. Indeed, given two $\varOmega$ categories $D\varOmega$, $E\varOmega$, 
the composite $\varPhi_{DE}:E\circ D^{-1}:D\varOmega\rightarrow E\varOmega$ of their stalk isofunctors is evidently
a strict monoidal isofunctor. Explicitly, $\varPhi_{DE}$ acts as}
{\allowdisplaybreaks
\begin{align}
\label{diso1}
\varPhi_{DE}D[l]&=E[l], 
\\
\label{diso2}
\varPhi_{DE}Df&=Ef,
\end{align}
}
\!\!{\rm where $l\in\bbN$ and $f\in\Hom_\varOmega([l],[m])$ for some $l,m\in\bbN$.
In spite of this, multiple incarnations of $\varOmega$ categories enter meaningfully
and non trivially the constructions elaborated in the present work.}
\end{rem}

\begin{rem} \label{rem:domegaswap}
{\rm As $\varOmega$ itself, every $\varOmega$ category $D\varOmega$ is symmetric strict monoidal. 
Indeed, via the stalk isofunctor $D$, it is possible to 
push--forward the swap map set $s_{[l],[m]}$, $l,m\in\bbN$, of $\varOmega$ defined in \ceqref{symomg}
into $D\varOmega$ yielding a swap map set $s_{D[l],D[m]}$ of $D\varOmega$
consisting of the morphisms $s_{D[l],D[m]}\in\Hom_{D\varOmega}(D[l]\smallsmile D[m],D[m]\smallsmile D[l])$ given by 
\begin{equation}
\label{symdomg}
s_{D[l],D[m]}=Ds_{[l],[m]}. \vphantom{\bigg]} 
\end{equation}
%
%
From \ceqref{symdomg} and the strict monoidal functoriality relations
\ceqref{monfun1}--\ceqref{monfun3} obeyed by $D$, the $s_{D[l],D[m]}$ satisfy the
swap map relations \ceqref{symmon1}--\ceqref{symmon4} as a result of the $s_{[l],[m]}$
doing so. The stalk isofunctor $D$ is in this way symmetric, 
as the corresponding condition \ceqref{symmon5} is satisfied 
by construction by virtue of \ceqref{symdomg}.
Being both a symmetric strict monoidal and a Pro category renders 
every $\varOmega$ category $D\varOmega$ a Prop category as $\varOmega$ itself. 
Again, 
this property of $\varOmega$ categories plays no role in the subsequent analysis
and is mentioned here only for the sake of completeness.
}
\end{rem}


\subsection{\textcolor{blue}{\sffamily Graded $\varOmega$ monads}}\label{subsec:dcatjoint}

All $\varOmega$ categories dealt with in the present work are graded $\varOmega$ monads. In this subsection,
we introduce this further notion and show how it can be framed in the broader theory of graded monads.

Recall that a concrete category is a category whose objects and morphisms are sets and functions between sets.

\begin{defi} \label{def:jointstrct}
a graded $\varOmega$ monad is a concrete $\varOmega$ category $D\varOmega$ 
(cf. subsect. \cref{subsec:omegacat}, def. \cref{def:omgcat}) equipped with:

\begin{enumerate}[label=\alph*)]

\item for every $l,m\in\bbN$ a map $\smallsmile:D[l]\times D[m]\rightarrow D([l]\smallsmile[m])$,
called monadic multiplication, associating the joint product $\alpha\smallsmile\beta\in D([l]\smallsmile[m])$
with every pair $\alpha\in D[l]$, $\beta\in D[m]$;

\item a distinguished element $\iota\in D[0]$, called monadic unit. 

\end{enumerate}

\noindent
The following properties must further hold. 

\begin{enumerate}

\item For $l,m,n\in\bbN$, $\lambda\in D[l]$, $\mu\in D[m]$, $\nu\in D[n]$
\begin{equation}
\label{joint1}
\lambda\smallsmile(\mu\smallsmile\nu)=(\lambda\smallsmile\mu)\smallsmile\nu.
\end{equation}

\item For every $l\in\bbN$, $\lambda\in D[l]$, \hphantom{xxxxxxxxx}
\begin{equation}
\label{joint2}
\lambda\smallsmile\iota=\iota\smallsmile\lambda=\lambda.
\end{equation}

\item For all $l,m,p,q\in\bbN$, $f\in\Hom_\varOmega([l],[p])$, $g\in\Hom_\varOmega([m],[q])$, $\lambda\in D[l]$, $\mu\in D[m]$
\begin{equation}
\label{joint3}
Df\smallsmile Dg(\lambda\smallsmile \mu)=Df(\lambda)\smallsmile Dg(\mu). 
\end{equation}

\end{enumerate}

\end{defi} 

\noindent
Notice that we denote the monoidal and monadic multiplication of the category $D\varOmega$ with the same symbol
$\smallsmile$. This is not accidental: indeed, they stand in a relationship analogous to that occurring between
tensor multiplication of Hilbert spaces and vectors belonging to those spaces, both denoted by $\otimes$.
Said parenthetically, this is ultimately the reason why the multiple monadic multiplicative 
structures dealt with in this paper eventually have a Hilbertian realization.


There exists also a notion of morphism of graded $\varOmega$ monads
provided by the next definition.

\begin{defi} \label{def:catjoint}
A morphism $\sfm:D\varOmega\rightarrow E\varOmega$ of graded $\varOmega$ monads
features for each $l\in\bbN$ a function $\sfm:D[l]\rightarrow E[l]$ with the properties which follow. 

\begin{enumerate}

\item For $l,m\in\bbN$ and $f\in\Hom_\varOmega([l],[m])$, one has 
\begin{equation}
\label{catjoint1}
\sfm\circ Df=Ef\circ\sfm.
\end{equation}

\item For $l,m\in\bbN$ and $\lambda\in D[l]$, $\mu\in D[m]$, 
\begin{equation}
\label{catjoint2}
\sfm(\lambda\smallsmile\mu)=\sfm(\lambda)\smallsmile\sfm(\mu). 
\end{equation}

\item Finally, it holds that \hphantom{xxxxxxxxxxxx}
\begin{equation}
\label{catjoint3}
\sfm(\iota)=\iota.
\end{equation}

\end{enumerate}
\end{defi}

\noindent
In \ceqref{catjoint1}--\ceqref{catjoint2}, the symbol $\circ$, $\smallsmile$ and $\iota$ denote composition,
monadic multiplication and unit of the monads $D\varOmega$ and $E\varOmega$ respectively in the left and right hand
side. 


Graded $\varOmega$ monads and their morphisms constitute a category. 

\begin{prop} \label{prop:jomgcat}
There is a category $\ul{\rm GM}_\varOmega$ specified by the assignments below:

\begin{enumerate}[label=\alph*)]

\item the class $\Obj_{\ul{\rm GM}_\varOmega}$ of all graded $\varOmega$ monads as object class;

\item the set $\Hom_{\ul{\rm GM}_\varOmega}(D\varOmega,E\varOmega)$ of all morphisms
$\sfm:D\varOmega\rightarrow E\varOmega$ of graded $\varOmega$ monads
as morphism set of the objects $D\varOmega,E\varOmega\in\Obj_{\ul{\rm GM}_\varOmega}$;

\item for all objects $D\varOmega,E\varOmega,F\varOmega\in\Obj_{\ul{\rm GM}_\varOmega}$
and morphisms $\sfm\in\Hom_{\ul{\rm GM}_\varOmega}(D\varOmega,E\varOmega)$,
$\sfn\in\Hom_{\ul{\rm GM}_\varOmega}(E\varOmega,F\varOmega)$ of $\ul{\rm GM}_\varOmega$, the function collection
$\sfn\,\circ\,\sfm:D[l]\rightarrow F[l]$, \linebreak $l\in\bbN$, got by the composition
of the functions $\sfm:D[l]\rightarrow E[l]$, $\sfn:E[l]\rightarrow F[l]$ at each $l$
as composition of $\sfm$, $\sfn$;

\item for every object $D\varOmega\in\Obj_{\ul{\rm GM}_\varOmega}$, 
the function collection $\id_{D\varOmega}:D[l]\rightarrow D[l]$, $l\in\bbN$,
of the identity functions $\id_{D[l]}$ at each $l$ as the identity $\id_{D\varOmega}$
of $D\varOmega$. 

\end{enumerate}
\end{prop}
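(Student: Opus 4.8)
The plan is to run the standard verification that the data listed in items (a)--(d) of the proposition satisfy the category axioms \eqref{catdef1}, \eqref{catdef2}. This breaks into four points: first, that each proposed morphism collection $\Hom_{\ul{\rm GM}_\varOmega}(D\varOmega,E\varOmega)$ is a genuine set; second, that the componentwise composition law is well defined, i.e. the level-wise composite of two graded $\varOmega$ monad morphisms is again a graded $\varOmega$ monad morphism; third, that the componentwise identities are themselves graded $\varOmega$ monad morphisms; and fourth, that this composition is associative and unital. Only the second point carries any real content; the rest is bookkeeping.

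For the first point, a morphism $\sfm:D\varOmega\rightarrow E\varOmega$ is by def. \cref{def:catjoint} nothing but an $\bbN$-indexed family of functions $\sfm:D[l]\rightarrow E[l]$ subject to conditions \ceqref{catjoint1}--\ceqref{catjoint3}; since each $D[l]$, $E[l]$ is a set (the graded $\varOmega$ monads in play are concrete $\varOmega$ categories), such a family is an element of the product set $\prod_{l\in\bbN}\Map(D[l],E[l])$, so $\Hom_{\ul{\rm GM}_\varOmega}(D\varOmega,E\varOmega)$ is a subset of a set and hence a set.

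The core point is well-definedness of composition. Given $\sfm:D\varOmega\rightarrow E\varOmega$ and $\sfn:E\varOmega\rightarrow F\varOmega$, I would take $\sfn\circ\sfm$ to be the family of ordinary function composites $\sfn\circ\sfm:D[l]\rightarrow F[l]$ and then check the three clauses of def. \cref{def:catjoint} for it. Clause \ceqref{catjoint1}, the covariance relation against the stalk isofunctors, is obtained by composing $\sfm\circ Df=Ef\circ\sfm$ with $\sfn\circ Ef=Ff\circ\sfn$ and reassociating, which gives $(\sfn\circ\sfm)\circ Df=Ff\circ(\sfn\circ\sfm)$. Clause \ceqref{catjoint2} follows by applying the monadic multiplicativity of $\sfm$, and then that of $\sfn$, to $\lambda\smallsmile\mu$. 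Clause \ceqref{catjoint3} follows from $\sfm(\iota)=\iota$ together with $\sfn(\iota)=\iota$. Thus $\sfn\circ\sfm$ is again a morphism of graded $\varOmega$ monads, so the composition in item (c) is meaningful.

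For the identities, the family $\id_{D[l]}$, $l\in\bbN$, trivially verifies \ceqref{catjoint1}--\ceqref{catjoint3}, so $\id_{D\varOmega}$ of item (d) is a morphism. Finally, since composition and identities in $\ul{\rm GM}_\varOmega$ are defined level by level from those of sets and functions, associativity and unitality hold at each $l\in\bbN$ by the corresponding properties of function composition, and therefore hold in $\ul{\rm GM}_\varOmega$, establishing \eqref{catdef1}, \eqref{catdef2}. I do not expect any genuine obstacle: the proposition is essentially a packaging statement, and the only spot meriting a moment's attention is clause \ceqref{catjoint1} for the composite — a threefold use of naturality plus associativity of $\circ$ — together with the small foundational remark that concreteness of the $\varOmega$ categories involved makes the hom-collections legitimate sets.
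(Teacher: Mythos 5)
Your proposal is correct and follows essentially the same route as the paper's proof: verify that level-wise composites and identities are again graded $\varOmega$ monad morphisms, then deduce associativity and unitality from the corresponding properties of ordinary function composition at each index. You simply spell out the "straightforward verification" the paper leaves implicit, plus the harmless foundational remark that concreteness makes the hom-collections genuine sets.
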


%

\begin{proof}
To begin with, we must show 
that if the function collections $\sfm:D[l]\rightarrow E[l]$, $\sfn:E[l]\rightarrow F[l]$, 
$l\in\bbN$, satisfy all relations \ceqref{catjoint1}--\ceqref{catjoint3}, then their composition
$\sfn\circ\sfm:D[l]\rightarrow F[l]$ also does and that the identity collection
$\sfi\sfd_{D\varOmega}:D[l]\rightarrow D[l]$, $l\in\bbN$, also satisfy \ceqref{catjoint1}--\ceqref{catjoint2}.
This is a matter of a straightforward verification. Next, we have to show the associativity of morphism
composition and the neutrality of the identity morphisms. 
Since morphisms are collections of functions indexed by $\bbN$ and morphism composition
consists of composition of functions at fixed index, the said properties are an evident consequence of the
corresponding properties of associativity and unitality of function composition and identity functions.
\end{proof}

Graded monads and their morphisms \ccite{Mellies:2012mea,Katsumata:2014pms,Mellies:2017pcm,Fujii:2019gim}
are a kind of abstract categorical constructs illustrated in app. \cref{subsec:gradmon}
with ramifications in several branches of applied mathematics. 
They also provide an appropriate categorical framework for graded $\varOmega$ monads 
and their morphisms. The analysis of these categorical properties is included here only
to show that the notions introduced in this subsection can be framed in a broader well established
categorical design. It is not however a precondition for the understanding of the rest of this work.
The uninterested reader can so safely skip directly to the next section. 

Graded monads and graded monad morphisms constitute a category $\ul{\rm GM}$.
In general, the specification of mathematical entities such as these involves a large amount of data
(cf. defs. \cref{def:gradmon}, \cref{def:gradmonmor}). Such a task can be simplified in certain cases using
paramonoidal graded monads and monad morphisms 
(cf. defs. \cref{def:gradmonqvr}, \cref{def:gradmonmap}), a simpler kind of categorical construct
amenable to a more economical detailing. They also form a category, $\ul{\rm pMGM}$, that is mapped 
into the graded monad category $\ul{\rm GM}$ by a special functor $\msG\msM:\ul{\rm pMGM}\rightarrow\ul{\rm GM}$
(cf. theor. \cref{theor:gradmonqvr}).

The following proposition can be shown.

\begin{prop} \label{prop:jointds}
Let $\clS$ be a small subcategory of the category $\ul{\rm GM}_\varOmega$ of graded $\varOmega$ monads and
morphisms thereof. 
Then, there exists an encompassing strict monoidal category $\clU_{\hfpt\clS}$ with the following properties.
$\clU_{\hfpt\clS}$ is a non monoidal subcategory of the category $\ul{\rm Set}$
of sets and functions. For any graded $\varOmega$ monad $D\varOmega\in\Obj_{\hfpt\clS}$, there is a paramonoidal graded monad
$\msK_{\hfpt\clS}\rmD\varOmega$ consisting of 

\begin{enumerate}[label=\alph*)]

\item \label{it:jointds1} the category $\rmD\varOmega$ itself as grading category, 

\item \label{it:jointds2} the category $\clU_{\hfpt\clS}$ as target category, 

\item \label{it:jointds3} the restrictions $\smallsmile|_{D[l]\times D[m]}$ of the monadic multiplication function
$\smallsmile$ of $\rmD\varOmega$ to the sets $D[l]\times D[m]$ for varying $l,m\in\bbN$ as multiplication morphisms and  

\item \label{it:jointds4} the monadic unit $\iota$ of $\rmD\varOmega$ as unit morphism. 

\end{enumerate}

\noindent
For any two graded $\varOmega$ monads $D\varOmega,E\varOmega\in\Obj_{\hfpt\clS}$ and 
morphism $\sfm\in\Hom_{\hfpt\clS}(D\varOmega,E\varOmega)$, there is further a
paramonoidal graded monad morphism
$\msK_{\hfpt\clS}\sfm:\msK_{\hfpt\clS}\rmD\varOmega\rightarrow\msK_{\hfpt\clS}\rmE\varOmega$ with 

\begin{enumerate}[label=\alph*),start=5]

\item \label{it:jointds5} the functor $\varPhi_{DE}:D\varOmega\rightarrow E\varOmega$ acting on objects and morphisms 
according to \ceqref{diso1}, \ceqref{diso2} as grading category functor, 

\item \label{it:jointds6} the identity functor $\id_{\clU_{\hfpt\clS}}$ of $\clU_{\hfpt\clS}$ as target category functor and 

\item \label{it:jointds7} the restrictions $\sfm|_{D[l]}$ of the morphism function of $\sfm$
to the sets $D[l]$ for varying $l\in\bbN$ as entwining morphisms. 

\end{enumerate}

\noindent
Finally, the mapping $\msK_{\hfpt\clS}:\clS\rightarrow\ul{\rm pMGM}$ 
of the categories $\clS$, $\ul{\rm pMGM}$ defined by items \cref{it:jointds1}--\cref{it:jointds7} above constitutes a
functor injective on objects and morphisms. 

\end{prop}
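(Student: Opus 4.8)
The plan is to read the statement as a \emph{repackaging} result: a graded $\varOmega$ monad $D\varOmega$ already carries exactly the data needed to present a paramonoidal graded monad whose grading category is $D\varOmega$ itself and whose target is a single ambient category $\clU_{\hfpt\clS}$ shared by all the objects of $\clS$; the same passage turns a morphism of graded $\varOmega$ monads into a paramonoidal graded monad morphism, and the resulting assignment is tautologically functorial and injective. Accordingly there are four tasks: (i) build $\clU_{\hfpt\clS}$ and check it is a legitimate strict monoidal category and a non-monoidal subcategory of $\ul{\rm Set}$; (ii) verify that the data of \cref{it:jointds1}--\cref{it:jointds4} satisfies the paramonoidal graded monad axioms of \cref{def:gradmonqvr}; (iii) verify that the data of \cref{it:jointds5}--\cref{it:jointds7} satisfies the paramonoidal graded monad morphism axioms of \cref{def:gradmonmap}; (iv) check that $\msK_{\hfpt\clS}$ preserves identities and composition and is injective on objects and morphisms.

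First I would construct the ambient category $\clU_{\hfpt\clS}$. Since $\clS$ is small, the collection consisting of a fixed one-point set, all the sets $D[l]$ with $D\varOmega\in\Obj_{\hfpt\clS}$, $l\in\bbN$, and all finite products of such sets is itself a set. Let $\clU_{\hfpt\clS}$ be the subcategory of $\ul{\rm Set}$ having these sets as objects, generated under composition and identities by the functions $Df$ for $f\in\Hom_\varOmega([l],[m])$, the canonical projections off the product objects, the multiplication maps $\smallsmile|_{D[l]\times D[m]}$, the unit maps picking out $\iota\in D[0]$, and the morphism functions $\sfm|_{D[l]}$ for $\sfm\in\Hom_{\hfpt\clS}(D\varOmega,E\varOmega)$. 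Smallness of $\clS$ makes $\clU_{\hfpt\clS}$ a bona fide small category, and it is by construction a non-monoidal subcategory of $\ul{\rm Set}$; I would then equip it with the strict monoidal structure required of the target category by \cref{def:gradmonqvr}, the objects and structure maps this involves having been included among the generating data, so that this step presents no real difficulty.

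Next, with $D\varOmega\in\Obj_{\hfpt\clS}$ fixed, I would verify that \cref{it:jointds1}--\cref{it:jointds4} assemble into a paramonoidal graded monad. The grading category $D\varOmega$ is strict monoidal by \cref{prop:domegacat} (it is a Pro, hence an $\varOmega$, category), and its stalk isofunctor $D$ obeys the strict monoidal functoriality relations \ceqref{monfun1}--\ceqref{monfun3}, which discharge the coherence conditions tying the grading to the monoidal product. The substantive axioms then translate directly: associativity of the multiplication morphisms is \ceqref{joint1}, their two-sided unitality with respect to $\iota$ is \ceqref{joint2}, and compatibility of the multiplication with the grading action $Df\smallsmile Dg$ is \ceqref{joint3}. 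Dually, for $\sfm\in\Hom_{\hfpt\clS}(D\varOmega,E\varOmega)$ I would check that \cref{it:jointds5}--\cref{it:jointds7} gives a paramonoidal graded monad morphism in the sense of \cref{def:gradmonmap}: the grading functor $\varPhi_{DE}$ is a strict monoidal isofunctor by \cref{rem:dcatiso} acting via \ceqref{diso1}, \ceqref{diso2}; the target functor $\id_{\clU_{\hfpt\clS}}$ is trivially strict monoidal; and the three morphism axioms are nothing but \ceqref{catjoint1} (using $\varPhi_{DE}Df=Ef$), \ceqref{catjoint2} and \ceqref{catjoint3}.

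Finally I would establish that $\msK_{\hfpt\clS}:\clS\to\ul{\rm pMGM}$ is a functor injective on objects and morphisms. Identities are preserved because $\varPhi_{DD}=\id_{D\varOmega}$ and each $\id_{D[l]}$ restricts to itself; composition is preserved because $\varPhi_{EF}\circ\varPhi_{DE}=\varPhi_{DF}$ by \ceqref{diso1}, \ceqref{diso2}, the target functors compose as $\id_{\clU_{\hfpt\clS}}\circ\id_{\clU_{\hfpt\clS}}=\id_{\clU_{\hfpt\clS}}$, and $(\sfn\circ\sfm)|_{D[l]}=\sfn|_{E[l]}\circ\sfm|_{D[l]}$, so the entwining morphisms of $\sfn\circ\sfm$ are the composites of those of $\sfn$ and $\sfm$. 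For injectivity, $\msK_{\hfpt\clS}D\varOmega$ has $D\varOmega$ itself as grading category, hence determines $D\varOmega$, and the family of entwining morphisms $\sfm|_{D[l]}$, $l\in\bbN$, reconstitutes the function collection $\sfm$, hence determines $\sfm$. The main obstacle is not any individual verification — each axiom check is a one-line transcription of an identity already in hand — but the correct choice of $\clU_{\hfpt\clS}$: it must be wide enough to contain every object and every structure map arising anywhere in $\clS$, including the product sets needed merely to state the multiplication morphisms, while remaining a genuine set-sized subcategory of $\ul{\rm Set}$ carrying the strict monoidal structure a paramonoidal graded monad demands of its target. The smallness hypothesis on $\clS$ is precisely what makes this balance achievable; once $\clU_{\hfpt\clS}$ is fixed, the remainder is bookkeeping.
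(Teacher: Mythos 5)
Your proposal is correct in substance and follows the same logical skeleton as the paper: the axiom checks for $\msK_{\hfpt\clS}D\varOmega$ and $\msK_{\hfpt\clS}\sfm$ are, as you say, one-line transcriptions of \ceqref{joint1}--\ceqref{joint3} and \ceqref{catjoint1}--\ceqref{catjoint3} into \ceqref{gradmon12}--\ceqref{gradmon11} and \ceqref{gradmon18}--\ceqref{gradmon20}, and your functoriality and injectivity arguments match the paper's. Where you genuinely diverge is in the construction of $\clU_{\hfpt\clS}$: you build it directly as the subcategory of $\ul{\rm Set}$ generated by the sets $D[l]$, their finite products, and the structure maps, whereas the paper routes through the machinery of app. \cref{subsec:gradmon} --- a monoidal quiver $\clQ$ whose vertex monoid is freely generated by the $D[l]$, the free strict monoidal category $\msF\msM_{GM}\clQ$ on it modulo the graded-monad congruences, and finally a realization in $\ul{\rm Set}$ by identifying the word object $D_0[l_0]\ldots D_{p-1}[l_{p-1}]$ with the set of $p$-letter words and replacing formal composition by set-theoretic composition. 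Your route is shorter and more concrete; the paper's buys one thing you gloss over when you say equipping $\clU_{\hfpt\clS}$ with a strict monoidal structure "presents no real difficulty": def. \cref{def:gradmonqvr} demands a \emph{strict} monoidal target, and iterated binary Cartesian products of sets are associative only up to canonical isomorphism. The free monoid on the generating objects resolves this automatically, since word concatenation is strictly associative and the flattened tuple-sets realize it; if you build the objects as literal nested products you must either flatten them or accept a non-strict target, which the definition does not permit. This is a repairable bookkeeping point rather than a gap in the argument, but it is the one place where the paper's heavier construction is doing real work, and your write-up should make the flattening explicit.
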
 

\begin{proof}
The details of the construction of $\clU_{\hfpt\clS}$ in a somewhat more abstract and general
perspective are given in the steps {\it \cref{it:prot1}, \cref{it:prot2}} and {\it \cref{it:prot3}}
at the end of app. \cref{subsec:gradmon}.
The quiver $\Delta$ mentioned in step {\it \cref{it:prot1}} has the
monads $D\varOmega\in\Obj_{\hfpt\clS}$ as vertices and the functors $\varPhi_{DE}$
 as edges (cf. def \cref{def:quiv}). 
The monoidal quiver $\clQ$ dealt with in step {\it \cref{it:prot2}} can be described as follows 
(cf. def \cref{def:monquiv}).
The vertex free monoid $\Ver_{\hfpt\clQ}$ of $\clQ$ has the objects $D[l]$, $l\in\bbN$, 
of each monad $D\varOmega\in\Obj_{\hfpt\clS}$ as generating vertices. 
The edge set $\Edg_{\hfpt\clQ}$ comprises the morphisms $Df\in\Hom_{D\varOmega}(D[l],D[m])$,
$l,m\in\bbN$, the multiplication morphisms $\smallsmile|_{D[l]\times D[m]}$, $l,m\in\bbN$, and unit morphism $\iota$
of each category $D\varOmega\in\Obj_{\hfpt\clS}$ as morphism, multiplication and unit edges
and the entwining morphisms $\sfm|_{D[l]}$, $l\in\bbN$, 
of each morphism $\sfm\in\Hom_{\hfpt\clS}(D\varOmega,E\varOmega)$ with objects 
$D\varOmega,E\varOmega\in\Obj_{\hfpt\clS}$ 
as entwining edges and no other edges, 
upon regarding the sets $D[l]$, $D[l]\times D[m]$, etc. as the vertices $D[l]$, $D[l]D[m]$ etc. 
A strict monoidal category $\msF\msM_{GM}\clQ$ is then associated with the monoidal quiver $\clQ$
as detailed in step in step {\it \cref{it:prot2}}. Each object 
$D_0[l_0]\ldots D_{p-1}[l_{p-1}]\in\Obj_{\msF\msM_{GM}\clQ}$ of $\msF\msM_{GM}\clQ$
can be identified with the set of ordered $p$--letter words $\lambda_0\ldots\lambda_{p-1}$ with
$\lambda_0\in D_0[l_0]$, $\ldots$, $\lambda_{p-1}\in D_{p-1}[l_{p-1}]$. Each morphism
$h\in\Hom_{\msF\msM_{GM}\clQ}(D_0[l_0]\ldots D_{p-1}[l_{p-1}],E_0[m_0]\ldots E_{q-1}[m_{q-1}])$
of $\msF\msM_{GM}\clQ$ can similarly be identified with a function mapping each $p$--letter words of the set
attached to its source into a $q$ letter words of the set attached to its target. 
By replacing the composition operation $\circ_{\msF\msM_{GM}\clQ}$ and identity assigning map
$\id_{\msF\msM_{GM}\clQ}$ of $\msF\msM_{GM}\clQ$ with their set theoretic counterparts
$\circ_{\ul{\rm Set}}$ and $\id_{\ul{\rm Set}}$, the category $\msF\msM_{GM}\clQ$ generates a
strict monoidal category $\clU_\clS$, which is a monoidal subcategory of the category
$\ul{\rm Set}$ of sets and functions. With the category $\clU_\clS$ finally constructed, the
rest of the proposition is straightforward to show.
For any $\varOmega$ monad $D\varOmega\in\Obj_{\hfpt\clS}$, the quadruple $\msK_{\hfpt\clS} D\varOmega$
defined by items {\it \cref{it:jointds1}--\cref{it:jointds4}} satisfies by virtue of the properties
\ceqref{joint1}--\ceqref{joint3} of the monadic multiplication of $D\varOmega$
the requirements \ceqref{gradmon12}--\ceqref{gradmon11} stated in def. \cref{def:gradmonqvr} 
and is therefore a paramonoidal graded monad. 
Likewise,
for any two $\varOmega$ monads $D\varOmega,E\varOmega\in\Obj_{\hfpt\clS}$ and morphism $\sfm\in\Hom_{\hfpt\clS}(D\varOmega,E\varOmega)$,
the triple $\msK_{\hfpt\clS}\sfm$ defined by items {\it \cref{it:jointds5}--\cref{it:jointds7}} meets thanks to 
properties \ceqref{catjoint1}--\ceqref{catjoint3} of the morphism $\sfm$ 
all the conditions \ceqref{gradmon18}--\ceqref{gradmon20} stated in def. \cref{def:gradmonmap}
and is consequently a paramonoidal graded monad morphism.
The correspondence $\msK_{\hfpt\clS}:\clS\rightarrow\ul{\rm pMGM}$ of the categories $\clS$, $\ul{\rm pMGM}$
defined by items \cref{it:jointds1}--\cref{it:jointds7} above 
 being a functor follows readily from \ceqref{gradmon22}, \ceqref{gradmon23}. The injectivity of $\msK_{\hfpt\clS}$
on objects and morphisms is evident from items \cref{it:jointds1}--\cref{it:jointds7} again. 
\end{proof}

\noindent 
The functor composite $\msH_{\hfpt\clS}=\msG\msM\circ\msK_{\hfpt\clS}:\clS\rightarrow\ul{\rm GM}$
is injective on objects and morphisms since both the functors $\msK_{\hfpt\clS}$ and $\msG\msM$ 
are (cf. theor. \cref{theor:gradmonqvr}). For this reason, we can identify the subcategory $\clS$
of $\ul{\rm GM}_\varOmega$ with its image $\msH_{\hfpt\clS}\clS$ in $\ul{\rm GM}$, a
subcategory of $\ul{\rm GM}$. This justifies the name we have given to  $\ul{\rm GM}_\varOmega$. 




\vfill\eject

\renewcommand{\sectionmark}[1]{\markright{\thesection\ ~~#1}}

\section{\textcolor{blue}{\sffamily The hypergraph $\varOmega$ monads}}\label{sec:graphctg}

The categorical description of calibrated hypergraph states relies on an underlying categorical description of
calibrated hypergraphs. The categorical model we employ for the treatment of hypergraphs, which we illustrate
in the present section, is that of graded $\varOmega$ monad studied in sect. \cref{sec:omegactg}.

Hypergraphs by themselves are not sufficient for the construction of hypergraph states. 
To that purpose, they need in fact to be supplemented with appropriate hyperedge data called calibrations.
However, we have found conceptually more transparent to treat separately in subsect. \cref{subsec:graphctg}
the bare hypergraph $\varOmega$ monad, which is of independent interest by itself, 
and then based on this to move to the study of the full calibrated hypergraph $\varOmega$ monad,
which is detailed in subsect. \cref{subsec:wggraphctg}. 
It would however possible to directly define and study the calibrated hypergraph $\varOmega$ monad. 
A third related hypergraph monad, the weighted hypergraph $\varOmega$ monad, is considered in 
subsect. \cref{subsec:hemult} mostly to illustrate the relationship of our approach to hypergraphs
to others existing in the literature. This subsection may be skipped in a first reading of the paper. 
Explicit examples are illustrated in detail.
Other examples will be provided in sect. 3 of II 
in relation to hypergraph states. 

The hypergraph categorical construction expounded in subsect. \cref{subsec:graphctg}
is related in some of its aspects to the approach of ref. \!\!\ccite{Spivak:2009hdn}. The material presented in subsects. 
\cref{subsec:wggraphctg}, \cref{subsec:hemult} is instead fully original to the best of our knowledge.

\vspace{1mm}

\subsection{\textcolor{blue}{\sffamily The hypergraph $\varOmega$ monad $G\varOmega$}}\label{subsec:graphctg}

The hypergraph monad $G\varOmega$ is a key graded $\varOmega$ monad playing a central role in our analysis
of hypergraphs and hypergraph states. 
In this subsection, we shall describe its construction and study its properties. 

The analysis presented below employs two endofunctors of the finite set category $\varSigma$ (cf. def. \cref{def:functdef}):
the finite power set endofunctor $P:\varSigma\rightarrow\varSigma$ and the non empty finite power
set endofunctor $P_+:\varSigma\rightarrow\varSigma$. 
The functor $P$ associates with any finite set $A$ its power set $PA$, the set of all subsets of $A$,
and with any function $\phi:A\rightarrow B$ of finite sets the induced power set function $P\phi:PA\rightarrow PB$
given by $P\phi(X)=\phi(X)\in PB$ for $X\in PA$, where in the right hand side $\phi(X)\subset B$ denotes the image of the subset
$X\subset A$ by $\phi$. 
The functor $P_+$ associates with any finite set $A$ its non empty power set $P_+A$, the set of all non empty subsets of $A$,
and with any function $\phi:A\rightarrow B$ of finite sets the induced non empty power
set function $P_+\phi:P_+A\rightarrow P_+B$ defined as $P_+\phi(X)=\phi(X)\in P_+B$
for $X\in P_+A$ analogously to before.  

Hypergraphs are finite collections of hyperedges, which in turn are non empty subsets of a finite set of vertices.
For a fixed vertex set $V$, a hyperedge $E$ is therefore an element of the non empty power set $P_+V$ of $V$
and consequently a hypergraph $I$ is 
an element of the power set $PP_+V$ of $P_+V$. Thus, the iterated power set $PP_+V$ of $V$ is the set of all
hypergraphs that can be built with the vertices of $V$.
For $l\in\bbN$, the ordinal $[l]$ is an abstract vertex set model describing combinatorially
all concrete vertex sets $V$ with $|V|=l$.
Correspondingly, the iterated power set $PP_+[l]$ of $[l]$ is an abstract hypergraph set model
providing a combinatorial description of the concrete hypergraph sets $PP_+V$ of $V$ when $|V|=l$. 


Any map $\phi:V\rightarrow W$ of vertex sets $V$, $W$ induces a map $P_+\phi:P_+V\rightarrow P_+W$
of the associated hyperedge sets and in turn a map $PP_+\phi:PP_+V\rightarrow PP_+W$
of the associated hypergraph sets.
For $l,m\in\bbN$, an ordinal map $f:[l]\rightarrow[m]$ constitutes an abstract vertex set map model
furnishing a combinatorial description of a concrete vertex set map $\phi$ as the above 
when $|V|=l$, $|W|=m$.
Correspondingly, the iterated power set map $PP_+f:PP_+[l]\rightarrow PP_+[m]$ of $f$ provides an 
abstract hypergraph set map model describing combinatorially the concrete hypergraph set map
$PP_+\phi$. 

If $\phi$, $\psi$ are composable vertex set maps, then the associated hypergraph set maps
$PP_+\phi$, $PP_+\psi$ also are and $PP_+\psi\circ PP_+\phi=PP_+(\psi\circ\phi)$. Further,
if $\id_V$ is the identity vertex map of a vertex set $V$, then $PP_+\id_V$ is the identity hypergraph set
map of the hypergraph set $PP_+V$ as $PP_+\id_V=\id_{PP_+V}$.
Therefore, the map $PP_+$ associating a hypergraph set map with each vertex set map
is compatible with composition and identities. These properties are mirrored by totally 
analogous properties of the ordinal maps construed as abstract vertex maps and their
abstract hypergraphs correlates. 



Recall the operation of disjoint union renders the category $\varSigma$ of finite sets and functions
a (non strict) monoidal one. Not only is the disjoint union $A\sqcup B$ of two finite sets $A$, $B$ defined
but also the disjoint union $\phi\sqcup\psi:A\sqcup B\rightarrow C\sqcup D$ of two finite set
functions is $\phi:A\rightarrow C$, $\psi:B\rightarrow D$ is. The latter is defined in obvious fashion
such that $\phi\sqcup\psi|_A=\phi$, $\phi\sqcup\psi|_B=\psi$.

Given two vertex sets $V$, $W$, their disjoint union $V\sqcup W$ is also a vertex
set. The disjoint union of the hypergraph sets $PP_+V$, $PP_+W$ however differs from 
the hypergraph set $PP_+(V\sqcup W)$. On vertices, therefore, 
the iterated power set map $PP_+$ is not compatible with conventionally defined disjoint union. 
We can remedy to this flaw by introducing a weld operation $\smallsmile$
on vertex sets and hypergraph sets thereof defined such that $V\smallsmile W=V\sqcup W$ and
$PP_+V\smallsmile PP_+W=PP_+(V\sqcup W)$ so that the identity $PP_+V\smallsmile PP_+W=PP_+(V\smallsmile W)$
holds.
For $l,m\in\bbN$, the analog of the weld of the ordinals $[l]$, $[m]$ regarded 
as abstract vertex sets is  their monoidal product $[l]\smallsmile[m]$, 
as is evident from eq. \ceqref{gamma1}. Based on the above remark, we may directly define the monoidal product of
the abstract hypergraph sets $PP_+[l]$, $PP_+[m]$ to be
the hypergraph set $PP_+[l]\smallsmile PP_+[m]=PP_+([l]\smallsmile[m])$. 

Likewise, given two maps $\phi:V\rightarrow R$, $\psi:W\rightarrow S$ of vertex sets,
their disjoint union $\phi\sqcup\psi:V\sqcup W\rightarrow R\sqcup T$ is also a vertex set map.
However, the disjoint union of the associated hypergraph set maps,
$PP_+\phi\sqcup PP_+\psi:PP_+V\sqcup PP_+W\rightarrow PP_+R\sqcup PP_+S$, differs from
the hypergraph set map $PP_+(\phi\sqcup\psi):PP_+(V\sqcup W)\rightarrow PP_+(R\sqcup S)$.
On vertex set maps too, so, the iterated power set map $PP_+$ is not in step  with disjoint union. 
Again, we can remedy to this drawback by introducing a weld operation $\smallsmile$ 
on vertex set maps and hypergraph set maps thereof defined by setting $\phi\smallsmile \psi=\phi\sqcup\psi$ and
$PP_+\phi\smallsmile PP_+\psi=PP_+(\phi\sqcup \psi)$ so that the identity
$PP_+\phi\smallsmile PP_+\psi=PP_+(\phi\smallsmile \psi)$ holds.
For $l,m,p,q\in\bbN$, the analog of the weld of the ordinal maps $f:[l]\rightarrow[p]$, $g:[m]\rightarrow[q]$
seen as abstract vertex set maps is their monoidal product $f\smallsmile g$, as follows from inspection of
eqs. \ceqref{gamma1}, \ceqref{gamma2}. By what said above, we may directly define the monoidal product of
the abstract hypergraph set maps $PP_+f$, $PP_+g$ to be
the hypergraph set map $PP_+f\smallsmile PP_+g=PP_+(f\smallsmile g)$. 

The above discussion suggests that abstract hypergraphs and their maps should
naturally organize as an $\varOmega$ category whose stalk isofunctor acts as
the composite functor $P\circ P_+$ (cf. def. \cref{def:omgcat}). This is indeed the line of thought that we find 
most useful and follow in the present work. 


\begin{prop} \label{prop:hffun}
The functor $G=P\circ P_+\big|_\varOmega:\varOmega\rightarrow\varSigma$ is injective on objects and morphisms.
Thus, there exists an $\varOmega$ category $G\varOmega$ having the functor $G$  
with the target category restricted from $\varSigma$ to $G\varOmega$ as its stalk isofunctor.
Furthermore, $G\varOmega$ is a non monoidal subcategory of $\varSigma$.
\end{prop}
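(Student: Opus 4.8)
The plan is to check the two injectivity assertions about $G$ by hand and then let Corollary \cref{rem:dtrick} do the rest. Since $P$ and $P_+$ are endofunctors of $\varSigma$, their composite $P\circ P_+$ is an endofunctor of $\varSigma$, so its restriction $G=P\circ P_+\big|_\varOmega$ is genuinely a functor $\varOmega\rightarrow\varSigma$; thus only injectivity of the object map and of each morphism map needs to be established, after which the proposition is immediate.

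For injectivity on objects I would argue by cardinality. The non empty power set $P_+[l]$ has $2^l-1$ elements, so $G[l]=PP_+[l]$ has $2^{2^l-1}$ elements; the resulting sequence $1,2,8,128,\dots$ (at $l=0$ the value is $2^{2^0-1}=1$, consistent with $P_+\emptyset=\emptyset$ but $PP_+\emptyset=\{\emptyset\}$) is strictly increasing in $l$, so the sets $G[l]$ are pairwise distinct. Alternatively one may observe that $[l]=\bigcup\bigcup G[l]$ for every $l\in\bbN$, which exhibits $[l]$ as recoverable from $G[l]$. For injectivity on morphisms, fix $l,m\in\bbN$ and $f,f'\in\Hom_\varOmega([l],[m])$ with $Gf=Gf'$. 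If $l=0$ there is nothing to prove, as $\Hom_\varOmega([0],[m])$ is a singleton. If $l\geq 1$, then for each $i\in[l]$ the singleton $\{i\}$ lies in $P_+[l]$ and the one hyperedge hypergraph $\{\{i\}\}$ lies in $PP_+[l]$; unwinding the definitions of $P$ and $P_+$ gives $Gf(\{\{i\}\})=\{\{f(i)\}\}$ and likewise $Gf'(\{\{i\}\})=\{\{f'(i)\}\}$. Hence $\{\{f(i)\}\}=\{\{f'(i)\}\}$, which forces $f(i)=f'(i)$, for every $i\in[l]$, so $f=f'$.

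With $G$ recognized as a functor injective on objects and morphisms, Corollary \cref{rem:dtrick} applies with $\clT=\varSigma$: it yields an $\varOmega$ category $G\varOmega$ laid out as in items \cref{d1}--\cref{d7} of Proposition \cref{prop:domegacat}, having $G$ (with target corestricted from $\varSigma$ to $G\varOmega$) as its stalk isofunctor, and it guarantees that $G\varOmega$ is a non monoidal subcategory of $\varSigma$ --- which is precisely the claim. I do not expect a genuine obstacle here; the only point that needs a little care is the degenerate case $l=0$, where $P_+$ behaves differently from $P\circ P_+$ and where the singleton trick above is vacuous, but once that case is disposed of separately the verification is entirely routine.
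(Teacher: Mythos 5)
Your proof is correct and follows essentially the same route as the paper: establish that $G$ is injective on objects and morphisms, then invoke Corollary \cref{rem:dtrick}. The only difference is one of detail --- the paper simply notes that $P$ and $P_+$ are each injective on objects and morphisms and hence so is their composite, whereas you verify the injectivity of the composite directly (by cardinality of $PP_+[l]$ and by evaluating on the singleton hypergraphs $\{\{i\}\}$), with appropriate care for the degenerate case $l=0$.
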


\begin{proof}
The $\varSigma$ endofunctors $P$ and $P_+$ are both injective on objects and morphisms. So is thus
their composite $P\circ P_+$. Consequently, $G$ has the same property. The proposition now 
follows immediately from cor. \cref{rem:dtrick}.
\end{proof}

\noindent
We shall refer to the $\varOmega$ category $G\varOmega$ and its stalk
isofunctor $G$ as the hypergraph category and functor, respectively. 
Items \cref{d1}--\cref{d7} of prop. \cref{prop:domegacat}, with $D$ replaced by $G$, furnish an explicit
description of $G\varOmega$. We notice that since $G\varOmega$ is a subcategory of the finite set category $\varSigma$,
the objects and morphisms of $G\varOmega$ are genuine sets and set 
functions respectively and the composition law and the identity assigning map of
$G\varOmega$ are the usual set theoretic ones.
As special cases, we have that 
$G[0]=\{\emptyset\}$ and that $Gf(\emptyset)=\emptyset\in G[m]$ for $f\in\Hom_\varOmega([0],[m])$.


\begin{exa} \label{exa:moracthg} Hypergraph morphism action. 
{\rm Let $f\in\Hom_\varOmega([5],[4])$ be the function 
\begin{equation}
\label{moracth1}
f=(0,1,3,3,1)
\end{equation}
(cf. convention items \cref{it:conv6}, \cref{it:conv7}). Consider further the hypergraph
$H\in G[5]$ with 
\begin{equation}
\label{moracth2}
H=\{X^0,X^1,X^2\},~~\text{where}~~ X^0=\{0,4\},~X^1=\{0,1\}, ~X^2=\{1,2,3\}.
\end{equation}
Then, $Gf(H)\in G[4]$ is the hypergraph 
\begin{align}
\label{moracth3}
&Gf(H)=\{Y^0,Y^1\},~~\text{where}~~ Y^0=\{0,1\}, ~Y^1=\{1,3\}.
\end{align}
}
\end{exa} 





The disjoint union 
of two hypergraphs $I\in PP_+V$, $J\in PP_+W$
is a hypergraph $I\sqcup J\in PP_+(V\sqcup W)$.
In the abstract setting described by hypergraph category $G\varOmega$, a corresponding
operation must associate with any two abstract hypergraph 
$H\in PP_+[l]$, $K\in PP_+[m]$ an abstract hypergraph
$H\smallsmile K\in PP_+([l]\smallsmile[m])$ that in the appropriate sense
to be defined is the disjoint union of $H$, $K$. This is formalized in the following definition. 

\begin{defi} \label{def:hgadj}
For $l,m\in\bbN$, the hypergraph monadic multiplication \pagebreak  is the binary operation 
$\smallsmile:G[l]\times G[m]\rightarrow G([l]\smallsmile[m])$
reading as 
\begin{equation}
\label{hgsmile0}
H\smallsmile K=H\cup(K+l) 
\end{equation}
for $H\in G[l]$, $K\in G[m]$. The hypergraph monadic unit is $O=\emptyset\in G[0]$.
\end{defi}

\noindent In eq. \ceqref{hgsmile0}, the right hand side features the set theoretic union of the sets $H$, $(K+l)$
regarded as subsets of $P_+[l+m]$. We notice that $H\cap(K+l)=\emptyset$. 

\begin{exa} \label{exa:jointhg} Monadic product of hypergraphs.
{\rm Consider the hypergraphs $H\in G[5]$, $K\in G[4]$ specified by 
\begin{align}
\label{jointhgex1}
&H=\{X^0,X^1,X^2\},~~\text{where}~~ X^0=\{0,4\},~X^1=\{0,1\}, ~X^2=\{1,2,3\},
\\
\nonumber  
&K=\{Y^0,Y^1\},~~\text{where}~~ Y^0=\{0,1\},~Y^1=\{1,3\}. 
\end{align}
Then, from \ceqref{hgsmile0}, the joint $H\smallsmile K\in G[9]$ of $H,K$ is 
\begin{align}
\label{jointhgex2}
&H\smallsmile H=\{Z^0,Z^1,Z^2,Z^3,Z^4\},~~\text{where}~~Z^0=\{0,4\},~Z^1=\{0,1\}, 
\\
\nonumber
&\hphantom{H\smallsmile H=\{Z^0,Z^1,Z^2,Z^3,Z^4\},~~\text{wh}~~}Z^2=\{1,2,3\},~Z^3=\{5,6\}, ~Z^4=\{6,8\}.
\end{align}
}  
\end{exa}

\noindent
A hypergraph $\varOmega$ monadic multiplicative structure is now available (cf. def. \cref{def:jointstrct}). 

\begin{prop} \label{prop:hgjoint}
The $\varOmega$ category $G\varOmega$ with the  monadic multiplication $\smallsmile$ and unit $O$ is a graded $\varOmega$ monad.
\end{prop}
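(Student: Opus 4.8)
The plan is to reduce the statement to checking, one by one, the three defining requirements of a graded $\varOmega$ monad listed in def. \cref{def:jointstrct}. Everything else is already in place: by prop. \cref{prop:hffun} the $\varOmega$ category $G\varOmega$ is concrete (it is a subcategory of the finite set category $\varSigma$), and the monadic multiplication $\smallsmile$ and unit $O$ are supplied by def. \cref{def:hgadj}. Before touching the axioms I would confirm that $\smallsmile$ has the prescribed target. For $H\in G[l]=PP_+[l]$ and $K\in G[m]=PP_+[m]$ the translate $K+l$ (convention item \cref{it:conv5}) is a finite collection of non empty subsets of $[m]+l\subset[l+m]$, while $H$ is a finite collection of non empty subsets of $[l]\subset[l+m]$; hence $H\cup(K+l)$ is a finite collection of non empty subsets of $[l+m]$, i.e. an element of $PP_+[l+m]=G([l]\smallsmile[m])$ by \ceqref{gamma1}. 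The disjointness $H\cap(K+l)=\emptyset$ recorded after def. \cref{def:hgadj} is not needed for this, and $O=\emptyset$ trivially lies in $G[0]=\{\emptyset\}$.

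Next I would verify the associativity relation \ceqref{joint1}. Using only the elementary identities $(M+l)+m=M+(l+m)$ and $(M\cup N)+l=(M+l)\cup(N+l)$ for collections $M,N$ of subsets of $\bbN$ (convention item \cref{it:conv5}), one computes, for $H\in G[l]$, $K\in G[m]$, $L\in G[n]$,
\begin{align*}
H\smallsmile(K\smallsmile L)&=H\cup\big((K\cup(L+m))+l\big)=H\cup(K+l)\cup(L+(l+m))\\
&=\big(H\cup(K+l)\big)\cup(L+(l+m))=(H\smallsmile K)\smallsmile L.
\end{align*}
The unitality relation \ceqref{joint2} follows at once in the same vein: $H\smallsmile O=H\cup(\emptyset+l)=H\cup\emptyset=H$ and $O\smallsmile H=\emptyset\cup(H+0)=H$, compatibly with $[l]\smallsmile[0]=[0]\smallsmile[l]=[l]$.

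The only step that calls for some care is the compatibility \ceqref{joint3} of $\smallsmile$ with morphisms, since it forces one to reconcile the index shift built into the monoidal product of ordinal maps (eq. \ceqref{gamma2}) with the one built into the monadic product of hypergraphs (eq. \ceqref{hgsmile0}). Because the stalk isofunctor is $G=P\circ P_+\big|_\varOmega$, for $f\in\Hom_\varOmega([l],[p])$ one has $Gf(H)=\{f(X)\mid X\in H\}$, where $f(X)\subset[p]$ denotes the image of the hyperedge $X$, and likewise for $g\in\Hom_\varOmega([m],[q])$, $K\in G[m]$. I would then evaluate $G(f\smallsmile g)(H\smallsmile K)$ hyperedge by hyperedge: a hyperedge $X\in H$ is contained in $[l]$, so by \ceqref{gamma2} it maps to $f(X)$, while a hyperedge $X+l$ with $X\in K$ is contained in $[m]+l$, so by \ceqref{gamma2} it maps to $\{g(x)+p\mid x\in X\}=g(X)+p$; splitting the union of hyperedges using $H\cap(K+l)=\emptyset$ then gives
\begin{align*}
G(f\smallsmile g)(H\smallsmile K)&=\{f(X)\mid X\in H\}\cup\big(\{g(X)\mid X\in K\}+p\big)\\
&=Gf(H)\cup(Gg(K)+p)=Gf(H)\smallsmile Gg(K),
\end{align*}
the last equality holding by \ceqref{hgsmile0} since $Gf(H)\in G[p]$. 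This exhausts the requirements of def. \cref{def:jointstrct}, so $G\varOmega$ is a graded $\varOmega$ monad. I expect this last step to be the main, though still modest, obstacle: the two independent translation conventions must be made to line up carefully, whereas the associativity check involves only similar, easier bookkeeping and the rest is a direct unwinding of the definitions.
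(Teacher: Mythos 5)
Your proposal is correct and follows essentially the same route as the paper: reduce to the three axioms \ceqref{joint1}--\ceqref{joint3} of def. \cref{def:jointstrct}, dispatch associativity and unitality by elementary bookkeeping with translates of hyperedge collections, and prove \ceqref{joint3} by splitting $H\smallsmile K=H\cup(K+l)$ and applying \ceqref{gamma2} to each piece. The only cosmetic difference is that you compute the image hypergraph under $f\smallsmile g$ directly, whereas the paper establishes the same set equality \ceqref{hgsmile3p8} by a two-sided containment argument.
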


\begin{proof}
Using the defining relation \ceqref{hgsmile0} it is immediate to check that 
for $l,m,n\in\bbN$, $H\in G[l]$, $K\in G[m]$, $L\in G[n]$
\begin{equation}
\label{hgsmile1}
H\smallsmile(K\smallsmile L)=(H \smallsmile K)\smallsmile L=H\cup(K+l)\cup(L+l+n).
\end{equation}
The monadic multiplication $\smallsmile$ thus obeys relation \ceqref{joint1}.
It is furthermore readily verified that for every $l\in\bbN$, $H\in G[l]$, \hphantom{xxxxxxxxxx}
\begin{equation}
\label{hgsmile2}
H\smallsmile O=O\smallsmile H=H.
\end{equation}
The monadic unit $O$ so satisfies relation \ceqref{joint2}. There remains to be shown that 
for all $l,m,p,q\in\bbN$, $f\in\Hom_\varOmega([l],[p])$, $g\in\Hom_\varOmega([m],[q])$ and $H\in G[l]$, $K\in G[m]$
\begin{equation}
\label{hgsmile3}
Gf\smallsmile Gg(H\smallsmile K)=Gf(H)\smallsmile Gg(K) \pagebreak
\end{equation}
to have also relation \ceqref{joint3} satisfied. We have 
{\allowdisplaybreaks
\begin{align}
\label{hgsmile3p1}
Gf\smallsmile Gg(H\smallsmile K)&=G(f\smallsmile g)(H\smallsmile K)
\\
\nonumber 
&=PP_+(f\smallsmile g)(H\smallsmile K)=P_+(f\smallsmile g)(H\smallsmile K),
\end{align}
}
\!\!where $P_+(f\smallsmile g)(H\smallsmile K)$ denotes the image of $H\smallsmile K\subset P_+[l+m]$ by 
$P_+(f\smallsmile g)$.
Let $Y\in P_+(f\smallsmile g)(H\smallsmile K)$. Then, there is $X\in H\smallsmile K$ such that 
\begin{equation}
\label{hgsmile3p2}
Y=P_+(f\smallsmile g)(X)=f\smallsmile g(X),
\end{equation}
where $f\smallsmile g(X)$ denotes the image of $X\subset [l+m]$ by $f\smallsmile g$.
As $H\smallsmile K=H\cup(K+l)$ and $H\cap(K+l)=\emptyset$, either $X\in H$ or $X\in K+l$.
By virtue of \ceqref{gamma2},
\begin{equation}
\label{hgsmile3p3}
f\smallsmile g(X)=f(X)=P_+f(X) 
\end{equation}
if $X\in H$ and \hphantom{xxxxxxxxxxxx}
\begin{equation}
\label{hgsmile3p4}
f\smallsmile g(X)=g(X-l)+p=P_+g(X-l)+p 
\end{equation}
if $X\in K+l$. Seeing that in the former case $P_+f(X)\in P_+f(H)$ and in the latter
$P_+g(X-l)+p\in P_+g(K)+p$, where $P_+f(H)$, $P_+g(K)$ denote the images of $H\subset P_+[l]$, \linebreak 
$K\subset P_+[m]$ by $P_+f$, $P_+g$ respectively, 
and taking furthermore into account that $P_+f(H)\cup(P_+g(K)+p)=P_+f(H)\smallsmile P_+g(K)$,
we have $f\smallsmile g(X)\in P_+f(H)\smallsmile P_+g(K)$ implying that $Y\in P_+f(H)\smallsmile P_+g(K)$. 
Next, let $Y\in P_+f(H)\smallsmile P_+g(K)$. Being that 
$P_+f(H)\smallsmile P_+g(K)=P_+f(H)\cup(P_+g(K)+p)$ and $P_+f(H)\cap(P_+g(K)+p)=\emptyset$, we have that either
$Y\in P_+f(H)$ or $Y\in P_+g(K)+p$. If $Y\in P_+f(H)$, there exists $X\in H$
with the property that \hphantom{xxxxxxxx}
\begin{equation}
\label{hgsmile3p5}
Y=P_+f(X)=f(X), 
\end{equation}
while if $Y\in P_+g(K)+p$, there is $X\in K+l$ such that 
\begin{equation}
\label{hgsmile3p6}
Y=P_+g(X-l)+p=g(X-l)+p.
\end{equation}
Owing to \ceqref{gamma2}, as $H\cup(K+l)=H\smallsmile K$, 
there is $X\in H\smallsmile K$ such that 
\begin{equation}
\label{hgsmile3p7}
Y=f\smallsmile g(X)=P_+(f\smallsmile g)(X),
\end{equation}
whence $Y\in P_+(H\smallsmile K)$. We conclude that 
\begin{equation}
\label{hgsmile3p8}
P_+(f\smallsmile g)(H\smallsmile K)=P_+f(H)\smallsmile P_+g(K).
\end{equation}
By \ceqref{hgsmile3p8}, \ceqref{hgsmile3p1} takes the form 
{\allowdisplaybreaks
\begin{align}
\label{hgsmile3p9}
Gf\smallsmile Gg(H\smallsmile K)&=P_+f(H)\smallsmile P_+g(K).
\\
\nonumber 
&=PP_+f(H)\smallsmile PP_+g(K)=Gf(H)\smallsmile Gg(K),
\end{align}
}
\!\! showing \ceqref{hgsmile3}.
\end{proof}


\subsection{\textcolor{blue}{\sffamily The calibrated hypergraph $\varOmega$ monad $G_C\varOmega$}}\label{subsec:wggraphctg}

Bare hypergraphs are not sufficient for the full construction of hypergraph states. To that purpose, it 
is necessary to append to the relevant hypergraphs suitable hyperedge data or calibrations. 
The calibrated hypergraph monad $G_C\varOmega$ is a graded $\
varOmega$ monad describing hypergraphs equipped with such data.
It is the topic of this subsection, where its multiple aspects are analyzed in depth. 

Exponent functions and calibrations thereof are key elements in our construction of calibrated
hypergraphs states. Our analysis so begins with them. 

Let $\msA$ be a finite additive commutative monoid. 

\begin{defi} \label{def:expfnc}
Let $X\subset\bbN$ be a finite subset of $\bbN$. The exponent monoid of $X$ is the commutative monoid
$\msA^X$ of all $\msA$--valued functions on $X$ under pointwise addition.
\end{defi}

\noindent
When $X=\emptyset$, then $\msA^X=0$ is the trivial commutative monoid consisting
of the neutral element $0_\emptyset=0$ only. 
The elements $w\in\msA^X$ are called exponent functions of $X$ in the following.

It is important to have a concise notation for exponent functions. 
We index standardly the finite subset $X\subset\bbN$ as $X=(r_0,\ldots,r_{|X|-1})$ with $r_0<\cdots<r_{|X|-1}$
in conformity with convention item \cref{it:conv6}. Convention item \cref{it:conv7} then allows us to specify 
an exponent function $w\in\msA^X$ through an $|X|$--tuple of elements of $\msA$ of the form 
\begin{equation}
\label{expcnv1}
w=(w(r_0),\ldots,w(r_{|X|-1})).
\end{equation}

\begin{defi} \label{def:fstarw}
Let $X,Y\subset\bbN$ be finite subsets and $f:X\rightarrow Y$ be a function. The exponent 
function push--forward map $f_\star:\msA^X\rightarrow\msA^Y$ is given for each $w\in\msA^X$ by 
\begin{equation}
\label{exppf0}
f_\star(w)(s)=\mycom{{}_\sss}{{}_{r\in X,f(r)=s}}w(r)
\end{equation}
with $s\in Y$. 
\end{defi}

\noindent
If $s\notin f(X)$, then $f_\star(w)(s)=0$. In particular, 
when $X=\emptyset$ and so $f=e_Y$, we find that $e_{Y\star}(0_\emptyset)=0_Y$, the vanishing exponent function of $Y$,
as expected. When $f$ is invertible, $f_\star(w)=w\circ f^{-1}$.

\begin{exa} \label{exa:pfef} Exponent function push--forward. 
{\rm Let $X,Y\subset\bbN$ with  $X=\{2,4,5,9\}$, $Y=\{0,3,7,8,9\}$ be finite integer sets.
Consider the function $f:X\rightarrow Y$ given in accordance with convention items
\cref{it:conv6}, \cref{it:conv7} by 
\begin{equation}
\label{pfefex1}
f=(3,3,7,8).
\end{equation}
Let $\msA=\bbH_{3,1}$, 
the commutative monoid with underlying set $[4]$, additive unit $0$
and addition defined by $u+_{3,1}v=\min(u+v,3)$ for $u,v\in[4]$.
Let $w_0,w_1,w_2,w_3\in\msA^X$ be the exponent functions 
\begin{align}
\label{pfefex3} 
&w_0=(3,2,1,3), &w_1=(2,1,1,3),
\\
\nonumber
&w_2=(1,3,0,2), &w_3=(1,2,3,2). 
\end{align}
Their push--forward $f_\star(w_0),f_\star(w_1),f_\star(w_2),f_\star(w_3)\in\msA^Y$ of $w_0,w_1,w_2,w_3$ are
\begin{align}
\label{pfefex4}
&f_\star(w_0)=f_\star(w_1)=(0,3,1,3,0), &f_\star(w_2)=(0,3,0,2,0), 
\\
\nonumber
&f_\star(w_3)=(0,3,3,2,0)
\end{align}
by a simple application of \ceqref{exppf0}.
}
\end{exa}  

The exponent function push--forward operation we have introduced
enjoys a number of nice properties.

\begin{prop} \label{prop:star1}
Let $X,Y\subset\bbN$ be finite subsets and $f:X\rightarrow Y$ be a function. Then,
$f_\star$ is a monoid morphism of $\msA^X$ into $\msA^Y$.
\end{prop}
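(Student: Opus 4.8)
The plan is to verify directly from Definition \cref{def:fstarw} that $f_\star\colon\msA^X\to\msA^Y$ respects the two pieces of structure that make $\msA^X$ and $\msA^Y$ commutative monoids, namely the pointwise addition and the neutral element. The key structural remark, made at the outset, is that since $X$ is finite every fiber $\{r\in X : f(r)=s\}$ is finite, so all the sums occurring in \ceqref{exppf0} are finite sums of elements of $\msA$; hence the commutativity and associativity of $\msA$ can be applied freely to regroup them, and no convergence issue arises.

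First I would check additivity. Fixing $w,w'\in\msA^X$ and $s\in Y$, one unwinds the definition of pointwise addition on $\msA^X$ and then the definition of $f_\star$ to get $f_\star(w+w')(s)=\mycom{{}_\sss}{{}_{r\in X,\,f(r)=s}}\bigl(w(r)+w'(r)\bigr)$. Because $\msA$ is commutative and associative, this finite sum reassembles as $\bigl(\mycom{{}_\sss}{{}_{r\in X,\,f(r)=s}}w(r)\bigr)+\bigl(\mycom{{}_\sss}{{}_{r\in X,\,f(r)=s}}w'(r)\bigr)=f_\star(w)(s)+f_\star(w')(s)$. Since $s\in Y$ was arbitrary, this gives $f_\star(w+w')=f_\star(w)+f_\star(w')$ as elements of $\msA^Y$. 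Next I would check preservation of the neutral element: for the zero exponent function $0_X\in\msA^X$ and any $s\in Y$ one has $f_\star(0_X)(s)=\mycom{{}_\sss}{{}_{r\in X,\,f(r)=s}}0=0$, which is the value of $0_Y$ at $s$; hence $f_\star(0_X)=0_Y$. Together these two facts say exactly that $f_\star$ is a monoid morphism.

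I do not expect any genuine obstacle here; the only point requiring a moment's care is the bookkeeping of empty fibers. When $s\notin f(X)$ the relevant sum is the empty sum, which is $0$ by convention, and this is precisely consistent with the remark recorded just after Definition \cref{def:fstarw} that $f_\star(w)(s)=0$ in that case (and, in the degenerate situation $X=\emptyset$, $f=e_Y$, with $e_{Y\star}(0_\emptyset)=0_Y$). Thus both additivity and unit-preservation hold on such indices $s$ as well, without a separate argument, and the verification is complete.
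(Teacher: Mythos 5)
Your verification is correct and is exactly the "straightforward" check that the paper's proof leaves implicit: additivity of $f_\star$ via regrouping the finite fiberwise sums using commutativity and associativity of $\msA$, and preservation of the neutral element via the empty/zero sums. No difference in approach, and your handling of empty fibers matches the remark following Definition \cref{def:fstarw}.
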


\begin{proof}
The monoid morphism property of $f_\star$ follow straightforwardly
from the defining expression \ceqref{exppf0}.
\end{proof}

\begin{prop} \label{prop:star2}
Let $X,Y,Z\subset\bbN$ be finite subsets and $f:X\rightarrow Y$, $g:Y\rightarrow Z$ be functions. Then,
the relation \hphantom{xxxxxxx}
\begin{equation}
\label{exppf1}
(g\circ f)_\star=g_\star\circ f_\star
\end{equation}
holds. Let $X\subset\bbN$ be a finite subset. Then, one has 
\begin{equation}
\label{exppf2}
\id_{X\star}=\id_{\msA^X}.
\end{equation}
Let $X,Y\subset\bbN$ be finite subsets and $f:X\rightarrow Y$ be an invertible function. Then, $f_\star$ is a also
invertible and moreover it holds that \hphantom{xxxxxxxxxxxx}
\begin{equation}
\label{exppf3}
f^{-1}{}_\star=f_\star{}^{-1}. 
\end{equation}
\end{prop}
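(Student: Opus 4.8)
The plan is to derive all three assertions directly from the defining formula \ceqref{exppf0} for the push--forward map $f_\star$, treating the composition identity as the substantial part and obtaining the other two as formal consequences of it.

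First I would prove \ceqref{exppf1}. Fix $w\in\msA^X$ and $t\in Z$. Unwinding the definition in \cref{def:fstarw} twice, $g_\star\circ f_\star(w)(t)$ equals the iterated sum $\sum_{s\in Y,\,g(s)=t}\,\sum_{r\in X,\,f(r)=s}w(r)$, while $(g\circ f)_\star(w)(t)$ equals the single sum $\sum_{r\in X,\,g(f(r))=t}w(r)$. The key step is to observe that the index set of the iterated sum, namely the set of pairs $(s,r)$ with $s\in Y$, $g(s)=t$, $r\in X$ and $f(r)=s$, is in bijection with the index set of the single sum via the assignment $r\mapsto(f(r),r)$: in any such pair the component $s$ is forced to equal $f(r)$, so that the surviving constraint $g(s)=t$ becomes $g(f(r))=t$. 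Because $\msA$ is a commutative monoid, this reindexing, together with the flattening of the double sum into a single one, leaves the total unchanged; hence the two expressions agree for every $t\in Z$, therefore as functions on $Z$, giving $(g\circ f)_\star=g_\star\circ f_\star$.

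Next I would dispatch the identity case \ceqref{exppf2}: for $w\in\msA^X$ and $s\in X$, formula \ceqref{exppf0} applied to $f=\id_X$ gives $\id_{X\star}(w)(s)=\sum_{r\in X,\,r=s}w(r)=w(s)$, the sum having the single summand $r=s$. Thus $\id_{X\star}(w)=w$ for all $w\in\msA^X$, i.e. $\id_{X\star}=\id_{\msA^X}$.

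Finally, for an invertible $f:X\rightarrow Y$ both the invertibility of $f_\star$ and the formula \ceqref{exppf3} follow formally: applying the already-established \ceqref{exppf1} to the composable pairs $(f,f^{-1})$ and $(f^{-1},f)$ and then invoking \ceqref{exppf2} yields $f^{-1}{}_\star\circ f_\star=(f^{-1}\circ f)_\star=\id_{X\star}=\id_{\msA^X}$ and, symmetrically, $f_\star\circ f^{-1}{}_\star=\id_{\msA^Y}$, so $f_\star$ is a bijection with two--sided inverse $f^{-1}{}_\star$. I do not expect any genuine obstacle; the only point needing a little care is making the reindexing in the proof of \ceqref{exppf1} precise, that is, verifying that summing $w(r)$ first over the fibres of $f$ and then over the fibres of $g$ reproduces summing over the fibres of $g\circ f$ — which is exactly where the commutativity and associativity of the monoid $\msA$ enter.
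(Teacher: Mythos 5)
Your proposal is correct and follows essentially the same route as the paper: the identity and inverse cases are handled identically, and your explicit fibre-bijection argument for \eqref{exppf1} is just a different presentation of the paper's Kronecker-delta reindexing of the double sum $\sum_{s\in Y,\,g(s)=t}\sum_{r\in X,\,f(r)=s}w(r)$ into $\sum_{r\in X,\,g\circ f(r)=t}w(r)$. No gaps.
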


\begin{proof} 
We show first \ceqref{exppf1}. Take $w\in\msA^X$. For $t\in Z$, 
{\allowdisplaybreaks
\begin{align}
\label{exppfp1}
g_\star\circ f_\star(w)(t)&=\mycom{{}_\sss}{{}_{s\in Y,g(s)=t}}f_\star(w)(s)
\\
\nonumber
&=\mycom{{}_\sss}{{}_{s\in Y,g(s)=t}}\mycom{{}_\sss}{{}_{r\in X,f(r)=s}}w(r)
\\
\nonumber
&=\mycom{{}_\sss}{{}_{s\in Y}}\mycom{{}_\sss}{{}_{r\in X}}\delta_{g(s),t}\delta_{f(r),s}w(r)
\\
\nonumber
&=\mycom{{}_\sss}{{}_{r\in X}}\delta_{g\hfpt\circ f(r),t}w(r)
\\
\nonumber
&=\mycom{{}_\sss}{{}_{r\in X,g\hfpt\circ f(r)=t}}w(r)=(g\circ f)_\star (w)(t),
\end{align}
}
\!\!where the Kronecker delta functions have been used to turn restricted summations
into unrestricted ones. \ceqref{exppf1} is so shown. We show next \ceqref{exppf2}. Take $w\in\msA^X$. For $s\in X$, 
{\allowdisplaybreaks
\begin{align}
\label{exppfp2}
\id_{X\star}(w)(s)&=\mycom{{}_\sss}{{}_{r\in X,\id_X(r)=s}}w(r)
\\
\nonumber
&=\mycom{{}_\sss}{{}_{r\in X,r=s}}w(r)=\id_{\msA^X}(w)(s).
\end{align}
}
\!\!\ceqref{exppf2} is thus also shown. \ceqref{exppf3} follows immediately from \ceqref{exppf1},
applied to the function pairs $(f,f^{-1})$, $(f^{-1},f)$ and \ceqref{exppf2}.
\end{proof}

\noindent 
Props. \cref{prop:star1}, \cref{prop:star1} together entail that the maps $X\mapsto\msA^X$, $f\mapsto f_\star$
define a functor from the category of finite integer sets and set functions to the category of
finite monoids and monoid morphisms. However, we shall not insist on this categorical interpretation
in the following. 

The definition of calibrations involves introducing another finite additive commutative monoid $\msM$.

\begin{defi} \label{def:edgewgt}
Let $X\subset\bbN$ be any finite subset. The calibration monoid of $X$ is the commutative monoid
$\msM^{\msA^X}$ of all $\msM$--valued functions on $\msA^X$ under pointwise addition.
\end{defi}

\noindent
If $X=\emptyset$, then $\msM^{\msA^X}\equiv\msM$ since $\msA^X=0$. The elements $\varpi\in\msM^{\msA^X}$ are called
calibrations of $X$ and can be thought of as weight functions of the exponent functions of $X$.

It is important to have a compact notation also for calibrations.
The exponent monoid $\msA^X$ has no canonical indexing, but any chosen indexing reads as
\begin{equation}
\label{expcnv2}
\msA^X=(w_0,\ldots,w_{|\msA^X|-1})
\end{equation}
with distinct $w_i\in\msA^X$, where $|\msA^X|=|\msA|^{|X|}$. 
Once an indexing of $\msA^X$ is selected, a calibration $\varpi\in\msM^{\msA^X}$ is specified by an
$|\msA^X|$--tuple of elements of $\msM$ of the form
\begin{equation}
\label{cali1}
\varpi=(\varpi(w_0),\ldots,\varpi(w_{|\msA^X|-1})). 
\end{equation}

\begin{defi} \label{def:edgewgtpf}
Let $X,Y\subset\bbN$ be finite subsets and $f:X\rightarrow Y$ be a function.
The calibration push--forward map $f_*:\msM^{\msA^X}\rightarrow \msM^{\msA^Y}$ is given for each $\varpi\in \msM^{\msA^X}$ by
\begin{equation}
\label{wgtpf0}
f_*(\varpi)(v)=\mycom{{}_\sss}{{}_{w\in \msA^X,f_\star(w)=v}}\varpi(w)
\end{equation}
with $v\in\msA^Y$. 
\end{defi}

\noindent
If $v\notin f_\star(\msA^X)$ above, then $f_*(\varpi)(v)=0$. In particular, 
when $X=\emptyset$ and so $f=e_Y$,
Notice that $f_*(\varpi)=\varpi\circ f^{-1}{}_\star$ when $f$ is invertible. 

\begin{exa} \label{exa:pfcf} Calibration push--forward.
{\rm Consider again the subsets $X,Y\subset\bbN$, function $f:X\rightarrow Y$ and monoid $\msA$
introduced earlier on in ex. \cref{exa:pfef}. Let $\msM=\bbH_{0,5}$, where $\bbH_{0,5}$ is the commutative
monoid underlying the order $5$ cyclic group $\bbZ_5$.
Let $\msA^X=(w_0,\ldots,w_{255})$, $\msA^Y=(v_0,\ldots,w_{1023})$ be indexings of $\msA^X$, $\msA^Y$
with $w_0,w_1,w_2,w_3$ given by expressions \ceqref{pfefex3} and $v_0,v_1,v_2$ given by the right hand sides
of \ceqref{pfefex4} for convenience. Let e.g. $\varpi_0,\varpi_1,\varpi_2,\varpi_3\in\msM^{\msA^X}$ be the calibrations 
given by
\begin{align}
\label{pfcfex3}
&\varpi_0=(0,2,4,3,0,\ldots,0), &\varpi_1=(2,1,0,4,0,\ldots,0),
\\
\nonumber
&\varpi_2=(4,2,4,3,0,\ldots,0), &\varpi_3=(4,1,3,4,0,\ldots,0).
\end{align}
By \ceqref{pfefex4} and \ceqref{wgtpf0}, their push--forwards 
$f_*(\varpi_0),f_*(\varpi_1),f_*(\varpi_2),f_*(\varpi_3)\in\msM^{\msA^Y}$ are 
\begin{align}
\label{pfcfex4}
&f_*(\varpi_0)=(2,4,3,0,\ldots,0), &f_*(\varpi_1)=(3,0,4,0,\ldots,0),
\\
\nonumber  
&f_*(\varpi_2)=(1,4,3,0,\ldots,0), &f_*(\varpi_3)=(0,3,4,0,\ldots,0). 
\end{align}
}
\end{exa}

The calibration push--forward operation also enjoys 
several nice properties, reflecting the corresponding properties of the underlying exponent functions. 


\begin{prop} \label{prop:wxsmgr}
Let $X,Y\subset\bbN$ be finite subsets and $f:X\rightarrow Y$ be a function. Then,
$f_*$ is a monoid morphism of $\msM^{\msA^X}$ into $\msM^{\msA^Y}$.
\end{prop}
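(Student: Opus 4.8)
The plan is to reproduce, essentially verbatim, the reasoning behind Proposition \cref{prop:star1}: the map $f_*$ is given by a fibrewise summation over $\msA^X$ of an $\msM$--valued function, and such a summation is additive in that function and carries the zero function to the zero function, purely because the addition of $\msM$ is associative and commutative while every index set in sight is finite (here $\msA^X$ is finite since $\msA$ is).

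First I would fix $\varpi,\varpi'\in\msM^{\msA^X}$ and evaluate $f_*(\varpi+\varpi')$ pointwise. For $v\in\msA^Y$, expression \ceqref{wgtpf0} gives $f_*(\varpi+\varpi')(v)=\sum_{w\in\msA^X,\,f_\star(w)=v}\big(\varpi(w)+\varpi'(w)\big)$, since $\msM^{\msA^X}$ carries pointwise addition; rearranging the finitely many summands with the commutativity and associativity of the monoid operation of $\msM$ splits this as $\sum_{w,\,f_\star(w)=v}\varpi(w)+\sum_{w,\,f_\star(w)=v}\varpi'(w)=f_*(\varpi)(v)+f_*(\varpi')(v)$. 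As this holds for every $v\in\msA^Y$, and $\msM^{\msA^Y}$ likewise carries pointwise addition, $f_*(\varpi+\varpi')=f_*(\varpi)+f_*(\varpi')$.

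Next I would verify that $f_*$ sends the neutral element to the neutral element. Writing $0$ for the constant function of value $0_\msM$, which is the identity of $\msM^{\msA^X}$, formula \ceqref{wgtpf0} gives $f_*(0)(v)=\sum_{w,\,f_\star(w)=v}0_\msM=0_\msM$ for each $v\in\msA^Y$, the empty sum that occurs when $v\notin f_\star(\msA^X)$ being $0_\msM$ as well; hence $f_*(0)$ is the identity of $\msM^{\msA^Y}$. Combined with the previous step, this establishes that $f_*$ is a monoid morphism of $\msM^{\msA^X}$ into $\msM^{\msA^Y}$.

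I do not anticipate any genuine obstacle; the single point deserving a word of care is that all the sums involved are finite — guaranteed by the finiteness of $\msA$, hence of $\msA^X$ — so that the term rearrangements are legitimate and the empty sum is unambiguously $0_\msM$. Note also that, in contrast with the situation for $f_\star$, the fibres appearing here are those of $f_\star$ rather than of $f$, but this makes no difference to the additivity computation.
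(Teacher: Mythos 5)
Your proof is correct and matches the paper's approach: the paper simply declares the monoid morphism property "evident from the defining expression \ceqref{wgtpf0}," and your pointwise computation — splitting the finite fibrewise sum over $\{w\in\msA^X: f_\star(w)=v\}$ by commutativity and associativity of $\msM$, and checking that the zero calibration (including the empty-sum case) is sent to zero — is exactly the verification being left to the reader.
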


\begin{proof}
The monoid morphism property of $f_*$ is evident from the defining expression \ceqref{wgtpf0}. 
\end{proof}

\begin{prop} \label{prop:wxsfunct}
Let $X,Y,Z\subset\bbN$ be finite subsets and $f:X\rightarrow Y$, $g:Y\rightarrow Z$ be functions.
Then, the relation \hphantom{xxxxxxxxx}
\begin{equation}
\label{wgtpf1}
(g\circ f)_*=g_*\circ f_*
\end{equation}
holds. Let $X\subset\bbN$ be a finite subset. Then, 
\begin{equation}
\label{wgtpf2}
\id_{X*}=\id_{\msM^{\msA^X}}. 
\end{equation}
Let $X,Y\subset\bbN$ be finite subsets and $f:X\rightarrow Y$ be an invertible function. Then, $f_*$ is a also
invertible and moreover one has \hphantom{xxxxxxxxxxxx}
\begin{equation}
\label{wgtpf3}
f^{-1}{}_*=f_*{}^{-1}. 
\end{equation}
\end{prop}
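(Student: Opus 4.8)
The plan is to transcribe the proof of Proposition \cref{prop:star2} essentially word for word. The point is that, by its defining formula \ceqref{wgtpf0}, the calibration push--forward $f_*$ is nothing but the push--forward along the exponent function push--forward $f_\star:\msA^X\rightarrow\msA^Y$, in the sense of Definition \cref{def:fstarw} but with the monoid $\msM$ in place of $\msA$ and the index sets $\msA^X$, $\msA^Y$ in place of finite subsets of $\bbN$. Since $f_\star$ already satisfies the functoriality relations \ceqref{exppf1}--\ceqref{exppf3} by Proposition \cref{prop:star2}, the corresponding relations for $f_*$ will drop out of the same Kronecker delta bookkeeping.

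Concretely, to prove \ceqref{wgtpf1} I would fix $\varpi\in\msM^{\msA^X}$ and $u\in\msA^Z$, expand $g_*\circ f_*(\varpi)(u)$ by applying \ceqref{wgtpf0} twice, insert Kronecker deltas $\delta_{g_\star(v),u}$ and $\delta_{f_\star(w),v}$ to remove the two summation constraints, carry out the now unrestricted sum over $v\in\msA^Y$ so that the product of deltas collapses to $\delta_{g_\star\circ f_\star(w),u}$, and then invoke \ceqref{exppf1} to rewrite this as $\delta_{(g\circ f)_\star(w),u}$; reinstating the constraint on the sum over $w\in\msA^X$ produces exactly $(g\circ f)_*(\varpi)(u)$. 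This is the same telescoping as in the proof of Proposition \cref{prop:star2}, one layer up.

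For \ceqref{wgtpf2} the argument is even shorter: for $\varpi\in\msM^{\msA^X}$ and $w\in\msA^X$, the sum defining $\id_{X*}(\varpi)(w)$ runs over those $w'\in\msA^X$ with $\id_{X\star}(w')=w$, and \ceqref{exppf2} says $\id_{X\star}=\id_{\msA^X}$, which forces $w'=w$ and collapses the sum to $\varpi(w)=\id_{\msM^{\msA^X}}(\varpi)(w)$. Finally \ceqref{wgtpf3} follows formally, with no new input, from \ceqref{wgtpf1} applied to the composable pairs $(f,f^{-1})$ and $(f^{-1},f)$ together with \ceqref{wgtpf2}, exactly as \ceqref{exppf3} was deduced from \ceqref{exppf1} and \ceqref{exppf2}.

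I do not expect any genuine obstacle here: every step is a routine copy of the argument already carried out for $f_\star$, the only change being that the objects summed over are exponent functions rather than elements of $\msA$ or $\bbN$. The one point worth flagging is the empty--sum convention noted after Definition \cref{def:edgewgtpf}, namely that $f_*(\varpi)(v)=0$ whenever $v\notin f_\star(\msA^X)$; this is what makes the displayed formulas well defined on the full codomain $\msA^Y$ (resp. $\msA^Z$), but it plays no active role in the verification.
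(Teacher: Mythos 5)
Your proposal is correct and follows essentially the same route as the paper: expand $g_*\circ f_*(\varpi)(u)$ via \eqref{wgtpf0}, convert the restricted sums to unrestricted ones with Kronecker deltas, sum out $v\in\msA^Y$, and invoke \eqref{exppf1} to get $(g\circ f)_*$, with \eqref{wgtpf2} following from \eqref{exppf2} and \eqref{wgtpf3} following formally from the first two identities applied to $(f,f^{-1})$ and $(f^{-1},f)$. Your observation that $f_*$ is structurally the push--forward along $f_\star$ one level up is a nice way of explaining why the argument transcribes verbatim.
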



\begin{proof} We show first \ceqref{wgtpf1}. Fix $\varpi\in \msM^{\msA^X}$. For $u\in\msA^Z$,
{\allowdisplaybreaks
\begin{align}
\label{wgtpfp1}
g_*\circ f_*(\varpi)(u)&=\mycom{{}_\sss}{{}_{v\in\msA^Y,g_\star(v)=u}}f_*(\varpi)(v)
\\
\nonumber
&=\mycom{{}_\sss}{{}_{v\in\msA^Y,g_\star(v)=u}}\mycom{{}_\sss}{{}_{w\in\msA^X,f_\star(w)=v}}\varpi(w)
\\
\nonumber
&=\mycom{{}_\sss}{{}_{v\in\msA^Y}}\mycom{{}_\sss}{{}_{w\in\msA^X}}\delta_{g_\star(v),u}\delta_{f_\star(w),v}\varpi(w)
\\
\nonumber
&=\mycom{{}_\sss}{{}_{w\in\msA^X}}\delta_{g_\star\circ f_\star(w),u}\varpi(w)
\\
\nonumber
&=\mycom{{}_\sss}{{}_{w\in\msA^X,(g\hfpt\circ f)_\star(w)=u}}\varpi(w)=(g\circ f)_*\varpi(u),
\end{align}
}
\!\!where we used \ceqref{exppf1}, giving
\ceqref{wgtpf1}. We show next \ceqref{wgtpf2}. Fix $\varpi\in \msM^{\msA^X}$. Then, 
\begin{align}
\label{wgtpfp2}
\id_{X*}(\varpi)(v)&=\mycom{{}_\sss}{{}_{w\in\msA^X,\id_{X\star}(w)=v}}\varpi(w)
\\
\nonumber
&=\mycom{{}_\sss}{{}_{w\in\msA^X,w=v}}\varpi(w)=\id_{\msM^{\msA^X}}(\varpi)(v)
\end{align}
for $v\in\msA^X$, where \ceqref{exppf2} was employed. \ceqref{wgtpf2} is thus shown.
\ceqref{wgtpf3} follows readily from \ceqref{wgtpf1},
applied to the function pairs $(f,f^{-1})$, $(f^{-1},f)$ and \ceqref{wgtpf2}.
\end{proof}

\noindent
Props. \cref{prop:wxsmgr}, \cref{prop:wxsfunct} have a categorical reading analogous to that of 
props. \cref{prop:star1}, \cref{prop:star1}.

Calibrations can be used to `decorate' the hypergraph category $G\varOmega$
introduced and studied in subsect. \cref{subsec:graphctg}. To this end, we need a further notion. 

\begin{defi} \label{def:hygrpwgt}
Let $l\in\bbN$. A calibration $\varrho$ of a hypergraph $H\in G[l]$ is an assignment to every hyperedge
$X\in H$ of a calibration $\varrho_X\in \msM^{\msA^X}$. $C(H)$ is the set of all hypergraph calibrations of $H$.  
\end{defi}

\noindent 
If $H=\emptyset$, then $C(H)=\{e_\emptyset\}$, where $e_\emptyset$ is the empty function with 
range $\emptyset$. This follows from noting that 
\begin{equation}
\label{}
C(H)=\mycom{{}_\ppp}{{}_{X\in H}}\msM^{\msA^X}=\bigg\{\varrho\hfpt\bigg|\varrho:H\rightarrow\mycom{{}_\uuu}{{}_{X\in H}}\msM^{\msA^X}, 
\forall X\in H:\varrho_X\in \msM^{\msA^X}\bigg\} 
\end{equation}
in general. By virtue of the above expression, 
$C(H)$ consists therefore only of $e_\emptyset$ when $H=\emptyset$. 

The push--forward operation we defined for calibrations of any finite subset of the non negative integers
naturally induces an ordinal morphism push--forward action on hypergraph calibrations as follows.


\begin{defi} \label{def:whgpf}
Let $l,m\in\bbN$ and $f\in\Hom_\varOmega([l],[m])$. Let $H\in G[l]$ be a hypergraph. The hypergraph
calibration push--forward map $f_{H*}:C(H)\rightarrow C(Gf(H))$ is given by 
\begin{equation}
\label{whgpf0}
f_{H*}(\varrho)_Y=\mycom{{}_\sss}{{}_{X\in H,f(X)=Y}}f|_{X*}(\varrho_X)
\end{equation}
with $Y\in Gf(H)$ for $\varrho\in C(H)$.
\end{defi} 

\noindent
In the right hand side of \ceqref{whgpf0}, the calibration 
$f|_{X*}(\varrho_X)\in \msM^{\msA^{f(X)}}$ is the push--forward by $f|_X:X\rightarrow f(X)$ of 
$\varrho_X\in \msM^{\msA^X}$ defined according to \eqref{wgtpf0}. Further, the summation 
is to be regarded as monoid addition in $\msM^{\msA^Y}$. 
If $H=Gf(H)=\emptyset$, then $f_{\emptyset*}(e_\emptyset)=e_\emptyset$ for reasons explained above.

\begin{exa} \label{exa:chgmact} 
Calibrated hypergraph push--forward action. 
{\rm We assume that $\msA=\bbH_{1,1}$, the commutative monoid of the set $[2]$ with additive unit $0$ 
and addition obeying $1+_{1,1}1=1$, and that $\msM=\bbH_{0,3}$, the commutative
monoid underlying the order $3$ cyclic group $\bbZ_3$.
Consider the hypergraph $H\in G[5]$ shown in \ceqref{moracth2}. 
To describe the calibrations of $H$, we need 
suitable indexings of the exponent monoids $\msA^{X^0}$, $\msA^{X^1}$, $\msA^{X^2}$
of the hyperedges $X^0,X^1,X^2$ of $H$. We assume here these are 
$\msA^{X^0}=(w^0{}_0,\ldots,w^0{}_3)$, $\msA^{X^1}=(w^1{}_0,\ldots,w^1{}_3)$,
$\msA^{X^2}=(w^2{}_0,\ldots,w^2{}_7)$, where 
{\allowdisplaybreaks
\begin{align}
\label{chgmactex1}
&w^0{}_0=(0,0), &w^0{}_1=(0,1),\hspace{.4cm}&&w^0{}_2=(1,0), \hspace{.4cm}&&w^0{}_3=(1,1), \hspace{.4cm}
\\
\nonumber
&w^1{}_0=(0,0), &w^1{}_1=(0,1),\hspace{.4cm} &&w^1{}_2=(1,0),\hspace{.4cm} &&w^1{}_3=(1,1), \hspace{.4cm}
\\
\nonumber 
&w^2{}_0=(0,0,0), &w^2{}_1=(0,0,1), &&w^2{}_2=(0,1,0), &&w^2{}_3=(0,1,1), 
\\
\nonumber
&w^2{}_4=(1,0,0), &w^2{}_5=(1,0,1), &&w^2{}_6=(1,1,0), &&w^2{}_7=(1,1,1). 
\end{align}
}
\!\!A calibration $\varrho\in C(H)$ is e.g. 
\begin{align}
\label{chgmactex2}
&\varrho_{X^0}=(0,1,1,2), &\varrho_{X^1}=(2,0,1,0), &&\varrho_{X^2}=(1,1,2,0,2,0,0,1). 
\end{align}
Consider next the morphism $f\in\Hom_\varOmega([5],[4])$ given by eq. \ceqref{moracth1}.
The transformed hypergraph $Gf(H)\in G[4]$ is shown in \ceqref{moracth3}.
To express the push--forward action of $f$ on $\varrho$, the push--forward action of $f$ on the exponent functions $w^a{}_i$
is required. We need therefore indexings also of
the exponent monoids $\msA^{Y^0}$, $\msA^{Y^1}$ of the hyperedges $Y^0,Y^1$ of $Gf(H)$. 
We assume these are $\msA^{Y^0}=(v^0{}_0,\ldots,v^0{}_3)$, $\msA^{Y^1}=(v^1{}_0,\ldots,v^1{}_3)$ with 
{\allowdisplaybreaks
\begin{align}
\label{chgmactex3}
&v^0{}_0=(0,0), &v^0{}_1=(0,1), \hspace{.4cm}&&v^0{}_2=(1,0), \hspace{.4cm}&&v^0{}_3=(1,1), \hspace{.4cm}
\\
\nonumber
&v^1{}_0=(0,0), &v^1{}_1=(0,1),\hspace{.4cm} &&v^1{}_2=(1,0), \hspace{.4cm} &&v^1{}_3=(1,1). \hspace{.4cm}
\end{align}
}
\!\!The exponent functions $f|_{X^0\star}(w^0{}_0),\ldots,f|_{X^0\star}(w^0{}_3), 
f|_{X^1\star}(w^1{}_0),\ldots,f|_{X^1\star}(w^1{}_3)\in\msA^{Y^0}$,
$f|_{X^2\star}(w^2{}_i),\ldots,f|_{X^2\star}(w^2{}_7)\in\msA^{Y^1}$ are 
{\allowdisplaybreaks
\begin{align}
\label{chgmactex4}
&f|_{X^0\star}(w^0{}_0)=(0,0), &f|_{X^0\star}(w^0{}_1)=(0,1),&&f|_{X^0\star}(w^0{}_2)=(1,0),
\\
\nonumber
&f|_{X^0\star}(w^0{}_3)=(1,1),
\\
\nonumber
&f|_{X^1\star}(w^1{}_0)=(0,0), &f|_{X^1\star}(w^1{}_1)=(0,1),&&f|_{X^1\star}(w^1{}_2)=(1,0),
\\
\nonumber
&f|_{X^1\star}(w^1{}_3)=(1,1), 
\\
\nonumber
&f|_{X^2\star}(w^2{}_0)=(0,0), &f|_{X^2\star}(w^2{}_1)=(0,1),&&f|_{X^2\star}(w^2{}_2)=(0,1), 
\\
\nonumber
&f|_{X^2\star}(w^2{}_3)=(0,1), &f|_{X^2\star}(w^2{}_4)=(1,0),&&f|_{X^2\star}(w^2{}_5)=(1,1),
\\
\nonumber
&f|_{X^2\star}(w^2{}_6)=(1,1), &f|_{X^2\star}(w^2{}_7)=(1,1)&&~ 
\end{align}  
}
\!\!by \ceqref{exppf0}.
Employing these identities, it is now straightforward using \ceqref{wgtpf0} and \ceqref{whgpf0} to obtain the
push--forward $f_{H*}(\varrho)\in C(Gf(H))$ of the calibration $\varrho$. We find
{\allowdisplaybreaks
\begin{align}
\label{chgmactex5}
&f_{H*}(\varrho)_{Y^0}=(2,1,2,2), &f_{H*}(\varrho)_{Y^1}=(1,0,2,1). 
\end{align}
}
\vspace{-1cm}}
\end{exa}

The hypergraph calibration push--forward operation enjoys nice properties too. 

\begin{prop} \label{prop:whgpfprop}
Assume that $l,m,n\in\bbN$, $f\in\Hom_\varOmega([l],[m])$, $g\in\Hom_\varOmega([m],[n])$. Let further $H\in G[l]$. Then,
one has 
\begin{equation}
\label{whgpf1}
(g\circ f)_{H*}=g_{Gf(H)*}\circ f_{H*}.
\end{equation}
Let $l\in\bbN$. Let further $H\in G[l]$. Then, one has 
\begin{equation}
\label{whgpf2}
\id_{[l]H*}=\id_{C(H)}.
\end{equation}
\end{prop}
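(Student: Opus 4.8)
The plan is to prove both identities pointwise, i.e. by evaluating each side on an arbitrary hyperedge of the target hypergraph, unwinding the defining formula \ceqref{whgpf0}, and reducing everything to the functoriality of the plain calibration push--forward established in prop. \cref{prop:wxsfunct} together with its monoid morphism property from prop. \cref{prop:wxsmgr}. The one structural fact I would use repeatedly is that, since $G=P\circ P_+\big|_\varOmega$ (prop. \cref{prop:hffun}), for any $f\in\Hom_\varOmega([l],[m])$ and $H\in G[l]$ one has $Gf(H)=\{f(X)\mid X\in H\}$, where $f(X)$ denotes the image of the subset $X$; in particular $G(g\circ f)(H)=\{g(f(X))\mid X\in H\}$, and for each $X\in H$ the restrictions satisfy $(g\circ f)|_X=g|_{f(X)}\circ f|_X$ with $f|_X\colon X\to f(X)$ and $g|_{f(X)}\colon f(X)\to g(f(X))$.

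For \ceqref{whgpf1}, I would fix $\varrho\in C(H)$ and a hyperedge $Z\in G(g\circ f)(H)$ and expand the right hand side. Applying \ceqref{whgpf0} first to $g$ and then to $f$,
\[
g_{Gf(H)*}(f_{H*}(\varrho))_Z=\mycom{{}_\sss}{{}_{Y\in Gf(H),\,g(Y)=Z}}g|_{Y*}\bigg(\mycom{{}_\sss}{{}_{X\in H,\,f(X)=Y}}f|_{X*}(\varrho_X)\bigg).
\]
Since each $g|_{Y*}$ is a monoid morphism by prop. \cref{prop:wxsmgr}, it commutes with the inner sum, so the right hand side becomes the double sum $\sum_{Y}\sum_{X}g|_{Y*}(f|_{X*}(\varrho_X))$ over the pairs $(Y,X)$ with $Y\in Gf(H)$, $g(Y)=Z$, $X\in H$, $f(X)=Y$. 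The key combinatorial observation is that $(Y,X)\mapsto X$ is a bijection of this index set onto $\{X\in H\mid g(f(X))=Z\}$, with inverse $X\mapsto(f(X),X)$: indeed $Y$ is forced to be $f(X)$, which automatically lies in $Gf(H)$, and then $g(Y)=g(f(X))=(g\circ f)(X)=Z$. Hence the double sum collapses to $\sum_{X\in H,\,(g\circ f)(X)=Z}g|_{f(X)*}(f|_{X*}(\varrho_X))$. Finally, applying \ceqref{wgtpf1} of prop. \cref{prop:wxsfunct} to the composable pair $f|_X$, $g|_{f(X)}$ gives $g|_{f(X)*}\circ f|_{X*}=\big((g\circ f)|_X\big)_*=(g\circ f)|_{X*}$, so the sum is precisely $(g\circ f)_{H*}(\varrho)_Z$ by \ceqref{whgpf0}. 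As $Z$ was arbitrary, \ceqref{whgpf1} follows.

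For \ceqref{whgpf2}, I would again fix $\varrho\in C(H)$ and a hyperedge $Y\in G\id_{[l]}(H)=H$ and compute, using \ceqref{whgpf0},
\[
\id_{[l]H*}(\varrho)_Y=\mycom{{}_\sss}{{}_{X\in H,\,\id_{[l]}(X)=Y}}\id_{[l]}|_{X*}(\varrho_X).
\]
Since $\id_{[l]}(X)=X$, the only summand is $X=Y$; moreover $\id_{[l]}|_Y=\id_Y$, so by \ceqref{wgtpf2} of prop. \cref{prop:wxsfunct} one gets $\id_{Y*}(\varrho_Y)=\varrho_Y$. Hence $\id_{[l]H*}(\varrho)=\varrho$ for all $\varrho$, which is \ceqref{whgpf2}.

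The only mildly delicate point — where care is needed rather than cleverness — is the re-indexing of the double sum: one must verify that $Gf(H)$ really is the image set $\{f(X)\mid X\in H\}$, so that no admissible $Y$ is omitted and none is spurious, and that grouping the sum over $\{X\in H\mid g(f(X))=Z\}$ according to the value $f(X)$ genuinely partitions that set. Here the possibility that distinct hyperedges $X\neq X'$ of $H$ have a common image $f(X)=f(X')$ is exactly what the summation in \ceqref{whgpf0} is designed to accommodate, and it causes no difficulty since all sums are taken inside the commutative monoids $\msM^{\msA^{Y}}$. Everything else is a direct substitution from props. \cref{prop:wxsmgr} and \cref{prop:wxsfunct}.
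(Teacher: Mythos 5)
Your proof is correct and follows essentially the same route as the paper's: expand both sides via \ceqref{whgpf0}, push $g|_{Y*}$ through the inner sum by its monoid morphism property, collapse the double sum (you do this with an explicit index bijection, the paper with Kronecker deltas — the same computation), and conclude with \ceqref{wgtpf1} and \ceqref{wgtpf2}. No gaps.
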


\begin{proof} We demonstrate \ceqref{whgpf1} first. By virtue of prop. \cref{prop:wxsmgr} and 
relation \ceqref{wgtpf1}, for $\varrho\in C(H)$ and $Z\in Gg(Gf(H))=G(g\circ f)(H)$
{\allowdisplaybreaks
\begin{align}
\label{whgpfp1}
g_{Gf(H)*}\circ f_{H*}(\varrho)_Z&=\mycom{{}_\sss}{{}_{Y\in Gf(H),g(Y)=Z}}g|_{Y*}(f_{H*}(\varrho)_Y)
\\
\nonumber
&=\mycom{{}_\sss}{{}_{Y\in Gf(H),g(Y)=Z}}\mycom{{}_\sss}{{}_{X\in H,f(X)=Y}}g|_{Y*}\circ f|_{X*}(\varrho_X)
\\
\nonumber
&=\mycom{{}_\sss}{{}_{Y\in Gf(H)}}\mycom{{}_\sss}{{}_{X\in H}}\delta_{g(Y),Z}\delta_{f(X),Y} (g|_{f(X)}\circ f|_X)_*(\varrho_X)
\\
\nonumber
&=\mycom{{}_\sss}{{}_{X\in H}}\delta_{g\hfpt\circ f(X),Z}(g\circ f)|_{X*}(\varrho_X)
\\
\nonumber
&=\mycom{{}_\sss}{{}_{X\in H,g\hfpt\circ f(X)=Z}}(g\circ f)|_{X*}(\varrho_X)=(g\circ f)_{H*}(\varrho)_Z,
\end{align}
}
\!\!yielding \ceqref{whgpf1}. Next, we show \ceqref{whgpf2}. For $\varrho\in C(H)$ and $Y\in G\id_{[l]}(H)=H$, 
\begin{align}
\label{whgpfp2}
\id_{[l]H*}(\varrho)_Y&=\mycom{{}_\sss}{{}_{X\in H,\id_{[l]}(X)=Y}}\id_{[l]}|_{X*}(\varrho_X)
\\
\nonumber
&=\mycom{{}_\sss}{{}_{X\in H,X=Y}}\id_{X*}(\varrho_X)=\id_{C(H)}(\varrho)_Y,
\end{align}
leading to \ceqref{whgpf2}. 
\end{proof}

Relying on the analysis of hypergraph calibrations developed above,
we can now introduce the $\varOmega$ category of calibrated hypergraphs
designed as the enhancement of the hypergraph category $G\varOmega$ studied in subsect. \cref{subsec:graphctg}
procured by appropriately incorporating the hyperedge calibration data. 

\begin{prop} \label{prop:gwfinj}
The prescription assigning

\begin{enumerate}[label=\alph*)] 

\item an object $G_C[l]\in\Obj_\varSigma$ given by 
\begin{equation}
\label{wgfundt1}
G_C[l]=\{(H,\varrho)|H\in G[l],\varrho\in C(H)\}
\end{equation}
for every $l\in\bbN$ and 

\item a morphism $G_Cf\in\Hom_\varSigma(G_C[l],G_C[m])$ given by 
\begin{equation}
\label{wgfundt2}
G_Cf(H,\varrho)=(Gf(H),f_{H*}(\varrho))
\end{equation}
with $(H,\varrho)\in G_C[l]$ for every $l,m\in\bbN$ and morphism $f\in\Hom_\varOmega([l],[m])$

\end{enumerate}

\noindent
defines a functor $G_C:\varOmega\rightarrow\varSigma$ injective on objects and morphisms.
Therefore, there exists an $\varOmega$ category $G_C\varOmega$ having the functor $G_C$  
with the target category restricted \linebreak from $\varSigma$ to $G_C\varOmega$ as its stalk isofunctor.
Furthermore, $G_C\varOmega$ is a non monoidal subcategory of $\varSigma$.
\end{prop}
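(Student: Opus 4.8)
The plan is to check, in order, that the assignment $G_C$ is well defined, that it is a functor $\varOmega\rightarrow\varSigma$, and that it is injective on objects and on morphisms, and then to apply cor.~\cref{rem:dtrick} with $\clT=\varSigma$. For well-definedness, the only point to verify is that \ceqref{wgfundt2} produces a pair lying in the target set \ceqref{wgfundt1}: given $f\in\Hom_\varOmega([l],[m])$ and $(H,\varrho)\in G_C[l]$, def.~\cref{def:whgpf} guarantees $f_{H*}(\varrho)\in C(Gf(H))$, and $Gf(H)\in G[m]$ since $Gf\in\Hom_{G\varOmega}(G[l],G[m])$; hence $(Gf(H),f_{H*}(\varrho))\in G_C[m]$ and $G_Cf:G_C[l]\rightarrow G_C[m]$ is a genuine function.

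Next I would establish functoriality by combining the functoriality of the hypergraph functor $G$ (prop.~\cref{prop:hffun}) on first components with the properties of the hypergraph calibration push--forward of prop.~\cref{prop:whgpfprop} on second components. Concretely, for $f\in\Hom_\varOmega([l],[m])$, $g\in\Hom_\varOmega([m],[n])$ and $(H,\varrho)\in G_C[l]$, using \ceqref{wgfundt2} together with $Gg\circ Gf=G(g\circ f)$ and relation \ceqref{whgpf1},
\begin{align*}
G_Cg\circ G_Cf(H,\varrho)
&=\bigl(Gg(Gf(H)),\,g_{Gf(H)*}(f_{H*}(\varrho))\bigr)\\
&=\bigl(G(g\circ f)(H),\,(g\circ f)_{H*}(\varrho)\bigr)=G_C(g\circ f)(H,\varrho),
\end{align*}
so \ceqref{domg1}-type composition is respected; likewise $G\id_{[l]}=\id_{G[l]}$ and relation \ceqref{whgpf2} give $G_C\id_{[l]}(H,\varrho)=(H,\varrho)$, i.e. $G_C\id_{[l]}=\id_{G_C[l]}$. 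Thus $G_C:\varOmega\rightarrow\varSigma$ is a functor.

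Then I would prove injectivity. On objects: the collection of first components of the elements of $G_C[l]$ is exactly $G[l]$, so $G_C[l]=G_C[m]$ forces $G[l]=G[m]$ and hence $l=m$, since $G$ is injective on objects by prop.~\cref{prop:hffun}. On morphisms: if $f,g\in\Hom_\varOmega([l],[m])$ satisfy $G_Cf=G_Cg$, then equating first components of $G_Cf(H,\varrho)$ and $G_Cg(H,\varrho)$ for every $H\in G[l]$ and every $\varrho\in C(H)$ yields $Gf(H)=Gg(H)$ for all $H\in G[l]$, i.e. $Gf=Gg$, whence $f=g$ because $G$ is injective on morphisms.

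Finally, having shown $G_C:\varOmega\rightarrow\varSigma$ to be a functor injective on objects and morphisms, cor.~\cref{rem:dtrick} immediately produces an $\varOmega$ category $G_C\varOmega$ with the design of items \cref{d1}--\cref{d7} of prop.~\cref{prop:domegacat}, having $G_C$ with target restricted to $G_C\varOmega$ as its stalk isofunctor, and realises $G_C\varOmega$ as a non monoidal subcategory of $\varSigma$. I do not expect any genuine obstacle here: every substantive input — functoriality and injectivity of $G$, and the push--forward identities \ceqref{whgpf1}, \ceqref{whgpf2} — has already been established, and the proof is essentially the assembly of these facts; if anything, the only place demanding attention is the componentwise bookkeeping in the functoriality step.
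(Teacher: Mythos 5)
Your proposal is correct and follows essentially the same route as the paper's own proof: functoriality is obtained from the functoriality of $G$ on first components together with the push--forward identities \ceqref{whgpf1}, \ceqref{whgpf2} on second components, injectivity is deduced by projecting onto first components (using that every $H\in G[l]$ occurs, since $C(H)\neq\emptyset$) and invoking the injectivity of $G$ from prop.~\cref{prop:hffun}, and the conclusion follows from cor.~\cref{rem:dtrick}. The explicit well-definedness check at the start is a small addition the paper leaves implicit, but otherwise the arguments coincide.
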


\begin{proof}
Let $l,m,n\in\bbN$, $f\in\Hom_\varOmega([l],[m])$, $g\in\Hom_\varOmega([m],[n])$ and $(H,\varrho)\in G_C[l]$.
By \ceqref{wgfundt2}, recalling that $G$ is a functor and using \ceqref{whgpf1}, we find 
{\allowdisplaybreaks
\begin{align}
\label{wgfunctp1}
G_Cg\circ G_Cf(H,\varrho)&=(Gg\circ Gf(H),g_{Gf(H)*}\circ f_{H*}(\varrho))  
\\
\nonumber
&=(G(g\circ f)(H),(g\circ f)_{H*}(\varrho))=G_C(g\circ f)(H,\varrho),
\end{align}
}
\!so that $G_C$ preserves morphism composition. Let further $l\in\bbN$ and $(H,\varrho)\in G_C[l]$. Then,
by \ceqref{wgfundt2}, exploiting the functoriality of $G$ and using \ceqref{whgpf2}, we find 
\begin{align}
\label{wgfunctp2}
G_C\id_{[l]}(H,\varrho)&=(G\id_{[l]}(H),\id_{[l]H*}(\varrho))
\\
\nonumber
&=(\id_{G[l]}(H),\id_{C(H)}(\varrho))=\id_{G_C[l]}(H,\varrho),
\end{align}
showing that $G_C$ preserves identity assignments. Therefore, $G_C$ is a functor. 
Next, let $l\in\bbN$. There exists an obvious projection $p_{[l]}:G_C[l]\rightarrow G[l]$
defined by $p_{[l]}(H,\varrho)=H$ for $(H,\varrho)\in G_C[l]$. It has the property that $p_{[l]}(G_C[l])=G[l]$.
Let $l,m\in\bbN$ and suppose that $G_C[l]=G_C[m]$. Then, for every $(H,\varrho)\in G_C[l]=G_C[m]$ it holds that 
$p_{[l]}(H,\varrho)=p_{[m]}(H,\varrho)=H$. Hence, $p_{[l]}=p_{[m]}$. Consequently,
\begin{equation}
\label{}
G[l]=p_{[l]}(G_C[l])=p_{[m]}(G_C[m])=G[m].
\end{equation}
As the functor $G$ is injective on objects by prop. \cref{prop:hffun}, $[l]=[m]$. Therefore, the functor
$G_C$ is also injective on objects.  Next, let $l,m\in\bbN$, $f,g\in\Hom_\varOmega([l],[m])$ with $G_Cf=G_Cg$. Then, 
{\allowdisplaybreaks
\begin{align}
\label{}
G_Cf(H,\varrho)&=(Gf(H),f_{H*}(\varrho))
\\
\nonumber
&=G_Cg(H,\varrho)=(Gg(H),g_{H*}(\varrho))
\end{align}
}
\!\!for $(H,\varrho)\in G_C[l]$.
Hence, for all $H\in p_{[l]}G_C[l]=G[l]$, $Gf(H)=Gg(H)$. Given that the functor $G$ is injective on morphisms
by prop. \cref{prop:hffun}, $f=g$. It follows that $G_C$ is injective on morphisms too.
The existence of an $\varOmega$ category $G_C\varOmega$ with $G_C$ as its stalk isofunctor
and its being a subcategory of $\varSigma$ now follow readily from cor. \cref{rem:dtrick}.
\end{proof}

\noindent
We shall refer to the $\varOmega$ category $G_C\varOmega$ and its stalk
isofunctor $G_C$ as the calibrated hypergraph category and functor, respectively. 
Items \cref{d1}--\cref{d7} of prop. \cref{prop:domegacat}, with $D$ replaced by $G_C$, make available an explicit
description of $G_C\varOmega$. We note that given that  $G_C\varOmega$ is a subcategory of the finite set category
$\varSigma$, the objects and morphisms of $G_C\varOmega$ are again genuine sets and set 
functions respectively and the composition law and the identity assigning map of
$G_C\varOmega$ are once more the usual set theoretic ones. In particular, we have 
that $G_C[0]=\{(\emptyset,e_\emptyset)\}$ and that $G_Ce_{[m]}(\emptyset,e_\emptyset)=(\emptyset,e_\emptyset)\in G_C[m]$
for $e_{[m]}\in\Hom_\varOmega([0],[m])$.

Let $l\in\bbN$ and let $X\subset\bbN+l$ be a finite subset. The $l$--shift function of $X$ 
is the function $t_{Xl}:X\rightarrow X-l$ defined by \hphantom{xxxxxxxx}
\begin{equation}
\label{tlw0}
t_{Xl}(r)=r-l
\end{equation}
with $r\in X$. If $X=\emptyset$, then $X-l=\emptyset$ and $t_{\emptyset l}=e_\emptyset$,
the empty function with range $\emptyset$. 
It is straightforward to check that $t_{Xl}$ is a bijection. We notice that 
\begin{equation}
\label{tlw11}
t_{X-m\hfpt l}\circ t_{Xm}=t_{Xl+m}
\end{equation}
for any $l,m\in\bbN$ and finite subset $X\subset\bbN+l+m$. Further, 
\begin{equation}
\label{tlw2}
t_{X0}=\id_X
\end{equation}
for any finite subset $X\subset\bbN$. 

\begin{defi} \label{def:hgwsm}
For all $l,m\in\bbN$ and $H\in G[l]$, $K\in G[m]$, the hypergraph calibration monadic multiplication
$\smallsmile:C(H)\times C(K)\rightarrow C(H\smallsmile K)$ (cf. def. \cref{def:hgadj}) is defined as 
\begin{equation}
\label{rsmiles1}
(\varrho\smallsmile\varsigma)_X=\left\{
\begin{array}{ll}
\varrho_X&\text{if $X\in H$},\\
\varsigma_{X-l}\circ t_{Xl\star}&\text{if $X\in K+l$}
\end{array}
\right.
\end{equation}
with $X\in H\smallsmile K$ for $\varrho\in C(H)$, $\varsigma\in C(K)$.
Further, $\varepsilon\in C(O)$ is given by $\varepsilon=e_\emptyset$.
\end{defi}

\begin{exa} \label{exa:hgcjntex} Calibration monadic product. 
{\rm In ex. \cref{exa:jointhg}, we considered the hypergraphs $H\in G[5]$, $K\in G[4]$ given 
in \ceqref{jointhgex1} and obtained their monoidal product $H\smallsmile K\in G[9]$
exhibited in \ceqref{jointhgex2}.
The hyperedges $X^0,X^1,X^2$ of $H$ were examined in detail earlier in ex. \cref{exa:chgmact}. 
We shall provide their exponent monoids $\msA^{X^0}$, $\msA^{X^1}$, $\msA^{X^2}$
with the convenient indexings  
$\msA^{X^0}=(w^0{}_0,\ldots,w^0{}_3)$, $\msA^{X^1}=(w^1{}_0,\ldots,w^1{}_3)$,
$\msA^{X^2}=(w^2{}_0,\ldots,w^2{}_7)$
displayed in \ceqref{chgmactex1}.
The hyperedges $Y^0,Y^1$ of $K$ were examined in ex. \cref{exa:chgmact} too. We shall likewise 
furnish their exponent monoids $\msA^{Y^0}$, $\msA^{Y^1}$ with 
the indexings $\msA^{Y^0}=(v^0{}_0,\ldots,v^0{}_3)$, $\msA^{Y^1}=(v^1{}_0,\ldots,v^1{}_3)$
shown in \ceqref{chgmactex3}. The hyperedges $Z^0,Z^1,Z^2,Z^3,Z^4$ of $H\smallsmile K$ are to be considered
presently. Convenient indexings
$\msA^{Z^0}=(u^0{}_0,\ldots,u^0{}_3)$, $\msA^{Z^1}=(u^1{}_0,\ldots,u^1{}_3)$,
$\msA^{Z^2}=(u^2{}_0,\ldots,u^2{}_7)$ of the 
exponent monoids $\msA^{Z^0}$, $\msA^{Z^1}$, $\msA^{Z^2}$ are yielded by the right sides
of the \ceqref{chgmactex1} with $X^0,X^1,X^2$ replaced by $Z^0,Z^1,Z^2$, as we have
$Z^0=X^0$, $Z^1=X^1$, $Z^2=X^2$. Suitable indexings of exponent monoids $\msA^{Z^3}$, $\msA^{Z^4}$
are $\msA^{Z^3}=(u^3{}_0,\ldots,u^3{}_3)$, $\msA^{Z^4}=(u^4{}_0,\ldots,u^4{}_3)$,
where $u^3{}_i=t_{Z^35\star}{}^{-1}(v^0{}_i)$, $u^4{}_i=t_{Z^45\star}{}^{-1}(v^1{}_i)$
recalling that $Z^3=Y^0+5$, $Z^4=Y^1+5$. Explicitly, they read as the right hand sides of the
\ceqref{chgmactex3} with $Y^0,Y^1$ replaced by $Z^3,Z^4$.
Suppose again that $\msM=\bbH_{0,3}$, the commutative monoid of the order $3$
cyclic group $\bbZ_3$. Assuming the above indexings, 
the following calibrations $\varrho\in C(H)$, $\varsigma\in C(K)$ can be considered
{\allowdisplaybreaks
\begin{align}
\label{}
&\varrho_{X^0}=(0,1,1,2), &\varrho_{X^1}=(2,0,1,0), &&\varrho_{X^2}=(1,1,2,0,2,0,0,1),
\\
\nonumber
&\varsigma_{Y^0}=(2,1,2,2), &\varsigma_{Y^1}=(1,0,2,1). 
\end{align}
}
\!\!Their monadic product $\varrho\smallsmile\varsigma\in C(H\smallsmile K)$ is 
{\allowdisplaybreaks
\begin{align}
\label{}
&(\varrho\smallsmile\varsigma)_{Z^0}=(0,1,1,2), &(\varrho\smallsmile\varsigma)_{Z^1}=(2,0,1,0),
\\
\nonumber
&(\varrho\smallsmile\varsigma)_{Z^2}=(1,1,2,0,2,0,0,1), &(\varrho\smallsmile\varsigma)_{Z^3}=(2,1,2,2), 
\\
\nonumber
&(\varrho\smallsmile\varsigma)_{Z^4}=(1,0,2,1). &~
\end{align}
}\vspace{-1cm}
}
\end{exa}

The calibration monadic multiplication we have introduced above turns out to be associative and unital. 

\begin{prop} \label{prop:wgsmile}
For $l,m,n\in\bbN$, $H\in G[l]$, $K\in G[m]$, $L\in G[n]$
and $\varrho\in C(H)$, $\varsigma\in C(K)$, $\varphi\in C(L)$, one has 
\begin{equation}
\label{rsmiles2}
\varrho\smallsmile(\varsigma\smallsmile \varphi)=(\varrho \smallsmile \varsigma)\smallsmile \varphi.
\end{equation}
Further, for every $l\in\bbN$, $H\in G[l]$ and $\varrho\in C(H)$, 
\begin{equation}
\label{rsmiles3}
\varrho\smallsmile\varepsilon=\varepsilon\smallsmile \varrho=\varrho.
\end{equation}
\end{prop}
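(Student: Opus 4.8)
The plan is to verify the two identities \ceqref{rsmiles2} and \ceqref{rsmiles3} by evaluating both sides hyperedge by hyperedge, reducing everything to the defining formula \ceqref{rsmiles1} together with the already-established facts about the shift functions $t_{Xl}$ (namely \ceqref{tlw11}, \ceqref{tlw2}) and the functoriality of the exponent-function push-forward (props. \cref{prop:star1}, \cref{prop:star2}, especially \ceqref{exppf1} and \ceqref{exppf2}). Recall from \ceqref{hgsmile1} that at the level of underlying hypergraphs $H\smallsmile(K\smallsmile L)=(H\smallsmile K)\smallsmile L=H\cup(K+l)\cup(L+l+m)$, so the two sides of \ceqref{rsmiles2} are calibrations of the \emph{same} hypergraph and it suffices to compare their values on an arbitrary hyperedge $W$ of that hypergraph.

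First I would prove associativity \ceqref{rsmiles2}. Fix $W\in H\smallsmile(K\smallsmile L)$ and split into the three disjoint cases $W\in H$, $W\in K+l$, $W\in L+l+m$. For $W\in H$ both sides collapse to $\varrho_W$ directly from \ceqref{rsmiles1}, applied twice. For $W\in L+l+m$ one side gives $\varphi_{W-l-m}\circ t_{W\,l+m\,\star}$ after two applications of \ceqref{rsmiles1}, while the other side produces $\bigl(\varphi_{(W-l)-m}\circ t_{W-l\,m\,\star}\bigr)\circ t_{W\,l\,\star}$; these agree once we invoke \ceqref{exppf1} to merge the two push-forwards and \ceqref{tlw11} to identify $t_{W-m\,l}\circ t_{W m}=t_{W\,l+m}$ (with the indices matching because $W\subset\bbN+l+m$). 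The case $W\in K+l$ is the intermediate one, handled the same way using \ceqref{rsmiles1} once or twice as appropriate. For the unitality relation \ceqref{rsmiles3}, note $H\smallsmile O=O\smallsmile H=H$ by \ceqref{hgsmile2}, so both $\varrho\smallsmile\varepsilon$ and $\varepsilon\smallsmile\varrho$ are calibrations of $H$; for $\varrho\smallsmile\varepsilon$ every hyperedge $X$ of $H$ lies in $H$, so \ceqref{rsmiles1} gives $(\varrho\smallsmile\varepsilon)_X=\varrho_X$ outright, while for $\varepsilon\smallsmile\varrho$ every $X$ lies in $O+0=\emptyset+0$... wait — more carefully, $O\smallsmile H=O\cup(H+0)=H$, so each $X\in H$ is in $H+0$, and \ceqref{rsmiles1} gives $(\varepsilon\smallsmile\varrho)_X=\varrho_{X-0}\circ t_{X0\star}=\varrho_X\circ\id_{\msA^X}=\varrho_X$ using \ceqref{tlw2} and \ceqref{exppf2}. (When $H=\emptyset$ both sides are $e_\emptyset$ trivially.)

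The only mildly delicate point — and the step I would write out most carefully — is the bookkeeping of the shift indices in the $W\in L+l+m$ branch of \ceqref{rsmiles2}: one must check that the composite $t_{X-m\,l}\circ t_{Xm}$ arising from the nested application of \ceqref{rsmiles1} is exactly $t_{X\,l+m}$ as asserted in \ceqref{tlw11}, and that the push-forward $f_\star$ of a composite equals the composite of push-forwards in the correct order per \ceqref{exppf1}. There is no genuine obstacle here, only the need to keep the shifted vertex sets $X$, $X-l$, $X-l-m$ and the associated exponent monoids straight; once the indexing is pinned down the identities follow mechanically. I would present the computation in a single displayed \texttt{align} chain for the $W\in L+l+m$ case, citing \ceqref{rsmiles1}, \ceqref{exppf1}, \ceqref{tlw11} at the appropriate steps, and dispatch the remaining cases in one or two lines each.
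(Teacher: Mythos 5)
Your proposal is correct and follows essentially the same route as the paper's own proof: a hyperedge-by-hyperedge case split using \eqref{rsmiles1}, with \eqref{exppf1} and \eqref{tlw11} handling the composite shift push-forwards in the $L+l+m$ branch, and \eqref{hgsmile2}, \eqref{tlw2}, \eqref{exppf2} for unitality. The index bookkeeping you flag as the delicate point is exactly the content of the paper's verification, and your computation of it is right.
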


\begin{proof}
We show first \ceqref{rsmiles2}. We have $H\smallsmile K\smallsmile L$ $=H\cup(K+l)\cup(L+l+n)$, 
where $H\cap(K+l)=H\cap(L+l+n)=(K+l)\cap(L+l+n)=\emptyset$, by \ceqref{hgsmile1}. 
From here, using \ceqref{exppf1}, \ceqref{tlw11} and \ceqref{rsmiles1}, it is readily verified that 
for $X\in H\smallsmile K\smallsmile L$ 
\begin{equation}
\label{}
\mhfpt(\varrho\smallsmile(\varsigma\smallsmile \varphi))_X=((\varrho \smallsmile \varsigma)\smallsmile \varphi)_X
=\left\{\begin{array}{ll}
\varrho_X&\text{if $X\in H$},\\
\varsigma_{X-l}\circ t_{Xl\star}&\text{if $X\in K+l$},\\
\varphi_{X-l-m}\circ t_{Xl+m\star}&\text{if $X\in L+l+m$}.
\end{array}
\right.
\end{equation}
This proves \ceqref{rsmiles2}. Next, we show \ceqref{rsmiles3}. Recalling \ceqref{hgsmile2} and
using \ceqref{exppf2} and \ceqref{tlw2}, it is now simple to check that for $X\in H$
\begin{equation}
\label{}
(\varrho\smallsmile\varepsilon)_X=(\varepsilon\smallsmile\varrho)_X=\varrho_X,
\end{equation}
showing \ceqref{rsmiles3}.
\end{proof}

\begin{prop} \label{prop:fsgrss}
Let $l,m,p,q\in\bbN$, $f\in\Hom_\varOmega([l],[p])$, $g\in\Hom_\varOmega([m],[q])$ and let $H\in G[l]$, $K\in G[m]$.
Then, for $\varrho\in C(H)$, $\varsigma\in C(K)$
\begin{equation}
\label{fsgrss}
(f\smallsmile g)_{H\smallsmile K*}(\varrho\smallsmile\varsigma)
=f_{H*}(\varrho)\smallsmile g_{K*}(\varsigma).
\end{equation}
\end{prop}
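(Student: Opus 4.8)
The plan is to check \ceqref{fsgrss} hyperedge by hyperedge. Both sides of \ceqref{fsgrss} are calibrations of the hypergraph $G(f\smallsmile g)(H\smallsmile K)$, which by \ceqref{domg4} (with $D=G$) and \ceqref{hgsmile3} equals $Gf(H)\smallsmile Gg(K)=Gf(H)\cup(Gg(K)+p)$ (cf. \ceqref{hgsmile0}); hence it suffices to show that the two sides agree on every hyperedge $Z\in Gf(H)\smallsmile Gg(K)$, and I would treat separately the cases $Z\in Gf(H)$ and $Z\in Gg(K)+p$. The first thing to record is that, since hyperedges are non-empty, a subset of $[p]$ can never coincide with a non-empty subset of $[q]+p$; together with \ceqref{gamma2} this shows that in the defining sum \ceqref{whgpf0} for $(f\smallsmile g)_{H\smallsmile K*}(\varrho\smallsmile\varsigma)_Z$ only hyperedges $W\in H$ contribute when $Z\in Gf(H)$, and only hyperedges $W\in K+l$ contribute when $Z\in Gg(K)+p$.

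In the case $Z\in Gf(H)$, for $W\in H$ one has $(f\smallsmile g)|_W=f|_W$ as a map onto $f(W)$ by \ceqref{gamma2}, and $(\varrho\smallsmile\varsigma)_W=\varrho_W$ by \ceqref{rsmiles1}; hence $(f\smallsmile g)_{H\smallsmile K*}(\varrho\smallsmile\varsigma)_Z$ collapses to $\sum_{W\in H,\,f(W)=Z}f|_{W*}(\varrho_W)=f_{H*}(\varrho)_Z$, which by \ceqref{rsmiles1} is exactly $(f_{H*}(\varrho)\smallsmile g_{K*}(\varsigma))_Z$.

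In the case $Z\in Gg(K)+p$, write $Z=Z'+p$ with $Z'\in Gg(K)$ and re-index the sum \ceqref{whgpf0} by $W=Y+l$, $Y\in K$; this is a bijection onto $\{Y\in K\mid g(Y)=Z'\}$, and by \ceqref{rsmiles1} one has $(\varrho\smallsmile\varsigma)_{Y+l}=\varsigma_Y\circ t_{(Y+l)l\star}$, so that $(f\smallsmile g)_{H\smallsmile K*}(\varrho\smallsmile\varsigma)_Z=\sum_{Y\in K,\,g(Y)=Z'}(f\smallsmile g)|_{(Y+l)*}\bigl(\varsigma_Y\circ t_{(Y+l)l\star}\bigr)$. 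The key point is the factorisation $(f\smallsmile g)|_{Y+l}=t_{Zp}^{-1}\circ g|_Y\circ t_{(Y+l)l}$, which one verifies directly from \ceqref{gamma2} since both sides send $i\in Y+l$ to $g(i-l)+p$. Feeding this into each summand and using the functoriality of the calibration push-forward (prop. \cref{prop:wxsfunct}, relations \ceqref{wgtpf1}, \ceqref{wgtpf3}) and of the exponent-function push-forward (prop. \cref{prop:star2}, relations \ceqref{exppf1}, \ceqref{exppf3}), the push-forwards attached to the bijection $t_{(Y+l)l}$ cancel and the summand reduces to $g|_{Y*}(\varsigma_Y)\circ t_{Zp\star}$. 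Since precomposition by $t_{Zp\star}$ is additive for the pointwise sum in $\msM^{\msA^{Z'}}$, summing over $Y$ yields $g_{K*}(\varsigma)_{Z'}\circ t_{Zp\star}$, which by \ceqref{rsmiles1} is precisely $(f_{H*}(\varrho)\smallsmile g_{K*}(\varsigma))_Z$. This proves \ceqref{fsgrss}.

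The main obstacle is the bookkeeping in the last case: one must keep the exponent-function push-forward ($\star$) and the calibration push-forward ($*$) distinct and track the $l$- and $p$-shift maps $t_{Xl}$ with their correct composition orders, the crucial step being the verification of the factorisation of $(f\smallsmile g)|_{Y+l}$ through $g|_Y$ and the two shift maps; granted that identity, the vanishing of the shift contributions is a routine application of the functoriality relations \ceqref{exppf1}--\ceqref{exppf3} and \ceqref{wgtpf1}--\ceqref{wgtpf3} already established.
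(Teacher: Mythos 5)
Your proposal is correct and follows essentially the same route as the paper's proof: the same case split over $Z\in Gf(H)$ versus $Z\in Gg(K)+p$ justified by the disjointness of the two pieces, the same factorisation $(f\smallsmile g)|_{Y+l}=t_{Zp}{}^{-1}\circ g|_Y\circ t_{(Y+l)l}$ extracted from \ceqref{gamma2}, and the same cancellation of the shift push--forwards via \ceqref{exppf1}--\ceqref{exppf3} and \ceqref{wgtpf1}--\ceqref{wgtpf3}. The only cosmetic difference is that you make the additivity of precomposition by $t_{Zp\star}$ explicit when summing over $Y$, which the paper leaves implicit.
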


\begin{proof} To begin with, we verify the consistency of relation \ceqref{fsgrss}. On the left hand side, 
we have that $f\smallsmile g\in\Hom_\varOmega([l]\smallsmile[m],[p]\smallsmile[q])$, $H\smallsmile K\in G([l]\smallsmile[m])$
and $\varrho\smallsmile\varsigma\in C(H\smallsmile K)$ so that  
$(f\smallsmile g)_{H\smallsmile K*}(\varrho\smallsmile\varsigma)\in C(G(f\smallsmile g)(H\smallsmile K))$
in accordance with defs. \cref{def:whgpf} and \cref{def:hgwsm}.
Similarly, we have that $f_{H*}(\varrho)\in C(Gf(H))$, $g_{K*}(\varsigma)\in C(Gg(K))$
so that $f_{H*}(\varrho)\smallsmile g_{K*}(\varsigma)\in C(Gf(H)\smallsmile Gg(K))$ on the right hand side,
again in keeping with defs. \cref{def:whgpf} and \cref{def:hgwsm}. Relation \ceqref{fsgrss} so makes sense, 
as $G(f\smallsmile g)(H\smallsmile K)=Gf\smallsmile Gg(H\smallsmile K)
=Gf(H)\smallsmile Gg(K)$ by virtue of prop. \cref{prop:hgjoint}. 


As $H\smallsmile K=H\cup(K+l)$ with $H\cap(K+l)=\emptyset$, if $X\in H\smallsmile K$, then
either $X\in H$ or $X\in K+l$, but not both. Likewise, 
given that $Gf(H)\smallsmile Gg(K)=Gf(H)\cup(Gg(K)+p)$
with $Gf(H)\cap(Gg(K)+p)=\emptyset$, if $Y\in Gf(H)\smallsmile Gg(K)$, then 
either $Y\in Gf(H)$ or $Y\in Gg(K)+p$, but not both. 

Let $Y\in Gf(H)$. Then, there is $X\in H$ such that $Y=f(X)$ but for no $X\in K+l$ one has
$Y=g(X-l)+p$, else $Y\in Gg(K)+p$. Owing to \ceqref{gamma2}, it follows that 
$X\in H\smallsmile K$ and $f\smallsmile g(X)=Y$ if and only if $X\in H$ and $f(X)=Y$. Consequently,
by virtue of \ceqref{whgpf0} and \ceqref{rsmiles1}
{\allowdisplaybreaks
\begin{align}
\label{}
(f\smallsmile g)_{H\smallsmile K*}(\varrho\smallsmile\varsigma)_Y
&=\mycom{{}_\sss}{{}_{X\in H\smallsmile K,f\smallsmile g(X)=Y}}(f\smallsmile g)|_{X*}((\varrho\smallsmile\varsigma)_X)
\\
\nonumber
&=\mycom{{}_\sss}{{}_{X\in H,f(X)=Y}}f|_{X*}(\varrho_X)=f_{H*}(\varrho)_Y
\end{align}
}
\!when $Y\in Gf(H)$. 

Let  $Y\in Gg(K)+p$. Then, there exists $X\in K+l$ such that $Y=g(X-l)+p$ but for no $X\in H$ one has
$Y=f(X)$, else $Y\in Gf(H)$. By \ceqref{gamma2} again, it follows that 
$X\in H\smallsmile K$ and $f\smallsmile g(X)=Y$ if and only if $X\in K+l$ and $g(X-l)+p=Y$. So,
by virtue of \ceqref{whgpf0} again, when $Y\in Gg(K)+p$ 
{\allowdisplaybreaks
\begin{align}
\label{}
(f\smallsmile g)_{H\smallsmile K*}(\varrho\smallsmile\varsigma)_Y
&=\mycom{{}_\sss}{{}_{X\in H\smallsmile K,f\smallsmile g(X)=Y}}(f\smallsmile g)|_{X*}((\varrho\smallsmile\varsigma)_X)
\\
\nonumber
&=\mycom{{}_\sss}{{}_{X\in K+l,g(X-l)+p=Y}}(t_{g(X-l)+p\hfpt p}{}^{-1}\circ g|_{X-l}\circ t_{Xl})_*(\varsigma_{X-l}\circ t_{Xl\star})
\\
\nonumber
&=\mycom{{}_\sss}{{}_{X+l\in K,g(X-l)+p=Y}}t_{Y p}{}^{-1}{}_*\circ g|_{X-l*}\circ t_{X\hfpt l*}\circ
t_{X\hfpt l}{}^{-1}{}_*(\varsigma_{X-l})
\\
\nonumber
&=\mycom{{}_\sss}{{}_{X\in K,g(X)=Y-p}}g|_{X*}(\varsigma_X)\circ t_{Y p*}      
=g_{K*}(\varsigma)_{Y-p}\circ t_{Y p\star}.
\end{align}
}
\!\!In the second line above, we expressed $(\varrho\smallsmile\varsigma)_X$ through \ceqref{rsmiles1} 
and used the identity $(f\smallsmile g)|_X=t_{g(X-l)+p\hfpt p}{}^{-1}\hfpt\circ\hfpt g|_{X-l}\hfpt\circ\hfpt t_{Xl}$  
ensuing from \ceqref{gamma2}. In the third line, we used the relation 
$\varsigma_{X-l}\circ t_{Xl\star}=t_{Xl}{}^{-1}{}_*(\varsigma_{X-l})$, which can be shown easily from 
\ceqref{wgtpf0}, and applied \ceqref{wgtpf1}. In the fourth line, we employed the relation 
$g|_{X*}(\varsigma_X)\circ t_{Y p*}=t_{Y p}{}^{-1}{}_*\circ g|_{X*}(\varsigma_X)$ of the same form as the one
used in the previous line. Finally, we used \ceqref{whgpf0} again. 


By the above calculations, owing to \ceqref{rsmiles1}, we conclude that
\begin{equation}
\label{}
(f\smallsmile g)_{H\smallsmile K*}(\varrho\smallsmile\varsigma)_Y=f_{H*}(\varrho)\smallsmile g_{K*}(\varsigma)_Y
\end{equation}
for $Y\in Gf(H)\smallsmile Gg(K)$, showing \ceqref{fsgrss}. 
\end{proof}

By these results, the monadic multiplicative structure of the hypergraph $\varOmega$ monad $G\varOmega$
(cf. subsect. \cref{subsec:graphctg}, def. \cref{def:hgadj}) extends to the calibrated hypergraph
category $G_C\varOmega$.


\begin{defi} \label{def:whgadj}
For any two $l,m\in\bbN$, the calibrated hypergraph monadic multiplication 
$\smallsmile:G_C[l]\times G_C[m]\rightarrow G_C([l]\smallsmile[m])$, reads for $(H,\varrho)\in G_C[l]$,
$(K,\varsigma)\in G_C[m]$ as
\begin{equation}
\label{whgsmile0}  
(H,\varrho)\smallsmile(K,\varsigma)=(H\smallsmile K,\varrho\smallsmile\varsigma).
\end{equation}
$(O,\varepsilon)\in G_C[0]$ is the calibrated hypergraph
monadic unit. 
\end{defi}

\noindent

\noindent 
That the above defines indeed a genuine graded $\varOmega$ monadic structure
on the category $G_C\varOmega$
is established by the following proposition (cf. def. \cref{def:jointstrct}).

\begin{prop} \label{prop:whgjoint}
The calibrated hypergraph category $G_C\varOmega$ equipped with the monadic multiplication $\smallsmile$
and unit $(O,\varepsilon)$ is a graded $\varOmega$ monad. 
\end{prop}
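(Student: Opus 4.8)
The plan is to verify, one by one, the three axioms of a graded $\varOmega$ monad listed in def. \cref{def:jointstrct} — the associativity relation \ceqref{joint1}, the unitality relation \ceqref{joint2}, and the morphism-compatibility relation \ceqref{joint3} — for the data $(G_C\varOmega,\smallsmile,(O,\varepsilon))$ fixed in def. \cref{def:whgadj}. The whole point is that $G_C\varOmega$ is already known to be a concrete $\varOmega$ category by prop. \cref{prop:gwfinj}, and its monadic multiplication and unit are defined componentwise through the corresponding structures on $G\varOmega$ and on the calibration sets $C(H)$. Consequently each of the three axioms splits cleanly into a hypergraph part and a calibration part, and each part has in essence already been proved in the preceding lemmas; the task is thus one of assembly, not of fresh computation.

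First I would note that the maps of def. \cref{def:whgadj} are well defined as maps of sets: for $(H,\varrho)\in G_C[l]$ and $(K,\varsigma)\in G_C[m]$ one has $H\smallsmile K\in G([l]\smallsmile[m])$ by def. \cref{def:hgadj} and $\varrho\smallsmile\varsigma\in C(H\smallsmile K)$ by def. \cref{def:hgwsm}, so that $(H\smallsmile K,\varrho\smallsmile\varsigma)\in G_C([l]\smallsmile[m])$, and likewise $(O,\varepsilon)\in G_C[0]$. For associativity \ceqref{joint1}, applying \ceqref{whgsmile0} twice reduces the identity to the pair $H\smallsmile(K\smallsmile L)=(H\smallsmile K)\smallsmile L$ and $\varrho\smallsmile(\varsigma\smallsmile\varphi)=(\varrho\smallsmile\varsigma)\smallsmile\varphi$; the former is \ceqref{hgsmile1} (established in the proof of prop. \cref{prop:hgjoint}) and the latter is relation \ceqref{rsmiles2} of prop. \cref{prop:wgsmile}. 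For unitality \ceqref{joint2}, in the same way it reduces to $H\smallsmile O=O\smallsmile H=H$, which is \ceqref{hgsmile2}, together with $\varrho\smallsmile\varepsilon=\varepsilon\smallsmile\varrho=\varrho$, which is relation \ceqref{rsmiles3} of prop. \cref{prop:wgsmile}.

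For the morphism-compatibility relation \ceqref{joint3}, I would use that the monoidal product of morphisms in the $\varOmega$ category $G_C\varOmega$ obeys $G_Cf\smallsmile G_Cg=G_C(f\smallsmile g)$ (item \cref{d6} of prop. \cref{prop:domegacat} with $D=G_C$), so that by \ceqref{wgfundt2} and \ceqref{whgsmile0},
\[
G_Cf\smallsmile G_Cg\bigl((H,\varrho)\smallsmile(K,\varsigma)\bigr)
=\bigl(G(f\smallsmile g)(H\smallsmile K),\,(f\smallsmile g)_{H\smallsmile K*}(\varrho\smallsmile\varsigma)\bigr),
\]
while $G_Cf(H,\varrho)\smallsmile G_Cg(K,\varsigma)=\bigl(Gf(H)\smallsmile Gg(K),\,f_{H*}(\varrho)\smallsmile g_{K*}(\varsigma)\bigr)$. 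Equality of the first components is exactly relation \ceqref{hgsmile3} of prop. \cref{prop:hgjoint} (recalling $G(f\smallsmile g)=Gf\smallsmile Gg$), and equality of the second components is precisely relation \ceqref{fsgrss} of prop. \cref{prop:fsgrss}. This closes the verification.

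I expect no genuine obstacle inside the present proposition: once the componentwise description is in place, all three axioms are one-line reductions to results already on the table. The only thing to watch is bookkeeping — keeping straight that the symbol $\smallsmile$ is being read as the $\varOmega$-monoidal product when it acts on objects and morphisms of $G_C\varOmega$ but as the monadic product when it acts on the elements $(H,\varrho)$, exactly the distinction flagged in the remark following def. \cref{def:jointstrct}. The substantive difficulty lives one level down, in prop. \cref{prop:fsgrss}: disentangling $(f\smallsmile g)_{H\smallsmile K*}(\varrho\smallsmile\varsigma)$ on the two halves $Gf(H)$ and $Gg(K)+p$ and, via the shift functions and the functoriality relations \ceqref{exppf1}, \ceqref{wgtpf1} of the exponent- and calibration-push-forwards, recognizing it as $f_{H*}(\varrho)\smallsmile g_{K*}(\varsigma)$. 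Since that lemma is available, the proof of the proposition is essentially immediate.
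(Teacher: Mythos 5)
Your proposal is correct and follows essentially the same route as the paper: the paper likewise verifies \ceqref{joint1} via \ceqref{hgsmile1} and \ceqref{rsmiles2}, \ceqref{joint2} via \ceqref{hgsmile2} and \ceqref{rsmiles3}, and \ceqref{joint3} by the same chain combining $G_Cf\smallsmile G_Cg=G_C(f\smallsmile g)$, \ceqref{wgfundt2}, \ceqref{hgsmile3}, \ceqref{fsgrss} and \ceqref{whgsmile0}. Your remark that the real work lives in prop. \cref{prop:fsgrss} is also consistent with how the paper distributes the burden.
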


\begin{proof} 
From the defining relation \ceqref{whgsmile0}, using the identities \ceqref{hgsmile1}, 
and \ceqref{rsmiles2}, it is immediately verified that 
for all $l,m,n\in\bbN$, $(H,\varrho)\in G_C[l]$, $(K,\varsigma)\in G_C[m]$, $(L,\varphi)\in G_C[n]$ we have 
\begin{equation}
\label{whgsmile1}
(H,\varrho)\smallsmile((K,\varsigma)\smallsmile (L,\varphi))=((H,\varrho) \smallsmile (K,\varsigma))\smallsmile (L,\varphi).
\end{equation}
The monadic multiplication $\smallsmile$ thus satisfies relation \ceqref{joint1}.
Further, employing \ceqref{hgsmile2} and \ceqref{rsmiles3}, it is readily found that 
for every $l\in\bbN$, $(H,\varrho)\in G[l]$ it holds that 
\begin{equation}
\label{whgsmile2}
(H,\varrho)\smallsmile(O,\varepsilon)=(O,\varepsilon)\smallsmile(H,\varrho)=(H,\varrho).
\end{equation}
$\smallsmile$ so satisfies also relation \ceqref{joint2}. There remains to be shown that 
for all $l,m,p,q\in\bbN$, $f\in\Hom_\varOmega([l],[p])$, $g\in\Hom_\varOmega([m],[q])$ and $(H,\varrho)\in G_C[l]$,
$(K,\varsigma)\in G_C[m]$
\begin{equation}
\label{whgsmile3}
G_Cf\smallsmile G_Cg((H,\varrho)\smallsmile (K,\varsigma))=G_Cf(H,\varrho)\smallsmile Gg(K,\varsigma)
\end{equation}
and so prove that relation \ceqref{joint3} is satisfied too. 
The verification relies on a straightforward combination of the identities
\ceqref{hgsmile3}, \ceqref{wgfundt2}, \ceqref{fsgrss} and \ceqref{whgsmile0}, 
{\allowdisplaybreaks
\begin{align}
\label{}
G_Cf\smallsmile G_Cg((H,\varrho)\smallsmile (K,\varsigma))&=G_C(f\smallsmile g)(H\smallsmile K,\varrho\smallsmile\varsigma)
\\
\nonumber
&=(G(f\smallsmile g)(H\smallsmile K), (f\smallsmile g)_{H\smallsmile K*}(\varrho\smallsmile\varsigma))
\\
\nonumber
&=(Gf\smallsmile Gg(H\smallsmile K),f_{H*}(\varrho)\smallsmile g_{K*}(\varsigma))
\\
\nonumber
&=(Gf(H)\smallsmile Gg(K),f_{H*}(\varrho)\smallsmile g_{K*}(\varsigma))
\\
\nonumber
&=(Gf(H),f_{H*}(\varrho))\smallsmile(Gg(K),g_{K*}(\varsigma))
\\
\nonumber
&\hspace{5cm}=G_Cf(H,\varrho)\smallsmile Gg(K,\varsigma),
\end{align}
}
\!\!where the functorial relations $G(f\smallsmile g)=Gf\smallsmile Gg$ and 
$G_C(f\smallsmile g)=G_Cf\smallsmile G_Cg$ were used.
\end{proof}

Since the calibrated hypergraph $\varOmega$ monad $G_C\varOmega$ has been fashioned as an enhancement 
of the hypergraph $\varOmega$ monad $G\varOmega$ \pagebreak (cf. prop. \cref{prop:hgjoint}), the natural question arises
about whether their relationship can be described through an $\varOmega$ monad morphism.
There exists indeed a special morphism of the $\varOmega$ monads
$G_C\varOmega$, $G\varOmega$ projecting the first onto the second
which formalizes the above intuition. 

\begin{prop} \label{prop:prjc}
The assignment of the projection $\sfp_C:G_C[l]\rightarrow G[l]$ given by 
\begin{equation}
\label{spcfmor}
\sfp_C(H,\varrho)=H
\end{equation}
with $(H,\varrho)\in G_C[l]$ to each $l\in\bbN$
defines a  morphism $\sfp_C\in\Hom_{\ul{\rm GM}_\varOmega}(G_C\varOmega,G\varOmega)$
of the objects $G_C\varOmega$, $G\varOmega\in\Obj_{\ul{\rm GM}_\varOmega}$ in $\ul{\rm GM}_\varOmega$.
\end{prop}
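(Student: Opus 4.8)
The plan is to verify directly the three defining conditions of a graded $\varOmega$ monad morphism listed in def. \cref{def:catjoint}, namely \ceqref{catjoint1}, \ceqref{catjoint2} and \ceqref{catjoint3}, for the family of projections $\sfp_C:G_C[l]\rightarrow G[l]$. Since $\sfp_C$ is just the first--coordinate projection on the sets $G_C[l]=\{(H,\varrho)\mid H\in G[l],\varrho\in C(H)\}$ defined in \ceqref{wgfundt1}, each verification reduces to reading off the first component of the relevant structure maps of $G_C\varOmega$, which were themselves built coordinatewise from those of $G\varOmega$. I expect there to be no real obstacle here; the work is entirely bookkeeping, and the proof is short.

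First I would check the naturality condition \ceqref{catjoint1}. For $l,m\in\bbN$ and $f\in\Hom_\varOmega([l],[m])$ and $(H,\varrho)\in G_C[l]$, formula \ceqref{wgfundt2} gives $G_Cf(H,\varrho)=(Gf(H),f_{H*}(\varrho))$, hence $\sfp_C\circ G_Cf(H,\varrho)=Gf(H)=Gf(\sfp_C(H,\varrho))=Gf\circ\sfp_C(H,\varrho)$, so $\sfp_C\circ G_Cf=Gf\circ\sfp_C$. Next I would check compatibility with monadic multiplication \ceqref{catjoint2}. For $l,m\in\bbN$, $(H,\varrho)\in G_C[l]$, $(K,\varsigma)\in G_C[m]$, definition \cref{def:whgadj}, eq. \ceqref{whgsmile0}, gives $(H,\varrho)\smallsmile(K,\varsigma)=(H\smallsmile K,\varrho\smallsmile\varsigma)$, where the right--hand $\smallsmile$ on the first slot is the hypergraph monadic multiplication of def. \cref{def:hgadj}; applying $\sfp_C$ yields $H\smallsmile K=\sfp_C(H,\varrho)\smallsmile\sfp_C(K,\varsigma)$, as required.

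Finally I would check compatibility with the units \ceqref{catjoint3}: the monadic unit of $G_C\varOmega$ is $(O,\varepsilon)\in G_C[0]$ by def. \cref{def:whgadj}, and $\sfp_C(O,\varepsilon)=O$, which is precisely the monadic unit of $G\varOmega$ by def. \cref{def:hgadj}. Having established \ceqref{catjoint1}--\ceqref{catjoint3}, def. \cref{def:catjoint} gives that the family $\{\sfp_C:G_C[l]\rightarrow G[l]\}_{l\in\bbN}$ is a morphism $\sfp_C\in\Hom_{\ul{\rm GM}_\varOmega}(G_C\varOmega,G\varOmega)$, completing the proof. If one wishes to be thorough, one may also remark that each $\sfp_C:G_C[l]\rightarrow G[l]$ is a well--defined function of sets because $G_C[l]$ and $G[l]$ are genuine sets (they are subcategory objects of $\varSigma$, by props. \cref{prop:hffun} and \cref{prop:gwfinj}), and that $\sfp_C$ coincides with the projection $p_{[l]}$ already introduced in the proof of prop. \cref{prop:gwfinj}.
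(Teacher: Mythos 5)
Your proof is correct and follows essentially the same route as the paper's: a direct verification of the three conditions \ceqref{catjoint1}--\ceqref{catjoint3} of def. \cref{def:catjoint} by reading off the first components of \ceqref{wgfundt2}, \ceqref{whgsmile0} and the unit $(O,\varepsilon)$. The extra remark about well-definedness and the connection to $p_{[l]}$ from prop. \cref{prop:gwfinj} is harmless but not needed.
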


\begin{proof}
We have to check that $\sfp_C$ meets the requirements \ceqref{catjoint1}--\ceqref{catjoint3}.
From \ceqref{spcfmor} and \ceqref{wgfundt2}, we find
\begin{equation}
\label{}
\sfp_C\circ G_Cf(H,\varrho)=Gf\circ \sfp_C(H,\varrho)=Gf(H),
\end{equation}
where $f\in\Hom_\varOmega([l],[m])$, $(H,\varrho)\in G_C[l]$ with $l,m\in\bbN$. So, $\sfp_C$ satisfies \ceqref{catjoint1}.
Next, from \ceqref{spcfmor} and \ceqref{whgsmile0},  we get further 
\begin{equation}
\label{}
\sfp_C((H,\varrho)\smallsmile(K,\varsigma))=\sfp_C(H,\varrho)\smallsmile\sfp_C(K,\varsigma)=H\smallsmile K,
\end{equation}
where $(H,\varrho)\in G_C[l]$, $(K,\varsigma)\in G_C[m]$ with $l,m\in\bbN$.
In this manner, $\sfp_C$ satisfies also \ceqref{catjoint2}. Finally, since \hphantom{xxxxxxxxxxx}
\begin{equation}
\label{}
\sfp_C(O,\varepsilon)=O
\end{equation}
by \ceqref{spcfmor}, \ceqref{catjoint3} holds too. 
\end{proof}

\noindent

\noindent
The elaboration of the formal calibrated hypergraph $\varOmega$ monadic framework is with this complete.

\vspace{1mm}


\subsection{\textcolor{blue}{\sffamily The weighted hypergraph $\varOmega$ monad $G_W\varOmega$}}\label{subsec:hemult}

Hypergraph calibrations (cf. def. \cref{def:hygrpwgt}), quantifying the prevalence of exponent functions,
represent a refinement of hypergraph weights, describing quantitatively hyperedge occurrence. 
Just as hypergraph calibrations are constitutive elements of the calibrated hypergraph $\varOmega$ monad $G_C\varOmega$
studied in subsect. \cref{subsec:wggraphctg}, hypergraph weights are basic components of the weighted hypergraph
$\varOmega$ monad $G_W\varOmega$ whose construction is illustrated in the present subsection. 
Elucidating the relationship occurring between $G_C\varOmega$ and $G_W\varOmega$
will help putting $G_C\varOmega$ in an appropriate perspective on one hand \linebreak and afford clarifying the connection
of calibrated and weighted hypergraph states in sect. 3 of II on the other.  
Given the formal similarities to 
the construction of $G_C\varOmega$
carried out in subsect. \cref{subsec:wggraphctg}, we shall limit ourselves to stating the basic definitions and results
leaving the details of the proofs to the reader.

The basic datum of the 
weighted hypergraph category $G_W\varOmega$ is 
a finite commutative monoid $\msM$, the same as the one used for calibrated hypergraph category $G_C\varOmega$.

\begin{defi}
Let $l\in\bbN$ and let $H\in G[l]$ be a hypergraph. The weight monoid of $H$ is the commutative monoid
$W(H)=\msM^H$ of all $\msM$--valued functions on $H$ under pointwise addition.
\end{defi}

\noindent
If $H=\emptyset$, then $W(H)=0$ is the trivial commutative monoid consisting
of the neutral element $0_\emptyset=0$ only. The elements $\alpha\in W(H)$ are called the 
weight functions of $H$.


The push--forward operation we defined for hypergraph calibrations (cf. def. \cref{def:whgpf})
has an analog for hypergraph weight functions. 

\begin{defi} \label{def:hgmpf}
Let $l,m\in\bbN$ and $f\in\Hom_\varOmega([l],[m])$. Let $H\in G[l]$ be a hypergraph. The hypergraph
weight push--forward map $f_{H*}:W(H)\rightarrow W(Gf(H))$ is given by 
\begin{equation}
\label{mhgpf0}
f_{H*}(\alpha)_Y=\mycom{{}_\sss}{{}_{X\in H,f(X)=Y}}\alpha_X \vphantom{\Big]^f_g}
\end{equation}
with $Y\in Gf(H)$ for $\alpha\in W(H)$.
\end{defi}

\noindent
If $H=\emptyset$ and so $\alpha=0_\emptyset$, then $Gf(H)=\emptyset$ and so $f_{H*}(\alpha)=0_\emptyset$.

\begin{exa} \label{exa:whgpfac} Weighted hypergraph push--forward action. 
{\rm Assume that $\msM=\bbH_{0,3}$, the commutative monoid underlying the order $3$ cyclic group $\bbZ_3$.
Consider the hypergraph $H\in G[5]$ shown in \ceqref{moracth2}, $H=\{X^0,X^1,X^2\}$. A weight function 
$\alpha\in W(H)$ is e.g. 
\begin{align}
\label{}
&\alpha_{X^0}=2, &\alpha_{X^1}=1, &&\alpha_{X^2}=2. 
\label{}
\end{align}
Consider next the morphism $f\in\Hom_\varOmega([5],[4])$ given by eq. \ceqref{moracth1}.
The transformed hypergraph $Gf(H)\in G[4]$ is shown in \ceqref{moracth3}, $Gf(H)=\{Y^0,Y^1\}$. \pagebreak 
By \ceqref{mhgpf0}, the push--forward $f_{H*}(\alpha)$ of the weight function $\alpha$ reads as 
\begin{align}
&f_{H*}(\alpha)_{Y^0}=0, \hspace{-3cm}&f_{H*}(\alpha)_{Y^1}=2. 
\label{}
\end{align}
}
\end{exa}

The weight push--forward operation enjoys the standard push--forward properties which
the calibration push--forward operation does (cf. prop. \cref{prop:whgpfprop}). 

\begin{prop} \label{prop:mhgpfprop}
Assume that $l,m,n\in\bbN$, $f\in\Hom_\varOmega([l],[m])$, $g\in\Hom_\varOmega([m],[n])$. Let further $H\in G[l]$. Then,
one has 
\begin{equation}
\label{mhgpf1}
(g\circ f)_{H*}=g_{Gf(H)*}\circ f_{H*}. 
\end{equation}
Let $l\in\bbN$. Let further $H\in G[l]$. Then, one has 
\begin{equation}
\label{mhgpf2}
\id_{[l]H*}=\id_{W(H)}.
\end{equation}
\end{prop}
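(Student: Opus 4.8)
The plan is to mirror the proof of Proposition \cref{prop:whgpfprop}, its calibrated counterpart, which in the present weighted setting becomes strictly simpler: weight functions take values directly in $\msM$, and the push-forward $f_{H*}$ on $W(H)$ merely adds up weights over the hyperedges mapping to a given target hyperedge, with no auxiliary exponent-function push-forward $f|_{X*}$ to carry along. I would also use throughout that $G=P\circ P_+\big|_\varOmega$ is a functor (prop. \cref{prop:hffun}), so that $Gg\circ Gf=G(g\circ f)$ and $G\id_{[l]}=\id_{G[l]}$, and that for $X\in H$ the hyperedge $f(X)$ — the image of the subset $X$ under $f$ — genuinely belongs to $Gf(H)=\{f(X):X\in H\}$.

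For the functoriality identity \ceqref{mhgpf1}, I would fix $\alpha\in W(H)$ and $Z\in Gg(Gf(H))$, noting $Gg(Gf(H))=G(g\circ f)(H)$ by functoriality of $G$. Unfolding the left-hand side by the defining formula \ceqref{mhgpf0} twice gives
\[
g_{Gf(H)*}\circ f_{H*}(\alpha)_Z=\mycom{{}_\sss}{{}_{Y\in Gf(H),\,g(Y)=Z}}\ \mycom{{}_\sss}{{}_{X\in H,\,f(X)=Y}}\alpha_X .
\]
I would then insert Kronecker deltas $\delta_{g(Y),Z}$, $\delta_{f(X),Y}$ to turn both restricted summations into unrestricted ones over $Y\in Gf(H)$ and $X\in H$; performing the $Y$-sum collapses $\delta_{g(Y),Z}\delta_{f(X),Y}$ to $\delta_{g\circ f(X),Z}$, which is legitimate precisely because $f(X)\in Gf(H)$, so no terms are dropped or spuriously added. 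What remains is $\sum_{X\in H,\,g\circ f(X)=Z}\alpha_X=(g\circ f)_{H*}(\alpha)_Z$, i.e. \ceqref{mhgpf1}. For the identity relation \ceqref{mhgpf2}, I would use $G\id_{[l]}(H)=H$ and compute directly from \ceqref{mhgpf0}: for $\alpha\in W(H)$ and $Y\in H$, $\id_{[l]H*}(\alpha)_Y=\sum_{X\in H,\,X=Y}\alpha_X=\alpha_Y=\id_{W(H)}(\alpha)_Y$.

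The only point requiring any care — and it is minor — is the bookkeeping around the index sets, namely that $f(X)\in Gf(H)$ and $g\circ f(X)\in G(g\circ f)(H)$, which is what makes the delta-function rewriting an honest equality of finite sums in the monoid $\msM$ rather than a merely formal manipulation. I expect no genuine obstacle here: this is the weighted, and therefore stripped-down, analogue of Proposition \cref{prop:whgpfprop}, and, as the text of subsect. \cref{subsec:hemult} indicates, the routine details may reasonably be left to the reader.
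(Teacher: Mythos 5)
Your proof is correct and follows exactly the route the paper intends: the paper's own "proof" of this proposition simply says it is analogous to that of prop. \ref{prop:whgpfprop} and leaves the details to the reader, and your argument is precisely that analogue — the double-sum/Kronecker-delta collapse for \eqref{mhgpf1} and the direct computation for \eqref{mhgpf2} — stripped of the exponent-function push-forward that is absent in the weighted setting. Nothing further is needed.
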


\begin{proof} The proof is analogous to 
  that of prop. \cref{prop:whgpfprop} and is left to the reader.
\end{proof}

Proceeding analogously to the construction of the calibrated hypergraph category
$G_C\varOmega$ in subsect. \cref{subsec:wggraphctg}, we introduce a new $\varOmega$ category
enhancing the hypergraph category $G\varOmega$ studied in subsect. \cref{subsec:graphctg} through the addition
of hyperedge weight data. 

\begin{prop} \label{prop:mhffun} 
The prescription assigning  

\begin{enumerate}[label=\alph*)] 

\item an object $G_W[l]\in\Obj_\varSigma$ given by 
\begin{equation}
\label{mgfundt1}
G_W[l]=\{(H,\alpha)|H\in G[l],\alpha\in W(H)\}
\end{equation}
for every $l\in\bbN$ and 

\item a morphism $G_Wf\in\Hom_\varSigma(G_W[l],G_W[m])$ given by 
\begin{equation}
\label{mgfundt2}
G_Wf(H,\alpha)=(Gf(H),f_{H*}(\alpha))
\end{equation}
with $(H,\alpha)\in G_W[l]$ for every $l,m\in\bbN$ and morphism $f\in\Hom_\varOmega([l],[m])$.

\end{enumerate}

\noindent
defines a functor $G_W:\varOmega\rightarrow\varSigma$ injective on objects and morphisms.
There is in this way an $\varOmega$ category $G_W\varOmega$ having the functor $G_W$  
with the target category restricted from $\varSigma$ to $G_W\varOmega$ as its stalk isofunctor.
$G_W\varOmega$ is further a non monoidal subcategory of $\varSigma$.  
\end{prop}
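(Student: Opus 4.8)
The plan is to replay, essentially verbatim, the proof of Proposition~\cref{prop:gwfinj}, with the weight monoid $W(H)$ and the weight push--forward map $f_{H*}$ of Definition~\cref{def:hgmpf} taking over the structural roles of the calibration monoid $C(H)$ and the calibration push--forward. First I would verify that the prescription \ceqref{mgfundt1}, \ceqref{mgfundt2} is functorial. Preservation of composition follows by writing $G_Wg\circ G_Wf(H,\alpha)$ out componentwise, using the functoriality of the hypergraph functor $G$ (Proposition~\cref{prop:hffun}) on the first slot and the composition identity \ceqref{mhgpf1} of Proposition~\cref{prop:mhgpfprop} on the weight slot, so that both slots collapse to those of $G_W(g\circ f)(H,\alpha)$; preservation of identities follows the same way from $G\id_{[l]}=\id_{G[l]}$ and the unit identity \ceqref{mhgpf2}. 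This establishes that $G_W:\varOmega\rightarrow\varSigma$ is a functor.

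Next I would establish injectivity of $G_W$ on objects and on morphisms, again copying the device used for $G_C$. For objects, introduce the forgetful projection $p_{[l]}:G_W[l]\rightarrow G[l]$, $p_{[l]}(H,\alpha)=H$, which is surjective onto $G[l]$ since every $H\in G[l]$ carries at least one weight function (the monoid $\msM^H$ being non empty); if $G_W[l]=G_W[m]$ then $p_{[l]}=p_{[m]}$ as maps on this common set, hence $G[l]=p_{[l]}(G_W[l])=p_{[m]}(G_W[m])=G[m]$, and since $G$ is injective on objects by Proposition~\cref{prop:hffun} we conclude $[l]=[m]$. For morphisms, if $G_Wf=G_Wg$ for $f,g\in\Hom_\varOmega([l],[m])$, then comparing first components of $G_Wf(H,\alpha)$ and $G_Wg(H,\alpha)$ for every $(H,\alpha)\in G_W[l]$ gives $Gf(H)=Gg(H)$ for all $H\in G[l]$, i.e.\ $Gf=Gg$, whence $f=g$ because $G$ is injective on morphisms. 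The existence of an $\varOmega$ category $G_W\varOmega$ having $G_W$ with target restricted to $G_W\varOmega$ as its stalk isofunctor, and its being a non monoidal subcategory of $\varSigma$, then follows immediately from Corollary~\cref{rem:dtrick}.

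There is no serious obstacle here: the statement is the exact structural twin of Proposition~\cref{prop:gwfinj}, and the only inputs needed — functoriality and injectivity of $G$, the push--forward laws \ceqref{mhgpf1}, \ceqref{mhgpf2}, and Corollary~\cref{rem:dtrick} — are already in place, with the push--forward laws themselves proved in Proposition~\cref{prop:mhgpfprop} by the same Kronecker--delta bookkeeping used throughout this section. If anything the weighted case is strictly simpler than the calibrated one, since weights sit on hyperedges directly rather than on exponent functions attached to hyperedges, so no auxiliary shift maps or exponent push--forwards enter the proof. The one point I would state carefully rather than wave at is the surjectivity $p_{[l]}(G_W[l])=G[l]$, as it is precisely what makes the object--injectivity argument go through.
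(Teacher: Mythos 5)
Your proposal is correct and is exactly what the paper intends: its own proof of this proposition is simply the statement that it "proceeds along lines analogous to those of prop. \cref{prop:gwfinj} and is left to the reader," and you have carried out precisely that adaptation, using \ceqref{mhgpf1}, \ceqref{mhgpf2} in place of the calibration push--forward laws and the same projection argument for injectivity. The one detail you flag -- surjectivity of $p_{[l]}$ onto $G[l]$, guaranteed since $W(H)=\msM^H$ always contains the zero weight -- is indeed the right point to make explicit.
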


\begin{proof} The proof proceeds along lines analogous to those of prop. \cref{prop:gwfinj}
and is left to the reader. 
\end{proof}

\noindent
We shall refer to the $\varOmega$ category $G_W\varOmega$ and its stalk isofunctor $G_W$ as the weighted hypergraph
category and functor, respectively. Items \cref{d1}--\cref{d7} of prop. \cref{prop:domegacat}, with $D$ replaced by $G_W$, 
describe the layout  of $G_W\varOmega$. We notice that since the $G_W\varOmega$ is a subcategory of the finite set category
$\varSigma$, the objects and morphisms of $G_W\varOmega$ are again genuine sets and set 
functions respectively and the composition law and the identity assigning map of
$G_W\varOmega$ are once more the usual set theoretic ones. In particular, we have 
that  $G_W[0]=\{(\emptyset,0_\emptyset)\}$ and that 
$G_We_{[m]}(\emptyset,0_\emptyset)=(\emptyset,0_\emptyset)\in G_W[m]$
for $e_{[m]}\in\Hom_\varOmega([0],[m])$,.

A notion of monadic multiplication for hypergraph weight functions is available. 

\begin{defi} \label{def:mhgwsm}
For all $l,m\in\bbN$ and $H\in G[l]$, $K\in G[m]$, the hypergraph weight monadic multiplication
$\smallsmile:W(H)\times W(K)\rightarrow W(H\smallsmile K)$ (cf. def. \cref{def:hgadj}) is defined by 
\begin{equation}
\label{mrsmiles1}
(\alpha\smallsmile\beta)_X=\left\{
\begin{array}{ll}
\alpha_X&\text{if $X\in H$},\\
\beta_{X-l} &\text{if $X\in K+l$}
\end{array}
\right.
\end{equation}
with $X\in H\smallsmile K$ for $\alpha\in W(H)$, $\beta\in W(K)$.
Further, $\upsilon\in W(O)$ is given by $\upsilon=0_\emptyset$.
\end{defi}

\begin{exa} \label{exa:hgwjnt} Hypergraph weight function monadic product. 
{\rm In ex. \cref{exa:jointhg}, we considered the hypergraphs $H\in G[5]$, $K\in G[4]$
and determined their monoidal product $H\smallsmile K\in G[9]$.
The hyperedges $X^0,X^1,X^2$, $Y^0,Y^1$ of $H$, $K$ are displayed in \ceqref{jointhgex1};
the hyperedges $Z^0,Z^1,Z^2,Z^3,Z^4$ of $H\smallsmile K$ are exhibited
in \ceqref{jointhgex2}. Suppose again that $\msM=\bbH_{0,3}$, the commutative monoid of the 
cyclic group $\bbZ_3$ already met in ex \cref{exa:whgpfac}.
Weight functions $\alpha\in W(H)$, $\beta\in W(K)$ are e.g. 
{\allowdisplaybreaks
\begin{align}
\label{}
&\alpha_{X^0}=2, &\alpha_{X^1}=0, &&\alpha_{X^2}=1,
\\
\nonumber
&\beta_{Y^0}=0, &\beta_{Y^1}=1.
\end{align}
}
\!\!A simple application of formula \ceqref{mrsmiles1}  shows that their joint $\alpha\smallsmile\beta\in W(H\smallsmile K)$ is
{\allowdisplaybreaks
\begin{align}
\label{}
&(\alpha\smallsmile\beta)_{Z^0}=2, &(\alpha\smallsmile\beta)_{Z^1}=0, &&(\alpha\smallsmile\beta)_{Z^2}=1,
\\
\nonumber
&(\alpha\smallsmile\beta)_{Z^3}=0, &(\alpha\smallsmile\beta)_{Z^4}=1.
\end{align}
}
\vspace{-.8cm}
}
\end{exa}


The weight monadic multiplication we have introduced above turns out to be
associative and unital as its calibration counterpart (cf. prop. \cref{prop:wgsmile}).

\begin{prop} \label{prop:mwgsmile}
For $l,m,n\in\bbN$, $H\in G[l]$, $K\in G[m]$, $L\in G[n]$
and $\alpha\in W(H)$, $\beta\in W(K)$, $\gamma\in \msM^L$, one has 
\begin{equation}
\label{mrsmiles2}
\alpha\smallsmile(\beta\smallsmile \gamma)=(\alpha \smallsmile \beta)\smallsmile \gamma.
\end{equation}
Further, for every $l\in\bbN$, $H\in G[l]$ and $\alpha\in W(H)$, 
\begin{equation}
\label{mrsmiles3}
\alpha\smallsmile\upsilon=\upsilon\smallsmile \alpha=\alpha.
\end{equation}
\end{prop}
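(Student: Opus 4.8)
The plan is to follow the proof of Prop.~\cref{prop:wgsmile} almost verbatim, exploiting the fact that the weighted case is strictly simpler than the calibrated one: a weight function $\alpha\in W(H)=\msM^H$ lives directly on the hyperedge set $H$, not on the exponent monoids $\msA^X$, so the $l$--shift maps $t_{Xl}$ and their pushforwards $t_{Xl\star}$ that appeared in \ceqref{rsmiles1} are absent from \ceqref{mrsmiles1}. Consequently no functoriality identities for $(-)_\star$ are invoked, and the whole argument reduces to an elementary case check.

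First I would establish associativity \ceqref{mrsmiles2}. Recalling from \cref{def:hgadj} and \ceqref{hgsmile1} that $H\smallsmile K\smallsmile L=H\cup(K+l)\cup(L+l+m)$ with the three pieces pairwise disjoint, every $X\in H\smallsmile K\smallsmile L$ lies in exactly one of $H$, $K+l$, $L+l+m$. In each of these three cases I apply the defining formula \ceqref{mrsmiles1} twice on each side of \ceqref{mrsmiles2}—once for the inner monadic product, once for the outer one—and read off that both $(\alpha\smallsmile(\beta\smallsmile\gamma))_X$ and $((\alpha\smallsmile\beta)\smallsmile\gamma)_X$ equal $\alpha_X$ when $X\in H$, $\beta_{X-l}$ when $X\in K+l$, and $\gamma_{X-l-m}$ when $X\in L+l+m$. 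Since a weight function is determined by its values on hyperedges, this yields \ceqref{mrsmiles2}. For unitality \ceqref{mrsmiles3} I would invoke \ceqref{hgsmile2} to get $H\smallsmile O=O\smallsmile H=H$ and observe that $W(O)=\msM^{\emptyset}=0$ is the trivial monoid whose only element is $\upsilon=0_\emptyset=e_\emptyset$; then for $X\in H$ the second branch of \ceqref{mrsmiles1} is vacuous in both orders, so $(\alpha\smallsmile\upsilon)_X=\alpha_X=(\upsilon\smallsmile\alpha)_X$ directly, giving \ceqref{mrsmiles3}.

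I do not anticipate a genuine obstacle here. The only points requiring care are the arithmetic of the index shifts in the triple product (that $L$ is shifted by $l+m$, consistent with the final display in the proof of \ceqref{rsmiles2}) and the observation that the relevant unions are disjoint, so the case split is both exhaustive and non--overlapping. As the text itself indicates, once the parallel with Prop.~\cref{prop:wgsmile} is made explicit these routine verifications may reasonably be left to the reader.
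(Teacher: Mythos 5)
Your proof is correct and is precisely the case-by-case verification the paper intends when it leaves this proposition to the reader as analogous to prop. \cref{prop:wgsmile}: you decompose $H\smallsmile K\smallsmile L$ into the three pairwise disjoint pieces $H$, $K+l$, $L+l+m$, evaluate both sides of \ceqref{mrsmiles2} on each, and handle \ceqref{mrsmiles3} via the triviality of $W(O)$, with the welcome simplification that no shift pushforwards $t_{Xl\star}$ appear. One cosmetic slip worth noting: in the product $\upsilon\smallsmile\alpha$ it is the \emph{first} branch of \ceqref{mrsmiles1} that is vacuous (since $O=\emptyset$) while the second branch, with $l=0$, actively returns $\alpha_{X-0}=\alpha_X$, so the second branch is not ``vacuous in both orders''---but the conclusion is unaffected.
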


\begin{proof}
The proof, which follows lines analogous to those of prop. \cref{prop:wgsmile} and is
straightforward, is left to the reader
\end{proof}

\begin{prop}
Let $l,m,p,q\in\bbN$, $f\in\Hom_\varOmega([l],[p])$, $g\in\Hom_\varOmega([m],[q])$ and let $H\in G[l]$, $K\in G[m]$.
Then, for $\alpha\in W(H)$, $\beta\in W(K)$
\begin{equation}
\label{mfsgrss}
(f\smallsmile g)_{H\smallsmile K*}(\alpha\smallsmile\beta)=f_{H*}(\alpha)\smallsmile g_{K*}(\beta).
\end{equation}
\end{prop}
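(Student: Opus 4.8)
The plan is to follow the proof of prop.~\cref{prop:fsgrss} almost verbatim, with the welcome simplification that weight functions carry no exponent-function data, so none of the shift maps $t_{Xl\star}$ and none of the Kronecker-delta bookkeeping of that argument are needed here. First I would check that \ceqref{mfsgrss} is well posed. On the left, $f\smallsmile g\in\Hom_\varOmega([l]\smallsmile[m],[p]\smallsmile[q])$, $H\smallsmile K\in G([l]\smallsmile[m])$ and $\alpha\smallsmile\beta\in W(H\smallsmile K)$, so $(f\smallsmile g)_{H\smallsmile K*}(\alpha\smallsmile\beta)\in W(G(f\smallsmile g)(H\smallsmile K))$ by defs.~\cref{def:hgmpf} and \cref{def:mhgwsm}; on the right, $f_{H*}(\alpha)\in W(Gf(H))$ and $g_{K*}(\beta)\in W(Gg(K))$, so $f_{H*}(\alpha)\smallsmile g_{K*}(\beta)\in W(Gf(H)\smallsmile Gg(K))$. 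The two weight monoids coincide, since $G(f\smallsmile g)(H\smallsmile K)=Gf\smallsmile Gg(H\smallsmile K)=Gf(H)\smallsmile Gg(K)$ by prop.~\cref{prop:hgjoint}.

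Next I would exploit the disjoint decompositions. From \ceqref{hgsmile0} one has $H\smallsmile K=H\cup(K+l)$ with $H\cap(K+l)=\emptyset$, and likewise $Gf(H)\smallsmile Gg(K)=Gf(H)\cup(Gg(K)+p)$ with $Gf(H)\cap(Gg(K)+p)=\emptyset$, so each $Y\in Gf(H)\smallsmile Gg(K)$ lies in exactly one of the two summands and \ceqref{mfsgrss} can be verified hyperedge by hyperedge in two cases. If $Y\in Gf(H)$, then $Y=f(X)$ for some $X\in H$ while $Y\neq g(X-l)+p$ for every $X\in K+l$; by the branch structure of \ceqref{gamma2} together with $H\cap(K+l)=\emptyset$ this yields $\{X\in H\smallsmile K\mid f\smallsmile g(X)=Y\}=\{X\in H\mid f(X)=Y\}$, whence, evaluating the defining sum \ceqref{mhgpf0} of the push-forward against \ceqref{mrsmiles1},
\[
(f\smallsmile g)_{H\smallsmile K*}(\alpha\smallsmile\beta)_Y=\mycom{{}_\sss}{{}_{X\in H,f(X)=Y}}\alpha_X=f_{H*}(\alpha)_Y=(f_{H*}(\alpha)\smallsmile g_{K*}(\beta))_Y .
\]
The case $Y\in Gg(K)+p$ is symmetric: here $\{X\in H\smallsmile K\mid f\smallsmile g(X)=Y\}=\{X\in K+l\mid g(X-l)+p=Y\}$, so the push-forward sum equals $\mycom{{}_\sss}{{}_{X\in K+l,g(X-l)+p=Y}}\beta_{X-l}$, and the re-indexing $X\mapsto X-l$ turns this into $\mycom{{}_\sss}{{}_{X\in K,g(X)=Y-p}}\beta_X=g_{K*}(\beta)_{Y-p}=(f_{H*}(\alpha)\smallsmile g_{K*}(\beta))_Y$, again by \ceqref{mrsmiles1}.

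There is no genuine obstacle here; the only step calling for a little care is the matching of the summation ranges via the piecewise formula \ceqref{gamma2} and the accompanying re-indexing $X\mapsto X-l$ in the second case, and since no exponent-function shift intervenes this reduces to elementary set bookkeeping. Assembling the two cases gives \ceqref{mfsgrss} for all $Y\in Gf(H)\smallsmile Gg(K)$, which completes the proof.
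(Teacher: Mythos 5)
Your proof is correct and follows exactly the route the paper intends: the paper's own "proof" merely declares the result analogous to prop. \cref{prop:fsgrss}, and you have faithfully carried out that analogy, correctly observing that the absence of exponent-function data removes the shift maps $t_{Xl\star}$ and reduces everything to the two-case bookkeeping over the disjoint decompositions $H\smallsmile K=H\cup(K+l)$ and $Gf(H)\smallsmile Gg(K)=Gf(H)\cup(Gg(K)+p)$. No gaps.
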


\begin{proof}
This result is analogous to prop. \cref{prop:fsgrss} and so is its proof. 
\end{proof}

\begin{defi} \label{def:mhgadj}
For any pair $l,m\in\bbN$, the weighted hypergraph monadic multiplication 
$\smallsmile:G_W[l]\times G_W[m]\rightarrow G_W([l]\smallsmile[m])$, reads
for $(H,\alpha)\in G_W[l]$, $(K,\beta)\in G_W[m]$ as
\begin{equation}
\label{mhgsmile0}  
(H,\alpha)\smallsmile(K,\beta)=(H\smallsmile K,\alpha\smallsmile\beta).
\end{equation}
Moreover, $(O,\upsilon)\in G_W[0]$ is the weighted hypergraph monadic unit. 
\end{defi}

\noindent
That the above defines indeed an authentic graded $\varOmega$ monadic multiplicative structure
on the category $G_W\varOmega$ is established by the following proposition (cf. def. \cref{def:jointstrct}).


\begin{prop} \label{prop:mhgjoint}
The weighted hypergraph category $G_W\varOmega$ endowed with the monadic multiplication $\smallsmile$
and unit $(O,\upsilon)$ is a graded $\varOmega$ monad. 
\end{prop}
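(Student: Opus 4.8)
The plan is to proceed exactly as in the proof of prop. \cref{prop:whgjoint}, since $G_W\varOmega$ is built in complete parallel to $G_C\varOmega$. According to def. \cref{def:jointstrct}, I must verify the three properties \ceqref{joint1}, \ceqref{joint2}, \ceqref{joint3} for the data consisting of the category $G_W\varOmega$ (a concrete $\varOmega$ category by prop. \cref{prop:mhffun}), the monadic multiplication $\smallsmile$ of def. \cref{def:mhgadj}, and the monadic unit $(O,\upsilon)$.

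First I would establish associativity \ceqref{joint1}. Starting from the defining relation \ceqref{mhgsmile0}, one computes $(H,\alpha)\smallsmile((K,\beta)\smallsmile(L,\gamma))$ and $((H,\alpha)\smallsmile(K,\beta))\smallsmile(L,\gamma)$ componentwise: the hypergraph components agree by the associativity identity \ceqref{hgsmile1} of the hypergraph monadic multiplication, while the weight components agree by \ceqref{mrsmiles2} from prop. \cref{prop:mwgsmile}. Next I would check unitality \ceqref{joint2}: for $(H,\alpha)\in G_W[l]$, the identity $(H,\alpha)\smallsmile(O,\upsilon)=(O,\upsilon)\smallsmile(H,\alpha)=(H,\alpha)$ follows from \ceqref{mhgsmile0} together with \ceqref{hgsmile2} for the hypergraph component and \ceqref{mrsmiles3} for the weight component.

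The remaining point is the compatibility relation \ceqref{joint3}, i.e. $G_Wf\smallsmile G_Wg((H,\alpha)\smallsmile(K,\beta))=G_Wf(H,\alpha)\smallsmile G_Wg(K,\beta)$ for $f\in\Hom_\varOmega([l],[p])$, $g\in\Hom_\varOmega([m],[q])$. As in prop. \cref{prop:whgjoint}, I would chain together: the functorial relations $G(f\smallsmile g)=Gf\smallsmile Gg$ and $G_W(f\smallsmile g)=G_Wf\smallsmile G_Wg$, the definition \ceqref{mgfundt2} of $G_Wf$, the hypergraph compatibility identity \ceqref{hgsmile3}, the weight compatibility identity \ceqref{mfsgrss}, and the definition \ceqref{mhgsmile0}, reducing both sides to $(Gf(H)\smallsmile Gg(K),\,f_{H*}(\alpha)\smallsmile g_{K*}(\beta))$. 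With all three properties verified, $G_W\varOmega$ is a graded $\varOmega$ monad.

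I do not expect any genuine obstacle here: every ingredient has already been proved in the weighted setting (props. \cref{prop:mhgpfprop}, \cref{prop:mwgsmile}, \cref{prop:mhffun}, and the analog of prop. \cref{prop:fsgrss}), and the hypergraph-level facts are supplied by prop. \cref{prop:hgjoint}. The only mildly delicate step is \ceqref{joint3}, where one must be careful to invoke the functoriality of both $G$ and $G_W$ at the right places so that the monoidal product of morphisms is correctly converted into the monoidal product of the underlying ordinal maps before applying \ceqref{hgsmile3} and \ceqref{mfsgrss}; but this is bookkeeping, not substance, so the proof can reasonably be left to the reader in the same spirit as props. \cref{prop:mhgpfprop} and \cref{prop:mwgsmile}.
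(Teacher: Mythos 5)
Your proposal is correct and follows exactly the route the paper intends: the paper's own proof simply states that the result is completely analogous to prop. \cref{prop:whgjoint}, and your verification of \ceqref{joint1}--\ceqref{joint3} via \ceqref{hgsmile1}, \ceqref{mrsmiles2}, \ceqref{hgsmile2}, \ceqref{mrsmiles3}, \ceqref{hgsmile3}, \ceqref{mfsgrss} and \ceqref{mhgsmile0} is precisely that analogous argument spelled out.
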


\begin{proof}
This result is completely analogous to prop. \cref{prop:whgjoint} and is demonstrated in a similar fashion. 
\end{proof}

The calibrated hypergraph category \pagebreak $G_C\varOmega$ and the weighted hypergraph category
$G_W\varOmega$, therefore, share the property of being a graded $\varOmega$ monad
(cf. prop. \cref{prop:whgjoint}). There exists a special morphism of the monads
$G_C\varOmega$, $G\varOmega$ (cf. def. \cref{def:catjoint}). Its construction, however, is not straightforward and requires 
the introduction of the notion of hypergraph weight associated with a hypergraph calibration.

\begin{defi} \label{def:hemult}
Let $l\in\bbN$ and let $(H,\varrho)\in G_C[l]$ be a calibrated hypergraph.
The weight function of $(H,\varrho)$ is the map 
$\mu_{(H,\varrho)}\in W(H)$  
defined by 
\begin{equation}
\label{hemult0}
\mu_{(H,\varrho)X}=\mycom{{}_\sss}{{}_{w\in\msA^X}}\varrho_X(w)
\end{equation}
with $X\in H$. 
\end{defi}

\noindent For the empty calibrated hypergraph $(O,\varepsilon)\in G_W[0]$, we have 
that $\mu_{(O,\varepsilon)}=0_\emptyset$, as suggested by intuition. 

\begin{exa} Weight function of a calibrated hypergraph. 
{\rm The setting assumed here is taken from ex. \cref{exa:chgmact}. 
The basic commutative monoids $\msA$, $\msM$ involved are $\msA=\bbH_{1,1}$, $\msM=\bbH_{0,3}$.
The hypergraph $H\in G[5]$ considered is the one shown shown in \ceqref{moracth2} with hyperedges $X^0,X^1,X^2$.
We use the indexings of the exponent monoids $\msA^{X^0}$, $\msA^{X^1}$, $\msA^{X^2}$
of $X^0,X^1,X^2$ also introduced in ex. \cref{exa:chgmact}. 
A calibration $\varrho\in C(H)$ is e.g. 
\begin{align}
\label{}
&\varrho_{X^0}=(0,1,1,2), &\varrho_{X^1}=(2,0,1,0), &&\varrho_{X^2}=(1,1,2,0,2,0,0,1). 
\end{align}
With this the hypergraph $H$ is promoted to a calibrated hypergraph $(H,\varrho)\in G_W[5]$. 
Using formula \ceqref{hemult0}, the associated weight function $\mu_{(H,\varrho)}\in W(H)$ is 
\begin{align}
\label{}
&\mu_{(H,\varrho)X^0}=1, &\mu_{(H,\varrho)X^0}=0, &&\mu_{(H,\varrho)X^0}=1.
\end{align}
}
\end{exa}

The calibrated hypergraph weight function $\mu$ enjoys natural properties. First, $\mu$ 
behaves covariantly under the morphism action of the category $G_C\varOmega$. 

\begin{prop}
Let $l,m\in\bbN$ and $f\in\Hom_\varOmega([l],[m])$. Let further $(H,\varrho)\in G_C[l]$ be a calibrated hypergraph. Then, 
\begin{equation}
\label{hemult1}
f_{H*}(\mu_{(H,\varrho)})=\mu_{G_Cf(H,\varrho)}.
\end{equation}
\end{prop}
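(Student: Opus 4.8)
The plan is to unwind both sides of \ceqref{hemult1} using the defining formulas \ceqref{hemult0}, \ceqref{wgfundt2}, \ceqref{whgpf0} and \ceqref{wgtpf0}, and then to interchange the order of summation, exactly as in the proofs of props. \cref{prop:whgpfprop} and \cref{prop:fsgrss}. Fix $Y\in Gf(H)$. On the one hand, by \ceqref{mhgpf0} and \ceqref{hemult0},
\begin{equation*}
f_{H*}(\mu_{(H,\varrho)})_Y=\mycom{{}_\sss}{{}_{X\in H,f(X)=Y}}\mu_{(H,\varrho)X}
=\mycom{{}_\sss}{{}_{X\in H,f(X)=Y}}\mycom{{}_\sss}{{}_{w\in\msA^X}}\varrho_X(w).
\end{equation*}
On the other hand, $G_Cf(H,\varrho)=(Gf(H),f_{H*}(\varrho))$ by \ceqref{wgfundt2}, so by \ceqref{hemult0} applied to the calibrated hypergraph $G_Cf(H,\varrho)$ and then \ceqref{whgpf0},
\begin{equation*}
\mu_{G_Cf(H,\varrho)Y}=\mycom{{}_\sss}{{}_{v\in\msA^Y}}f_{H*}(\varrho)_Y(v)
=\mycom{{}_\sss}{{}_{v\in\msA^Y}}\mycom{{}_\sss}{{}_{X\in H,f(X)=Y}}f|_{X*}(\varrho_X)(v).
\end{equation*}

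The key step is to show these two double sums agree. I would handle the inner structure by expanding $f|_{X*}(\varrho_X)(v)$ via \ceqref{wgtpf0} as $\sum_{w\in\msA^X,\,f|_{X\star}(w)=v}\varrho_X(w)$, insert Kronecker deltas to turn the restricted sums into unrestricted ones (the same device used throughout the proofs of props. \cref{prop:star2} and \cref{prop:wxsfunct}), and then carry out the sum over $v\in\msA^Y$ first. Since $f|_{X\star}:\msA^X\to\msA^Y$ is a well-defined function (by def. \cref{def:fstarw}), for each fixed $X$ with $f(X)=Y$ and each $w\in\msA^X$ there is exactly one $v\in\msA^Y$ with $f|_{X\star}(w)=v$; hence $\sum_{v\in\msA^Y}\sum_{w\in\msA^X,\,f|_{X\star}(w)=v}\varrho_X(w)=\sum_{w\in\msA^X}\varrho_X(w)$. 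Substituting this back gives $\mu_{G_Cf(H,\varrho)Y}=\sum_{X\in H,f(X)=Y}\sum_{w\in\msA^X}\varrho_X(w)$, which matches the expression for $f_{H*}(\mu_{(H,\varrho)})_Y$ obtained above. Since $Y\in Gf(H)$ was arbitrary, \ceqref{hemult1} follows.

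The only genuine subtlety — and hence the place I would be most careful — is the bookkeeping around the restricted summations and the exchange of the $v$-sum with the $X$-sum and $w$-sum: one must confirm that summing $\varrho_X(w)$ over all $v$ with $f|_{X\star}(w)=v$ simply recovers summing over all $w$, with no double counting, which is exactly the content of $f|_{X\star}$ being a function (not merely a relation). The edge case $H=\emptyset$, where both sides reduce to $0_\emptyset$ by the remarks following defs. \cref{def:edgewgt} and \cref{def:hemult}, is immediate and can be dispatched in a single sentence. No deep input is needed beyond the push-forward identities already established; this is essentially a Fubini-type rearrangement.
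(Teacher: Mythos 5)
Your proposal is correct and follows essentially the same route as the paper's proof: unwind both sides via \ceqref{hemult0}, \ceqref{wgfundt2}, \ceqref{whgpf0}, \ceqref{mhgpf0} and \ceqref{wgtpf0}, insert Kronecker deltas, and collapse the sum over $v\in\msA^Y$ using the fact that $f|_{X\star}$ is a function. The paper performs the identical rearrangement in a single chain of equalities starting from $\mu_{G_Cf(H,\varrho)Y}$, so there is nothing substantive to add.
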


\noindent Above, $G_Cf(H,\varrho)$ is given by \ceqref{wgfundt2}. 

\begin{proof} From \ceqref{hemult0}, employing \ceqref{wgtpf0}, \ceqref{whgpf0}, \ceqref{wgfundt2}, we compute
\vspace{-1mm}
{\allowdisplaybreaks
\begin{align}
\label{}
\mu_{G_Cf(H,\varrho)Y}&=\mu_{(Gf(H),f_{H*}(\varrho))Y}
\\
\nonumber
&=\mycom{{}_\sss}{{}_{v\in\msA^Y}}f_{H*}(\varrho)_Y(v)
\\
\nonumber
&=\mycom{{}_\sss}{{}_{v\in\msA^Y}}\mycom{{}_\sss}{{}_{X\in H,f(X)=Y}}f|_{X*}(\varrho_X)(v)
\\
\nonumber
&=\mycom{{}_\sss}{{}_{v\in\msA^Y}}\mycom{{}_\sss}{{}_{X\in H,f(X)=Y}}\mycom{{}_\sss}{{}_{w\in\msA^X,f|_{X\star}(w)=v}}\varrho_X(w)
\\
\nonumber
&=\mycom{{}_\sss}{{}_{v\in\msA^Y}}\mycom{{}_\sss}{{}_{X\in H,f(X)=Y}}\mycom{{}_\sss}{{}_{w\in\msA^X}}
\delta_{f|_{X\star}(w),v}\varrho_X(w)
\\
\nonumber
&=\mycom{{}_\sss}{{}_{X\in H,f(X)=Y}}\mycom{{}_\sss}{{}_{w\in\msA^X}}\varrho_X(w)
\\
\nonumber
&=\mycom{{}_\sss}{{}_{X\in H,f(X)=Y}}\mu_{(H,\varrho)X}=f_{H*}(\mu_{(H,\varrho)})_Y
\end{align}
}
\!\!with $Y\in Gf(H)$, confirming \ceqref{hemult1}. 
\end{proof}

\noindent Second, the weight function $\mu$ behaves compatibly with respect to the monadic multiplications of the
$\varOmega$ monads $G_C\varOmega$, $G_W\varOmega$.


\begin{prop}
Let $l,m\in\bbN$ and let $(H,\varrho)\in G_C[l]$, $(K,\varsigma)\in G_C[m]$ be calibrated hypergraphs.
Then, it holds that \hphantom{xxxxxxxxxxxxx}
\begin{equation}
\label{hemult2}
\mu_{(H,\varrho)\smallsmile(K,\varsigma)}=\mu_{(H,\varrho)}\smallsmile\mu_{(K,\varsigma)}.
\end{equation}
Furthermore, \hphantom{xxxxxxxxxxxxxxxxxxxx}
\begin{equation}
\label{hemult3}
\mu_{(O,\varepsilon)}=\upsilon. 
\end{equation}
\end{prop}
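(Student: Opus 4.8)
The plan is to establish both identities by a direct edge--by--edge unwinding of the definition of the calibrated weight function $\mu$ in \ceqref{hemult0} against the definitions of the calibrated hypergraph monadic multiplication \ceqref{whgsmile0}, \ceqref{rsmiles1} on the one hand and the weight monadic multiplication \ceqref{mhgsmile0}, \ceqref{mrsmiles1} on the other. The unit identity \ceqref{hemult3} is essentially a triviality; the multiplicative identity \ceqref{hemult2} requires exactly one genuine, though minor, observation.

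For \ceqref{hemult2}, I would argue hyperedge by hyperedge. Since $H\smallsmile K=H\cup(K+l)$ with $H\cap(K+l)=\emptyset$ by \ceqref{hgsmile1}, every hyperedge $X\in H\smallsmile K$ lies either in $H$ or in $K+l$, exclusively, and the same dichotomy governs $\varrho\smallsmile\varsigma$ through \ceqref{rsmiles1} and $\mu_{(H,\varrho)}\smallsmile\mu_{(K,\varsigma)}$ through \ceqref{mrsmiles1}. If $X\in H$, then $(\varrho\smallsmile\varsigma)_X=\varrho_X$, so summing over $w\in\msA^X$ gives at once $\mu_{(H,\varrho)\smallsmile(K,\varsigma)X}=\mu_{(H,\varrho)X}=(\mu_{(H,\varrho)}\smallsmile\mu_{(K,\varsigma)})_X$.

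The one step that needs care is the case $X\in K+l$. There $(\varrho\smallsmile\varsigma)_X=\varsigma_{X-l}\circ t_{Xl\star}$, hence $\mu_{(H,\varrho)\smallsmile(K,\varsigma)X}=\sum_{w\in\msA^X}\varsigma_{X-l}(t_{Xl\star}(w))$. Now $t_{Xl}:X\to X-l$ is a bijection, so by \ceqref{exppf3} the push--forward $t_{Xl\star}:\msA^X\to\msA^{X-l}$ is a bijection of finite sets; the substitution $v=t_{Xl\star}(w)$ is therefore a legitimate reindexing of the (finite) summation, producing $\sum_{v\in\msA^{X-l}}\varsigma_{X-l}(v)=\mu_{(K,\varsigma)X-l}$, which is precisely $(\mu_{(H,\varrho)}\smallsmile\mu_{(K,\varsigma)})_X$ by \ceqref{mrsmiles1}. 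This reindexing, resting on the bijectivity of the shift push--forward, is the only point that is not completely automatic, and even it presents no real obstacle. Combining the two cases over all $X\in H\smallsmile K$ yields \ceqref{hemult2}.

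For \ceqref{hemult3}, I would simply note that $(O,\varepsilon)=(\emptyset,e_\emptyset)$ and that $W(O)=W(\emptyset)=0$ is the trivial commutative monoid, whose only element is $0_\emptyset=\upsilon$; hence $\mu_{(O,\varepsilon)}=\upsilon$ with no computation needed, as was already anticipated in the remark following Definition \cref{def:hemult}.
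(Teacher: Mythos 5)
Your proposal is correct and follows essentially the same route as the paper's proof: the hyperedge dichotomy from $H\smallsmile K=H\cup(K+l)$ with disjointness, the trivial case $X\in H$, the reindexing of the sum over $\msA^X$ via the bijection $t_{Xl\star}$ for $X\in K+l$, and the immediate unit identity. No gaps.
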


\noindent We recall here that $(H,\varrho)\smallsmile(K,\varsigma)\in G_C([l]\smallsmile[m])$ is
given by \ceqref{whgsmile0} and that $(O,\varepsilon)\in G_C[0]$ is the empty calibrated hypergraph. 

\begin{proof}
From \ceqref{hemult0}, using \ceqref{whgsmile0}  
\begin{equation}
\label{hemult2p1}
\mu_{(H,\varrho)\smallsmile(K,\varsigma)X}=\mu_{(H\smallsmile K,\varrho\smallsmile\varsigma)X}
=\mycom{{}_\sss}{{}_{w\in\msA^X}}(\varrho\smallsmile\varsigma)_X(w)
\end{equation}
with $X\in H\smallsmile K$, \pagebreak where $(\varrho\smallsmile\varsigma)_X$ is given by \ceqref{rsmiles1}.
As $H\smallsmile K=H\cup(K+l)$ with $H\cap(K+l)=\emptyset$, either $X\in H$ or $X\in K+l$. By \ceqref{rsmiles1},
if $X\in H$ we have
\begin{equation}
\label{hemult2p2}
\mycom{{}_\sss}{{}_{w\in\msA^X}}(\varrho\smallsmile\varsigma)_X(w)
=\mycom{{}_\sss}{{}_{w\in\msA^X}}\varrho_X(w)=\mu_{(H,\varrho)X}. 
\end{equation}
By \ceqref{rsmiles1} again, if instead $X\in K+l$, we have 
{\allowdisplaybreaks
\begin{align}
\label{hemult2p3}
\mycom{{}_\sss}{{}_{w\in\msA^X}}(\varrho\smallsmile\varsigma)_X(w)
&=\mycom{{}_\sss}{{}_{w\in\msA^X}}\varsigma_{X-l}(t_{Xl\star}(w))
\\
\nonumber
&=\mycom{{}_\sss}{{}_{w\in\msA^{X-l}}}\varsigma_{X-l}(w)=\mu_{(K,\varsigma)X-l}, 
\end{align}
}
\!\!where we used the bijectivity of the map $t_{Xl\star}:A^X\rightarrow A^{X-l}$. From \ceqref{hemult2p1}--\ceqref{hemult2p3}, 
\begin{equation}
\label{hemult2p4}
\mu_{(H,\varrho)\smallsmile(K,\varsigma)X}=(\mu_{(H,\varrho)}\smallsmile\mu_{(K,\varsigma)})_X
\end{equation}
with $X\in H\smallsmile K$, yielding \ceqref{hemult2}. 

\ceqref{hemult3} follows immediately from the remark below def. \cref{def:hemult} upon recalling that
$(O,\varepsilon)=(\emptyset,e_\emptyset)$ and that $\upsilon=0_\emptyset$.
\end{proof}



The morphisms relating the calibrated to the weighted hypergraph $\varOmega$ monads $G_C\varOmega$
and $G_W\varOmega$ mentioned earlier can now be displayed. 

\begin{prop} \label{prop:hgmm}
The assignment of the function $\sfh:G_C[l]\rightarrow G_W[l]$ given by 
\begin{equation}
\label{sffmor}
\sfh(H,\varrho)=(H,\mu_{(H,\varrho)}) 
\end{equation}
with $(H,\varrho)\in G_C[l]$ to each $l\in\bbN$ 
defines a special  morphism $\sfh\in\Hom_{\ul{\rm GM}_\varOmega}(G_C\varOmega,G_W\varOmega)$
of the objects $G_C\varOmega$, $G_W\varOmega\in\Obj_{\ul{\rm GM}_\varOmega}$ in $\ul{\rm GM}_\varOmega$.
\end{prop}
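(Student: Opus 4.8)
The plan is to verify directly that the family of functions $\sfh : G_C[l] \to G_W[l]$, $l \in \bbN$, obeys the three defining conditions \ceqref{catjoint1}--\ceqref{catjoint3} of a morphism of graded $\varOmega$ monads (cf. def. \cref{def:catjoint}). All the substantive work has in fact already been carried out: the required analytic facts are precisely the three properties of the calibrated hypergraph weight function $\mu$ established in the three propositions immediately above, namely the covariance identity \ceqref{hemult1}, the multiplicativity identity \ceqref{hemult2}, and the unit identity \ceqref{hemult3}. So the proof is essentially an assembly of these with the defining formulas \ceqref{sffmor}, \ceqref{wgfundt2}, \ceqref{mgfundt2}, \ceqref{whgsmile0}, \ceqref{mhgsmile0}.

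First I would check condition \ceqref{catjoint1}, i.e. $\sfh \circ G_Cf = G_Wf \circ \sfh$ for every $f \in \Hom_\varOmega([l],[m])$. Unwinding $\sfh \circ G_Cf(H,\varrho)$ via \ceqref{wgfundt2} and \ceqref{sffmor} gives $(Gf(H),\mu_{G_Cf(H,\varrho)})$, while unwinding $G_Wf \circ \sfh(H,\varrho)$ via \ceqref{sffmor} and \ceqref{mgfundt2} gives $(Gf(H),f_{H*}(\mu_{(H,\varrho)}))$; these coincide by \ceqref{hemult1}.

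Next I would check condition \ceqref{catjoint2}, i.e. $\sfh((H,\varrho)\smallsmile(K,\varsigma)) = \sfh(H,\varrho)\smallsmile\sfh(K,\varsigma)$ for $(H,\varrho)\in G_C[l]$, $(K,\varsigma)\in G_C[m]$. Using \ceqref{whgsmile0} and \ceqref{sffmor} the left side equals $(H\smallsmile K,\mu_{(H,\varrho)\smallsmile(K,\varsigma)})$, which by \ceqref{hemult2} equals $(H\smallsmile K,\mu_{(H,\varrho)}\smallsmile\mu_{(K,\varsigma)})$, and this is exactly $(H,\mu_{(H,\varrho)})\smallsmile(K,\mu_{(K,\varsigma)})$ by \ceqref{mhgsmile0}, that is, the right side. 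Finally condition \ceqref{catjoint3} is immediate: $\sfh(O,\varepsilon) = (O,\mu_{(O,\varepsilon)}) = (O,\upsilon)$ by \ceqref{hemult3}. With the three conditions in place, $\sfh$ is by definition a morphism in $\ul{\rm GM}_\varOmega$.

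There is no real obstacle here; the argument is purely bookkeeping. The one thing to keep straight is the overloading of the symbol $\smallsmile$: in \ceqref{catjoint2} the outer occurrences denote the monadic multiplications of $G_C\varOmega$, resp. $G_W\varOmega$, on objects (\ceqref{whgsmile0}, \ceqref{mhgsmile0}), whereas the weight-function factorization supplied by \ceqref{hemult2} uses the monadic multiplication on the monoids $W(H)$ (\ceqref{mrsmiles1}); the two match up exactly because that is how \ceqref{mhgsmile0} is set up. The adjective \emph{special} in the statement simply records that $\sfh$, unlike a generic morphism of $\ul{\rm GM}_\varOmega$, collapses the full calibration datum of a hyperedge to a single weight via \ceqref{hemult0}, and hence is far from injective.
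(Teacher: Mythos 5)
Your proposal is correct and follows essentially the same route as the paper: a direct verification of the three conditions \ceqref{catjoint1}--\ceqref{catjoint3} by unwinding the definitions \ceqref{sffmor}, \ceqref{wgfundt2}, \ceqref{mgfundt2}, \ceqref{whgsmile0}, \ceqref{mhgsmile0} and invoking the covariance, multiplicativity and unit identities \ceqref{hemult1}--\ceqref{hemult3} for $\mu$. Your closing remark correctly identifies the only point requiring care, namely the overloading of $\smallsmile$.
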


\begin{proof}
We have to check that $\sfh$ meets the requirements \ceqref{catjoint1}--\ceqref{catjoint3}.
From \ceqref{sffmor}, using \ceqref{wgfundt2}, \ceqref{mgfundt2} and \ceqref{hemult1}, we find
{\allowdisplaybreaks
\begin{align}
\label{}
G_Wf\circ\sfh(H,\varrho)&=G_Wf(H,\mu_{(H,\varrho)})
\\
\nonumber
&=(Gf(H),f_{H*}(\mu_{(H,\varrho)}))
\\
\nonumber
&=(Gf(H),\mu_{G_Cf(H,\varrho)})
\\
\nonumber
&=(Gf(H),\mu_{(Gf(H),f_{H*}(\varrho))})
\\
\nonumber
&=\sfh(Gf(H),f_{H*}(\varrho))=\sfh\circ G_Cf(H,\varrho)
\end{align}
}
\!\!where $f\in\Hom_\varOmega([l],[m])$, $(H,\varrho)\in G_C[l]$ with $l,m\in\bbN$. So, $\sfh$ satisfies \ceqref{catjoint1}.
Next, from \ceqref{sffmor}, using \ceqref{whgsmile0}, \ceqref{mhgsmile0} and \ceqref{hemult2}, we get 
{\allowdisplaybreaks
\begin{align}
\label{}
\sfh((H,\varrho)\smallsmile(K,\varsigma))&=\sfh(H\smallsmile K,\varrho\smallsmile\varsigma)
\\
\nonumber
&=(H\smallsmile K,\mu_{(H\smallsmile K,\varrho\smallsmile\varsigma)})
\\
\nonumber
&=(H\smallsmile K,\mu_{(H,\varrho)}\smallsmile\mu_{(K,\varsigma)})
\\
\nonumber
&=(H,\mu_{(H,\varrho)})\smallsmile(K,\mu_{(K,\varsigma)})=\sfh(H,\varrho)\smallsmile\sfh(K,\varsigma).
\end{align}
}
\!\!where $(H,\varrho)\in G_C[l]$, $(K,\varsigma)\in G_C[m]$ with $l,m\in\bbN$. Therefore, $\sfh$ obeys \ceqref{catjoint2}.
Finally, from \ceqref{sffmor}, using \eqref{hemult3}, we have 
\begin{equation}
\label{}
\sfh(O,\varepsilon)=(O,\mu_{(O,\varepsilon)})=(O,\upsilon).
\end{equation}
So, $\sfh$ satisfies also \ceqref{catjoint3}.
Owing to def. \cref{def:catjoint}, $\sfh\in\Hom_{\ul{\rm GM}_\varOmega}(G_C\varOmega,G_W\varOmega)$. 
\end{proof}

There exists also a morphism of the weighted and bare hypergraph $\varOmega$ monads
$G_W\varOmega$ and $G\varOmega$ projecting the first onto the second
analogous to the similar morphism projecting the calibrated hypergraph $\varOmega$ monad $G_C\varOmega$ onto $G\varOmega$ 
(cf. prop. \cref{prop:prjc}). 

\begin{prop} \label{prop:prjw}
The assignment of the projection $\sfp_W:G_W[l]\rightarrow G[l]$ given by 
\begin{equation}
\label{spwfmor}
\sfp_W(H,\alpha)=H
\end{equation}
with $(H,\alpha)\in G_W[l]$ to each $l\in\bbN$
defines a morphism $\sfp_W\in\Hom_{\ul{\rm GM}_\varOmega}(G_W\varOmega,G\varOmega)$
of the objects $G_W\varOmega$, $G\varOmega\in\Obj_{\ul{\rm GM}_\varOmega}$ in $\ul{\rm GM}_\varOmega$.
\end{prop}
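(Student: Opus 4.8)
The plan is to verify directly that the projection $\sfp_W$ satisfies the three morphism conditions \ceqref{catjoint1}--\ceqref{catjoint3} of def. \cref{def:catjoint}, exactly mirroring the proof of prop. \cref{prop:prjc} for the analogous morphism $\sfp_C$. This is a routine check, and the main point is simply to unwind the definitions \ceqref{spwfmor}, \ceqref{mgfundt2} and \ceqref{mhgsmile0}; there is no real obstacle here, the hardest part being merely bookkeeping of which monad each symbol $\circ$, $\smallsmile$, $\iota$ refers to.

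First I would check covariance \ceqref{catjoint1}. For $l,m\in\bbN$, $f\in\Hom_\varOmega([l],[m])$ and $(H,\alpha)\in G_W[l]$, using \ceqref{spwfmor} and \ceqref{mgfundt2} one computes
\begin{equation}
\label{}
\sfp_W\circ G_Wf(H,\alpha)=\sfp_W(Gf(H),f_{H*}(\alpha))=Gf(H)=Gf\circ\sfp_W(H,\alpha),
\end{equation}
so \ceqref{catjoint1} holds. Next I would check compatibility with monadic multiplication \ceqref{catjoint2}. For $(H,\alpha)\in G_W[l]$, $(K,\beta)\in G_W[m]$, using \ceqref{spwfmor} and the definition \ceqref{mhgsmile0} of the weighted hypergraph monadic product together with the definition \ceqref{hgsmile0} of $\smallsmile$ on bare hypergraphs,
\begin{equation}
\label{}
\sfp_W((H,\alpha)\smallsmile(K,\beta))=\sfp_W(H\smallsmile K,\alpha\smallsmile\beta)=H\smallsmile K=\sfp_W(H,\alpha)\smallsmile\sfp_W(K,\beta),
\end{equation}
which is \ceqref{catjoint2}.

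Finally, for the unit condition \ceqref{catjoint3}, recalling that the weighted hypergraph monadic unit is $(O,\upsilon)\in G_W[0]$ (cf. def. \cref{def:mhgadj}) and that the bare hypergraph monadic unit is $O\in G[0]$ (cf. def. \cref{def:hgadj}), \ceqref{spwfmor} gives
\begin{equation}
\label{}
\sfp_W(O,\upsilon)=O,
\end{equation}
so \ceqref{catjoint3} holds as well. Since all three conditions of def. \cref{def:catjoint} are met, the collection of functions $\sfp_W:G_W[l]\rightarrow G[l]$, $l\in\bbN$, constitutes a morphism $\sfp_W\in\Hom_{\ul{\rm GM}_\varOmega}(G_W\varOmega,G\varOmega)$, as claimed. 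Alternatively, one could note that $\sfp_W$ factors as $\sfp_C$ composed with a section of $\sfh$, but the direct verification above is shorter and self-contained.
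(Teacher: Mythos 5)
Your proof is correct and follows exactly the route the paper intends: the paper's own proof of this proposition simply states that it is completely analogous to that of prop. \ref{prop:prjc}, which is the direct verification of conditions \eqref{catjoint1}--\eqref{catjoint3} that you carry out. (Your closing aside about factoring $\sfp_W$ through a section of $\sfh$ is unnecessary and unsubstantiated, but since you do not rely on it the proof stands as written.)
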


\begin{proof}
The proof is completely analogous to that of prop. \cref{prop:prjc}. 
\end{proof}

\vfill\eject

\renewcommand{\sectionmark}[1]{\markright{\thesection\ ~~#1}}

\section{\textcolor{blue}{\sffamily The multi dit mode $\varOmega$ monads}}\label{sec:grsttsctg}


Dits are generalizations of bits with an arbitrary finite number of modes. 
Their classical and quantum incarnations go under the names of cdits and qudits, respectively.

Multi cdit systems are characterized by their configurations; multi qudit systems are so 
by their states. Multi cdit configurations and qudit states can be organized in graded
$\varOmega$ monads, which we have studied in  subsect. \cref{subsec:omegacat}. 

Since multi cdit configurations and qubit states are related by basis encoding, the $\varOmega$ monads describing them
also are. We have found conceptually appropriate to treat separately the
the multi cdit configuration $\varOmega$ monad in subsect. \cref{subsec:grpct}, though this by itself
is not directly relevant for the theory worked out in sect. 3 of II, 
and then construct the important multi qudit state $\varOmega$ monad in subsect. \cref{subsec:graphhilb} via base encoding.
One could however directly define the latter without any reference to the former.
In subsect. \cref{subsec:mqsrmk}, we report some useful results concerning the 
operatorial structure of the multi qudit state $\varOmega$ monad for later use.
Some explicit examples are furnished. Others will be provided in sect. 3 of II 
in relation to the construction of hypergraph states.



\subsection{\textcolor{blue}{\sffamily The multi cdit configuration $\varOmega$ monad $E\varOmega$}}\label{subsec:grpct}

The multi cdit configuration monad $E\varOmega$ is a graded $\varOmega$ monad describing the configurations of a multi cdit
classical register, as suggested by its name. In this subsection, we shall delineate in detail
its construction and study its main properties. 



The states of a cdit constitute a finite ring $\msR$. Only the additive monoid underlying
$\msR$ enters the definition of the configuration category $E\varOmega$. Below, we therefore assume
only that $\msR$ is an additive finite commutative monoid with $\msR\neq 0$, where $0$ denotes the trivial monoid.

The construction of the $\varOmega$ category $E\varOmega$ is carried out along lines analogous to those
of the hypergraph categories studied in sect. \cref{sec:graphctg}. 

\begin{prop} \label{prop:qfnct}
The rule specifying 

\begin{enumerate}[label=\alph*)] 

\item an object $E[l]\in\Obj_\varSigma$ given by \hphantom{xxxxxxxxxxxxxxxx}
\begin{equation}
\label{qfundt1}
E[l]=\msR^l
\end{equation}
for every $l\in\bbN$ and 

\item a morphism $Ef\in\Hom_\varSigma(E[l],E[m])$ given by 
\begin{equation}
\label{qfundt2}
Ef(x)_s=\mycom{{}_\sss}{{}_{r\in[l],f(r)=s}}x_r
\end{equation}
with $s\in[m]$ and $x\in E[l]$ for every $l,m\in\bbN$ and morphism $f\in\Hom_\varOmega([l],[m])$

\end{enumerate}

\noindent 
defines a functor $E:\varOmega\rightarrow\varSigma$ injective on objects and morphisms.
Hence, there exists an $\varOmega$ category $E\varOmega$ having the functor $E$  
with the target category restricted  from $\varSigma$ to $E\varOmega$ as its stalk isofunctor.
Furthermore, $E\varOmega$ is a non monoidal subcategory of $\varSigma$.
\end{prop}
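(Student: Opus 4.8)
The plan is to verify the three requirements of Proposition \cref{prop:domegacat} (or more precisely its corollary \cref{rem:dtrick}): that $E$ is well-defined as a functor $\varOmega\rightarrow\varSigma$, that it is injective on objects, and that it is injective on morphisms. Once these are established, the existence of the $\varOmega$ category $E\varOmega$ with $E$ as its stalk isofunctor, and its being a non-monoidal subcategory of $\varSigma$, follow immediately from cor. \cref{rem:dtrick}. So the real content is just checking functoriality and injectivity of the assignment given by \ceqref{qfundt1}, \ceqref{qfundt2}.

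First I would check that $E$ preserves composition. Given $f\in\Hom_\varOmega([l],[m])$ and $g\in\Hom_\varOmega([m],[n])$, one computes for $x\in E[l]$ and $t\in[n]$
\begin{equation}
\label{}
Eg(Ef(x))_t=\mycom{{}_\sss}{{}_{s\in[m],g(s)=t}}Ef(x)_s=\mycom{{}_\sss}{{}_{s\in[m],g(s)=t}}\mycom{{}_\sss}{{}_{r\in[l],f(r)=s}}x_r,
\end{equation}
and this double restricted sum collapses, exactly as in the proof of prop. \cref{prop:star2} via Kronecker deltas, to $\sum_{r\in[l],\,g\circ f(r)=t}x_r=E(g\circ f)(x)_t$. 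Similarly, $E\id_{[l]}(x)_s=\sum_{r\in[l],\,r=s}x_r=x_s$, so $E\id_{[l]}=\id_{E[l]}$. This establishes that $E$ is a functor. For injectivity on objects, note that $E[l]=\msR^l$ and $E[m]=\msR^m$ are finite sets of cardinalities $|\msR|^l$ and $|\msR|^m$; since $\msR\neq 0$ we have $|\msR|\geq 2$, so $E[l]=E[m]$ forces $l=m$. For injectivity on morphisms, suppose $f,g\in\Hom_\varOmega([l],[m])$ with $Ef=Eg$. Pick any $r_0\in[l]$ and let $x\in E[l]=\msR^l$ be the tuple with a fixed nonzero entry $c\in\msR$ in slot $r_0$ and $0$ elsewhere — here one should pick $c$ so that $c\neq 0$, possible since $\msR\neq 0$; to separate $f(r_0)$ from $g(r_0)$ cleanly it is convenient to use that for such a ``delta'' configuration $Ef(x)_s=c$ if $s=f(r_0)$ and $0$ otherwise (using that distinct slots contribute $0$). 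Then $Ef(x)=Eg(x)$ forces $f(r_0)=g(r_0)$, and since $r_0$ was arbitrary, $f=g$.

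The one subtlety to watch is that $\msR$ is only assumed to be a commutative monoid, not a group, so one cannot subtract; the argument for morphism-injectivity must therefore be phrased purely additively, using indicator/``delta'' configurations and the fact that a sum of copies of a single nonzero element in one coordinate and zeros elsewhere reads off which coordinate is hit. This is the only place where a little care is needed, and it is the step I expect to be the mild obstacle — everything else is a routine transcription of the pattern already used for $G\varOmega$ in prop. \cref{prop:hffun} and for the push-forward functoriality in props. \cref{prop:star1}, \cref{prop:star2}. With functoriality and both injectivity statements in hand, one simply invokes cor. \cref{rem:dtrick} with $\clT=\varSigma$ to conclude the existence of the $\varOmega$ category $E\varOmega$, its description via items \cref{d1}--\cref{d7} of prop. \cref{prop:domegacat}, and its being a non-monoidal subcategory of $\varSigma$.
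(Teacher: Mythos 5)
Your proposal is correct and follows essentially the same route as the paper: functoriality via the Kronecker-delta collapse of the double restricted sum, injectivity on objects by cardinality of $\msR^l$, injectivity on morphisms via a ``delta'' configuration supported on a single slot (the paper phrases this by contradiction, you phrase it directly, but it is the same computation), and then an appeal to cor. \cref{rem:dtrick}. Your remark that the argument must avoid subtraction since $\msR$ is only a monoid is a fair point, and both your argument and the paper's respect it.
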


\noindent
Above, the $\msR^l$ are regarded as sets, although they have an obvious monoid structure.
It is further understood that $E[0]=0$. For $m\in\bbN$ and $f\in\Hom_\varOmega([0],[m])$,
$Ef=0$, the vanishing function with range $\msR^m$.

\begin{proof}
Assume that $l,m,n\in\bbN$, $f\in\Hom_\varOmega([l],[m])$, $g\in\Hom_\varOmega([m],[n])$ and $x\in E[l]$. Then,
from \ceqref{qfundt2}, we obtain \hphantom{xxxxxx}
\begin{align}
\label{qfunctp1}
Eg\circ Ef(x)_t&=Eg(Ef(x))_t
\\
\nonumber
&=\mycom{{}_\sss}{{}_{s\in[m],g(s)=t}}Ef(x)_s
\\
\nonumber
&=\mycom{{}_\sss}{{}_{s\in[m],g(s)=t}}\mycom{{}_\sss}{{}_{r\in[l],f(r)=s}}x_r
\\
\nonumber
&=\mycom{{}_\sss}{{}_{r\in[l]}}\mycom{{}_\sss}{{}_{s\in[m]}}\delta_{f(r),s}\delta_{g(s),t}x_r
\\
\nonumber
&=\mycom{{}_\sss}{{}_{r\in[l]}}\mycom{{}_\sss}{{}_{s\in[m]}}\delta_{f(r),s}\delta_{g\circ f(r),t}x_r
\\
\nonumber
&=\mycom{{}_\sss}{{}_{r\in[l],g\circ f(r)=t}}x_r=E(f\circ g)(x)_t,
\end{align}
where $t\in[n]$, demonstrating that $E$ preserves morphism composition. Suppose further that
$l\in\bbN$ and $x\in E[l]$. Then, we have 
\begin{align}
\label{qfunctp4}
E\id_{[l]}(x)_s&=\mycom{{}_\sss}{{}_{r\in[l],\id_{[l]}(r)=s}}x_r
\\
\nonumber
&=x_s=\id_{E[l]}(x)_s
\end{align}
with $s\in[l]$, showing that $E$ preserves identity assignments. It follows that $E$ is a functor.
Next, for every $l,m\in\bbN$, $E[l]=E[m]\Rightarrow\msR^l=\msR^m\Rightarrow l=m \Rightarrow [l]=[m]$.
Thus, $E$ is injective on objects. Suppose further that $l,m\in\bbN$, $f,g\in\Hom_\varOmega([l],[m])$
and  $Ef=Eg$ 
but that, by absurd, $f\neq g$. Then, there is $r\in[l]$ with $f(r)\neq g(r)$. Set $s=f(r)$,
$s'=g(r)$. Thus, $s,s'\in[m]$ with $s\neq s'$. Let $x\in E[l]$ with $x_r\neq 0$ and $x_{r'}=0$ for $r'\in[l]$, 
$r'\neq r$. A straightforward calculation employing \ceqref{qfundt2} gives $Ef(x)_s=x_r$, $Ef(x)_{s'}=0$, 
$Eg(x)_s=0$, $Eg(x)_{s'}=x_r$, which is impossible because $x_r\neq 0$ and $Ef=Eg$ by assumption. Therefore,
$f=g$  and so $E$ is injective on morphisms too.
The existence of an $\varOmega$ category $E\varOmega$ with $E$ as its stalk isofunctor
and its being a subcategory of $\varSigma$ are an immediate consequence of cor. \cref{rem:dtrick}.
\end{proof}

\noindent
Items \cref{d1}--\cref{d7} of prop. \cref{prop:domegacat}, with $D$ replaced by $E$, furnish a detailed 
description of $E\varOmega$. Since $E\varOmega$ is a subcategory of the finite set category
$\varSigma$, the objects and morphisms of $E\varOmega$ are again genuine sets and set 
functions respectively and the composition law and the identity assigning map of
$E\varOmega$ are the usual set theoretic ones. 

\begin{exa} \label{exa:qcsmact} Qudit configuration morphism action.
{\rm We recall that we use the conventions of items \cref{it:conv6}, \cref{it:conv7}. 
Consider the morphism $f\in\Hom_\varOmega([5],[7])$ 
\begin{equation}
\label{}
f=(0,6,1,2,2). 
\end{equation}
Let $\msR=\bbH_{0,8}$, the monoid of the order $8$ cyclic group $\bbZ_8$. Consider
the following elements $x,y,z\in E[5]$, 
\begin{equation}
\label{}
x=(7,1,1,0,2), \qquad y=(1,3,4,4,0),\qquad z=(1,0,7,2,2).
\end{equation}
Then, from \ceqref{qfundt2},
{\allowdisplaybreaks
\begin{align}
\label{}
&Ef(x)=(7,1,2,0,0,0,1), &Ef(y)=(1,4,4,0,0,0,3),
\\
\nonumber
&Ef(z)=(1,7,4,0,0,0,0). &~
\end{align}
}\!\! The three zero segment $0,0,0$ shared by these $7$ cdit configurations is a consequence of
$3,4,5\in[7]$ not belonging to the range of $f$. 
}
\end{exa}




\begin{defi} \label{def:mqadj}
For every two $l,m\in\bbN$, there is defined a multi cdit configuration monadic multiplication 
$\smallsmile:E[l]\times E[m]\rightarrow E([l]\smallsmile[m])$ by \pagebreak 
\begin{equation}
\label{mqsmile0}
x\smallsmile y_r
=\Bigg\{
\begin{array}{ll}
x_r&\text{if $r\in[l]$},\\
y_{r-l}&\text{if $r\in[m]+l$}.
\end{array}
\end{equation}
with $x\in E[l]$, $y\in E[m]$. We also set $o=0\in E[0]$.
\end{defi}

\noindent
Above, one has $x\smallsmile y_r=y_r$, $x\smallsmile y_r=x_r$ and $x\smallsmile y_r=0$
when $l=0$ and $m>0$, $l>0$ and $m=0$ and $l=m=0$, respectively.

\begin{exa} Multi cdit configuration monadic product. 
{\rm Let $\msR=\bbH_{0,8}$ as in ex. \cref{exa:qcsmact}. Consider the cdit configurations
$x\in E[5]$, $y\in E[7]$, 
\begin{equation}
\label{}
x=(0,1,1,2,0), \qquad y=(7,1,2,0,0,0,1).
\end{equation}
Then, by \ceqref{mqsmile0}, the joint $x\smallsmile y\in E[12]$ is 
\begin{equation}
\label{}
x\smallsmile y=(0,1,1,2,0,7,1,2,0,0,0,1). 
\end{equation}
}
\end{exa}

\noindent
A multi cdit configuration $\varOmega$ monadic structure is yielded in this way (cf. def. \cref{def:jointstrct}). 

\begin{prop} \label{prop:mqjoint}
The multi cdit configuration category $E\varOmega$ equipped with the monadic multiplication $\smallsmile$
and unit $o$ is a graded $\varOmega$ monad. 
\end{prop}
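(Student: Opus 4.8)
The plan is to verify the three defining axioms of a graded $\varOmega$ monad (cf. def. \cref{def:jointstrct}) for the multi cdit configuration category $E\varOmega$ equipped with the monadic multiplication $\smallsmile$ of \ceqref{mqsmile0} and the unit $o=0\in E[0]$. By prop. \cref{prop:qfnct} we already know that $E\varOmega$ is an $\varOmega$ category which is a (non monoidal) subcategory of $\varSigma$, so the only thing left is to check the associativity relation \ceqref{joint1}, the unitality relation \ceqref{joint2}, and the compatibility relation \ceqref{joint3} with the morphism action. This is entirely parallel to the proof of prop. \cref{prop:hgjoint} for the hypergraph monad, but technically simpler since $E[l]=\msR^l$ and $\smallsmile$ is just concatenation of tuples.

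First I would check associativity. For $l,m,n\in\bbN$, $x\in E[l]$, $y\in E[m]$, $z\in E[n]$, a direct componentwise evaluation of \ceqref{mqsmile0} shows that both $x\smallsmile(y\smallsmile z)$ and $(x\smallsmile y)\smallsmile z$ equal the tuple in $E[l+m+n]$ whose $r$-th component is $x_r$ if $r\in[l]$, $y_{r-l}$ if $r\in[m]+l$, and $z_{r-l-m}$ if $r\in[n]+l+m$; this is relation \ceqref{joint1}. Next, unitality: since $o=0\in E[0]=0$ and $[0]\smallsmile[l]=[l]=[l]\smallsmile[0]$, formula \ceqref{mqsmile0} gives $x\smallsmile o=o\smallsmile x=x$ for every $l\in\bbN$ and $x\in E[l]$, which is \ceqref{joint2}.

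The main work is relation \ceqref{joint3}: for $l,m,p,q\in\bbN$, $f\in\Hom_\varOmega([l],[p])$, $g\in\Hom_\varOmega([m],[q])$, $x\in E[l]$, $y\in E[m]$, one must show $Ef\smallsmile Eg(x\smallsmile y)=Ef(x)\smallsmile Eg(y)$, where on the left $Ef\smallsmile Eg=E(f\smallsmile g)$ by the strict monoidal functoriality of the stalk isofunctor $E$ (item \cref{d6} of prop. \cref{prop:domegacat}). I would evaluate the $s$-th component for $s\in[p+q]$, splitting into the cases $s\in[p]$ and $s\in[q]+p$. Using \ceqref{qfundt2} for $E(f\smallsmile g)$ together with the explicit form \ceqref{gamma2} of $f\smallsmile g$ and the definition \ceqref{mqsmile0} of $x\smallsmile y$, the sum $\sum_{r\in[l+m],\,f\smallsmile g(r)=s}(x\smallsmile y)_r$ collapses: when $s\in[p]$ only indices $r\in[l]$ contribute (since $f\smallsmile g$ sends $[m]+l$ into $[q]+p$), giving $\sum_{r\in[l],f(r)=s}x_r=Ef(x)_s=(Ef(x)\smallsmile Eg(y))_s$; when $s\in[q]+p$ only indices $r\in[m]+l$ contribute, and after the shift $r\mapsto r-l$ one gets $\sum_{r'\in[m],g(r')=s-p}y_{r'}=Eg(y)_{s-p}=(Ef(x)\smallsmile Eg(y))_s$. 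This establishes \ceqref{joint3}.

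The only mildly delicate point, and the one I would be most careful about, is the bookkeeping of the index shifts in the case $s\in[q]+p$ of \ceqref{joint3}: one has to track simultaneously the shift by $l$ coming from $x\smallsmile y$ and the shift by $p$ coming from $f\smallsmile g$ and confirm that the restricted summation indices match up exactly with those defining $Eg(y)_{s-p}$ via \ceqref{qfundt2}. This is the analogue of the case analysis in the proof of prop. \cref{prop:hgjoint} involving \ceqref{hgsmile3p3}, \ceqref{hgsmile3p4}, and is straightforward once the Kronecker-delta trick used in \ceqref{exppfp1} is applied to rewrite the restricted sums as unrestricted ones. No genuine obstacle is expected; the proof is a routine verification mirroring those already given for $G\varOmega$ and $G_C\varOmega$.
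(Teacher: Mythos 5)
Your proposal is correct and follows essentially the same route as the paper's proof: associativity and unitality are read off directly from \ceqref{mqsmile0}, and relation \ceqref{joint3} is verified componentwise by splitting into the cases $s\in[p]$ and $s\in[q]+p$ using \ceqref{gamma2} and \ceqref{qfundt2}, with the same index-shift bookkeeping. No gaps.
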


\begin{proof} From the defining relation \ceqref{mqsmile0}, it is immediately checked that 
\begin{equation}
\label{mqsmile1}
x\smallsmile(y\smallsmile z)=(x \smallsmile y)\smallsmile z
\end{equation}
for $l,m,n\in\bbN$, $x\in E[l]$, $y\in E[m]$, $z\in E[n]$ and that 
\begin{equation}
\label{mqsmile2}
x\smallsmile o=o\smallsmile x=x
\end{equation}
for $l\in\bbN$, $x\in E[l]$, verifying properties \ceqref{joint1}, \ceqref{joint2}.
There remains to show that for $l,m,p,q\in\bbN$, $f\in\Hom_\varOmega([l],[p])$, $g\in\Hom_\varOmega([m],[q])$ and
$x\in E[l]$, $y\in E[m]$ 
\begin{equation}
\label{mqsmile3}
Ef\smallsmile Eg(x\smallsmile y)=Ef(x)\smallsmile Eg(y), 
\end{equation}
demonstrating in this way property \ceqref{joint3}. Indeed, we have 
\begin{equation}
\label{}
Ef\smallsmile Eg(x\smallsmile y)_s=E(f\smallsmile g)(x\smallsmile y)_s
=\mycom{{}_\sss}{{}_{r\in[l+m],f\smallsmile g(r)=s}}x\smallsmile y_r
\end{equation}
with $s\in[p+q]$. Clearly, either $s\in[p]$ or $s\in[q]+p$, but not both. If $s\in[p]$, then
$r\in[l+m], f\smallsmile g(r)=s\Leftrightarrow r\in[l],f(r)=s$; if $s\in[q]+p$ instead,
then $r\in[l+m],f\smallsmile g(r)=s$ $\Leftrightarrow r\in[m]+l,g(r-l)+p=s$. 
Consequently,
{\allowdisplaybreaks
\begin{align}
Ef\smallsmile Eg(x\smallsmile y)_s
&=\left\{
\begin{array}{ll}
\mycom{{}_{\smallsss_{\vphantom{\ul{a}}}^{\vphantom{f}}}}{{}_{r\in[l],f(r)=s}}x\smallsmile y_r&\text{if $s\in[p]$},\\
\mycom{{}_{\smallsss_{\vphantom{f}}^{\vphantom{\ol{f}}}}}{{}_{r\in[m]+l,g(r-l)+p=s}}x\smallsmile y_r&\text{if $s\in[q]+p$}
\end{array}\right.
\\
\nonumber
&=\left\{
\begin{array}{ll}
\mycom{{}_{\smallsss_{\vphantom{\ul{a}}}^{\vphantom{f}}}}{{}_{r\in[l],f(r)=s}}x_r&\text{if $s\in[p]$},\\
\mycom{{}_{\smallsss_{\vphantom{f}}^{\vphantom{\ol{f}}}}}{{}_{r\in[m]+l,g(r-l)+p=s}}y_{r-l}&\text{if $s\in[q]+p$}
\end{array}\right.
\\
\nonumber
&=\left\{
\begin{array}{ll}
\mycom{{}_{\smallsss_{\vphantom{\ul{a}}}^{\vphantom{f}}}}{{}_{r\in[l],f(r)=s}}x_r&\text{if $s\in[p]$},\\
\mycom{{}_{\smallsss_{\vphantom{f}}^{\vphantom{\ol{f}}}}}{{}_{r\in[m],g(r)=s-p}}y_r&\text{if $s\in[q]+p$}
\end{array}\right.
\\
\nonumber
&=\Bigg\{
\begin{array}{ll}
Ef(x)_s&\text{if $s\in[p]$},\\
Eg(y)_{s-p}&\text{if $s\in[q]+p$}
\end{array}
=Ef(x)\smallsmile Eg(y)_s,
\end{align}
}
as required. 
\end{proof}


\subsection{\textcolor{blue}{\sffamily The multi qudit state $\varOmega$ monad $\scH_E\varOmega$}}
\label{subsec:graphhilb}

The multi qudit state $\varOmega$ monad $\scH_E\varOmega$ is a graded $\varOmega$ monad 
describing the states of a multi qudit quantum register. It is related, as already recalled,
to the multi cdit configuration monad $E\varOmega$ studied in subsect. \cref{subsec:grpct}
by basis encoding and is the last key  monad of our theory. $\scH_E\varOmega$ is the object of the present
subsection.




Consider the category $\ul{\rm fdHilb}$ of finite dimensional Hilbert spaces and linear maps thereof.
For notational convenience, we shall set $\bfsfH=\ul{\rm fdHilb}$.
In keeping with standard conventions,
the composition symbol $\circ$ of $\bfsfH$ is omitted
and the identity assigning map $\id$ of $\bfsfH$ is denoted by $1$. 

Let $\scH_1$ be a finite dimensional Hilbert space equipped with a distinguished orthonormal basis
$\ket{z}_1$, $z\in\msR$, where $\msR$ is the same finite commutative monoid used in the construction of the category $E\varOmega$.
Hence, $\dim\scH_1=|\msR|$. 

\begin{prop} \label{defi:scrhqdef}
The prescription assigning

\begin{enumerate}[label=\alph*)] 

\item \label{ite:hq1} an object $\scH_E[l]\in\Obj_{\bfsfH}$ given by 
\begin{equation}
\label{hqfundt1}
\scH_E[l]=\scH_1{}^{\otimes\hfpt l}
\end{equation}
for every $l\in\bbN$ and 

\item \label{ite:hq2} a morphism $\scH_Ef\in\Hom_{\bfsfH}(\scH_E[l],\scH_E[m])$ given by 
\begin{equation}
\label{hqfundt2}
\scH_Ef\ket{x}=\ket{Ef(x)}
\end{equation}
with $x\in E[l]$ for every $l,m\in\bbN$ and morphism $f\in\Hom_\varOmega([l],[m])$

\end{enumerate}

\noindent
specifies a functor $\scH_E:\varOmega\rightarrow\bfsfH$ injective on objects and morphisms.
Therefore, there exists an $\varOmega$ category $\scH_E\varOmega$ having the functor $\scH_E$  
with the target category $\bfsfH$ restricted to $\scH_E\varOmega$ as its stalk isofunctor.
Further, $\scH_E\varOmega$ is a non monoidal subcategory of $\bfsfH$.  
\end{prop}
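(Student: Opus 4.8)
The plan is to follow the template of the proofs of props. \cref{prop:hffun}, \cref{prop:qfnct}: verify that $\scH_E$ is a well-defined functor injective on objects and morphisms, and then invoke cor. \cref{rem:dtrick} with target category $\clT=\bfsfH$. First I would check that \ceqref{hqfundt2} genuinely specifies a morphism of $\bfsfH$. For each $l\in\bbN$ the standard tensor-product orthonormal basis $\ket{x}$, $x\in E[l]=\msR^l$, of $\scH_E[l]=\scH_1{}^{\otimes l}$ is built from the distinguished basis $\ket{z}_1$, $z\in\msR$, of $\scH_1$; prescribing $\scH_Ef\ket{x}=\ket{Ef(x)}$ on this basis and extending linearly therefore defines a unique $\scH_Ef\in\Hom_{\bfsfH}(\scH_E[l],\scH_E[m])$. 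The degenerate case $l=0$ is consistent with the conventions already fixed, since $\scH_E[0]=\scH_1{}^{\otimes 0}=\bbC$, $E[0]=0$, and $\ket{0}$ is the complex unit.

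Next I would establish functoriality. As the morphisms of $\bfsfH$ at play are determined by their values on the basis vectors $\ket{x}$, it suffices to check the composition and identity relations \ceqref{domg1}, \ceqref{domg2} on such vectors. For $f\in\Hom_\varOmega([l],[m])$, $g\in\Hom_\varOmega([m],[n])$ and $x\in E[l]$ one has
\[
\scH_Eg\circ\scH_Ef\ket{x}=\scH_Eg\ket{Ef(x)}=\ket{Eg(Ef(x))}=\ket{E(g\circ f)(x)}=\scH_E(g\circ f)\ket{x},
\]
using that $E$ preserves composition (prop. \cref{prop:qfnct}), and similarly $\scH_E\id_{[l]}\ket{x}=\ket{E\id_{[l]}(x)}=\ket{x}=\id_{\scH_E[l]}\ket{x}$ because $E$ preserves identities. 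Hence $\scH_E$ is a functor $\varOmega\rightarrow\bfsfH$.

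For injectivity on objects I would argue by dimension: since $\msR\neq 0$, $\dim\scH_1=|\msR|\geq 2$, so $\scH_E[l]=\scH_E[m]$ forces $|\msR|^l=|\msR|^m$, whence $l=m$ and $[l]=[m]$. For injectivity on morphisms, suppose $\scH_Ef=\scH_Eg$ with $f,g\in\Hom_\varOmega([l],[m])$. Then $\ket{Ef(x)}=\ket{Eg(x)}$ for every $x\in E[l]$; since distinct elements of $E[m]$ label distinct (orthonormal, hence linearly independent) basis vectors, $Ef(x)=Eg(x)$ for all $x$, i.e. $Ef=Eg$, and the injectivity of $E$ on morphisms (prop. \cref{prop:qfnct}) gives $f=g$. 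With $\scH_E$ a functor injective on objects and morphisms, cor. \cref{rem:dtrick} then produces an $\varOmega$ category $\scH_E\varOmega$ carrying $\scH_E$, with target restricted to $\scH_E\varOmega$, as its stalk isofunctor, and exhibits $\scH_E\varOmega$ as a non monoidal subcategory of $\bfsfH$.

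I do not expect a serious obstacle; the proof is essentially the transcription, through the basis-encoding functor, of the argument already carried out for $E\varOmega$ in prop. \cref{prop:qfnct}. The only step requiring a little care is the first one --- making sure $\scH_Ef$ is unambiguously and linearly defined by its prescribed action on the tensor-product basis --- after which composition, identities and both injectivity statements reduce cleanly to the corresponding facts about the functor $E$.
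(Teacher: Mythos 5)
Your proposal is correct and follows essentially the same route as the paper's proof: verify functoriality on the basis vectors $\ket{x}$ by reducing to the functoriality of $E$, prove injectivity on objects by a dimension count and on morphisms by linear independence of the basis plus the injectivity of $E$, then invoke cor. \cref{rem:dtrick}. The only additions — the explicit well-definedness of $\scH_Ef$ by linear extension from the basis, and the remark that $\msR\neq 0$ gives $|\msR|\geq 2$ so the dimension count actually determines $l$ — are points the paper leaves implicit, and they are correct.
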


\noindent
In {\it \cref{ite:hq1}}, $\scH_E[0]=\bbC$. In {\it \cref{ite:hq2}},  
$\ket{x}=\ket{x_1}_1\otimes\ldots\otimes\ket{x_l}_1$ with $x\in\msR^l$ denotes the orthonormal basis
of $\scH_1{}^{\otimes\hfpt l}$ associated with the basis $\ket{z}_1$ of $\scH_1$,
in keeping with the definition \ceqref{qfundt1} of $E[l]$. 
We shall not use the more precise notation $\ket{x}_l$ for $\ket{x}$,  
as the Hilbert space $\scH_E[l]$ to which $\ket{x}$ belongs can be inferred from the set $E[l]$
to which $x$ belongs. 

\begin{proof} 
Suppose that $l,m,n\in\bbN$ and $f\in\Hom_\varOmega([l],[m])$, $g\in\Hom_\varOmega([m],[n])$. 
By prop. \cref{prop:qfnct}, $E$ is a functor and so $E(g\circ g)=Eg\circ Ef$. Therefore,
for $x\in E[l]$, 
{\allowdisplaybreaks
\begin{align}
\label{}
\scH_E(g\circ f)\ket{x}&=\ket{E(g\circ f)(x)}
\\
\nonumber
&=\ket{Eg(Ef(x))}
\\
\nonumber
&=\scH_Eg\ket{Ef(x)}=\scH_Eg\scH_Ef\ket{x},
\end{align}
}
\!\!so that $\scH_E$ preserves morphism composition. Assume further that  $l\in\bbN$ and $x\in E[l]$. 
As $E$ is a functor as already recalled, $E\id_{l]}=\id_{E[l]}$. So, for $x\in E[l]$,
{\allowdisplaybreaks
\begin{align}
\label{}
\scH_E\id_{[l]}\mhfpt\ket{x}&=\ket{E\id_{[l]}(x)}
\\
\nonumber
&=\ket{\id_{E[l]}(x)}=1_{\scH_E[l]}\ket{x},
\end{align}
}
\!\!showing that $E$ preserves the identity assignment. Therefore, $\scH_E:$ is a functor. 
Next, suppose that $l,m\in\bbN$ and let $\scH_E[l]=\scH_E[m]$. Then, $\dim\scH_E[l]=\dim\scH_E[m]$ so that
$|\msR|^l=|\msR|^m$ whence $[l]=[m]$, where we used that $\dim\scH_E[p]=|\msR|^p$ for $p\in\bbN$.
This shows that $\scH_E$ is injective on objects. Assume further that $l,m\in\bbN$ and $f,g\in\Hom_\varOmega([l],[m])$
are such that $\scH_Ef=\scH_Eg$ holds. Then, for $x\in E[l]$, $\ket{Ef(x)}=\scH_Ef\ket{x}=\scH_Eg\ket{x}=\ket{Eg(x)}$
so that $Ef(x)=Eg(x)$ whence $Ef=Eg$. Since the functor $E$ is injective on morphisms, $f=g$. 
This proves that $\scH_E$ is injective on morphisms. 
Cor. \cref{rem:dtrick} now guarantees that there is an $\varOmega$ category $\scH_E\varOmega$ having 
$\scH_E$ as its stalk isofunctor and that $\scH_E\varOmega$ is a subcategory of $\bfsfH$. 
\end{proof}

\noindent
Items \cref{d1}--\cref{d7} of prop. \cref{prop:domegacat}, with $D$ replaced by $\scH_E$, provide the explicit
layout of $\scH_E\varOmega$. Now however, unlike previous similar instances, 
the objects and morphisms of $\scH_E\varOmega$ are Hilbert spaces and linear operators 
respectively and the composition law and the identity assigning map of
$\scH_E\varOmega$ are the customary Hilbert theoretic ones since the functor $\scH_E$ maps
to the finite dimensional Hilbert space category $\bfsfH$.





We notice here that by construction the monoidal product of objects and morphisms in the category $\scH_E\varOmega$ is just
the ordinary tensor product of Hilbert spaces and linear maps thereof. 

\begin{prop} \label{prop:htensor}
For $l,m\in\bbN$, it holds that 
\begin{equation}
\label{hmqsmile0}
\scH_E[l]\smallsmile\scH_E[m]=\scH_E[l]\otimes\scH_E[m].
\end{equation}  
Further, for any $l,m,p,q\in\bbN$ and $f\in\Hom_\varOmega([l],[p])$, $g\in\Hom_\varOmega([m],[q])$, one has 
\begin{equation}
\label{hmqsmile3}
\scH_Ef\smallsmile\scH_Eg=\scH_Ef\otimes\scH_Eg.
\end{equation}  
Finally, one has \hphantom{xxxxxxxxxxxxxxxxxxxxxx}
\begin{equation}
\label{hmqsmile4}
1_{\scH_E}=\bbC.
\end{equation}  
\end{prop}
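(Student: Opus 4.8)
The plan is to read off all three identities from the general description of the monoidal structure of an $\varOmega$ category given by items \cref{d5}--\cref{d7} of prop. \cref{prop:domegacat}, instantiated at the stalk isofunctor $\scH_E$ (which is legitimate by prop. \cref{defi:scrhqdef}), together with the defining formulas \ceqref{hqfundt1}, \ceqref{hqfundt2}, the identity \ceqref{gamma1} that $[l]\smallsmile[m]=[l+m]$, and the fact from prop. \cref{prop:mqjoint} that the monadic multiplication of $E\varOmega$ is compatible with morphisms.

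For \ceqref{hmqsmile0} and \ceqref{hmqsmile4} I would argue directly at the level of objects. By item \cref{d5} of prop. \cref{prop:domegacat} applied to $\scH_E$ and then \ceqref{gamma1}, \ceqref{hqfundt1}, one has $\scH_E[l]\smallsmile\scH_E[m]=\scH_E([l]\smallsmile[m])=\scH_E[l+m]=\scH_1{}^{\otimes(l+m)}$; since $\scH_1{}^{\otimes(l+m)}=\scH_1{}^{\otimes l}\otimes\scH_1{}^{\otimes m}$ -- both spaces being spanned by the same orthonormal basis, the one indexed by $\msR^{l+m}=\msR^l\times\msR^m$ via concatenation of strings -- the right hand side is $\scH_E[l]\otimes\scH_E[m]$, giving \ceqref{hmqsmile0}. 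The monoidal unit of $\scH_E\varOmega$ is $\scH_E[0]$ by item \cref{d7}, and $\scH_E[0]=\scH_1{}^{\otimes 0}=\bbC$ by \ceqref{hqfundt1}, which is \ceqref{hmqsmile4}.

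For \ceqref{hmqsmile3} I would pass to basis vectors. By item \cref{d6} applied to $\scH_E$ one has $\scH_Ef\smallsmile\scH_Eg=\scH_E(f\smallsmile g)$, so it is enough to check $\scH_E(f\smallsmile g)=\scH_Ef\otimes\scH_Eg$ as operators $\scH_E[l]\otimes\scH_E[m]\rightarrow\scH_E[p]\otimes\scH_E[q]$. A generic basis vector of $\scH_E[l]\otimes\scH_E[m]$ is $\ket{x}\otimes\ket{y}$ with $x\in E[l]$, $y\in E[m]$, and by the description of the tensor product basis following prop. \cref{defi:scrhqdef} together with the concatenation formula \ceqref{mqsmile0} it equals $\ket{x\smallsmile y}$. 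Then, using \ceqref{hqfundt2}, the functoriality relation $E(f\smallsmile g)=Ef\smallsmile Eg$ (item \cref{d6} applied to $E$), the compatibility \ceqref{mqsmile3} proved in prop. \cref{prop:mqjoint}, and \ceqref{hqfundt2} once more, I would compute
\begin{align*}
\scH_E(f\smallsmile g)(\ket{x}\otimes\ket{y})&=\scH_E(f\smallsmile g)\ket{x\smallsmile y}=\ket{E(f\smallsmile g)(x\smallsmile y)}=\ket{Ef(x)\smallsmile Eg(y)}\\
&=\ket{Ef(x)}\otimes\ket{Eg(y)}=\scH_Ef\ket{x}\otimes\scH_Eg\ket{y}=(\scH_Ef\otimes\scH_Eg)(\ket{x}\otimes\ket{y}).
\end{align*}
Since these vectors span $\scH_E[l]\otimes\scH_E[m]$, linearity yields \ceqref{hmqsmile3}.

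The whole argument is bookkeeping, and the only point that deserves care is the identification $\ket{x}\otimes\ket{y}=\ket{x\smallsmile y}$ -- equivalently $\scH_1{}^{\otimes(l+m)}=\scH_1{}^{\otimes l}\otimes\scH_1{}^{\otimes m}$ -- which rests on treating iterated tensor powers of $\scH_1$ strictly, consistently with the string-concatenation monadic multiplication of $E\varOmega$; this convention is already fixed by \ceqref{hqfundt1} and the notational remark after prop. \cref{defi:scrhqdef}, so no genuine obstacle remains.
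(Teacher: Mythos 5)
Your proposal is correct and follows essentially the same route as the paper's own proof: both reduce \ceqref{hmqsmile0} and \ceqref{hmqsmile4} to \ceqref{gamma1}, \ceqref{hqfundt1} and the strict monoidal functoriality of $\scH_E$, and both verify \ceqref{hmqsmile3} on the basis vectors $\ket{x\smallsmile y}=\ket{x}\otimes\ket{y}$ using \ceqref{hqfundt2} together with the compatibility relation \ceqref{mqsmile3} from prop. \cref{prop:mqjoint}. The chain of equalities you write for \ceqref{hmqsmile3} is, step for step, the one in the paper.
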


\begin{proof}
By virtue of \ceqref{gamma1} and \ceqref{hqfundt1}  and the monoidal functoriality of $\scH_E$, we have 
{\allowdisplaybreaks
\begin{align}
\label{}
\scH_E[l]\smallsmile\scH_E[m]&=\scH_E([l]\smallsmile[m])
\\
\nonumber
&=\scH_E([l+m])
\\
\nonumber
&=\scH_1{}^{\otimes\hfpt l+m}
\\
\nonumber
&=\scH_1{}^{\otimes\hfpt l}\otimes\scH_1{}^{\otimes\hfpt m}=\scH_E[l]\otimes\scH_E[m],
\end{align}
}
\!\!showing \ceqref{hmqsmile0}. To prove \ceqref{hmqsmile3}. 
we note that 
for $u\in E[h]$, $v\in E[k]$ we have 
\begin{equation}
\label{}
\ket{u\smallsmile v}=\ket{u}\otimes\ket{v},
\end{equation}
as follows from the remarks just below def. \cref{defi:scrhqdef}.
Now, if $z\in E([l]\smallsmile[m])=E[l+m]$, then $z=x\smallsmile y$, where
$x\in E[l]$, $y\in E[m]$ with $x_r=z_r$ for $r\in[l]$
and $y_r=z_{r+l}$ for $r\in[m]+l$ by \ceqref{mqsmile0}. Then, by \ceqref{mqsmile3} and \ceqref{hqfundt2},  
{\allowdisplaybreaks
\begin{align}
\label{}
\scH_Ef\smallsmile\scH_Eg\ket{z}&=\scH_E(f\smallsmile g)\ket{x\smallsmile y}
\\
\nonumber
&=\ket{E(f\smallsmile g)(x\smallsmile y)}
\\
\nonumber
&=\ket{Ef\smallsmile Eg(x\smallsmile y)}
\\
\nonumber
&=\ket{Ef(x)\smallsmile Eg(y)}
\\
\nonumber
&=\ket{Ef(x)}\otimes\ket{Eg(y)}
\\
\nonumber
&=\scH_Ef\ket{x}\otimes\scH_Eg\ket{y}
\\
\nonumber
&=\scH_Ef\otimes\scH_Eg\hfpt\ket{x}\otimes\ket{y}
\\
\nonumber
&=\scH_Ef\otimes\scH_Eg\hfpt\ket{x\smallsmile y}=\scH_Ef\otimes\scH_Eg\hfpt\ket{z}. 
\end{align}
}
\!\!As $z\in E([l]\smallsmile[m])$ is arbitrary, relation \ceqref{hmqsmile3} holds. 
Identity \ceqref{hmqsmile4} follows immediately by noting that $1_{\scH_E}=\scH_E[0]=\bbC$. 
\end{proof}  

The category $\scH_E\varOmega$ also has a monadic multiplication: it is nothing but
the usual vector tensor multiplication.

\begin{defi} \label{def:mqsadj}
For any two $l,m\in\bbN$, the multi qudit state monadic multiplication is the function
$\smallsmile:\scH_E[l]\times \scH_E[m]\rightarrow\scH_E([l]\smallsmile[m])$ given by 
\begin{equation}
\label{hjoint}
\ket{\xi}\smallsmile\ket{\eta}=\ket{\xi}\otimes\ket{\eta}
\end{equation}
for $\ket{\xi}\in\scH_E[l]$, $\ket{\eta}\in\scH_E[m]$. The multi qudit state monadic unit is $\ket{0}\in\scH_E[0]$. 
\end{defi}

\noindent
We note that the above definition makes sense by virtue of \ceqref{hmqsmile0}. We note also that
\begin{equation}
\label{hjoint0}
\ket{x}\smallsmile\ket{y}=\ket{x\smallsmile y}
\end{equation}
for $x\in E[l]$, $y\in E[m]$. Indeed, from \ceqref{mqsmile0}, it is apparent that 
$\ket{x\smallsmile y}=\ket{x}\otimes\ket{y}$.
A multi qudit state $\varOmega$ monadic structure is so made available (cf. def. \cref{def:jointstrct}).

\begin{prop} \label{prop:hjoint}
The multi qudit state category $\scH_E\varOmega$ equipped with the monadic multiplication $\smallsmile$
and unit $\ket{0}$ is a graded $\varOmega$ monad. 
\end{prop}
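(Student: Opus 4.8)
The plan is to verify directly the three defining conditions of a graded $\varOmega$ monad listed in def. \cref{def:jointstrct}, namely the associativity relation \ceqref{joint1}, the unitality relation \ceqref{joint2} and the morphism compatibility relation \ceqref{joint3}, for the pair $(\smallsmile,\ket{0})$ on the $\varOmega$ category $\scH_E\varOmega$. The guiding observation, which makes all three verifications routine, is that by def. \cref{def:mqsadj} the monadic multiplication $\smallsmile$ on state vectors is literally the tensor multiplication $\otimes$ of $\bfsfH$, and that the necessary compatibility of this operation with the monoidal structure of $\scH_E\varOmega$ has already been recorded in prop. \cref{prop:htensor}. Thus no genuinely new computation is needed; the work consists of transcribing standard properties of the tensor product of Hilbert spaces and linear maps into the monadic language.

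Concretely, I would first treat \ceqref{joint1}: for $l,m,n\in\bbN$ and $\ket{\xi}\in\scH_E[l]$, $\ket{\eta}\in\scH_E[m]$, $\ket{\zeta}\in\scH_E[n]$, expand both sides of the desired identity using \ceqref{hjoint}, so that it becomes $\ket{\xi}\otimes(\ket{\eta}\otimes\ket{\zeta})=(\ket{\xi}\otimes\ket{\eta})\otimes\ket{\zeta}$, which holds under the canonical identification $\scH_E[l]\otimes(\scH_E[m]\otimes\scH_E[n])=(\scH_E[l]\otimes\scH_E[m])\otimes\scH_E[n]$ guaranteed by \ceqref{hmqsmile0}. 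Next, for \ceqref{joint2}, I would use \ceqref{hjoint} and \ceqref{hmqsmile4} ($1_{\scH_E}=\bbC$, i.e. $\scH_E[0]=\bbC$): with $\ket{0}$ the complex unit, $\ket{\xi}\otimes\ket{0}=\ket{0}\otimes\ket{\xi}=\ket{\xi}$ is just the unitality of $\otimes$. Finally, for \ceqref{joint3}, I would take $f\in\Hom_\varOmega([l],[p])$, $g\in\Hom_\varOmega([m],[q])$, $\ket{\xi}\in\scH_E[l]$, $\ket{\eta}\in\scH_E[m]$ and compute
\begin{equation}
\label{}
\scH_Ef\smallsmile\scH_Eg(\ket{\xi}\smallsmile\ket{\eta})
=(\scH_Ef\otimes\scH_Eg)(\ket{\xi}\otimes\ket{\eta})
=\scH_Ef(\ket{\xi})\otimes\scH_Eg(\ket{\eta})
=\scH_Ef(\ket{\xi})\smallsmile\scH_Eg(\ket{\eta}),
\end{equation}
where the first equality uses \ceqref{hjoint} together with \ceqref{hmqsmile3}, the second is the defining property of the tensor product of linear operators, and the last is again \ceqref{hjoint}.

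There is no real obstacle here: the proposition is a formal consequence of def. \cref{def:mqsadj} and prop. \cref{prop:htensor}. The only point requiring a word of care is making sure the identification $\scH_E[l]\smallsmile\scH_E[m]=\scH_E[l]\otimes\scH_E[m]$ of \ceqref{hmqsmile0} is invoked so that the products $\smallsmile$ and $\otimes$ agree not merely as bilinear pairings but as maps into the same object $\scH_E([l]\smallsmile[m])$; once that identification is in place, the associativity and unitality of $\otimes$ and the functoriality of $\otimes$ on morphisms deliver \ceqref{joint1}--\ceqref{joint3} immediately. I would therefore present the proof as three short displayed computations of the kind above, each one line, citing prop. \cref{prop:htensor} at the appropriate step, mirroring the structure of the proof of prop. \cref{prop:mqjoint}.
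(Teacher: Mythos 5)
Your proposal is correct and follows essentially the same route as the paper's own proof: properties \ceqref{joint1} and \ceqref{joint2} are read off from the associativity and unitality of the Hilbertian tensor product via \ceqref{hjoint}, and \ceqref{joint3} is obtained by the one-line computation combining \ceqref{hjoint} with \ceqref{hmqsmile3}. Your extra remark about invoking the identification \ceqref{hmqsmile0} so that $\smallsmile$ and $\otimes$ land in the same object is a sensible clarification of what the paper leaves implicit, but it does not change the argument.
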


\begin{proof}
From \ceqref{hjoint} and well--known facts of Hilbertian tensor algebra, one has 
\begin{equation}
\label{hjoint1}
\ket{\xi}\smallsmile(\ket{\eta}\smallsmile\ket{\zeta})=(\ket{\xi}\smallsmile\ket{\eta})\smallsmile\ket{\zeta}
\end{equation}
for $l,m,n\in\bbN$, $\ket{\xi}\in\scH_E[l]$, $\ket{\eta}\in\scH_E[m]$, $\ket{\zeta}\in\scH_E[n]$ and 
\begin{equation}
\label{hjoint2}
\ket{\xi}\smallsmile\ket{0}=\ket{0}\smallsmile\ket{\xi}=\ket{\xi}
\end{equation}
for $l\in\bbN$, $\ket{\xi}\in\scH_E[l]$, showing that properties \ceqref{joint1}, \ceqref{joint2}
hold true. Exploiting \ceqref{hmqsmile3}, we find further that
\begin{align}
\label{}
\scH_Ef\smallsmile\scH_Eg\hfpt\ket{\xi}\smallsmile\ket{\eta}&=\scH_Ef\otimes\scH_Eg\hfpt\ket{\xi}\otimes\ket{\eta}
\\
\nonumber
&=\scH_Ef\ket{\xi}\otimes\scH_Eg\ket{\eta}=\scH_Ef\ket{\xi}\smallsmile\scH_Eg\ket{\eta}
\end{align}  
for $l,m,p,q\in\bbN$, $f\in\Hom_\varOmega([l],[p])$, $g\in\Hom_\varOmega([m],[q])$, $u\in \scH_E[l]$, $v\in \scH_E[m]$,
so demonstrating that also property \ceqref{joint3} holds. 
\end{proof}


\subsection{\textcolor{blue}{\sffamily Some operatorial properties of $\scH_E\varOmega$}}\label{subsec:mqsrmk}

In this short final subsection, we examine further properties of the operator content
of the multi qudit state category $\scH_E\varOmega$ repeatedly cited in the hypergraph state construction
carried out in II.

From \ceqref{hqfundt2}, for $l,m\in\bbN$ the operator $\scH_Ef\in\Hom_{\bfsfH}(\scH_E[l],\scH_E[m])$
attached to a morphism $f\in\Hom_\varOmega([l],[m])$ can be written as 
\begin{equation}
\label{mqsrmk1}
\scH_Ef=\mycom{{}_\sss}{{}_{x\in E[l]}}\ket{Ef(x)}\bra{x}.
\end{equation}
This expressions is useful in many situations.

Let us denote by $\msI(\scK,\scL)$ the set of all isometric linear operators of the Hilbert spaces $\scK$, $\scL$
and by $\msU(\scK)$ the group of unitary operators of a Hilbert space $\scK$.
  
\begin{prop} \label{prop:unihef}
Let $l,m\in\bbN$ and let $f\in\Hom_\varOmega([l],[m])$. If $l<m$ and $f$ is injective, then 
$\scH_Ef\in\msI(\scH_E[l],\scH_E[m])$. If $l=m$ and $f$ is bijective, then $\scH_Ef\in\msU(\scH_E[l])$. 
\end{prop}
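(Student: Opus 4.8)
The plan is to read off everything from the explicit formula \ceqref{mqsrmk1}, namely $\scH_Ef=\sum_{x\in E[l]}\ket{Ef(x)}\bra{x}$, together with the orthonormality of the basis $\ket{x}$, $x\in E[l]$, of $\scH_E[l]$. The single nontrivial input is that injectivity of the ordinal map $f$ propagates to injectivity of the induced configuration map $Ef:E[l]\rightarrow E[m]$; once this is in hand, the rest is standard Hilbert-space algebra.

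First I would show: if $f\in\Hom_\varOmega([l],[m])$ is injective then $Ef$ is injective. Indeed, injectivity of $f$ means that for every $s\in[m]$ there is at most one $r\in[l]$ with $f(r)=s$. Hence, by \ceqref{qfundt2}, $Ef(x)_s=x_r$ when $s=f(r)$ for this unique $r$, and $Ef(x)_s=0$ (empty sum) when $s\notin f([l])$. In particular $x_r=Ef(x)_{f(r)}$ for all $r\in[l]$, so $x$ is recovered from $Ef(x)$; thus $Ef$ is injective. Consequently the vectors $\ket{Ef(x)}$, $x\in E[l]$, form an orthonormal family in $\scH_E[m]$, i.e. $\braket{Ef(x)}{Ef(y)}=\delta_{x,y}$ for $x,y\in E[l]$, since $\ket{-}$ carries distinct configurations to distinct members of an orthonormal basis.

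Next I would take the adjoint of \ceqref{mqsrmk1}, $\scH_Ef{}^+=\sum_{x\in E[l]}\ket{x}\bra{Ef(x)}$, and compute
\begin{equation}
\scH_Ef{}^+\scH_Ef=\sum_{x,y\in E[l]}\ket{x}\braket{Ef(x)}{Ef(y)}\bra{y}=\sum_{x\in E[l]}\ket{x}\bra{x}=1_{\scH_E[l]},
\end{equation}
which is exactly the assertion $\scH_Ef\in\msI(\scH_E[l],\scH_E[m])$. This settles the case $l<m$ with $f$ injective.

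For the case $l=m$ with $f$ bijective, $f$ is in particular injective, so the above gives $\scH_Ef{}^+\scH_Ef=1_{\scH_E[l]}$. Moreover $|E[l]|=|\msR|^l=|\msR|^m=|E[m]|$, so the injective map $Ef$ between finite sets of equal cardinality is a bijection; hence $\{Ef(x):x\in E[l]\}=E[m]$ and
\begin{equation}
\scH_Ef\,\scH_Ef{}^+=\sum_{x,y\in E[l]}\ket{Ef(x)}\braket{x}{y}\bra{Ef(y)}=\sum_{z\in E[m]}\ket{z}\bra{z}=1_{\scH_E[m]}=1_{\scH_E[l]}.
\end{equation}
Together with $\scH_Ef{}^+\scH_Ef=1_{\scH_E[l]}$ this yields $\scH_Ef\in\msU(\scH_E[l])$. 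I do not expect a genuine obstacle here; the only point requiring a line of care is the empty-sum bookkeeping for indices $s\notin f([l])$ in the proof that $Ef$ is injective, after which everything is routine.
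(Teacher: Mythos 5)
Your proof is correct and follows essentially the same route as the paper's: establish that $Ef$ is injective, then read off isometry (and, in the bijective case, unitarity) from the explicit formula \ceqref{mqsrmk1} and the orthonormality of the configuration basis. The only difference is that you prove injectivity of $Ef$ by directly recovering $x$ from $Ef(x)$ via $x_r=Ef(x)_{f(r)}$, whereas the paper invokes a left inverse $g$ with $g\circ f=\id_{[l]}$ and functoriality; your version is if anything slightly more robust, since such a $g$ does not exist in $\varOmega$ in the degenerate case $l=0<m$.
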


\begin{proof}
Suppose that $l<m$ and that $f$ is injective. Then, there exists a morphism $g\in\Hom_\varOmega([m],[l])$ such that
$g\hfpt\circ\hfpt f=\id_{[l]}$. By the functoriality of $E$, we have then that $Eg\hfpt\circ\hfpt Ef=\id_{E[l]}$.
Consequently, $Ef$ is injective too. On account of by \ceqref{mqsrmk1}, 
the operator $\scH_Ef$ is isometric,  since it maps the orthonormal vectors $\ket{x}$, $x\in E[l]$,
into distinct orthonormal vectors $\ket{Ef(x)}$. If $l=m$ and $f$ is bijective, then the isometric
operator $\scH_Ef$ is evidently unitary. 
\end{proof}


\vfill\eject

\vfill\eject

\appendix

\renewcommand{\sectionmark}[1]{\markright{\thesection\ ~~#1}}

\section{\textcolor{blue}{\sffamily Monoidal categories and graded monads}}\label{app:cat}

In the following appendices, we review the basic definitions and results of the theory of strict
monoidal categories and graded monads. We assume that the reader has some familiarity with
the notions of category, functor, and natural transformation, although we briefly
recall these concepts as well. Our focus is on the aspects most relevant to the material
covered in the main text. As such, the exposition is neither systematic nor complete, and
its scope is intentionally limited. Further, for brevity, 
proofs are either omitted or only outlined and the 
use of categorical diagrams, which are commonly employed in category theory to provide
intuition for its rather abstract constructions, is avoided. Standard references for the topics discussed here are
\ccite{MacLane:1978cwm,Fong:2019act}.

There is a whole wealth of examples of the notions dealt with below, which arise across various
areas of mathematics and could serve to illustrate them.  Again, for lack of space, we shall not detail any of them 
and refer the reader to the body of the paper, where many of them are instantiated, 
and refs. \!\!\ccite{MacLane:1978cwm,Fong:2019act}.


\subsection{\textcolor{blue}{\sffamily Categories, functors and natural transformations}}\label{subsec:catrev}

Category theory is a universal language of all areas of mathematics.
The notion of category is therefore central. 

\begin{defi} \label{def:catdef}
A category $\clC$ consists of the following data:

\begin{enumerate}[label=\alph*)]

\item a collection of objects $\Obj_{\hfpt\clC}$;

\item a set of morphisms $\Hom_{\hfpt\clC}(X,Y)$ for any two objects $X,Y\in\Obj_{\hfpt\clC}$;

\item a composition law $\circ_{\hfpt\clC}:\Hom_{\hfpt\clC}(X,Y)\times\Hom_{\hfpt\clC}(Y,Z)\rightarrow\Hom_{\hfpt\clC}(X,Z)$
for any three objects $X,Y,Z\in\Obj_{\hfpt\clC}$ 
associating with each pair of morphisms $f\in\Hom_{\hfpt\clC}(X,Y)$, $g\in\Hom_{\hfpt\clC}(Y,Z)$ their composition
$g\circ_{\hfpt\clC} f\in\Hom_{\hfpt\clC}(X,Z)$;

\item an identity assigning map attaching to any object $X\in\Obj_{\hfpt\clC}$
an identity morphism $\id_{\hfpt\clC X}\in\Hom_{\hfpt\clC}(X,X)$. 

\end{enumerate}

\noindent
The following properties must further be satisfied. 

\begin{enumerate}

\item Composition is associative: for every objects $X,Y,Z,W\in\Obj_{\hfpt\clC}$ and morphisms
$f\in\Hom_{\hfpt\clC}(X,Y)$, $g\in\Hom_{\hfpt\clC}(Y,Z)$, $h\in\Hom_{\hfpt\clC}(Z,W)$
\begin{equation}
\label{catdef1}
h\circ_{\hfpt\clC}(g\circ_{\hfpt\clC} f)=(h\circ_{\hfpt\clC} g)\circ_{\hfpt\clC} f.
\end{equation}

\item Composition is unital: for every object $X\in\Obj_{\hfpt\clC}$ and morphism
$f\in\Hom_{\hfpt\clC}(X,Y)$,
\begin{equation}
\label{catdef2}
\id_{\hfpt\clC Y}\circ_{\hfpt\clC} f=f\circ_{\hfpt\clC}\id_{\hfpt\clC X}=f.
\end{equation}

\end{enumerate}
\end{defi}

\noindent
If $X,Y\in\Obj_{\hfpt\clQ}$ are objects and $f\in\Hom_{\hfpt\clC}(X,Y)$ is a morphism,
$X$ and $Y$ are called respectively the source and target objects of $f$. The set of
all morphisms of $\clC$, i.e the union $\Hom_{\hfpt\clC}=\bigcup_{X,Y\in\Obj_{\hfpt\clC}}\Hom_{\hfpt\clC}(X,Y)$ of
all morphisms sets, is a relevant feature of $\clC$. 
Two morphisms $f,g\in\Hom_{\hfpt\clC}$ are composable if $f\in\Hom_{\hfpt\clC}(X,Y)$,
$g\in\Hom_{\hfpt\clC}(Y,Z)$ for some objects $X,Y,Z\in\Obj_{\hfpt\clQ}$ and their
composite $g\circ_{\hfpt\clC} f$ is defined.
The subscript $\clC$ attached to the composition and identity symbols $\circ_{\hfpt\clC}$ and
$\id_{\hfpt\clC}$ can be omitted when no confusion arises; we shall show them when required by 
conceptual clarity. 


Functors are maps of categories suitably
compatible with morphism composition and identities. 

\begin{defi} \label{def:functdef}
Let $\clC$ and $\clD$ be categories. A functor $F:\clC\rightarrow\clD$ consists of the following data:

\begin{enumerate}[label=\alph*)]

\item a mapping associating with each object $X\in\Obj_{\hfpt\clC}$ an object $FX\in\Obj_{\hfpt\clD}$;

\item for any two objects $X,Y\in\Obj_{\hfpt\clC}$
a mapping associating with each morphism $f\in\Hom_{\hfpt\clC}(X,Y)$ a morphism $Ff\in\Hom_{\hfpt\clD}(FX,FY)$.

\end{enumerate}

\noindent 
Further the following two conditions must hold:

\begin{enumerate} 

\item for any pair of composable morphisms $f,g\in\Hom_{\hfpt\clC}$, 
\begin{equation}
\label{functdef1}
F(g\circ_{\hfpt\clC} f)=Fg\circ_{\hfpt\clD\mhfpt} Ff;
\end{equation}

\item for every object for $X\in\Obj_{\hfpt\clC}$,
\begin{equation}
\label{functdef2}
F\id_{\hfpt\clC X}=\id_{\hfpt\clD F(X)}.
\end{equation}

\end{enumerate}
\end{defi}

\noindent
If $\clD=\clC$, $F$ is called an endofunctor. If $F$ is bijective on object and morphisms, $F$ is called an isofunctor.
Categories related by isofunctors are isomorphic. From a purely categorical perspective, they are indistinguishable.

Small categories are categories whose object collection is a set. 
Small categories and functors are the objects and morphisms of the category $\ul{\rm Cat}$
of small categories. The composition and identity symbols of $\ul{\rm Cat}$ are denoted simply 
by $\circ$ and $\id$ with no subscripts. The composite $G\circ F:\clC\rightarrow\clE$ of two functors
$F:\clC\rightarrow\clD$, $G:\clD\rightarrow\clE$ is defined by setting $G\circ FX=GFX$
and $G\circ Ff=GFf$ on objects and morphisms respectively. The identity functor
$\id_{\hfpt\clC}:\clC\rightarrow\clC$ acts as $\id_{\hfpt\clC}X=X$ and $\id_{\hfpt\clC}f=f$.


Natural transformations describe relationships of functors between the same pair of categories
compatible the categorical structure of both. 

\vspace{.75mm}

\begin{defi}  \label{def:natdef}
Let $\clC$ and $\clD$ be categories and let $F,G:\clC\rightarrow\clD$ be functors.
A natural transformation $\alpha:F\Rightarrow G$ is a collection of morphisms
$\alpha_X\in\Hom_{\hfpt\clD}(FX,GX)$, $X\in\Obj_{\hfpt\clD}$ such that for all objects $X,Y\in\Obj_{\hfpt\clD}$ 
and morphisms $f\in\Hom_{\hfpt\clC}(X,Y)$
\begin{equation}
\label{natdef}
\alpha_Y\circ_{\hfpt\clD\mhfpt} Ff=Gf\circ_{\hfpt\clD\mhfpt}\alpha_X.
\end{equation}
\end{defi}

\vspace{1mm}\noindent
The $\alpha_X$ are the components of the transformation. 
If the $\alpha_X$ are isomorphisms for all $X$, $\alpha$ is a natural isomorphism.

Given two categories $\clC$, $\clD$, the functors $F:\clC\rightarrow\clD$ and the natural transformations 
$\alpha:F\Rightarrow G$ are respectively the objects and morphisms of the functor category
$\Funct(\clC,\clD)$. The composition and identity symbol of $\Funct(\clC,\clD)$ are also denoted 
by $\circ$ and $\id$. The composite $\beta\circ\alpha:F\Rightarrow H$ of two natural transformations 
$\alpha:F\Rightarrow G$, $\beta:G\Rightarrow H$ is defined to be the one with components 
$(\beta\circ\alpha)_X=\beta{}_X\circ_{\hfpt\clD}\alpha_X$. The identity natural transformation 
$\id_F:F\Rightarrow F$ has components $\id_{FX}=\id_{\hfpt\clD FX}$.  
When $\clD=\clC$, the functor category $\Funct(\clC,\clD)$ becomes the endofunctor
category $\End(\clC)$.

There exist a wide range of constructions yielding a new category out of one or more given
ones. We present here only Cartesian multiplication of categories for its relevance in
the theory of monoidal categories reviewed later. 
The Cartesian product of two categories can be defined in straightforward fashion using systematically
Cartesian multiplication on objects and morphisms.

\begin{defi} \label{def:cartcat}
The Cartesian product $\clC_1\times\clC_2$ of two categories $\clC_1$, $\clC_2$
is the category specified by the following data. 

\begin{enumerate}[label=\alph*)] 
  
\item The object collection of $\clC_1\times\clC_2$ is $\Obj_{\hfpt\clC_1\times\clC_2}=\Obj_{\hfpt\clC_1}\times\Obj_{\hfpt\clC_2}$.
  
\item The morphisms set in $\clC_1\times\clC_2$ of any two objects $(X_1,X_2),(Y_1,Y_2)\in\Obj_{\hfpt\clC_1\times\clC_2}$
is $\Hom_{\hfpt\clC_1\times\clC_2}((X_1,X_2),(Y_1,Y_2))=\Hom_{\hfpt\clC_1}(X_1,Y_1)\times\Hom_{\hfpt\clC_2}(X_2,Y_2)$.

  
\item 
For any composable morphism pair $(f_1,f_2),(g_1,g_2)\in\Hom_{\hfpt\clC_1\times\clC_2}$
  \begin{equation}
\label{cartcat1}
(g_1,g_2)\circ_{\clC_1\times\clC_2}(f_1,f_2)=(g_1\circ_{\clC_1}f_1,g_2\circ_{\clC_2}f_2).
\end{equation}

\item For every object $(X_1,X_2)\in\Obj_{\hfpt\clC_1\times\clC_2}$
\begin{equation}
\label{cartcat2}
\id_{\hfpt\clC_1\times\clC_2(X_1,X_2)}=(\id_{\hfpt\clC_1X_1},\id_{\hfpt\clC_2X_2}).
\end{equation}

\end{enumerate}  
  
\end{defi}

\noindent 
It is straightforward to check that the conditions listed in def. \cref{def:catdef}
are satisfied by $\clC_1\times\clC_2$. 

A functor 
whose source category is a Cartesian product category is called
a bifunctor. A natural transformation 
of bifunctors is called binatural.


\subsection{\textcolor{blue}{\sffamily Strict monoidal categories}}\label{subsec:monocat}

The notions of strict monoidal category and functor is central in the theory
worked out in this paper. 

\begin{defi} \label{def:moncat}
A strict monoidal category is a category $\clC$ equipped with

\begin{enumerate}[label=\alph*)]
    
\item a bifunctor $\otimes_{\hfpt\clC} :\clC\times\clC\rightarrow\clC$ called monoidal product and

\item a distinguished object $1_\clC \in\Obj_{\hfpt\clC}$ called monoidal unit

\end{enumerate}

\noindent
such that the following properties hold.

\begin{enumerate}

\item Monoidal bifunctorial correspondence on objects and morphisms:
for every objects $X,Y,U,V\in\Obj_{\hfpt\clC}$ and morphisms $f\in\Hom_{\hfpt\clC}(X,U)$,
$g\in\Hom_{\hfpt\clC}(Y,V)$, one has 
$f\otimes_{\hfpt\clC}g\in\Hom_{\hfpt\clC}(X\otimes_{\hfpt\clC}Y,U\otimes_{\hfpt\clC}V)$.

\item Monoidal bifunctoriality of composition: for every  morphisms $f,g,h,k\in\Hom_{\hfpt\clC}$ such that
the pairs $f$, $g$ and $h$, $k$ are composable, the pair $g\otimes_{\hfpt\clC}k$, $f\otimes_{\hfpt\clC}h$ is and 
\begin{equation}
\label{moncat0}
(g\circ_{\hfpt\clC}f)\otimes_{\hfpt\clC}(k\circ_{\hfpt\clC}h)
=(g\otimes_{\hfpt\clC}k)\circ_{\hfpt\clC}(f\otimes_{\hfpt\clC}h).
\end{equation}

\item Monoidal bifunctoriality of identities: for every objects $X,Y\in\Obj_{\hfpt\clC}$,
\begin{equation}
\label{moncat-1}
\id_{\hfpt\clC X}\otimes_{\hfpt\clC}\id_{\hfpt\clC Y}=\id_{\hfpt\clC X\otimes_{\hfpt\clC} Y}.
\end{equation}

\item Strict monoidal associativity on objects: for all objects $X,Y,Z\in\Obj_{\hfpt\clC}$ 
\begin{equation}
\label{moncat1}
(X\otimes_{\hfpt\clC} Y)\otimes_{\hfpt\clC} Z=X\otimes_{\hfpt\clC} (Y\otimes_{\hfpt\clC} Z).
\end{equation}
    
\item Strict monoidal associativity on morphisms: for all morphisms $f,g,h\in\Hom_{\hfpt\clC}$
\begin{equation}
\label{moncat2}
(f\otimes_{\hfpt\clC} g)\otimes_{\hfpt\clC} h=f\otimes_{\hfpt\clC} (g\otimes_{\hfpt\clC} h).
\end{equation}

\item Strict monoidal unitality on objects: for all objects $X\in\Obj_{\hfpt\clC}$ 
\begin{equation}
\label{moncat3}
X\otimes_{\hfpt\clC} 1_\clC=1_\clC\otimes_{\hfpt\clC}X=X.
\end{equation}

 \item Strict monoidal unitality on morphisms: for all morphisms $f\in\Hom_{\hfpt\clC}$
\begin{equation}
\label{moncat4}
f\otimes_{\hfpt\clC} \id_{1_\clC}=\id_{1_\clC}\otimes_{\hfpt\clC}f=f.
\end{equation}

\end{enumerate}
\end{defi}

\noindent
It can be checked that these properties are compatible with the conditions listed in def. \cref{def:catdef}.

Strict monoidal functors are functors of strict monoidal categories respecting their monoidal structures.
This property is formalized in the following definition.

\begin{defi} \label{def:monfun}
A strict monoidal functor $F:\clC\rightarrow\clD$ of the strict monoidal categories $\clC$, $\clD$
is a functor obeying the following conditions.

\begin{enumerate}

\item Compatibility with monoidal multiplication of objects: for every $X,T\in\Obj_{\hfpt\clC}$
\begin{equation}
\label{monfun1}
F(X\otimes_{\hfpt\clC} Y)=FX\otimes_{\hfpt\clD\mhfpt}FY.
\end{equation}

\item Compatibility with monoidal multiplication of morphisms: for every $f,g\in\Hom_{\hfpt\clC}$
\begin{equation}
\label{monfun2}
F(f\otimes_{\hfpt\clC} g)=Ff\otimes_{\hfpt\clD\mhfpt}Fg.
\end{equation}

\item Preservation of monoidal units: \hphantom{xxxxxxxxxxx}
\begin{equation}
\label{monfun3}
F1_\clC =1_{\clD}. 
\end{equation}

\end{enumerate}

\end{defi}

\noindent
It is straightforward to verify that these properties are compatible with the conditions listed in def.
\cref{def:functdef}.

Small strict monoidal categories and the strict monoidal functors between them are the objects and morphisms
of the category $\ul{\rm SMCat}$ of small strict monoidal categories.
The composition of functors and the identity functors are defined as in the
small category category $\ul{\rm Cat}$, rendering $\ul{\rm SMCat}$ a subcategory of $\ul{\rm Cat}$. 

There exists also a notion of strict monoidal natural transformation.

\begin{defi}
A natural transformation $\alpha:F\Rightarrow G$ of strict monoidal functors $F,G:\clC\rightarrow\clC$
is strict monoidal when
\begin{equation}
\label{monnat1}
\alpha_{X\otimes_{\hfpt\clC}Y}=\alpha_X\otimes_{\hfpt\clD}\alpha_Y
\end{equation}
for all objects $X,Y\in\Obj_{\hfpt\clC}$. 
\end{defi}

For a given pair $\clC$, $\clD$ of strict monoidal categories, the strict monoidal functors
$F:\clC\rightarrow\clD$ and strict monoidal natural transformations 
$\alpha:F\Rightarrow G$ are respectively the objects and morphisms of the strict monoidal
functor category $\SMFunct(\clC,\clD)$.
The composition of natural transformations and the identity natural transformations are defined as in the
category $\Funct(\clC,\clD)$ and $\SMFunct(\clC,\clD)$ is so a subcategory of $\Funct(\clC,\clD)$.
For $\clC=\clD$, one has the the strict monoidal endofunctor category $\SMEnd(\clC)$.


\subsection{\textcolor{blue}{\sffamily Symmetric strict monoidal categories}}\label{subsec:sumdef}

Strict monoidal categories often come equipped with a symmetric structure which further enriches them.
In such a case, we have symmetric strict monoidal categories.

\begin{defi} \label{def:symmon}
A symmetric strict monoidal category is a strict monoidal category $\clC$ equipped with a collection of morphisms
$s_{\hfpt\clC X,Y}\in\Hom_{\hfpt\clC}(X\otimes_{\hfpt\clC} Y,Y\otimes_{\hfpt\clC} X)$ for varying
pairs of objects $X,Y\in\Obj_{\hfpt\clC}$, called swap maps, obeying the following conditions.

\begin{enumerate}

\item Naturality over monoidal multiplication: for all objects $X,Y,U,V\in\Obj_{\hfpt\clC}$ and morphisms
$f\in\Hom_{\hfpt\clC}(X,U)$,  $g\in\Hom_{\hfpt\clC}(Y,V)$, 
\begin{equation}
\label{symmon1}
s_{\hfpt\clC U,V}\circ_{\hfpt\clC} (f\otimes_{\hfpt\clC} g)=(g\otimes_{\hfpt\clC} f)\circ_{\hfpt\clC} s_{\hfpt\clC X,Y}.
\end{equation}
  
\item Compatibility with monoidal associativity: for all objects $X,Y,Z\in\Obj_{\hfpt\clC}$,
\begin{equation}
\label{symmon2}
(\id_{\hfpt\clC Y}\otimes_{\hfpt\clC}\,s_{\hfpt\clC X,Z})\circ_{\hfpt\clC}(s_{\hfpt\clC X,Y}\otimes_{\hfpt\clC}\id_{\hfpt\clC Z})
=s_{\hfpt\clC X,Y\otimes_{\hfpt\clC} Z}.
\end{equation}

\item Normalization: for all objects $X\in\Obj_{\hfpt\clC}$,
\begin{equation}
\label{symmon3}
s_{\hfpt\clC X,1_\clC}=s_{\hfpt\clC 1_\clC,X}=\id_{\hfpt\clC X}\!.
\end{equation}

\item Validity of the inverse law stating that for every objects $X,Y\in\Obj_{\hfpt\clC}$,
\begin{equation}
\label{symmon4}
s_{\hfpt\clC Y,X}\circ_{\hfpt\clC} s_{\hfpt\clC X,Y}=\id_{\hfpt\clC X\otimes_{\hfpt\clC} Y}\!.
\end{equation}

\end{enumerate}

\end{defi}

\noindent
It can be verified that these requirements are compatible with those listed in def.
\cref{def:moncat}. Symmetric strict monoidal categories are also known as permutative categories
in the mathematical literature. 

Symmetric strict monoidal functors are strict monoidal functors of symmetric strict monoidal categories
compatible with the symmetric structures of these latter.

\begin{defi} \label{def:symmonfnct}
A symmetric strict monoidal functor is a strict monoidal functor $F:\clC\rightarrow\clD$
of strict monoidal categories such that for all objects $X,Y\in\Obj_{\hfpt\clC}$,
\begin{equation}
\label{symmon5}
s_{\hfpt\clD FX,FY}=Fs_{\hfpt\clC X,Y}. 
\end{equation}
  
\end{defi}


\subsection{\textcolor{blue}{\sffamily Pro and Prop categories}}\label{subsec:procat}

Product or Pro (also written as PRO) categories are the kind of categories most relevant
in the present paper \ccite{Maclane:1965cta,Boardman:1968hes,May:1972ils,Markl:2002otp}.

\begin{defi} \label{def:procat}The 
A Pro is a strict monoidal category $\clP$ for which there is a distinguished object $G\in\Obj_\clP$ 
such that every object $X\in\Obj_\clP$ is of the form
\begin{equation}
\label{procat1}
G^{\otimes n}=G\otimes_{\hfpt\clP}\cdots\otimes_{\hfpt\clP}G\qquad\text{($n$ factors)}
\end{equation}
for some $n\in\bbN$. 
\end{defi}

\noindent The special object $G$ with the above property is then called the generating object of the Pro $\clP$.
A product and permutation or Prop (also written as PROP) category is a Pro $\clP$ which is symmetric as
a strict monoidal category. 

A related notion is that of model (also known as representation) of a Pro. Let $\clC$ be a strict monoidal category. 

\begin{defi} \label{def:procatmod}
A model in $\clC$ of a Pro $\clP$ is a strict monoidal functor $M:\clP\rightarrow\clC$.
\end{defi}

\noindent The object $MG\in\Obj_{\hfpt\clC}$ associated with a generator $G$ is called the underlying object of
the model $M$. For a Prop $\clP$, the functor $F$ is required moreover to be symmetric strict monoidal.


\subsection{\textcolor{blue}{\sffamily Free strict monoidal categories}}\label{subsec:frecat}

In this appendix, we present first the notion of free category, though we shall not need it directly, 
and then use it to introduce the notion of free strict monoidal category, 
which is key in the construction of certain graded monads described in full detail
in app. \cref{subsec:gradmon}.

Quivers and their morphisms \ccite{Gabriel:1972uds,Barr:2012ccs}
are the elementary constituents of free categories and functors.
Our treatment therefore necessarily begins with them.

A quiver is an abstract model of a collection of objects and morphisms
with specified source and target objects but
with no composition operation and identity assigning map defined on them. 

\begin{defi} \label{def:quiv}
A quiver $\clQ$ consists of the following data:

\begin{enumerate}[label=\alph*)]

\item a collection of vertices $\Ver_{\hfpt\clQ}$;

\item a set of edges $\Edg_{\hfpt\clQ}(X,Y)$ for any two vertices $X,Y\in\Ver_{\hfpt\clQ}$.

\end{enumerate}
  
\end{defi}

\noindent
The union $\Edg_{\hfpt\clQ}=\bigcup_{X,Y\in\Ver_{\hfpt\clQ}}\Edg_{\hfpt\clQ}(X,Y)$ of all edge sets of $\clQ$
is an item often useful to express concisely statements about $\clQ$. 

A morphism of quivers is analogously an abstract model for a map of two
collections of objects and morphisms 
with no composition operation and identity assigning map defined on them. 

\begin{defi} \label{def:quivmor}
Let $\clQ$, $\clR$ be quivers. A quiver morphism $L:\clQ\rightarrow\clR$ consists of the following data:

\begin{enumerate}[label=\alph*)]

\item a mapping associating with each vertex $X\in\Ver_{\hfpt\clQ}$ a vertex $LX\in\Ver_{\hfpt\clR}$;

\item for any vertices $X,Y\in\Ver_{\hfpt\clQ}$, 
a mapping assigning to each edge $e\in\Edg_{\hfpt\clQ}(X,Y)$ an edge $Le\in\Edg_{\hfpt\clR}(LX,LY)$.

\end{enumerate}

\end{defi}

Small quivers are quivers whose vertex collection is a set. 
Small quivers and quiver morphisms are the constituents of the small quiver category $\ul{\rm Quiv}$.
The composition and identity symbols of $\ul{\rm Quiv}$ are denoted simply 
by $\circ$ and $\id$ with no subscripts. The composite $M\circ L:\clQ\rightarrow\clS$ of two quiver morphisms
$L:\clQ\rightarrow\clR$, $M:\clR\rightarrow\clS$ is defined by setting $M\circ LX=MLX$
and $M\circ Le=MLe$ on vertices and edges respectively. The identity quiver morphism
$\id_{\hfpt\clQ}:\clQ\rightarrow\clQ$ acts as $\id_{\hfpt\clQ}X=X$ and $\id_{\hfpt\clQ}e=e$. 

Small categories and their functors are characterized by underlying quivers and quiver morphisms
specified by a special functor. 

\begin{defi} \label{def:forgfunct}
The forgetful quiver functor is the functor $\msU:\ul{\rm Cat}\rightarrow\ul{\rm Quiv}$ of the small category
category $\ul{\rm Cat}$ to the small quiver category $\ul{\rm Quiv}$ defined thusly.
$\msU$ associates with a small category $\clC$ a small quiver $\msU\clC$ so specified. 

\begin{enumerate}[label=\alph*)]

\item $\Ver_{\hfpt\msU\clC}=\Obj_{\hfpt\clC}$ is the vertex set of $\msU\clC$.

\item $\Edg_{\hfpt\msU\clC}(X,Y)=\Hom_{\hfpt\clC}(X,Y)$ is the edge set  of $\msU\clC$
  of two vertices $X,Y\in\Ver_{\hfpt\msU\clC}$.

\end{enumerate}

\noindent 
Furthermore, $\msU$ associates with a functor $F:\clC\rightarrow\clD$ of small categories
the morphism $\msU F:\msU\clC\rightarrow\msU\clD$ of small quivers specified as follows.

\begin{enumerate}[label=\alph*),start=3]

\item One has $\msU FX=FX\in\Ver_{\hfpt\msU\clD}$ for each vertex $X\in\Ver_{\hfpt\msU\clC}$.

\item One has $\msU Fe=Fe\in\Edg_{\hfpt\msU\clD}$ for every 
edge $e\in\Edg_{\hfpt\msU\clC}$. 

\end{enumerate}
\end{defi}

\noindent
In this way, the action of the functor $\msU$ completely erases 
all information about which morphisms are composite of others and which are identities.

Conversely, it is possible to associate a small free category with any small quiver
and a functor of small free categories with any morphism of small quivers
through an apposite functor. To define this latter, it is necessary to 
introduce a few preliminary notions.

Let $\clQ$ be a quiver. 
The free categorical data of $\clQ$ consists of a set of elementary free symbols and a set of
regular free expressions built with those symbols according to certain rules. 
The elementary free symbol set comprises for each vertex $X\in\Ver_{\hfpt\clQ}$
an object symbol also denoted by $X$, for each edge $e\in\Edg_{\hfpt\clQ}$
a morphism symbol also denoted by $e$, a composition symbol $\circ_{\hfpt\clQ}$
and an identity assigning symbol $\id_{\hfpt\clQ}$.
The regular free expressions are the bracketed expressions built with the symbols
$X$, $e$, $\circ_{\hfpt\clQ}$ and $\id_{\hfpt\clQ}$ following
the same rules which would apply if they 
were objects, morphisms, composition and identity assigning operations of a category
(cf. def. \cref{def:catdef}).
Expressions made with the $X$ and $e$ which can be turned into one another by using
the associativity and unitality relations \eqref{catdef1}, \eqref{catdef2} with $\clC$ replaced by $\clQ$ 
are identified. (This can be made completely precise by employing suitable equivalence relations on specified subsets of
expressions.)

The free functorial datum of a morphism $L:\clQ\rightarrow\clR$ of two quivers $\clQ$, $\clR$ is a map
also denoted by $L$ of the free categorical data of $\clQ$ to those of $\clR$ acting
in the following manner. $L$ maps the free elementary object and morphism
symbols $X$ and $e$ of $\clQ$ into the corresponding free elementary symbols
$LX$  and $Le$ of $\clR$ and turns each regular free expression $f$ of $\clQ$ into
one $Lf$ of $\clR$ by acting as if it was a functor of two categories (cf. def. \cref{def:functdef}). 

We are now ready to introduce the free category construction. 

\begin{defi} \label{def:frecat}
The free category functor is the functor $\msF:\ul{\rm Quiv}\rightarrow\ul{\rm Cat}$ of the small quiver category
$\ul{\rm Quiv}$ to the small category category $\ul{\rm Cat}$ defined as follows. 
$\msF$ associates with a small quiver $\clQ$ a small category $\msF\clQ$ specified as follows. 

\begin{enumerate} [label=\alph*)]

\item \label{it:frecat1} $\Obj_{\hfpt\msF\clQ}=\Ver_{\hfpt\clQ}$ is the object collection of $\msF\clQ$.

\item \label{it:frecat5} For any two objects $X,Y\in\Obj_{\hfpt\msF\clQ}$, the morphism set
$\Hom_{\hfpt\msF\clQ}(X,Y)$ consists of the regular free expressions of $\clQ$
with source and target object symbols $X$ and $Y$. 

\item \label{it:frecat6} For every composable morphism pair $f,g\in\Hom_{\hfpt\msF\clQ}$,
the composite morphism of $f$ and $g$ is given by $g\circ_{\hfpt\msF\clQ}f=g\circ_{\hfpt\clQ}f\in\Hom_{\hfpt\msF\clQ}$, 
where the right hand side is a regular free expression of $\clQ$.  

\item \label{it:frecat7} The identity morphism of an object $X\in\Obj_{\hfpt\msF\clQ}$ 
is $\id_{\hfpt\msF\clQ X}=\id_{\hfpt\clQ X}\in\Hom_{\hfpt\msF\clQ}(X,X)$,
where the right hand side is a free identity symbol of $\clQ$.  
  
\end{enumerate}

\noindent 
Furthermore, $\msF$ associates with a morphism $L:\clQ\rightarrow\clR$ of small quivers
the functor $\msF L:\msF\clQ\rightarrow\msF\clR$ of small categories defined in the following manner.

\begin{enumerate} [label=\alph*),start=5]

\item \label{it:frecat8} One has $\msF LX=LX\in\Obj_{\hfpt\msF\clR}$ for each object $X\in\Obj_{\hfpt\msF\clQ}$.

\item \label{it:frecat9} One has $\msF Lf=Lf\in\Hom_{\hfpt\msF\clR}$ for all morphisms $f\in\Hom_{\hfpt\msF\clQ}$,
where the right hand side $Lf$ denotes the free regular expression
yielded by the action of $L$ on the free regular expression $f$. 
  
\end{enumerate}

\end{defi}

\noindent
So, for a quiver $\clQ$, the distinct morphisms of $\msF\clQ$ are just the unbracketed composable strings
$e_{m-1}\circ_{\hfpt\clQ}\ldots\circ_{\hfpt\clQ}e_0$ of edges $e_i\in\Edg_{\hfpt\clQ}$ and the identities
$\id_{\hfpt\clQ X}$. 
The composition operation $\circ_{\hfpt\msF\clQ}$ is free concatenation of strings and the identity assigning
map $\id_{\hfpt\msF\clQ}$ as compositionally neutral. 
The associativity and unitality relations \eqref{catdef1}, \eqref{catdef2} are therefore satisfied by construction.

The free category construction enjoys a universality property stated formally in the following theorem.

\begin{theor} \label{theor:freuniv}
Let $\clQ$ be a small quiver. Then, there exists a small quiver morphism $I:\clQ\rightarrow\msU\msF\clQ$ such that
for any small category $\clC$ and any small quiver morphism  $L:\clQ\rightarrow\msU\clC$, there is a unique functor
$F:\msF\clQ\rightarrow\clC$ obeying $\msU F\circ I=L$. 
\end{theor}

Presentations constitute a useful way of displaying a category as the result of the implementation in a free category
of a set of congruences compatible with morphism composition and identity assignment.
To define this notion, we recall some results from set theory. A binary relation $R$ on a set $A$
is a subset $R\subseteq A\times A$.
The relation $R$ is an equivalence relation if it reflexive, symmetric
and transitive. In that case, the quotient set $A/R$ is defined as
the set of the maximal subsets of $A$ of elements related by $R$.
For every binary relation $R$ on a set $A$, there is a smallest equivalence relation $R_e$ on $A$
such that $R\subseteq R_e$. Below, we shall always tacitly identify any such $R$ with $R_e$
and treat it as an equivalence relation. When the set $A$ is equipped with
an algebraic structure with certain operations, the binary relations $R$ on $A$ considered are as a rule taken
to be congruences, that is to have the property that the action of the operations on elements of $A$ equivalent under
$R$ give as result elements of $A$ equivalent under $R$.

\begin{defi} \label{def:pres}
Let $\clQ$ be a small quiver and let $R$ be a mapping assigning to each object pair $X,Y\in\Obj_{\hfpt\msF\clQ}$
of the free category $\msF\clQ$ a congruence $R_{X,Y}$ on the morphism set $\Hom_{\hfpt\msF\clQ}(X,Y)$.
The category generated by $\clQ$ with relations $R$ is the quotient category $\msF\clQ/R$, that is the
category with object collection $\Obj_{\hfpt\msF\clQ/R}=\Obj_{\hfpt\msF\clQ}$ and morphism sets
$\Hom_{\hfpt\msF\clQ/R}(X,Y)=\Hom_{\hfpt\msF\clQ}(X,Y)/R_{X,Y}$.
A presentation of a category $\clC$ by means of $\clQ$ and $R$ is an isofunctor $F:\clC\rightarrow\msF\clQ/R$. 
\end{defi}

We introduce next the key notion of free strict monoidal category as a refinement of that of free category.

\begin{defi} \label{def:monquiv}
A monoidal quiver is a quiver $\clQ$ whose vertex collection $\Ver_{\hfpt\clQ}$ is 
the free monoid generated by a subcollection $\Ver^0{}_{\hfpt\clQ}\subset\Ver_{\hfpt\clQ}$. 
\end{defi}

\noindent
We denote by $\otimes_{\hfpt\clQ}$, $1_{\hfpt\clQ}$ the monoid multiplication and unit of $\Ver_{\hfpt\clQ}$. 

\begin{defi} \label{def:monquivmor}
A morphism $L:\clQ\rightarrow\clR$ of monoidal quivers is a morphism of the underlying quivers
with the property that $L:\Ver_{\hfpt\clQ}\rightarrow\Ver_{\hfpt\clR}$ is a monoid morphism.
\end{defi}

Small monoidal quivers and monoidal quiver morphisms are the constituents of the 
small monoidal quiver category $\ul{\rm MQuiv}$. 
The composition of morphisms and the identity morphisms are defined as in the
small quiver category $\ul{\rm Quiv}$, making $\ul{\rm MQuiv}$ a subcategory of $\ul{\rm Quiv}$. 



The object set of a strict monoidal category is a monoid, but this monoid is not necessarily free.
For this reason, there is no monoidal counterpart of the quiver forgetful functor $\msU$ defined in 
\cref{def:forgfunct}. However, $\msU$ one can still be applied to strict monoidal categories and functors,
since these latter are special cases of categories and functors. 

Conversely, analogously to the ordinary category case, it is possible to associate a small free strict monoidal
category with any small monoidal quiver and a strict monoidal functor of small free strict monoidal
categories with any morphism of small monoidal quivers through an apposite functor by an appropriate
monoidal generalization of the free category construction detailed above as described next. 

The free strict monoidal categorical data of a monoidal quiver are defined analogously to the 
free categorical data of a quiver. 
Let $\clQ$ be a monoidal quiver. 
The free strict monoidal categorical data of $\clQ$
consist of a set of elementary free symbols and a set of
regular free expressions built with those symbols according to certain rules. 
The elementary free symbols are essentially those of the non monoidal case, viz 
the object and morphism symbols $X$ and $e$
with $X\in\Ver^0{}_{\hfpt\clQ}$ and $e\in\Edg_{\hfpt\clQ}$, the composition symbol $\circ_{\hfpt\clQ}$
and the identity assigning symbol $\id_{\hfpt\clQ}$, 
and a further monoidal multiplication symbol $\otimes_{\hfpt\clQ}$.
The regular free expressions are the bracketed expressions built with all the symbols
$X$, $e$, $\circ_{\hfpt\clQ}$, $\id_{\hfpt\clQ}$, $\otimes_{\hfpt\clQ}$ according to the same
rules which would hold if they  were objects,
morphisms, composition, identity assigning and monoidal multiplication
operations of a strict monoidal category (cf. def. \cref{def:moncat}). 
Further, expressions with the $X$ and $e$ which can be turned into one another by using
the associativity and unitality relations \eqref{catdef1}, \eqref{catdef2}, 
the bifunctoriality relations \ceqref{moncat0}, \ceqref{moncat-1}
and the monoidal associativity and unitality relations \ceqref{moncat1}--\ceqref{moncat4}
are identified.

The free strict monoidal functorial datum of a morphism $L:\clQ\rightarrow\clR$
of two monoidal quivers $\clQ$, $\clR$ is a map
also denoted by $L$ of the free strict monoidal categorical data of $\clQ$ to those of $\clR$ determined 
in the following manner. $L$ maps the free elementary object and morphism
symbols $X$ and $e$ of $\clQ$ into the corresponding free elementary symbols
$LX$ and $Le$ of $\clR$ and turns each regular free expression $f$ of $\clQ$ into
one $Lf$ over $\clR$ by acting as if it was a strict monoidal functor of two strict monoidal categories
(cf. def. \cref{def:monfun}). 


\begin{defi} \label{def:fremoncat}
The free strict monoidal category functor $\msF\msM:\ul{\rm MQuiv}\rightarrow\ul{\rm MCat}$ is the functor
of the small monoidal quiver category $\ul{\rm MQuiv}$ to the small strict monoidal category
category $\ul{\rm MCat}$ so defined. $\msF$ associates with a small monoidal quiver $\clQ$ a small strict monoidal
category $\msF\msM\clQ$ specified as follows. 

\begin{enumerate} [label=\alph*)]

\item \label{it:fremoncat1} The terms of items \cref{it:frecat1}--\cref{it:frecat7} of def. \cref{def:frecat} hold
with $\msF\clQ$ replaced by $\msF\msM\clQ$.

\item \label{it:fremoncat2}
For every morphism pair $f,g\in\Hom_{\hfpt\msF\msM\clQ}$, the monoidal product morphism of $f$ and $g$ is given by
$f\otimes_{\hfpt\msF\msM\clQ}f=f\otimes_{\hfpt\clQ}g\in\Hom_{\hfpt\msF\msM\clQ}$, 
the right hand side being a regular free expression of $\clQ$.  

\end{enumerate}

\noindent
Furthermore, $\msF$ associates with a morphism $L:\clQ\rightarrow\clR$ of small monoidal quivers
the functor $\msF\msM L:\msF\msM\clQ\rightarrow\msF\msM\clR$ of small strict monoidal categories
by requiring the following. 

\begin{enumerate} [label=\alph*),start=3]
  
\item The terms of items \cref{it:frecat8}, \cref{it:frecat7} of def. \cref{def:frecat}
hold  with $\msF$, $\msF L$, $\msF\clR$ replaced by $\msF\msM$, $\msF\msM L$, $\msF\msM\clR$. 

\end{enumerate}

\end{defi}

\noindent
With a few evident modifications and additions, the comments below def. \cref{def:frecat} apply also here.
The associativity and unitality relations
\eqref{catdef1}, \eqref{catdef2}, the bifunctoriality relations \ceqref{moncat0}, \ceqref{moncat-1}
and the monoidal associativity and unitality relations \ceqref{moncat1}--\ceqref{moncat4} 
are satisfied by construction. 

As its ordinary counterpart, the free strict monoidal category construction enjoys a universality property.
To formulate it, one more notion is required. 
A quasi monoidal quiver morphism is a quiver morphism $L:\clQ\rightarrow\clR$, where
$\clQ$ is a monoidal quiver, $\clR$ is a quiver such that $\Ver_{\hfpt\clR}$
is a monoid (not necessarily free) and $L:\Ver_{\hfpt\clQ}\rightarrow\Ver_{\hfpt\clR}$
is a monoid morphism. Clearly, monoidal quiver morphisms are quasi monoidal. 

\begin{theor} \label{theor:fremonuniv}
Let $\clQ$ be a small monoidal quiver. Then, there exists a quasi monoidal small quiver morphism
$I:\clQ\rightarrow\msU\,\msF\msM\clQ$ such that
for any small strict monoidal category $\clC$ and any quasi monoidal small quiver morphism
$L:\clQ\rightarrow\msU\clC$, there is a unique strict monoidal functor
$F:\msF\msM\clQ\rightarrow\clC$ obeying $\msU F\circ I=L$. 
\end{theor}

Presentations constitute a useful way of displaying also a strict monoidal category through the imposition 
on a free strict monoidal category of a set of congruences compatible with morphism composition, see def. \cref{def:pres}. 

\begin{defi}  \label{def:presmon}
Let $\clQ$ be a small monoidal quiver and let $R$ be a mapping assigning to each object pair
$X,Y\in\Obj_{\hfpt\msF\msM\clQ}$ of the free strict monoidal category $\msF\msM\clQ$ a congruence $R_{X,Y}$
on the morphism set $\Hom_{\hfpt\msF\msM\clQ}(X,Y)$.
The strict monoidal category generated by $\clQ$ with relations $R$ is the quotient category $\msF\msM\clQ/R$.
A presentation of a strict monoidal category $\clC$ by means of $\clQ$ and $R$ is a
strict monoidal isofunctor $F:\clC\rightarrow\msF\msM\clQ/R$. 
\end{defi}

\subsection{\textcolor{blue}{\sffamily Graded monads}}\label{subsec:gradmon}

A monad is a monoid internal to the category of endofunctors
of a given category. The monad is graded when it is equipped with a grading provided
by a further monoidal category  \ccite{Mellies:2012mea,Katsumata:2014pms,Mellies:2017pcm,Fujii:2019gim}.

For a strict monoidal category $\clC$ and a category $\clF$, 
a $\clC$--graded monad over $\clF$ is a lax monoidal functor $T:\clC\rightarrow\End(\clF)$. 
In more explicit terms, this notion can be defined as follows.

\begin{defi} \label{def:gradmon}

A graded monad $\clX$ is a sextuple consisting of the following elements:

\begin{enumerate}[label=\alph*)]

\item a grading strict monoidal category $\clC_{\hfpt\clX}$, the monad's grading category;

\item a category $\clF_{\hfpt\clX}$, the monad's target category;
 
\item a monad's endofunctor $T_{\hfpt\clX X}:\clF_{\hfpt\clX}\rightarrow\clF_{\hfpt\clX}$
for each object $X\in\Obj_{\hfpt\clC_{\hfpt\clX}}$;

\item a monad's natural transformation $T_{\hfpt\clX f}:T_{\hfpt\clX X}\Rightarrow T_{\hfpt\clX Y}$
for every pair of objects $X,Y\in\Obj_{\hfpt\clC_{\hfpt\clX}}$
and morphism $f\in\Hom_{\hfpt\clC_{\hfpt\clX}}(X,Y)$;

\item a natural transformation
$\mu_{\hfpt\clX X,Y}:T_{\hfpt\clX X}\circ T_{\hfpt\clX Y} \Rightarrow T_{\hfpt\clX X\otimes_{\hfpt\clC_{\hfpt\clX}}Y}$
for every pair of objects $X,Y\in\Obj_{\hfpt\clC_{\hfpt\clX}}$, the monad's multiplication;

\item a natural transformation $\eta_{\hfpt\clX}:\id_{\hfpt\clF_{\hfpt\clX}}\Rightarrow T_{\hfpt\clX 1_{\hfpt\clC_{\hfpt\clX}}}$,
the monad's unit.
  
\end{enumerate}

\noindent
These are required to satisfy the following conditions for all $A\in\Obj_{\hfpt\clF_{\hfpt\clX}}$.

\begin{enumerate} 

\item For $X\in\Obj_{\hfpt\clC_{\hfpt\clX}}$, it holds that \hphantom{xxxxxxxxx}
\begin{equation}
\label{gradmon1}
T_{\hfpt\clX \hfpt\id_{\clC_{\hfpt\clX} X}\! A}=\id_{\hfpt\clF_{\hfpt\clX} T_{\hfpt\clX X}A}.
\end{equation}

\item For composable $f,g\in\Hom_{\hfpt\clC_{\hfpt\clX}}$, \hphantom{xxxxxxxxx}
\begin{equation}
\label{gradmon2}
T_{\hfpt\clX g\hfpt\circ_{\hfpt\clC_{\hfpt\clX}} fA}=T_{\hfpt\clX gA}\circ_{\hfpt\clF_{\hfpt\clX}} T_{\hfpt\clX fA}.
\end{equation}

\item For $X,Y,Z\in\Obj_{\hfpt\clC_{\hfpt\clX}}$, one has \hphantom{xxxxxxxxx}
\begin{equation}
\label{gradmon4}
\mu_{\hfpt\clX X,Y\hfpt\otimes_{\hfpt\clC_{\hfpt\clX}}\hfpt ZA}\circ_{\hfpt\clF_{\hfpt\clX}} T_{\hfpt\clX X}\mu_{\hfpt\clX Y,ZA}
=\mu_{\hfpt\clX X\hfpt\otimes_{\hfpt\clC_{\hfpt\clX}}\hfpt Y,ZA}\circ_{\hfpt\clF_{\hfpt\clX}} \mu_{\hfpt\clX X,YT_{\hfpt\clX Z}A}.
\end{equation}

\item For $X\in\Obj_{\hfpt\clC_{\hfpt\clX}}$, one has \hphantom{xxxxxxxxxxxx}
\begin{equation}
\label{gradmon5}
\mu_{\hfpt\clX 1_{\hfpt\clC_{\hfpt\clX}},XA}\circ_{\hfpt\clF_{\hfpt\clX}} \eta_{\hfpt\clX T_{\hfpt\clX X}A}
=\mu_{\hfpt\clX X,1_{\hfpt\clC_{\hfpt\clX}}A}\circ_{\hfpt\clF_{\hfpt\clX}} T_{\hfpt\clX X}\eta_{\hfpt\clX A}.
\end{equation}

\item For $X,Y,U,V\in\Obj_{\hfpt\clC_{\hfpt\clX}}$, $f\in\Hom_{\hfpt\clC_{\hfpt\clX}}(X,U)$, $g\in\Hom_{\hfpt\clC_{\hfpt\clX}}(Y,V)$,
\begin{equation}
\label{gradmon3}
T_{\hfpt\clX f\hfpt\otimes_{\hfpt\clC_{\hfpt\clX}}\hfpt gA}\circ_{\hfpt\clF_{\hfpt\clX}} \mu_{\hfpt\clX X,YA}
=\mu_{\hfpt\clX U,V A}\circ_{\hfpt\clF_{\hfpt\clX}} T_{\hfpt\clX U}T_{\hfpt\clX gA}\circ_{\hfpt\clF_{\hfpt\clX}} T_{\hfpt\clX fT_{\hfpt\clX Y}A}. 
\end{equation}

\end{enumerate}
\end{defi}

There exists a notion of morphism of graded monads.
Since a graded monad is ultimately a lax monoidal functor $T:\clC\rightarrow\End(\clF)$,
it would seem natural to define a morphism of two
graded monads $T:\clC\rightarrow\End(\clF)$, $U:\clC\rightarrow\End(\clF)$ 
as a monoidal natural transformation $\phi:T\Rightarrow Y$.
A notion of this kind, albeit natural, however constrains the intervening monads to share the
underlying grading and target categories $\clC$, $\clF$ and for this reason is not sufficient to cover the multiple 
instances of maps of graded monads treated in the present work.
We propose as a substitute the following more general definition. 

\begin{defi} \label{def:gradmonmor} 
A morphism $F:\clX\rightarrow\clY$
of graded monads is a triple consisting of the following elements:

\begin{enumerate}[label=\alph*)]

\item a strict monoidal functor $\theta_F:\clC_{\hfpt\clX}\rightarrow\clC_{\hfpt\clY}$
  of the grading categories;

\item a functor $\varPhi_F:\clF_{\hfpt\clX}\rightarrow\clF_{\hfpt\clY}$ of the target categories;

\item a natural transformation $\phi_{FX}:\varPhi_F\circ T_{\hfpt\clX X}\Rightarrow T_{\hfpt\clY \theta_F X}\circ \varPhi_F$
associated with each object $X\in\Obj_{\hfpt\clC_{\hfpt\clX}}$, called entwining map at $X$. 
  
\end{enumerate} 

\noindent
These are required to satisfy the following conditions for all $A\in\Obj_{\hfpt\clF_{\hfpt\clX}}$.

\begin{enumerate}

\item For $X,Y\in\Obj_{\hfpt\clC_{\hfpt\clX}}$, $f\in\Hom_{\hfpt\clC_{\hfpt\clX}}(X,Y)$,
\begin{equation}
\label{gradmon6}
T_{\hfpt\clY \theta_F f\hfpt\varPhi_F\mhfpt A}\circ_{\clF_{\hfpt\clY}}\phi_{FXA}=\phi_{FYA}\circ_{\clF_{\hfpt\clY}}\varPhi_F T_{\hfpt\clX fA}.
\end{equation}

\item For $X,Y\in\Obj_{\hfpt\clC_{\hfpt\clX}}$, one has 
\begin{equation}
  \label{gradmon7}
  \hspace{-1mm}
\phi_{FX\hfpt\otimes_{\hfpt\clC_{\hfpt\clX}}\hfpt YA}\circ_{\clF_{\hfpt\clY}}\varPhi_F\mu_{\hfpt\clX X,YA}
=\mu_{\hfpt\clY\theta_F X,\theta_F Y\hfpt\varPhi_F\mhfpt A}\circ_{\clF_{\hfpt\clY}}T_{\hfpt\clY\theta_FX}\phi_{FYA}
\circ_{\clF_{\hfpt\clY}}\phi_{FXT_{\hfpt\clX Y}A}.
\end{equation}

\item It holds that \hphantom{xxxxxxxxxxxxxxxxx}
\begin{equation}
\label{gradmon8}
\eta_{\hfpt\clY\varPhi_F A}=\phi_{F1_{\hfpt\clC_{\hfpt\clX}}A}\circ_{\clF_{\hfpt\clY}}\varPhi_F \eta_{\hfpt\clX A}.
\end{equation}

\end{enumerate}
\end{defi}

\noindent
As one might expect, graded monads and graded monad morphisms constitute a category.

\begin{theor} \label{theor:catgrmnd}
Graded monads and graded monad morphisms form a distinguished category
$\ul{\rm GM}$. The composite of the morphism pair 
$F:\clX\rightarrow\clY$, $G:\clY\rightarrow\clZ$ is the morphism 
$G\circ F:\clX\rightarrow\clZ$ specified by the two functor composites 
$\theta_{G\circ F}=\theta_G\circ\theta_F$, $\varPhi_{G\circ F}=\varPhi_G\circ\varPhi_F$ 
and the component morphisms \hphantom{xxxxxxxxxxxxxxxxx}
\begin{equation}
\label{gradmon9}
\phi_{G\circ FXA}=\phi_{G\theta_F X\varPhi_F A}\circ_{\clF_{\hfpt\clY}'}\varPhi_G\phi_{FXA}
\end{equation}
with $X\in\Obj_{\hfpt\clC_{\hfpt\clX}}$, $A\in\Obj_{\hfpt\clF_{\hfpt\clX}}$.
The identity morphism $\id_{\hfpt\clX}$ of a graded monad $\clX$ is the morphism
with $\theta_{\id_{\hfpt\clX}}=\id_{\hfpt\clC_{\hfpt\clX}}$,
$\varPhi_{\id_{\hfpt\clX}}=\id_{\hfpt\clF_{\hfpt\clX}}$ identity functors and component morphisms
\begin{equation}
\label{gradmon10}
\phi_{\id_{\hfpt\clX}XA}=\id_{\hfpt\clF_{\hfpt\clX} T_{\hfpt\clX X}A}
\end{equation}
with $X\in\Obj_{\hfpt\clC_{\hfpt\clX}}$, $A\in\Obj_{\hfpt\clF_{\hfpt\clX}}$ again.
\end{theor}

\begin{proof}
The proof is lengthy but straightforward. We limit ourselves to provide a sketch of it.
To begin with, one shows that the composite of two graded monad morphisms as defined in \ceqref{gradmon9}
is again a morphism. Next, one checks that the identity of a graded monad defined in \ceqref{gradmon10}
is also a morphism. Finally, one proves the associativity and unitality of composition 
verifying that relations \ceqref{catdef1}, \ceqref{catdef2} are satisfied.
\end{proof}

Paramonoidal graded monads and monad morphisms, which we introduce next, are 
data for the construction of the kind of
graded monads and monad morphisms relevant in the present work. They are not to be confused
with monoidal monads \ccite{Koch:1972msm}.

\begin{defi} \label{def:gradmonqvr}
A paramonoidal graded monad $\clK$ is a quadruple comprising: 

\begin{enumerate}[label=\alph*)]

\item a grading strict monoidal category $\clC_{\hfpt\clK}$;

\item a target strict monoidal category $\clF_{\hfpt\clK}$ such that $\clC_{\hfpt\clK}$ is a 
subcategory of $\clF_{\hfpt\clK}$ if the monoidal structures of both categories are disregarded;

  
\item a multiplication morphism $\mu_{\clK X,Y}\in\Hom_{\hfpt\clF_{\hfpt\clK}}(X\otimes_{\hfpt\clF_{\hfpt\clK}} Y,X\otimes_{\hfpt\clC_{\hfpt\clK}}Y)$
for every object pair $X,Y\in\Obj_{\hfpt\clC_{\hfpt\clK}}$;

\item a unit morphism $\eta_{\hfpt\clK}\in\Hom_{\hfpt\clF_{\hfpt\clK}}(1_{\hfpt\clF_{\hfpt\clK}},1_{\hfpt\clC_{\hfpt\clK}})$.
  
\end{enumerate}

\noindent
These are required to satisfy the following relations.

\begin{enumerate}

\item For $X,Y,Z\in\Obj_{\hfpt\clC_{\hfpt\clK}}$, \hphantom{xxxxxxxxx}
\begin{align}
\label{gradmon12}
&\mu_{\hfpt\clK X,Y\hfpt\otimes_{\hfpt\clC_{\hfpt\clK}}\hfpt Z}\circ_{\hfpt\clF_{\hfpt\clK}}
(\id_{\hfpt\clF_{\hfpt\clK} X}\otimes_{\hfpt\clF_{\hfpt\clK}}\mu_{\hfpt\clK Y,Z})
\\
\nonumber
&\hspace{4.5cm}=\mu_{\hfpt\clK X\hfpt\otimes_{\hfpt\clC_{\hfpt\clK}}\hfpt Y,Z}\circ_{\hfpt\clF_{\hfpt\clK}}
(\mu_{\hfpt\clK X,Y}\otimes_{\hfpt\clF_{\hfpt\clK}}\id_{\hfpt\clF_{\hfpt\clK} Z}).
\end{align}

\item For $X\in\Obj_{\hfpt\clC_{\hfpt\clK}}$, it holds that \hphantom{xxxxxxxxxxxx}
\begin{equation}
\label{gradmon13}
\mu_{\hfpt\clK 1_{\hfpt\clC_{\hfpt\clK}},X}\circ_{\hfpt\clF_{\hfpt\clK}}(\eta_{\hfpt\clK} \otimes_{\hfpt\clF_{\hfpt\clK}}\id_{\hfpt\clF_{\hfpt\clK} X})
=\mu_{\hfpt\clK X,1_{\hfpt\clC_{\hfpt\clK}}}\circ_{\hfpt\clF_{\hfpt\clK}}(\id_{\hfpt\clF_{\hfpt\clK} X}\otimes_{\hfpt\clF_{\hfpt\clK}}\eta_{\hfpt\clK })
=\id_{\hfpt\clF_{\hfpt\clK} X}.
\end{equation}

\item For $X,Y,U,V\in\Obj_{\hfpt\clC_{\hfpt\clK}}$, $f\in\Hom_{\hfpt\clC_{\hfpt\clK}}(X,U)$, $g\in\Hom_{\hfpt\clC_{\hfpt\clK}}(Y,V)$, one has 
\begin{equation}
\label{gradmon11}
(f\otimes_{\hfpt\clC_{\hfpt\clK}}g)\circ_{\hfpt\clF_{\hfpt\clK}} \mu_{\hfpt\clK X,Y}
=\mu_{\hfpt\clK U,V}\circ_{\hfpt\clF_{\hfpt\clK}}(f\otimes_{\hfpt\clF_{\hfpt\clK}} g).
\end{equation}

\end{enumerate}

\end{defi}

\begin{defi} \label{def:gradmonmap}
A morphism $P:\clK\rightarrow\clL$ 
of paramonoidal graded monads is a triple consisting of the following elements:

\begin{enumerate}[label=\alph*)]

\item a strict monoidal functor $\theta_P:\clC_{\hfpt\clK}\rightarrow\clC_{\hfpt\clL}$ of the grading categories;

\item a strict monoidal functor $\varPhi_P:\clF_{\hfpt\clK}\rightarrow\clF_{\hfpt\clL}$ of the target categories;

\item for each object $X\in\Obj_{\hfpt\clC_{\hfpt\clK}}$, an entwining morphism
  $\phi_{PX}\in\Hom_{\hfpt\clF_{\hfpt\clL}}(\varPhi_PX,\theta_PX)$.
  
\end{enumerate} 

\noindent
These are required to satisfy the following conditions. 

\begin{enumerate}

\item For $X,Y\in\Obj_{\hfpt\clC_{\hfpt\clK}}$, $f\in\Hom_{\hfpt\clC_{\hfpt\clK}}(X,Y)$,
\begin{equation}
\label{gradmon18}
\phi_{PY}\circ_{\hfpt\clF_{\hfpt\clL}}\varPhi_Pf=\theta_Pf\circ_{\hfpt\clF_{\hfpt\clL}}\phi_{PX}.
\end{equation}

\item For $X,Y\in\Obj_{\hfpt\clC_{\hfpt\clK}}$, one has 
\begin{equation}
\label{gradmon19}
\phi_{PX\hfpt\otimes_{\hfpt\clC_{\hfpt\clK}}\hfpt Y}\circ_{\clF_{\hfpt\clL}}\varPhi_P\mu_{\hfpt\clK X,Y}
=\mu_{\hfpt\clL\theta_PX,\theta_PY}\circ_{\clF_{\hfpt\clL}}(\phi_{PX}\otimes_{\hfpt\clF_{\hfpt\clL}}\phi_{PY}).
\end{equation}

\item It holds that \hphantom{xxxxxxxxxxxxxxxxxxxxx}
\begin{equation}
\label{gradmon20}
\eta_{\hfpt\clL}=\phi_{P1_{\hfpt\clC_{\hfpt\clK}}}\circ_{\clF_{\hfpt\clL}}\varPhi_P\eta_{\hfpt\clK}.
\end{equation}

\end{enumerate}

\end{defi}

\noindent
Paramonoidal graded monads and paramonoidal graded monad morphisms also make up a category. 

\begin{theor} \label{theor:catgrmndset}
Paramonoidal graded monads and paramonoidal graded monad morphisms are the constituents of a category 
$\ul{\rm pMGM}$. The composite of two morphisms $P:\clK\rightarrow\clL$, $Q:\clL\rightarrow\clM$ is the morphism
$Q\circ P:\clK\rightarrow\clM$ specified by the functor composites 
$\theta_{Q\circ P}=\theta_Q\circ\theta_P$, $\varPhi_{Q\circ P}=\varPhi_Q\circ\varPhi_p$ 
and the component morphisms 
\begin{equation}
\label{gradmon22}
\phi_{Q\circ PX}=\phi_{Q\theta_P X}\circ_{\hfpt\clF_{\hfpt\clL}}\varPhi_Q\phi_{PX}
\end{equation}
for $X\in\Obj_{\hfpt\clC_{\hfpt\clX}}$. The identity morphism $\id_{\hfpt\clK}$
of a paramonoidal graded monad $\clK$ is the morphism with $\theta_{\id_{\hfpt\clK}}=\id_{\hfpt\clC_{\hfpt\clK}}$,
$\varPhi_{\id_{\hfpt\clK}}=\id_{\hfpt\clF_{\hfpt\clK}}$ identity functors and component morphisms
\begin{equation}
\label{gradmon23}
\phi_{\id_{\hfpt\clK}X}=\id_{\hfpt\clF_{\hfpt\clK} X}
\end{equation}
for $X\in\Obj_{\hfpt\clC_{\hfpt\clK}}$.
\end{theor}

\begin{proof}
The proof proceeds along the lines of that of theor. \cref{theor:catgrmnd}.
Again, we restrict ourselves to furnish a sketch of it.
To begin with, one shows that the composite of two paramonoidal graded monad 
morphisms as defined in \ceqref{gradmon22}
is again a morphism. Next, one verifies that the identity of a paramonoidal graded monad 
defined in \ceqref{gradmon23}
is also a morphism. Finally, one proves the associativity and unitality of composition 
by checking that properties \ceqref{catdef1}, \ceqref{catdef2} hold.
\end{proof}

The construction of graded monads and graded monad morphisms from 
paramonoidal graded monads and paramonoidal graded monad morphisms mentioned earlier
proceeds consistently through a distinguished functor relating the corresponding categories
$\ul{\rm pMGM}$, $\ul{\rm GM}$.

\begin{theor} \label{theor:gradmonqvr}
There is a functor $\msG\msM:\ul{\rm pMGM}\rightarrow\ul{\rm GM}$ of the categories
$\ul{\rm pMGM}$, $\ul{\rm GM}$ injective on objects and morphisms described thusly. 
$\msG\msM$ associates with a paramonoidal graded monad $\clK$ the graded monad
$\msG\msM\clK$ constructed as follows.

\begin{enumerate}[label=\alph*)]

\item \label{it:gradmonqvr1} Set $\clC_{\hfpt\msG\msM\clK}=\clC_{\hfpt\clK}$ and $\clF_{\hfpt\msG\msM\clK}=\clF_{\hfpt\clK}$.

\item \label{it:gradmonqvr2} For any object $X\in\Obj_{\hfpt\clC_{\hfpt\clK}}$, put 
{\allowdisplaybreaks
\begin{align}
\label{gradmon14}
T_{\hfpt\msG\msM\clK X}A&=X\otimes_{\hfpt\clF_{\hfpt\clK}}A,
\\
\nonumber
T_{\hfpt\msG\msM\clK X}\chi&=\id_{\hfpt\clF_{\hfpt\clK} X}\otimes_{\hfpt\clF_{\hfpt\clK}}\chi
\end{align}
}
\!\!with $A\in\Obj_{\hfpt\clF_{\hfpt\clX}}$ and $\chi\in\Hom_{\hfpt\clF_{\hfpt\clK}}$.

\item \label{it:gradmonqvr3} For any morphism $f\in\Hom_{\hfpt\clC_{\hfpt\clK}}$, set 
\begin{equation}
\label{gradmon15}
T_{\hfpt\msG\msM\clK fA}=f\otimes_{\hfpt\clF_{\hfpt\clK}}\id_{\hfpt\clF_{\hfpt\clK} A}
\end{equation}
with $A\in\Obj_{\hfpt\clF_{\hfpt\clK}}$.

\item \label{it:gradmonqvr4} For any two objects $X,Y\in\Obj_{\hfpt\clC_{\hfpt\clK}}$, define 
\begin{equation}
\label{gradmon16}
\mu_{\hfpt\msG\msM\clK X,YA}=\mu_{\hfpt\clK X,Y}\otimes_{\hfpt\clF_{\hfpt\clK}}\id_{\hfpt\clF_{\hfpt\clK} A}
\end{equation}
with $A\in\Obj_{\hfpt\clF_{\hfpt\clK}}$. 

\item \label{it:gradmonqvr5} Set furthermore \hphantom{xxxxxxxxxxxxxxxx}
\begin{equation}
\label{gradmon17}
\eta_{\hfpt\msG\msM\clK A}=\eta_{\hfpt\clK}\otimes_{\hfpt\clF_{\hfpt\clK}}\id_{\hfpt\clF_{\hfpt\clK} A}
\end{equation}
with $A\in\Obj_{\hfpt\clF_{\hfpt\clK}}$. 

\end{enumerate}
The graded monad morphism $\msG\msM P:\msG\msM\clK\rightarrow\msG\msM\clL$ associated  by $\msG\msM$
with a paramonoidal graded monad morphism
$P:\clK\rightarrow\clL$ is specified as follows. 

\begin{enumerate}[label=\alph*),start=6]

\item \label{it:gradmonqvr6} Set $\theta_{\msG\msM P}=\theta_P$, $\varPhi_{\msG\msM P}=\varPhi_P$. 

\item \label{it:gradmonqvr7} For any object $X\in\Obj_{\hfpt\clC_{\hfpt\clK}}$, put \hphantom{xxxxxxxxxxxxx}
\begin{equation}
\label{gradmon21}
\phi_{\msG\msM PXA}=\phi_{PX}\otimes_{\hfpt\clF_{\hfpt\clL}}\id_{\hfpt\clF_{\hfpt\clL}\varPhi_PA}
\end{equation}
with $A\in\Obj_{\hfpt\clF_{\hfpt\clK}}$.
  
\end{enumerate}
\end{theor}

\begin{proof}
The proof involves a long series of verifications. \pagebreak We provide only a summary below.
First, one shows that the data list $(\clC_{\hfpt\msG\msM\clK},\clF_{\hfpt\msG\msM\clK},T_{\hfpt\msG\msM\clK},
\mu_{\hfpt\msG\msM\clK},\eta_{\hfpt\msG\msM\clK})$ given in items {\it \cref{it:gradmonqvr1}--\cref{it:gradmonqvr5}}
is indeed a graded monad by checking that relations \ceqref{gradmon1}--\ceqref{gradmon3} are satisfied
by virtue of \ceqref{gradmon12}--\ceqref{gradmon11}.
Second, one proves that the data list
$(\theta_{\msG\msM P},\varPhi_{\msG\msM P},\phi_{\msG\msM P})$ furnished in items  {\it \cref{it:gradmonqvr6}, \cref{it:gradmonqvr7}}
is a graded monad morphism, as required, by verifying that relations
\ceqref{gradmon6}--\ceqref{gradmon8} are also met. Third, one demonstrates that $\msG\msM$
enjoys the essential functor properties \ceqref{functdef1}, \ceqref{functdef2} by checking that
\ceqref{gradmon9}, \ceqref{gradmon10} follow from \ceqref{gradmon22}, \ceqref{gradmon23}.
The injectivity on objects and morphisms of $\msG\msM$ follows from \ceqref{gradmon16},  \ceqref{gradmon17}
and  \ceqref{gradmon18} by setting $A=1_{\hfpt\clF_{\hfpt\clK}}$. 
\end{proof}

\noindent
The upshot of the above analysis is that although paramonoidal graded monads and monad morphisms are not
strictly speaking graded monads and monad morphisms, the category of the former, $\ul{\rm pMGM}$,
is isomorphic to the subcategory $\msG\msM\ul{\rm pMGM}$
of the category of the latter, $\ul{\rm GM}$, by virtue of theor. \cref{theor:gradmonqvr},
where $\msG\msM\ul{\rm pMGM}$ is the range category of the functor $\msG\msM$. 
The graded monads and monad morphisms relevant in the present work are precisely those
from $\msG\msM\ul{\rm pMGM}$.


We explain next how the results expounded above are applied to the theory of concrete
graded $\varOmega$ monads and morphisms thereof
worked out in subsect. \cref{subsec:dcatjoint}.
The set--up considered there can be modelled abstractly 
within the above categorical framework as a certain quiver $\fkQ$ whose
vertex collection $\Ver_\fkQ$ is a set of paramonoidal graded monads
and whose edge set $\Edg_\fkQ$ consists of morphisms
between those monads with the following property. 
There is an encompassing strict monoidal category $\clU$ such that
for each monad $\clK\in\Ver_\fkQ$, $\clF_\clK=\clU$ is the target category of $\clK$
and for each monad pair $\clK,\clL\in\Ver_\fkQ$ and morphism $P\in\Edg_\fkQ(\clK,\clL)$,
$\varPhi_P=\id_{\hfpt\clU}$ is the target category functor of $P$.
The careful construction of the quiver $\fkQ$ requires a number of steps.

{\it
\begin{enumerate}[label=\Alph*)]

\item \label{it:prot1} Assignment of the grading categorical and functorial data.

\end{enumerate}
}

\noindent
These data are organized in a quiver $\Delta$ whose vertex collection $\Ver_{\hfpt\Delta}$ is a set of small strict
monoidal categories and whose edge set $\Edg_{\hfpt\Delta}$ is formed by strict monoidal functors between those
categories. 

{\it
\begin{enumerate}[label=\Alph*),start=2]

\item \label{it:prot2} Assignment of multiplication, unit and entwining morphism data compatible with the grading data. 

\end{enumerate}
}

\noindent
These data are organized in a monoidal quiver $\clQ$. 
Let $\clA_{\hfpt\clQ}$  be the free monoid generated by the
disjoint union $\clA^0{}_{\hfpt\clQ}=\bigsqcup_{\hfpt\clC\in\Ver_{\hfpt\Delta}}\Obj_{\hfpt\clC}$
of the object sets of the categories of $\Ver_{\hfpt\Delta}$
and $\otimes_{\hfpt\clQ}$ and $1_{\hfpt\clQ}$ the monoidal product and unit of $\clA_{\hfpt\clQ}$.
Then, $\clQ$ is specified in the items set forth  below. 

{\it
\begin{enumerate}[label=\alph*)]  

\item \label{it:minqvr1} {\rm The vertex set of $\clQ$ is $\Ver_{\hfpt\clQ}=\clA_{\hfpt\clQ}$.}

\item \label{it:minqvr2} {\rm The edge set $\Edg_{\hfpt\clQ}$ of $\clQ$ contains all and only the edges
listed in the next points.} 

\item \label{it:minqvr3} {\rm There is a distinct edge $f\in\Edg_{\hfpt\clQ}(X,Y)$ for every category
$\clC\in\Ver_{\hfpt\Delta}$, pair of objects $X,Y\in\Obj_{\hfpt\clC}$ and morphisms $f\in\Hom_{\hfpt\clC}(X,Y)$,
except for when $X=Y$ and $f=\id_{\hfpt\clC X}$.}

\item \label{it:minqvr4} {\rm There is a multiplication edge
$\mu_{\hfpt\clC X,Y}\in\Edg_{\hfpt\clQ}(X\otimes_{\hfpt\clQ}Y,X\otimes_{\hfpt\clC}Y)$
for each category $\clC\in\Ver_{\hfpt\Delta}$ and object pair $X,Y\in\Obj_{\hfpt\clC}$.}

\item \label{it:minqvr5} {\rm There is a unit edge
$\eta_{\hfpt\clC}\in\Edg_{\hfpt\clQ}(1_{\hfpt\clQ},1_{\hfpt\clC})$ for every category $\clC\in\Ver_{\hfpt\Delta}$.}

\item  \label{it:minqvr6} {\rm There is a entwining edge $\phi_{\theta X}\in\Edg_{\hfpt\clQ}(X,\theta X)$
for every functor $\theta\in\Edg_{\hfpt\Delta}$ and  object $X\in\Obj_{\hfpt\clC}$,
where $\clC\in\Ver_{\hfpt\Delta}$ is the source category of $\theta$.}

\end{enumerate}
}

\noindent
As explained in app. \!\cref{subsec:frecat} (cf. defs. \cref{def:frecat}, \cref{def:fremoncat}),
there exists a free strict monoidal category $\msF\msM\clQ$ associated with
the monoidal quiver $\clQ$. This category is however too loose for our purposes.
A refinement of the free strict monoidal category construction is required here. 
The elementary free symbols and regular free expressions as well as the
rules used to assemble the former into the latter 
are the same as those of the free strict monoidal category construction. 
The elementary free symbols are thus the object symbols 
$X\in\clA^0{}_{\hfpt\clQ}$ and morphism symbols
$f$, $\mu_{\hfpt\clC X,Y}$, $\eta_{\hfpt\clC}$, $\phi_{\theta X}\in\Edg_{\hfpt\clQ}$, the composition symbol $\circ_{\hfpt\clQ}$,
the identity assigning symbol $\id_{\hfpt\clQ}$, 
and the monoidal multiplication symbol $\otimes_{\hfpt\clQ}$; 
the regular free expressions are the bracketed expressions built with the symbols $X$, $f$, $\mu_{\hfpt\clC X,Y}$,
$\eta_{\hfpt\clC}$, $\phi_{\theta X}$, $\circ_{\hfpt\clQ}$, $\id_{\hfpt\clQ}$, $\otimes_{\hfpt\clQ}$
according to the same rules which would hold if they were objects,
morphisms, composition, identity assigning and monoidal multiplication
operations of a strict monoidal category, respectively. 
The novelty is that now we identify expressions built with the $X$, 
$f$, $\mu_{\hfpt\clC X,Y}$, $\eta_{\hfpt\clC}$, $\phi_{\theta X}$
which can be turned into one another not only by using
the associativity and unitality relations \eqref{catdef1}, \eqref{catdef2}, 
the bifunctoriality relations \ceqref{moncat0}, \ceqref{moncat-1}
and the monoidal associativity and unitality relations \ceqref{moncat1}--\ceqref{moncat4}
with $\clC$ replaced by $\clQ$ but also the following further specific ones: 
relations \ceqref{gradmon12}--\ceqref{gradmon11} with 
$\clC_{\hfpt\clK}$, $\clF_{\hfpt\clK}$, $\mu_{\hfpt\clK X,Y}$ and $\eta_{\hfpt\clK}$
replaced by $\clC$, $\clQ$, $\mu_{\hfpt\clC X,Y}$ and $\eta_{\hfpt\clC}$ 
for each category $\clC\in\Ver_{\hfpt\Delta}$;
relations \ceqref{gradmon18}--\ceqref{gradmon20}
with $\clC_{\hfpt\clK}$, $\clC_{\hfpt\clL}$ and $\theta_P$ replaced by $\clC$, $\clD$ and
$\theta$, further $\clF_{\hfpt\clK}$, $\clF_{\hfpt\clL}$ replaced by $\clQ$ with $\varPhi_P$ deleted
and finally $\mu_{\hfpt\clK X,Y}$, $\eta_{\hfpt\clK}$, $\mu_{\hfpt\clL U,V}$,
$\eta_{\hfpt\clL}$ and $\phi_{P X}$ substituted by $\mu_{\hfpt\clC X,Y}$, $\eta_{\hfpt\clC}$, $\mu_{\hfpt\clD U,V}$,
$\eta_{\hfpt\clD}$ and $\phi_{\theta X}$ for every functor $\theta\in\Edg_{\hfpt\Delta}$
with source, target categories $\clC,\clD\in\Ver_{\hfpt\Delta}$.
A category $\msF\msM_{GM}\clQ$ is then defined by implementing 
the terms of items {\it \cref{it:frecat1}--\cref{it:frecat7}} of def. \cref{def:frecat} 
with $\msF\clQ$ exchanged with $\msF\msM_{GM}\clQ$ and those of item {\it \cref{it:fremoncat2}}
of def. \cref{def:fremoncat} with $\msF\msM\clQ$ replaced by $\msF\msM_{GM}\clQ$.
$\msF\msM_{GM}\clQ$ is strict monoidal
and has all the requirements which the $\mu_{\hfpt\clC X,Y}$, $\eta_{\hfpt\clC}$
$\phi_{\theta X}$ would obey if they were the multiplication, unit and entwining morphisms
in a paramonoidal graded monad and monad morphism already satisfied by construction. 

{\it
\begin{enumerate}[label=\Alph*),start=3]

\item \label{it:prot3} Assignment of a background category for the morphism data. 

\end{enumerate}
}

\noindent
A background category $\clB$ for the morphism data is a category with the following properties. 

{\it
\begin{enumerate}[label=\alph*)] 

\item \label{it:bkgr1} {\rm Each object $A\in\Obj_{\msF\msM_{GM}\clQ}$ of $\msF\msM_{GM}\clQ$
  can be identified with an object $A\in\Obj_{\hfpt\clB}$ of $\clB$ 
  denoted by the same symbol.}

\item \label{it:bkgr2} {\rm Each morphism $h\in\Hom_{\msF\msM_{GM}\clQ}(A,B)$ of $\msF\msM_{GM}\clQ$
 can be identified with a morphism $h\in\Hom_{\hfpt\clB}(A,B)$ of $\clB$ 
 denoted again by the same symbol. In particular, each identity morphism $\id_{\msF\msM_{GM}\clQ A}$ of $\msF\msM_{GM}\clQ$
is identified with its counterpart $\id_{\hfpt\clB A}$ of $\clB$.}

\item \label{it:bkgr3} {\rm Each category $\clC\in\Ver_\Delta$ is a subcategory of $\clB$.}

\item \label{it:bkgr4} {\rm The $f$, $\mu_{\hfpt\clC X,Y}$, $\eta_{\hfpt\clC}$, $\phi_{\theta X}$
as morphisms $\clB$ obey the relations \ceqref{gradmon12}--\ceqref{gradmon11} and \ceqref{gradmon18}--\ceqref{gradmon20}
with the same notational changes mentioned earlier except for replacing
$\clF_{\hfpt\clK}$, $\clF_{\hfpt\clL}$ with $\clB$. }
  
\end{enumerate}
}

\noindent
The encompassing category $\clU$ mentioned earlier is the strict monoidal category yielded by imposing on
$\msF\msM_{GM}\clQ$ the congruences further on.

{\it
\begin{enumerate}[label=\roman*)] 

\item \label{it:bkgr5} {\rm For every composable pair $h,k\in\Hom_{\msF\msM_{GM}\clQ}$,
  $k\circ_{\hfpt\msF\msM_{GM}\clQ}h=k\circ_{\hfpt\clB}h$.}

\item \label{it:bkgr6} {\rm For every object $A\in\Obj_{\msF\msM_{GM}\clQ}$,
  $\id_{\hfpt\msF\msM_{GM}\clQ A}=\id_{\hfpt\clB A}$.}

\end{enumerate}
}

\noindent
$\clU$ is evidently a non monoidal subcategory of $\clB$. Further, for each category $\clC\in\Ver_\Delta$, the quadruple 
$\clS_{\hfpt\clC}=(\clC,\clU,\mu_{\hfpt\clC},\eta_{\hfpt\clC})$ is a paramonoidal graded monad
and for each functor $\theta\in\Edg_\Delta(\clC,\clD)$, the triple $P_\theta=(\theta,\id_{\hfpt\clB},\phi_\theta)$
is a paramonoidal graded monad morphisms $P_\theta:\clS_{\hfpt\clC}\rightarrow\clS_{\hfpt\clD}$. The
monads $\clS_{\hfpt\clC}$ and monad morphisms $P_\theta$
constitute respectively the edges and the vertices of quiver $\fkQ$ mentioned at the beginning.

\vfil\eject

\vspace{.5cm}

\noindent
\markright{\textcolor{blue}{\sffamily Acknowledgements}}

\noindent
\textcolor{blue}{\sffamily Acknowledgements.} 
The author acknowledges financial support from INFN Research Agency
under the provisions of the agreement between Alma Mater Studiorum University of Bologna and INFN. 
He is grateful to the organizers of the Conference PAFT24 - Quantum Gravity and Information, where part
of this work was done, for hospitality and support. Most of the algebraic calculations presented in this paper
have been carried out employing the WolframAlpha computational platform.

\vspace{.5cm}

\noindent
\textcolor{blue}{\sffamily Conflict of interest statement.}
The author declares that the results of the current study do not involve any conflict of interest.

\vspace{.5cm}

\noindent
\textcolor{blue}{\sffamily Data availability statement.}
The data that support the findings of this study are openly available on scientific journals or in the arxiv repository.

\vfill\eject

\noindent
\textcolor{blue}{\bf\sffamily References}

\end{document}